\title{Encoding Tight Typing in a Unified Framework}
\author{Delia Kesner}{Universit\'{e} de Paris, CNRS, IRIF and Institut Universitaire de France, France}{kesner@irif.fr}{https://orcid.org/0000-0003-4254-3129}{}
\author{Andr\'es Viso}{Inria, France}{andres-ezequiel.viso@inria.fr}{https://orcid.org/0000-0002-6822-8453}{}
\authorrunning{D. Kesner and A. Viso}
\keywords{Call-by-Push-Value, Call-by-Name, Call-by-Value, Intersection Types}
\newcommand{\parrafo}[1]{\medskip \noindent {\sffamily\normalsize\bfseries{#1}.}}
\newcommand{\ttbf}[1]{{\ttfamily\normalsize\bfseries{#1}}}
\newcommand{\odelia}[1]{{\color{blue}\ifmmode\text{\sout{\ensuremath{#1}}}\else\sout{#1}\fi}}
\newcommand{\oandres}[1]{{\color{red}\ifmmode\text{\sout{\ensuremath{#1}}}\else\sout{#1}\fi}}
\newcommand{\ifreport}[2][]{\ifbool{compileReport}{#2}{#1}}
\newcommand{\eg}{{\em e.g.~}}
\newcommand{\ie}{{\em i.e.~}}
\newcommand{\cf}{{\em cf.~}}
\newcommand{\coloneq}{\ensuremath{:=}}
\newcommand{\Coloneq}{\ensuremath{::=}}
\newcommand{\dbckslash}{\ensuremath{\setminus\!\!\setminus}}
\newcommand{\eqdef}{\mathrel{\raisebox{-1pt}{\ensuremath{\stackrel{\textit{\tiny{def}}}{=}}}}}
\newcommand{\emphdef}[1]{\textbf{\emph{#1}}}
\newcommand{\X}{\ensuremath{\mathcal{X}}}
\newcommand{\SysTightFlops}{\ensuremath{\mathcal{E}}}
\newcommand{\SysTight}{\ensuremath{\mathcal{B}}}
\newcommand{\SysTightCBN}{\ensuremath{\mathcal{N}}}
\newcommand{\SysTightCBV}{\ensuremath{\mathcal{V}}}
\newcommand{\ctxt}[1]{\ensuremath{\mathtt{#1}}}
\newcommand{\ctxtapp}[2]{\ensuremath{{#1}\langle{#2}\rangle}}
\newcommand{\TermExplicit}{\ensuremath{\mathcal{T}}}
\newcommand{\TermLambda}{\ensuremath{\mathcal{T_{\lambda}}}}
\newcommand{\TermVariable}{\ensuremath{\mathcal{X}}}
\newcommand{\exsubs}[2]{\ensuremath{[{#1}\backslash {#2}]}}
\newcommand{\termapp}[2]{\ensuremath{{#1}\,{#2}}}
\newcommand{\termabs}[2]{\ensuremath{\lambda{#1}.{#2}}}
\newcommand{\termbang}[1]{\ensuremath{\mathop{!}{#1}}}
\newcommand{\termder}[1]{\ensuremath{\mathop{\textnormal{der}}{#1}}}
\newcommand{\termsubs}[3]{\ensuremath{{#3}\exsubs{#1}{#2}}}
\newcommand{\functtype}[2]{\ensuremath{{#1}\to{#2}}}
\newcommand{\intertype}[2]{\ensuremath{\multiset{#1}_{#2}}}
\newcommand{\M}{\ensuremath{\mathcal{M}}}
\newcommand{\N}{\ensuremath{\mathcal{N}}}
\newcommand{\sequ}[2]{\ensuremath{{#1}\vdash{#2}}}
\newcommand{\assign}[2]{\ensuremath{{#1}:{#2}}}
\newcommand{\derivable}[3]{\ensuremath{{#1}\rhd_{#3}{#2}}}
\newcommand{\ctxtsum}[3]{\ensuremath{{#1}\mathrel{+_{#3}}{#2}}}
\newcommand{\ctxtres}[3]{\ensuremath{{#1}\mathrel{\dbckslash_{#3}}{#2}}}
\newcommand{\dom}[1]{\ensuremath{\mathtt{dom}({#1})}}
\newcommand{\bv}[1]{\ensuremath{\mathtt{bv}({#1})}}
\newcommand{\fv}[1]{\ensuremath{\mathtt{fv}({#1})}}
\newcommand{\rrule}[1]{\ensuremath{\mathrel{\mapsto_{#1}}}}
\newcommand{\rewrite}[1]{\rightarrow_{#1}}
\newcommand{\rewriten}[1]{\twoheadrightarrow_{#1}}
\newcommand{\bangweak}{\ensuremath{\mathtt{f}}}
\newcommand{\callbyname}{\ensuremath{\mathtt{n}}}
\newcommand{\callbyvalue}{\ensuremath{\mathtt{v}}}
\newcommand{\namestg}{\ensuremath{\mathtt{dn}}}
\newcommand{\valuestg}{\ensuremath{\mathtt{dv}}}
\newcommand{\loredv}{\rewrite{\valuestg}}
\newcommand{\loredname}{\rewrite{\namestg}}
\newcommand{\lorednamen}{\rewriten{\namestg}}
\newcommand{\loredvn}{\rewriten{\valuestg}}
\newcommand{\dB}{\ensuremath{\mathtt{dB}}}
\newcommand{\dBeta}{\ensuremath{\mathtt{dB}}}
\newcommand{\dBang}{\ensuremath{\mathtt{d!}}}
\newcommand{\sBang}{\ensuremath{\mathtt{s!}}}
\newcommand{\sTerm}{\ensuremath{\mathtt{sn}}}
\newcommand{\sVal}{\ensuremath{\mathtt{sv}}}
\newcommand{\mStep}{\ensuremath{\mathtt{m}}}
\newcommand{\eStep}{\ensuremath{\mathtt{e}}}
\newcommand{\set}[1]{\ensuremath{\{{#1}\}}}
\newcommand{\multiset}[1]{\ensuremath{[#1]}}
\newcommand{\substitute}[3]{\ensuremath{{{#3}\left\{{#1}\backslash{#2}\right\}}}}
\newcommand{\many}[2]{\ensuremath{({#1})_{#2}}}
\newcommand{\cbnname}{\ensuremath{\mathsf{cbn}}}
\newcommand{\cbnterm}[1]{\ensuremath{{#1}^{\callbyname}}}
\newcommand{\cbntype}[1]{\ensuremath{{#1}^{\callbyname}}}
\newcommand{\cbvname}{\ensuremath{\mathsf{cbv}}}
\newcommand{\cbvterm}[1]{\ensuremath{{#1}^{\callbyvalue}}}
\newcommand{\cbvtypeneg}[1]{\ensuremath{{#1}^{\overline{\callbyvalue}}}}
\newcommand{\cbvtypepos}[1]{\ensuremath{{#1}^{\callbyvalue}}}
\newcommand{\ruleName}[1]{\ensuremath{\mathtt{({#1})}}}
\newcommand{\ruleBDApp}{\ruleName{appt_{c}}}
\newcommand{\ruleBDArrowE}{\ruleName{app_{c}}}
\newcommand{\ruleBDArrowI}{\ruleName{abs_c}}
\newcommand{\ruleBDAxiom}{\ruleName{var_c}}
\newcommand{\ruleBDBang}{\ruleName{bg_c}}
\newcommand{\ruleBDDer}{\ruleName{dr_c}}
\newcommand{\ruleBDESubs}{\ruleName{es_c}}
\newcommand{\ruleBTArrowE}{\ruleName{app_p}}
\newcommand{\ruleBTArrowI}{\ruleName{abs_p}}
\newcommand{\ruleBTBang}{\ruleName{bg_p}}
\newcommand{\ruleBTDer}{\ruleName{dr_p}}
\newcommand{\ruleBTESubs}{\ruleName{es_p}}
\newcommand{\ruleNDArrowE}{\ruleName{app^{\mathcal{N}}_c}}
\newcommand{\ruleNDArrowI}{\ruleName{abs_c}}
\newcommand{\ruleNDAxiom}{\ruleName{var_c}}
\newcommand{\ruleNDESubs}{\ruleName{es^{\mathcal{N}}_c}}
\newcommand{\ruleNTArrowE}{\ruleName{app^{\mathcal{N}}_p}}
\newcommand{\ruleNTArrowI}{\ruleName{abs_p}}
\newcommand{\ruleVDArrowE}{\ruleName{app^{\mathcal{V}}_{c}}}
\newcommand{\ruleVDArrowI}{\ruleName{abs^{\mathcal{V}}_c}}
\newcommand{\ruleVDAxiom}{\ruleName{var^{\mathcal{V}}_c}}
\newcommand{\ruleVDESubs}{\ruleName{es_c}}
\newcommand{\ruleVDApp}{\ruleName{appt^{\mathcal{V}}_{c}}}
\newcommand{\ruleVTArrowE}{\ruleName{app^{\mathcal{V}}_p}}
\newcommand{\ruleVTArrowI}{\ruleName{abs^{\mathcal{V}}_p}}
\newcommand{\ruleVTAxiom}{\ruleName{val_p}}
\newcommand{\ruleVTAxiomVar}{\ruleName{var_p}}
\newcommand{\ruleVTESubs}{\ruleName{es_p}}
\newcommand{\Rule}[3]{
    \prooftree
         #1
    \justifies  
         #2
    \thickness=0.05em
    \using
         #3
    \endprooftree}
\newcommand{\ih}{{\em i.h.~}}
\newcommand{\wsize}[1]{\ensuremath{|{#1}|_{\bangweak}}}
\newcommand{\nsize}[1]{\ensuremath{|{#1}|_{\callbyname}}}
\newcommand{\valsize}[1]{\ensuremath{|{#1}|_{\callbyvalue}}}
\newcommand{\countvr}[1]{\ensuremath{\mathtt{e}\!\left({#1}\right)}}
\newcommand{\inversevr}[1]{\ensuremath{\widehat{\mathtt{e}}\!\left({#1}\right)}}
\newcommand{\sequT}[5]{\ensuremath{{#1}\vdash^{({#3},{#4},{#5})}{#2}}}
\newcommand{\typeabs}{\ensuremath{\mathtt{a}}}
\newcommand{\typebang}{\ensuremath{\mathtt{vl}}}
\newcommand{\typeneutral}{\ensuremath{\mathtt{n}}}
\newcommand{\typetight}{\ensuremath{\mathtt{tt}}}
\newcommand{\typevalue}{\ensuremath{\mathtt{vl}}}
\newcommand{\typevar}{\ensuremath{\mathtt{vr}}}
\newcommand{\iptight}{\ensuremath{\mathsf{tight}}}
\newcommand{\ptight}[1]{\ensuremath{\iptight(#1)}}
\newcommand{\ipabs}{\ensuremath{\mathsf{abs}}}
\newcommand{\ipvar}{\ensuremath{\mathsf{var}}}
\newcommand{\ipval}{\ensuremath{\mathsf{val}}}
\newcommand{\ipapp}{\ensuremath{\mathsf{app}}}
\newcommand{\pabs}[1]{\ensuremath{\ipabs(#1)}}
\newcommand{\pvar}[1]{\ensuremath{\ipvar(#1)}}
\newcommand{\pval}[1]{\ensuremath{\ipval(#1)}}
\newcommand{\papp}[1]{\ensuremath{\ipapp(#1)}}
\newcommand{\icfnrml}{\ensuremath{\mathsf{no}_{\cfz}}}
\newcommand{\icfntrl}{\ensuremath{\mathsf{ne}_{\cfz}}}
\newcommand{\icfnbang}{\ensuremath{\mathsf{nb}_{\cfz}}}
\newcommand{\icfnabs}{\ensuremath{\mathsf{na}_{\cfz}}}
\newcommand{\cfnrml}[1]{\ensuremath{{#1}\in\icfnrml}}
\newcommand{\cbeta}{\ensuremath{\mathit{b}}}
\newcommand{\cmult}{\ensuremath{\mathit{m}}}
\newcommand{\cexp}{\ensuremath{\mathit{e}}}
\newcommand{\csize}{\ensuremath{\mathit{s}}}
\newcommand{\iI}{i \in I}
\newcommand{\cfz}{{\tt scf}}
\newcommand{\emul}{\intertype{\, }{}}
\newcommand{\CBNNF}{\ensuremath{\mathsf{no}_{\callbyname}}}
\newcommand{\CBVNF}{\ensuremath{\mathsf{no}_{\callbyvalue}}}
\newcommand{\HCBNNF}{\ensuremath{\mathsf{ne}_{\callbyname}}}
\newcommand{\HCBVNF}{\ensuremath{\mathsf{ne}_{\callbyvalue}}}
\newcommand{\VarHCBVNF}{\ensuremath{\mathsf{vr}_{\callbyvalue}}}
\newcommand{\sep}{\hspace{.5cm}}
\newcommand{\BangRev}{\texorpdfstring{\ensuremath{\lambda !}}{lambda!}}
\newcommand{\Kterm}{{\tt K}}
\newcommand{\id}{{\tt I}}
\newcommand{\ctxleq}{\subseteq}
\newcommand{\relevant}{relevant}
\newcommand{\bangrelevant}{\ensuremath{\bangweak}-relevant}
\newcommand{\cbnrelevant}{\ensuremath{\callbyname}-relevant}
\newcommand{\cbvrelevant}{\ensuremath{\callbyvalue}-relevant}
\begin{document}

\maketitle

\begin{abstract}
This paper explores how the intersection type theories of call-by-name (CBN)
and call-by-value (CBV) can be \emph{unified} in a more general framework
provided by call-by-push-value (CBPV). Indeed, we propose \emph{tight} type
systems for CBN and CBV that can be both encoded in a unique \emph{tight} type
system for CBPV. All such systems are quantitative, \ie they provide
\emph{exact} information about the length of normalization sequences to normal
form as well as the size of these normal forms. Moreover, the length of
reduction sequences are discriminated according to their multiplicative and
exponential nature, a concept inherited from linear logic. Last but not least,
it is possible to extract quantitative measures for CBN and CBV from their
corresponding encodings in CBPV.
\end{abstract}


\section{Introduction}
\label{s:intro}

Every programming language implements a particular evaluation strategy which
specifies when and how parameters are evaluated during function calls. For
example, in {\bf call-by-value (CBV)}, the argument is evaluated before being
passed to the function, while in {\bf call-by-name (CBN)} the argument is
substituted directly into the function body, so that the argument may never be
evaluated, or may be re-evaluated several times. CBN and CBV have always been
studied independently, by developing different techniques for one and the
other, until the remarkable observation that they are two different instances
of a more general framework introduced  by Girard's {\bf Linear Logic (LL)}.
Their (logical) duality (``CBN is \emph{dual} to CBV'') was understood
later~\cite{DanosJS95, CurienH00}. And their rewriting semantics were finally
\emph{unified} by the {\bf call-by-push-value (CBPV)} paradigm ---introduced
by P.B.~Levy~\cite{Levy04,Levy06}---, a formalism being able to capture
different functional languages/evaluation strategies. 

A typical aspect that one wants to compare between two different evaluation
strategies is the \emph{number of steps} that are necessary to get a result.
Such numbers are extracted from different models of computation
and should be then measured by \emph{compatible} instruments, either by
means of common quantitative tools, or by a precise transformation between
them\footnote{Think for example about cm and inches.}. Thus, we provide a
\emph{uniform} tool to measure quantitative information extracted from the
evaluation of programs in different programming languages (namely CBN and
CBV).

More concretely, we introduce typing systems capturing {\bf qualitative} and
{\bf quantitative} information of programs. From a qualitative point of view,
the typing systems characterize termination of programs, \ie a program $p$ is
typable if and only if $p$ terminates. From a quantitative point of view, the
typing systems provide {\bf exact/tight} information about the {\bf number of
steps} needed to get a result, as well as the {\bf size} of this result. All
these questions are addressed in the general framework of CBPV, being able to
encode CBN and CBV, both from a static and a dynamic point of view.
Dynamically, CBN and CBV evaluation strategies are known to be encodable by the
rewriting semantics of CBPV. Statically, we define tight typing systems
providing quantitative information for CBN and CBV, that can be seen as
particular cases of the tight quantitative typing system behind the unified
framework of CBPV. We now explain all these concepts in more detail.

\parrafo{Quantitative Types} Quantitative typing systems are often specified by
non-idempotent intersection types inspired by the relational semantics of
LL~\cite{Girard88,BucciarelliE01}, as pioneered by~\cite{Gardner94}. This
connection makes non-idempotent types not only a qualitative typing tool to
reason about programming languages, but mainly a quantitative one, being able
to specify properties related to the consumption of resources, a remarkable
investigation pioneered by the seminal de Carvalho's PhD
thesis~\cite{Carvalho:thesis} (see also~\cite{Carvalho18}). Thus,
qualitatively, a non-idempotent typing system is able to fully {\bf
characterise} normalisation, in the sense that a term $t$ is typable if and
only if $t$ is normalising. More interestingly, quantitative typing systems
also provide  {\bf upper bounds}, in the sense that the length of any
reduction sequence from $t$ to normal form \emph{plus} the size of this normal
form is \emph{bounded} by the size of the type derivation of $t$. Therefore,
typability characterises normalisation in a qualitative as well as in a
quantitative way, but only provides upper bounds. Several papers explore
bounded measures of non-idempotent types for different higher order languages.
Some references
are~\cite{Ehrhard12,Guerrieri18,AccattoliGL19,AccattoliG18,KesnerV17,BucciarelliKV17,Carvalho16,GuerrieriPF16,CarvalhoPF11,CarvalhoF16,BucciarelliKR21,MazzaPV18}.

In this satisfactory enough? A major observation concerning $\beta$-reduction
in $\lambda$-calculus is that the size of normal forms can be
\emph{exponentially} bigger than the number of steps needed to reach these
normal forms. This means that bounding the sum of these two integers \emph{at
the same} time is too rough, and not very relevant from a quantitative point of
view. Fortunately, it is possible to extract better (\ie independent and exact)
measures from a non-idempotent intersection type system. A crucial point to
obtain {\bf exact measures}, instead of upper bounds, is to consider
\emph{minimal} type
derivations~\cite{Carvalho:thesis,BernadetL13,CarvalhoPF11}. Therefore,
\emph{upper bounds} for time \emph{plus} size can be refined into
\emph{independent exact measures} for time \emph{and}
size~\cite{AccattoliGK18}. More precisely, the quantitative typing systems are
now equipped with constants and \emph{counters}, together with an appropriate
notion of {\bf tightness}, which encodes minimality of type derivations. For
any \emph{tight} type derivation $\Phi$ of a term $t$ ending with (independent)
counters $(\cbeta,\csize)$, it is now possible to show that $t$ is normalisable
in $\cbeta$ steps and its normal form has size $\csize$, so that the type
system is able to \emph{guess} the number of steps to normal form as well as
the size of this normal form. The opposite direction also holds: if $t$
normalises in $\cbeta$ steps to a normal form size $\csize$, then it is
possible to tightly type $t$ by using (independent) counters $(\cbeta,\csize)$.

In this paper we design tight quantitative type systems that are also capable
to discriminate between {\bf multiplicative} and {\bf exponential} evaluation
steps to normal form, two conceptual notions coming from LL:
\emph{multiplicative} steps are essentially those that (linearly) reconfigure
proofs/terms/programs, while \emph{exponential} steps are the only ones that
are potentially able to erase/duplicate other objects. As a consequence, for
any \emph{tight} type derivation $\Phi$ of $t$ ending with (independent)
counters $(\cmult,\cexp, \csize)$, the term $t$ is normalisable in $\cmult$
multiplicative and $\cexp$ exponential steps to a normal form having size
$\csize$. The opposite direction also holds.

\parrafo{Call-by-push-value} CBPV extends the $\lambda$-calculus with two
primitives $\mathtt{thunk}$ and $\mathtt{force}$ distinguishing between
\emph{values} and \emph{computations}: the former freezes the execution of a
term (\ie turns a computation into a value) while the latter fires again a
frozen term (\ie turns a value into a computation). These primitives allow to
capture the duality between CBN and CBV by conveniently labelling a
$\lambda$-term with $\mathtt{thunk}/\mathtt{force}$ to pause/resume the
evaluation of a subterm. Thus, CBPV provides a \emph{unique} and \emph{general}
formalism capturing different functional strategies, and allowing to
\emph{uniformly} study operational and denotational semantics of different
programming languages through a single tool.

In this paper we model the CBPV paradigm by using the
$\BangRev$-calculus~\cite{BucciarelliKRV20}, based on the \emph{bang calculus}
introduced in~\cite{EhrhardG16}, which in turn extends ideas by
T.~Ehrhard~\cite{Ehrhard16}. The granularity of the $\BangRev$-calculus,
expressed with both {\bf explicit substitutions} and {\bf reduction at a
distance} (details in Sec.~\ref{s:bang}), clearly allows to differentiate
between \emph{multiplicative} and \emph{exponential} steps, as in LL. The
corresponding CBN and CBV strategies also follow this pattern: it is possible
to distinguish the multiplicative steps that only reconfigure pieces of syntax,
from the exponential steps used to implement erasure and duplication of terms.

\parrafo{Contributions} We first define deterministic strategies for CBN and
CBV that are able to discriminate between multiplicative and exponential steps
(Sec.~\ref{s:cbname-cbvalue}).

We then formulate tight typing systems for both CBN (Sec.~\ref{s:tight-cbname})
and CBV (Sec.~\ref{s:tight-value}), called respectively $\SysTightCBN$ and
$\SysTightCBV$. System $\SysTightCBN$ is a direct extension of Gardner's
system~\cite{Gardner94}, while system $\SysTightCBV$ is completely new, and
constitutes one of the major contributions of this paper. A key feature of
system $\SysTightCBV$ is its ability to distinguish between the two different
roles that variables may play in CBV depending on the context where they are
placed, \ie to be a placeholder for a \emph{value} or the head of a
\emph{neutral} term. We show that both systems implement independent measures
for time and size, and that they are quantitatively sound and complete. More
precisely, we show that tight (\ie minimal) typing derivations in such systems
(exactly) quantitatively characterise normalisation, \ie if $\Phi$ is a
\emph{tight} type derivation of $t$ in system $\SysTightCBN$ (resp.
$\SysTightCBV$), ending with counters $(\cmult, \cexp, \csize)$, then there
exists a CBN (resp. CBV) normal form $p$ of size $\csize$ such that $t$ reduces
to $p$ by using exactly $\cmult$ multiplicative steps and $\cexp$ exponential
steps. The converse, giving quantitative completeness of the approach, also
holds.

Sec.~\ref{s:bang} recalls the bang calculus at a distance $\BangRev$ and
Sec.~\ref{s:tight} presents its associated tight type system $\SysTight$,
together with their respective quantitative sound and complete properties. The
untyped CBN/CBV translations into $\BangRev$ are recalled in
Sec.~\ref{s:cbn-cbv-embeddings}, while the typed translations are defined and
discussed in Sec.~\ref{s:tight-translations}, the other major contribution of
this work. Through these typed encodings,
  the counters of the
  source  and target derivations are related.
  This makes it possible to give  the precise  cost of our
  typed translations, as well as to extract  quantitative
measures for CBN and CBV from their corresponding encodings in CBPV.

\ifreport[Detailed proofs can be found in~\cite{KesnerV21:arxiv}.]{}

\parrafo{Related Work} For CBN, non-idempotent types were introduced
by~\cite{Gardner94}, their quantitative power was extensively studied
in~\cite{Carvalho:thesis,Carvalho16}, and their tight extensions
in~\cite{BernadetL13,AccattoliGK18}. For CBV, non-idempotent types were
introduced in~\cite{Ehrhard12}, and extensively studied,
\eg\cite{Guerrieri18,AccattoliG18}. A tight extension being able to count
reduction steps was recently defined in~\cite{AccattoliGL19,LeberlePhD} for a
special version of (closed) CBV, but it is not clear how this could be encoded
in a linear logic based CBPV framework. Another non-idempotent type system was
also recently introduced for CBV~\cite{ManzonettoPR19,KerinecMR21}, it is not
tight and does not translate to CBPV.

A (non-tight) quantitative type system for the bang calculus, based on a
relational model, can be found in~\cite{GuerrieriM18}. Another relational model
that can be seen as a non-tight system for CBPV was introduced
by~\cite{ChouquetT20}. Following ideas
in~\cite{Carvalho:thesis,BernadetL13,AccattoliGK18}, a type system
$\SysTightFlops$ was proposed in~\cite{BucciarelliKRV20} to fully exploit tight
quantitative aspects of the $\BangRev$-calculus: \emph{independent exact
measures} for time and size are guessed by the types system. However, no
relation between tightness for CBN/CBV and tightness for the
$\BangRev$-calculus are studied in {\it op.cite}. This papers fills this gap.

The discrimination between multiplicative and exponential steps by means of
tight quantitative types can be found \eg in CBN~\cite{AccattoliGK18},
call-by-need~\cite{AccattoliGL19}, and languages with pattern-matching
primitives~\cite{AlvesKV19}.


\section{Call-by-Name and Call-by-Value}
\label{s:cbname-cbvalue}

This section introduces the CBN and CBV specifications being able to
distinguish between multiplicative and exponential steps, as in linear logic.
Given a countably infinite set $\TermVariable$ of variables $x, y, z, \ldots$,
we consider the following grammars for terms ($\TermLambda$), values and
contexts:
\begin{center}
\begin{tabular}{c@{\!\!}c}
\hspace{-1em}
\begin{tabular}{r@{\enspace}r@{\enspace}c@{\enspace}l}
\emphdef{(Terms)}  & $t,u,r$ & $\Coloneq$  & $v \mid \termapp{t}{u} \mid \termsubs{x}{u}{t}$ \\
\emphdef{(Values)} & $v$   & $\Coloneq$  & $x \in \TermVariable \mid \termabs{x}{t}$ \\
\emphdef{(Contexts)} & $\ctxt{C}$ & $\Coloneq$ & $\ctxt{L} \mid \ctxt{N} \mid \ctxt{V}$ \\
\emphdef{(List Contexts)}  & $\ctxt{L}$  & $\Coloneq$  & $\Box \mid \termsubs{x}{t}{\ctxt{L}}$ \\
\emphdef{(CBN Contexts)}   & $\ctxt{N}$  & $\Coloneq$  & $\Box \mid \termapp{\ctxt{N}}{t} \mid \termabs{x}{\ctxt{N}} \mid \termsubs{x}{u}{\ctxt{N}}$ \\
\emphdef{(CBV Contexts)}   & $\ctxt{V}$  & $\Coloneq$  & $\Box \mid \termapp{\ctxt{V}}{t} \mid \termapp{t}{\ctxt{V}} \mid \termsubs{x}{u}{\ctxt{V}} \mid \termsubs{x}{\ctxt{V}}{t}$ 
\end{tabular}
\end{tabular}
\end{center}
A terms of the form $\termsubs{x}{u}{t}$ is a  \emphdef{closure}, and
$\exsubs{x}{u}$ an \emphdef{explicit substitution} (ES). Special terms are
$\id = \termabs{z}{z}$, $\Kterm = \termabs{x}{\termabs{y}{x}}$, $\Delta =
\termabs{x}{\termapp{x}{x}}$, and $\Omega = \termapp{\Delta}{\Delta}$. We use
$\ctxtapp{\ctxt{C}}{t}$ for the term obtained by replacing the hole $\Box$ of
$\ctxt{C}$ by $t$. \emphdef{Free} and \emphdef{bound} variables, as well as
$\alpha$-conversion, are defined as expected. In particular,
$\fv{\termsubs{x}{u}{t}} \eqdef \fv{t} \setminus \set{x} \cup \fv{u}$,
$\fv{\termabs{x}{t}} \eqdef \fv{t} \setminus \set{x}$,
$\bv{\termsubs{x}{u}{t}} \eqdef \bv{t} \cup \set{x} \cup \bv{u}$ and
$\bv{\termabs{x}{t}} \eqdef \bv{t} \cup \set{x}$. The notation
$\substitute{x}{u}{t}$ is used for the (capture-free) \emphdef{meta-level}
substitution operation, defined, as usual, modulo $\alpha$-conversion. Special
predicates are used to distinguish different kinds of terms surrounded by ES:
$\pabs{t}$ iff $t = \ctxtapp{\ctxt{L}}{\termabs{x}{u}}$, $\papp{t}$ iff $t =
\ctxtapp{\ctxt{L}}{\termapp{r}{u}}$ and $\pvar{t}$ iff $t =
\ctxtapp{\ctxt{L}}{x}$. Finally, $\pval{t}$ iff $\pabs{t}$ or $\pvar{t}$.

As mentioned in the introduction, our aim is to count the reduction steps by
distinguishing their multiplicative and exponential nature. To achieve this,
the standard specifications of CBN/CBV are not adequate, so  we need to
consider alternative appropriate definitions~\cite{AccattoliP12} making use of the following three
different rewriting rules:
\begin{center}
$
\begin{array}{rlll}
  \mbox{(\ttbf{d}istant \ttbf{B}eta)} & \termapp{\ctxtapp{\ctxt{L}}{\termabs{x}{t}}}{u}  & \rrule{\dBeta} & \ctxtapp{\ctxt{L}}{\termsubs{x}{u}{t}} \\ 
  \mbox{(\ttbf{s}ubstitute term)}& \termsubs{x}{u}{t}                               & \rrule{\sTerm} & \substitute{x}{u}{t} \\ 
  \mbox{(\ttbf{s}ubstitute \texttt{v}alue)} & \termsubs{x}{\ctxtapp{\ctxt{L}}{v}}{t}           & \rrule{\sVal}  & \ctxtapp{\ctxt{L}}{\substitute{x}{v}{t}}
\end{array}
$
\end{center}
Rule $\dB$ fires
$\beta$-reduction \emph{at a distance} by combining the two more elementary
rules: $\termapp{(\termabs{x}{t})}{u} \rrule{} \termsubs{x}{u}{t}$ and
$\termapp{\ctxtapp{\ctxt{L}}{t}}{u} \rrule{}
\ctxtapp{\ctxt{L}}{\termapp{t}{u}}$, where the second one is a
structural/permutation rule pushing out ES that may block $\beta$-redexes. Rule
$\sTerm$ implements standard substitution, while $\sVal$ restricts substitution
to values and acts \emph{at a distance} by combining the two more elementary
rules: $\termsubs{x}{v}{t} \rrule{} \substitute{x}{v}{t}$ and
$\termsubs{x}{\ctxtapp{\ctxt{L}}{v}}{t} \rrule{}
\ctxtapp{\ctxt{L}}{\termsubs{x}{v}{t}}$. The \emphdef{call-by-name} reduction
relation $\rewrite{\callbyname}$ is the closure by  contexts $\ctxt{N}$ of the
rules $\dBeta$ and $\sTerm$, while the \emphdef{call-by-value} reduction
relation $\rewrite{\callbyvalue}$ is the closure by  contexts $\ctxt{V}$ of the
rules $\dBeta$ and  $\sVal$. Equivalently,
\begin{center}
$\mathbin{\rewrite{\callbyname}} \coloneq \ctxt{N}(\rrule{\dBeta} \cup
\rrule{\sTerm})\quad$ and $\quad\mathbin{\rewrite{\callbyvalue}} \coloneq
\ctxt{V}(\rrule{\dBeta} \cup \rrule{\sVal})$
\end{center}
The resulting CBN/CBV formulations are now based on distinguished
\emphdef{multiplicative} (\cf $\dBeta$) and \emphdef{exponential} (\cf
$\sTerm$ and $\sVal$) steps, called resp. $\mStep$-steps and $\eStep$-steps,
thus inheriting the nature of cut elimination rules in LL. Notice that the
number of $\mStep$ and $\eStep$-steps in a normalization sequence is not always
the same: \eg $\termsubs{x}{y}{x} \rewrite{\callbyname} y$ has only
one $\eStep$-step, and $\termapp{(\termabs{x}{x})}{(\termapp{z}{\id})}
\rewrite{\callbyvalue} \termsubs{x}{\termapp{z}{\id}}{x}$ has only one
$\mStep$-step. We write $t \not\rewrite{\callbyname}$ (resp. $t \not
\rewrite{\callbyvalue}$), and call $t$ an \emphdef{$\callbyname$-normal form}
(resp. \emphdef{$\callbyvalue$-normal form}), if $t$ cannot be reduced by means
of $\rewrite{\callbyname}$ (resp. $\rewrite{\callbyvalue}$). Both CBN and CBV
are non-deterministic: $t \rewrite{\callbyname} u$ and $t \rewrite{\callbyname}
s$ does not necessarily implies $u = s$. But both calculi enjoy confluence,
notably because their rules are orthogonal~\cite{Klop80,Kha92}. Moreover, in
each calculus, it is easy to show that any two different reduction paths to
normal form have the same number of multiplicative and exponential steps.

CBN is to be understood as \emph{head}
reduction~\cite{Barendregt84}, \ie reduction does not take place in
arguments of applications, while CBV corresponds to \emph{open} CBV
reduction~\cite{AccattoliP12,AccattoliG16}, \ie reduction does not
take place inside abstractions.   The sets of
$\callbyname/\callbyvalue$-normal forms can be alternatively
characterised by the following grammars~\cite{BucciarelliKRV20}:
\begin{center}
\vspace{-1em}
\begin{tabular}{c@{\!\!}c}
\hspace{-1em}
\begin{tabular}{r@{\enspace}r@{\enspace}c@{\enspace}l}
\emphdef{(CBN Neutral)}   & $\HCBNNF$      & $\Coloneq$ & $x \in \TermVariable \mid \termapp{\HCBNNF}{t}$ \\
\emphdef{(CBN Normal)}    & $\CBNNF$       & $\Coloneq$ & $\termabs{x}{\CBNNF} \mid \HCBNNF$
\end{tabular}
&
\begin{tabular}{r@{\enspace}r@{\enspace}c@{\enspace}l}
\emphdef{(CBV Variable)}  & $\VarHCBVNF$   & $\Coloneq$ & $x \in \TermVariable \mid \termsubs{x}{\HCBVNF}{\VarHCBVNF}$ \\
\emphdef{(CBV Neutral)}   & $\HCBVNF$      & $\Coloneq$ & $\termapp{\VarHCBVNF}{\CBVNF} \mid \termapp{\HCBVNF}{\CBVNF} \mid \termsubs{x}{\HCBVNF}{\HCBVNF}$ \\
\emphdef{(CBV Normal)}    & $\CBVNF$       & $\Coloneq$ & $\termabs{x}{t} \mid \VarHCBVNF \mid \HCBVNF \mid \termsubs{x}{\HCBVNF}{\CBVNF}$
\end{tabular}
\end{tabular}
\end{center}

In contrast to CBN, variables are left out of the definition of neutral terms
for the CBV case, since they are now considered as values. However, even if
CBV variables are not neutral terms, neutral terms are necessarily headed by
a variable, so that variables play a double role which is difficult to be
distinguished by means of an intersection type system. We will come back to
this point in Sec.~\ref{s:tight-value}. Excluding variables from the set of
values brings a remarkable speed up in implementations of
CBV~\cite{LeberlePhD}, but goes beyond the logical Girard's
translation of CBV into LL, which is the main topic of this paper.
Our chosen approach allows both CBN and CBV neutral terms to translate to
neutral terms of the $\BangRev$-calculus (\cf Sec.~\ref{s:bang}).

\parrafo{Deterministic Strategies for CBN and CBV}
As a technical tool, in order to count the reduction steps of CBN/CBV we first
fix a deterministic version for them. The reduction relation $\loredname$ is a
deterministic version of $\rewrite{\callbyname}$ defined as: \[
  \begin{prooftree}
    \vphantom{\ctxt{L}\loredname}
    \justifies{\termapp{\ctxtapp{\ctxt{L}}{\termabs{x}{t}}}{u} \loredname
      \ctxtapp{\ctxt{L}}{\termsubs{x}{u}{t}}}
  \end{prooftree} \sep
  \begin{prooftree}
    \vphantom{\ctxt{L}\loredname}
    \justifies{\termsubs{x}{u}{t} \loredname \substitute{x}{u}{t}}
  \end{prooftree} \sep
  \begin{prooftree}
   t \loredname s \mbox{ and } \neg\pabs{t}
    \justifies{\termapp{t}{u} \loredname \termapp{s}{u}}
  \end{prooftree} \sep
  \begin{prooftree}
   t \loredname s
    \justifies{\termabs{x}{t} \loredname \termabs{x}{u}}
  \end{prooftree} \]
Similarly, the reduction relation $\loredv$ is a deterministic version of
$\rewrite{\callbyvalue}$ defined as:
\[
\begin{array}{c}
\Rule{\vphantom{\pabs{t}}}
     {\termapp{\ctxtapp{\ctxt{L}}{\termabs{x}{t}}}{u} \loredv \ctxtapp{\ctxt{L}}{\termsubs{x}{u}{t}}}
     {}
\qquad
\Rule{\vphantom{\pabs{t}}}
     {\termsubs{x}{\ctxtapp{\ctxt{L}}{v}}{t} \loredv \ctxtapp{\ctxt{L}}{\substitute{x}{v}{t}}}
     {}
\quad
\Rule{t \loredv s \quad \neg\pabs{t}}
     {\termapp{t}{u} \loredv \termapp{s}{u}}
     {}
\\ \\ 
\Rule{t \loredv s \quad u \in \HCBVNF \cup \VarHCBVNF}
     {\termapp{u}{t} \loredv \termapp{u}{s}}
     {}
\quad
\Rule{t \loredv s \quad \neg\pval{t}}
     {\termsubs{x}{t}{u} \loredv \termsubs{x}{s}{u}}
     {}
\qquad
\Rule{t \loredv s \quad u \in \HCBVNF}
     {\termsubs{x}{u}{t} \loredv \termsubs{x}{u}{s}}
     {}
\end{array}
\]

As a matter of notation, for $\X \in \set{\namestg,\valuestg}$, we write $t
\rewriten{\X}^{(\cmult,\cexp)} u$ if $t \rewriten{\X} u$ using $\cmult$
multiplicative steps and $\cexp$ exponential steps.

The normal forms of the non-deterministic and the deterministic versions of
CBN/CBV are the same, in turn characterised by the grammars
$\CBNNF$/$\CBVNF$~\cite{BucciarelliKRV20}.

\begin{proposition}
\label{l:cbn-cbv-normal-forms}
Let $t \in \TermLambda$. Then, $t \not\rewrite{\callbyname}$ iff $t
\not\loredname$ iff $t \in \CBNNF$ and $t \not\rewrite{\callbyvalue}$ iff
$t \not\loredv$ iff $t \in \CBVNF$.
\end{proposition}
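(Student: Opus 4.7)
The plan is to establish each three-way biconditional via a cycle of implications:
\[
t \not\rewrite{\callbyname} \Longrightarrow t \not\loredname \Longrightarrow t \in \CBNNF \Longrightarrow t \not\rewrite{\callbyname},
\]
and analogously for CBV with $\loredv$ and $\CBVNF$. The first implication holds because every rule defining $\loredname$ (resp. $\loredv$) corresponds to firing $\dBeta$ or $\sTerm$ (resp. $\sVal$) inside a CBN (resp. CBV) context, hence $\loredname \subseteq \rewrite{\callbyname}$ and $\loredv \subseteq \rewrite{\callbyvalue}$; thus every $\rewrite{\callbyname}$-normal (resp. $\rewrite{\callbyvalue}$-normal) term is \emph{a fortiori} $\loredname$-normal (resp. $\loredv$-normal).

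For the third implication I would proceed by structural induction on the grammars $\CBNNF$ (resp. $\CBVNF$). In CBN, a neutral $\HCBNNF$ term has the shape $\termapp{(\ldots\termapp{x}{t_1}\ldots)}{t_n}$, so its head is a free variable; this prevents any root $\dBeta$-redex (no list-context abstraction sits at the head), and the absence of an ES at the top rules out $\sTerm$; finally, CBN contexts $\ctxt{N}$ do not enter arguments, so no hidden redex arises there. The abstraction case follows by IH because $\ctxt{N}$ does enter under $\lambda$. For CBV the same pattern applies using the mutual recursion between $\VarHCBVNF$, $\HCBVNF$ and $\CBVNF$: at each shape every potential $\loredv$-redex is blocked either because the active head is not of the form $\ctxtapp{\ctxt{L}}{\termabs{x}{\cdot}}$, or because an ES body is not of the form $\ctxtapp{\ctxt{L}}{v}$.

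The crux is the second implication, proved by structural induction on $t$. For CBN: a variable lies directly in $\HCBNNF$; an abstraction $\termabs{x}{u}$ with $\termabs{x}{u} \not\loredname$ forces $u \not\loredname$ (by the rule for $\lambda$), so IH gives $u \in \CBNNF$ and hence $\termabs{x}{u} \in \CBNNF$; a closure $\termsubs{x}{u}{s}$ contradicts the hypothesis via the $\sTerm$ rule; and for an application $\termapp{u}{s}$ the IH together with the side condition $\neg\pabs{u}$ refine $u \in \CBNNF$ to $u \in \HCBNNF$, yielding $\termapp{u}{s} \in \HCBNNF$. The argument for CBV is analogous but requires a strengthened IH that classifies each subterm into the tightest of $\VarHCBVNF$, $\HCBVNF$ or $\CBVNF$; the side conditions of the six rules of $\loredv$ — notably $\neg\pabs{t}$, $\neg\pval{t}$, $u \in \HCBVNF \cup \VarHCBVNF$ and $u \in \HCBVNF$ — mirror these three grammars exactly and supply the information needed to place each subterm in the correct class.

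The main obstacle will be the CBV case of the second implication: the classes $\VarHCBVNF$, $\HCBVNF$, $\CBVNF$ overlap and feed into each other, so the IH cannot merely return ``$t \in \CBVNF$''. For example, the right-of-application rule of $\loredv$ requires $u \in \HCBVNF \cup \VarHCBVNF$, a strictly stronger property than $u \in \CBVNF$ (it excludes abstractions); similarly, the body of an ES must be recognised as being in $\HCBVNF$ to block the $\sVal$-rule. Membership in the finer class must therefore be tracked throughout the induction. Once this bookkeeping is in place, the matching between deterministic side conditions and grammar clauses is direct, and the remaining verifications are routine syntactic checks.
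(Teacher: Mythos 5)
Your proposal is correct and follows essentially the same route as the paper: the inclusion $\loredname\subseteq\rewrite{\callbyname}$ (resp. $\loredv\subseteq\rewrite{\callbyvalue}$) for the first implication, and a structural induction on terms with a simultaneous/strengthened hypothesis tracking the finer classes $\VarHCBVNF$, $\HCBVNF$, $\CBVNF$ via the predicates $\ipabs$, $\ipvar$, $\ipapp$ for the rest. The only cosmetic difference is that the paper proves the biconditional with $\not\rewrite{\callbyvalue}$ directly and treats the deterministic case as a straightforward adaptation, whereas you close a three-way cycle; the content is the same.
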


\ifreport{\begin{proof} \mbox{}
\begin{enumerate}
  \item $t \not\rewrite{\callbyname}$ implies $t \not\loredname$ follows from
  $\rewrite{\namestg} \subset \rewrite{\callbyname}$.

  \item $t \in \CBNNF$ iff $t \not\rewrite{\callbyname}$ is shown
  simultaneously by considering the following statements:
  \begin{enumerate}
    \item\label{ldix:uno} $t \in \HCBNNF$ iff $t \not\rewrite{\callbyname}$
    and $\neg\pabs{t}$.

    \item\label{ldix:dos} $t \in \CBNNF$ iff $t \not\rewrite{\callbyname}$.
  \end{enumerate}
  \begin{enumerate}
    \item\label{case-non-lambda} The left-to-right implication is
    straightforward. For the right-to-left implication we reason by induction
    on terms. Suppose $t \not\rewrite{\callbyname}$ and $\neg\pabs{t}$. If
    $t$ is a substitution, then rule $\sTerm$ is applicable, contradiction.
    Then $t = \termapp{u}{u'}$, where $\neg\pabs{u}$ (otherwise $\dBeta$ would
    be applicable) and $u \not\rewrite{\callbyname}$ (otherwise $t$ would be
    $\callbyname$-reducible). The \ih (\ref{ldix:uno}) gives $u \in \HCBNNF$
    and thus we conclude $t = \termapp{u}{u'} \in \HCBNNF$.
    
    \item The left-to-right implication is straightforward. For the
    right-to-left implication we reason by induction on terms. If $t =
    \termabs{x}{u}$, then $u \not\rewrite{\callbyname}$, and the \ih
    (\ref{ldix:dos}) gives $u \in \CBNNF$, which implies in turn
    $\termabs{x}{u} \in \CBNNF$. Otherwise, we apply the previous case. 
  \end{enumerate}

  \item $t \not\loredname$ implies $t \in \CBNNF$ follows a straightforward
  adaptation of the previous right-to-left implication.

  \item $t \not\rewrite{\callbyvalue}$ implies $t \not\loredvn$ follows from
  $\rewrite{\valuestg} \subset \rewrite{\callbyvalue}$.

  \item $t \in \CBVNF$ iff $t \not\rewrite{\callbyvalue}$ is shown
  simultaneously by considering the following statements:
  \begin{enumerate}
    \item\label{ldix:dos-uno} $t \in \VarHCBVNF$ iff $t
    \not\rewrite{\callbyvalue}$ and $\neg\pabs{t}$ and $\neg\papp{t}$.
    
    \item\label{ldix:dos-dos} $t \in \HCBVNF$ iff $t
    \not\rewrite{\callbyvalue}$ and $\neg\pabs{t}$ and $\neg\pvar{t}$.
    
    \item\label{ldix:dos-tres} $t \in \CBVNF$ iff $t
    \not\rewrite{\callbyvalue}$.
  \end{enumerate}
  \begin{enumerate}
    \item The left-to-right implication is straightforward. For the
    right-to-left implication we reason by induction on terms. Suppose
    $t \not\rewrite{\callbyvalue}$ and $\neg\pabs{t}$ and $\neg\papp{t}$.
    Then necessarily $\pvar{t}$. If $t = \termsubs{x}{u'}{u}$, then $u, u'
    \not\rewrite{\callbyvalue}$ (otherwise $t$ would be
    $\callbyvalue$-reducible), $\neg\pabs{u'}$ and $\neg\pvar{u'}$ (otherwise
    $t$ would be $\sVal$-reducible), $\neg\pabs{u}$ and $\neg\papp{u}$. The \ih
    (\ref{ldix:dos-uno}) gives $u \in \VarHCBVNF$ and $u' \in \HCBVNF$, so that
    we conclude $t \in \VarHCBVNF$. If $t = x$ we trivially conclude $t \in
    \VarHCBVNF$.
    
    \item The left-to-right implication is straightforward. For the
    right-to-left implication we reason by induction on terms. Suppose $t
    \not\rewrite{\callbyvalue}$ and $\neg\pabs{t}$ and $\neg\pvar{t}$. Then
    necessarily $\papp{t}$. If $t = \termsubs{x}{u'}{u}$, then $u, u'
    \not\rewrite{\callbyvalue}$ (otherwise $t$ would be
    $\callbyvalue$-reducible), $\neg\pabs{u'}$ and $\neg\pvar{u'}$ (otherwise
    $t$ would be $\sVal$-reducible), $\neg\pabs{u}$ and $\neg\pvar{u}$. The \ih
    (\ref{ldix:dos-dos}) gives $u \in \HCBVNF$ and $u' \in \HCBVNF$, so that we
    conclude $t \in \HCBVNF$. If $t = \termapp{u}{u'}$, then $u, u'
    \not\rewrite{\callbyvalue}$ (otherwise $t$ would be
    $\callbyvalue$-reducible), $\neg\pabs{u}$ (otherwise $t$ would be
    $\dBeta$-reducible). Moreover, $\neg\papp{u}$ or $\neg\pvar{u}$ holds. The
    \ih (\ref{ldix:dos-tres}) gives $u' \in \CBVNF$ and ($u \in \VarHCBVNF$ or
    $u \in \HCBVNF$). In both cases we conclude $t = \termapp{u}{u'} \in
    \HCBVNF$. 
    
    \item The left-to-right implication is straightforward. For the
    right-to-left implication we reason by induction on terms. Suppose $t
    \not\rewrite{\callbyvalue}$. If $t = \termabs{x}{u}$, then $t \in \CBVNF$
    is straightforward. If $t = \termsubs{x}{u'}{u}$, then $u, u'
    \not\rewrite{\callbyvalue}$ (otherwise $t$ would be
    $\callbyvalue$-reducible), $\neg \pabs{u'}$ and $\neg\pvar{u'}$ (otherwise
    $t$ would be $\sVal$-reducible). The \ih (\ref{ldix:dos-tres}) and
    (\ref{ldix:dos-dos}) give $u \in \CBVNF$ and $u' \in \HCBVNF$, so that we
    conclude $t \in \CBVNF$. If $t = \termapp{u}{u'}$, then $u
    \not\rewrite{\callbyvalue}$ and $u' \not \rewrite{\callbyvalue}$. The \ih
    (\ref{ldix:dos-tres}) on $u'$ gives $u' \in \CBVNF$. Moreover,
    $\neg\pabs{u}$, otherwise $t$ would be $\callbyvalue$-reducible. Then
    $\neg\pvar{u}$ or $\neg\papp{u}$. The \ih (\ref{ldix:dos-uno}) or
    (\ref{ldix:dos-dos}) gives $u \in \VarHCBVNF$ or $u \in \HCBVNF$. Thus
    $\termapp{u}{u'} \in \HCBVNF\subseteq \CBVNF$ as required.
  \end{enumerate}
  
  \item $t \not\loredv$ implies $t \in \CBVNF$ follows a straightforward
  adaptation of the previous right-to-left implication.
\end{enumerate}
\end{proof}

}

(Head) CBN ignores reduction inside arguments of applications, while (Open) CBV
ignores reduction inside abstractions, then CBN (resp. CBV) normal forms are
measured by the following \emphdef{$\callbyname$-size} (resp.
\emphdef{$\callbyvalue$-size}) function:
\begin{center}
$
\begin{array}{llllll}
\nsize{x} \coloneq 0 &
\nsize{\termabs{x}{t}}\coloneq \nsize{t} + 1 & 
\nsize{\termapp{t}{u}} \coloneq \nsize{t} + 1 &
\nsize{\termsubs{x}{u}{t}} \coloneq \nsize{t}
\\
\valsize{x}\coloneq 0 & 
\valsize{\termabs{x}{t}} \coloneq 0 &
\valsize{\termapp{t}{u}} \coloneq \valsize{t} + \valsize{u} + 1 &
\valsize{\termsubs{x}{u}{t}} \coloneq \valsize{t} + \valsize{u}
\end{array}
$
\end{center}


\section{Tight Call-by-Name}
\label{s:tight-cbname}

We now introduce a tight type system $\SysTightCBN$ for CBN which captures
independent exact measures for $\rewrite{\callbyname}$-reduction sequences.
This result is not surprising, since it extends the one in~\cite{AccattoliGK18}
from the pure $\lambda$-calculus to our CBN calculus. However, we revisit the
op.cit. approach by appropriately splitting reduction into multiplicative and
exponential steps, a reformulation  necessary to establish a
precise correspondence with the tight type system for the $\BangRev$-calculus.
In particular, our \emph{counting} mechanism slightly differs
from~\cite{AccattoliGK18} (details below).

In system $\SysTightCBN$ there are two base types: 
$\typeabs$ types terms whose normal form is an abstraction, and 
$\typeneutral$ types terms whose normal form is CBN neutral. The grammar of
types is given by:
\begin{center}
\begin{tabular}{rrcll}
\emphdef{(Tight Types)} & $\typetight$   & $\Coloneq$ & $\typeneutral \mid \typeabs$ \\
\emphdef{(Types)}       & $\sigma, \tau$ & $\Coloneq$ & $\typetight \mid \M \mid \functtype{\M}{\sigma}$ \\
\emphdef{(Multitypes)}  & $\M, \N$       & $\Coloneq$ & $\intertype{\sigma_i}{i \in I}$  where $I$ is a finite set
\end{tabular}
\end{center}
Multitypes are multisets of types. The \emphdef{empty multitype} is denoted by
$\emul$, $\sqcup$  denotes multitype union, and $\sqsubseteq$ multitype
inclusion. Also, $|\M|$ denotes the size of the multitype, thus if $\M =
\intertype{\sigma_i}{i \in I}$ then $|\M| = \#(I)$.
Notice that the grammar for types slightly differs from~\cite{AccattoliGK18},
in particular  types are now allowed to be just multitypes. The main reason to
adopt this change is that this (unique) grammar is used for our three
formalisms CBN, CBV, and CBPV, changing only the definition of tight types for
each case\footnote{In the intersection type literature CBN and CBV do always
adopt different grammars.}.

\emphdef{Typing contexts} (or just \emphdef{contexts}), written $\Gamma,
\Delta$, are functions from variables to multitypes, assigning the empty
multitype to all but a finite set of variables. The domain of $\Gamma$ is given
by $\dom{\Gamma} \eqdef \set{x \mid \Gamma(x) \neq \emul}$. The \emphdef{union
of contexts}, written $\ctxtsum{\Gamma}{\Delta}{}$, is defined by
$(\ctxtsum{\Gamma}{\Delta}{})(x) \eqdef \Gamma(x) \sqcup \Delta(x)$. An example
is $\ctxtsum{(\assign{x}{\intertype{\sigma}{}},
\assign{y}{\intertype{\tau}{}})}{(\assign{x}{\intertype{\sigma}{}},
\assign{z}{\intertype{\tau}{}})}{} = (\assign{x}{\intertype{\sigma, \sigma}{}},
\assign{y}{\intertype{\tau}{}}, \assign{z}{\intertype{\tau}{}})$. This notion
is extended to several contexts as expected, so that $\ctxtsum{}{}{i \in I}
\Gamma_i$ denotes a finite union of contexts (particularly the empty context
when $I = \emptyset$). We write $\ctxtres{\Gamma}{x}{}$ for the context
$(\ctxtres{\Gamma}{x}{})(x) = \emul$ and $(\ctxtres{\Gamma}{x}{})(y) =
\Gamma(y)$ if $y \neq x$.

\emphdef{Type judgements} have the form
$\sequT{\Gamma}{\assign{t}{\sigma}}{\cmult}{\cexp}{\csize}$, where $\Gamma$ is
a typing context, $t$ is a term, $\sigma$ is a type, and the \emphdef{counters}
$(\cmult, \cexp, \csize)$ are expected to provide the following information:
$\cmult$ (resp.  $\cexp$) indicates the number of multiplicative $\mStep$-steps
(resp.  exponential $\eStep$-steps) to normal form, while $\csize$ indicates
the $\callbyname$-size of this normal form. It is worth noticing that the
$\lambda$-calculus hides both multiplicative and
exponential steps in one single $\beta$-reduction rule~\cite{AccattoliGK18}, so
that only two counters suffice, one for the number of $\beta$-reduction steps,
and another for the size of normal forms. Here we want to discriminate between
multiplicative/exponential steps, in the sense that the execution of an ES
generates an exponential step, but not a multiplicative one. This becomes
possible due to the CBN/CBV alternative specifications with ES that we have
adopted, and that is why we need three independent counters, in contrast to the
$\lambda$-calculus.

The type system $\SysTightCBN$ for CBN is given in
Fig.~\ref{fig:typingSchemesNNPersistent} (\emph{persistent} rules)
and~\ref{fig:typingSchemesNNConsuming} (\emph{consuming} rules). A constructor
is consuming (resp. persistent) if it is consumed (resp. not consumed) during
$\callbyname$-reduction. For instance, in
$\termapp{\termapp{\Kterm}{\id}}{\Omega}$ the two abstractions of $\Kterm$ are
consuming, while the abstraction of $\id$ is persistent, and all the other
constructors are also consuming, except those of $\Omega$ that turns out to be
an untyped subterm. The persistent rules
(Fig.~\ref{fig:typingSchemesNNPersistent}) are those typing persistent
constructors, so that none of them increases the first two counters, but only
possibly the third one, which contributes to the size of the normal form. The
consuming rules (Fig.~\ref{fig:typingSchemesNNConsuming}), in contrast, type
consuming constructors, so that they may increase the first two counters,
contributing to the length of the normalisation sequence. Notice in particular
that there are only two rules contributing to the multiplicative/exponential
counting:
\begin{inparaenum}[(1)]
  \item rule $\ruleNDArrowE$ types a \emph{consuming} application, meaning that
  its left-hand side subterm reduces to an abstraction, then causing a
  multiplicative step (rule $\dBeta$)~\footnote{In both~\cite{AccattoliGK18}
  and~\cite{KesnerV20}, it is the consuming abstraction which contributes to
  the multiplicative steps.} followed later by an exponential step; while
  \item rule $\ruleNDESubs$ types a \emph{consuming} substitution causing an
  exponential step. In both cases, it is the constructor
  application/substitution which is considered to be consumed, without any
  further hypothesis on the form of the subterm that appears inside this
  consuming constructor (recall that any term can be substituted in CBN).
  This phenomenon facilitates in particular the identification of exponential
  steps in CBN, in contrast to CBV and CBPV.
\end{inparaenum}

\begin{figure}[ht]
\centering $
\begin{array}{c}
\Rule{\sequT{\Gamma}{\assign{t}{\typeneutral}}{\cmult}{\cexp}{\csize}
      }
     {\sequT{\Gamma}{\assign{\termapp{t}{u}}{\typeneutral}}{\cmult}{\cexp}{\csize+1}}
     {\ruleNTArrowE}
       \qquad
\Rule{\sequT{\Gamma}{\assign{t}{\typetight}}{\cmult}{\cexp}{\csize}
      \quad
      \ptight{\Gamma(x)}
     }
     {\sequT{\ctxtres{\Gamma}{x}{}}{\assign{\termabs{x}{t}}{\typeabs}}{\cmult}{\cexp}{\csize+1}}
     {\ruleNTArrowI}
\end{array}$
\caption{System $\SysTightCBN$ for the Call-by-Name Calculus: Persistent Typing Rules.}
\label{fig:typingSchemesNNPersistent}
\end{figure}
\begin{figure}[ht]
\centering $
\kern-2em
\begin{array}{c}
\Rule{\vphantom{\Gamma}}
     {\sequT{\assign{x}{\multiset{\sigma}}}{\assign{x}{\sigma}}{0}{0}{0}}
     {\ruleNDAxiom}
\qquad
\Rule{\sequT{\Gamma}{\assign{t}{\tau}}{\cmult}{\cexp}{\csize}}
     {\sequT{\ctxtres{\Gamma}{x}{}}{\assign{\termabs{x}{t}}{\functtype{\Gamma(x)}{\tau}}}{\cmult}{\cexp}{\csize}}
     {\ruleNDArrowI}
\\
\\
\Rule{\sequT{\Gamma}{\assign{t}{\functtype{\intertype{\sigma_i}{i \in I}}{\tau}}}{\cmult}{\cexp}{\csize}
      \quad
      \many{\sequT{\Delta_i}{\assign{u}{\sigma_i}}{\cmult_i}{\cexp_i}{\csize_i}}{i \in I}
     }
     {\sequT{\ctxtsum{\Gamma}{\Delta}{}}{\assign{\termapp{t}{u}}{\tau}}{1+\cmult+_{\iI}{\cmult_i}}{1+\cexp+_{\iI}{\cexp_i}}{\csize+_{\iI}{\csize_i}}}
     {\ruleNDArrowE}
\\
\\
\Rule{\sequT{\Gamma;\assign{x}{\intertype{\sigma_i}{i \in I}}}{\assign{t}{\tau}}{\cmult}{\cexp}{\csize}
      \enspace
      \many{\sequT{\Delta_i}{\assign{u}{\sigma_i}}{\cmult_i}{\cexp_i}{\csize_i}}{i \in I}
     }
     {\sequT{\ctxtsum{(\ctxtres{\Gamma}{x}{})}{\Delta_i}{i \in I}}{\assign{\termsubs{x}{u}{t}}{\tau}}{\cmult+_{\iI}{\cmult_i}}{1+\cexp+_{\iI}{\cexp_i}}{\csize+_{\iI}{\csize_i}}}
     {\ruleNDESubs}
\end{array} $
\caption{System $\SysTightCBN$ for the Call-by-Name Calculus: Consuming Typing Rules.}
\label{fig:typingSchemesNNConsuming}
\end{figure}

This dichotomy between consuming/persistent constructors has been first used
in~\cite{KesnerV20} for the $\lambda$ and $\lambda\mu$-calculi, and adapted
here for our distant versions of CBN/CBV as well as for the
$\BangRev$-calculus. We write
$\derivable{}{\sequT{\Gamma}{\assign{t}{\sigma}}{\cmult}{\cexp}{\csize}}{\SysTightCBN}$
if there is a \emphdef{(tree) type derivation} of the judgement
$\sequT{\Gamma}{\assign{t}{\sigma}}{\cmult}{\cexp}{\csize}$ in system
$\SysTightCBN$. The term $t$ is typable in system $\SysTightCBN$, or
$\SysTightCBN$-typable, iff there is a context $\Gamma$, a type $\sigma$ and
counters $(\cmult,\cexp,\csize)$ such that
$\derivable{}{\sequT{\Gamma}{\assign{t}{\sigma}}{\cmult}{\cexp}{\csize}}{\SysTightCBN}$.
We use the capital Greek letters $\Phi, \Psi, \ldots$ to name type derivations,
by writing for example
$\derivable{\Phi}{\sequT{\Gamma}{\assign{t}{\sigma}}{\cmult}{\cexp}{\csize}}{\SysTightCBN}$.
As (local) counters of judgements in a given derivation $\Phi$ contribute to
the global counters of the derivation itself, there is an alternative way to
define counters associated to a derivation $\Phi$: the first counter
$\cmult_{\Phi}$ is given by the number of rules $\ruleNDArrowE$ in $\Phi$, the
second counter $\cexp_{\Phi}$ is the number of rules $\ruleNDESubs$ in $\Phi$
and finally the third counter $\csize_{\Phi}$ is the number of rules
$\ruleNTArrowE$ and $\ruleNTArrowI$ in $\Phi$. We prefer however to
systematically write counters in judgements to easy the understanding of the
examples and proofs.

A \emphdef{multitype} $\intertype{\sigma_i}{i \in I}$ is \emphdef{tight},
written $\ptight{\intertype{\sigma_i}{i \in I}}$, if $\sigma_i \in \typetight$
for all $i \in I$. A \emphdef{context} $\Gamma$ is said to be \emphdef{tight}
if it assigns tight multitypes to all variables. A \emphdef{type derivation}
$\derivable{\Phi}{\sequT{\Gamma}{\assign{t}{\sigma}}{\cmult}{\cexp}{\csize}}{\SysTight}$
is \emphdef{tight} if $\Gamma$ is tight and $\sigma \in \typetight$.

The proofs of soundness and completeness related to our CBN type system are
respectively based on subject reduction and expansion properties, and they are
very similar to those in~\cite{AccattoliGK18}. The most important point to be
mentioned is that system $\SysTightCBN$ is now counting \emph{separately} the
multiplicative and exponential steps of $\rewrite{\callbyname}$-reductions to
normal-form. 

\parrafo{Soundness}
The soundness property is based on a series of auxiliary results.

\begin{lemma}[Tight Spreading]
Let
$\derivable{\Phi}{\sequT{\Gamma}{\assign{t}{\sigma}}{\cmult}{\cexp}{\csize}}{\SysTightCBN}$
such that $\Gamma$ is tight.
If $t \in\HCBNNF$, then $\sigma \in \typetight$.
\label{l:name:tight-spreading}
\end{lemma}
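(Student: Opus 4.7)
The plan is to prove the statement by structural induction on $t \in \HCBNNF$, following the grammar $\HCBNNF \Coloneq x \mid \termapp{\HCBNNF}{t}$. A key preliminary observation I will use is that if $\ctxtsum{\Gamma_1}{\Gamma_2}{}$ is tight then both $\Gamma_1$ and $\Gamma_2$ are tight, since a multitype union $\M_1 \sqcup \M_2$ consists entirely of types in $\typetight$ iff each $\M_i$ does.

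For the base case $t = x$, the only rule whose conclusion types a variable is \ruleNDAxiom, so the derivation $\Phi$ must end with this rule. Hence $\Gamma = \assign{x}{\multiset{\sigma}}$ and, since $\Gamma$ is tight, $\multiset{\sigma}$ is tight, which means $\sigma \in \typetight$ as required.

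For the inductive case $t = \termapp{t'}{u}$ with $t' \in \HCBNNF$, the last rule of $\Phi$ must produce an application, so it is either the persistent \ruleNTArrowE{} or the consuming \ruleNDArrowE. If it is \ruleNTArrowE, then $\sigma = \typeneutral \in \typetight$ and we are done immediately. If it is \ruleNDArrowE, there is a premise $\derivable{\Psi}{\sequT{\Gamma'}{\assign{t'}{\functtype{\M}{\sigma}}}{\cmult'}{\cexp'}{\csize'}}{\SysTightCBN}$ with $\Gamma = \ctxtsum{\Gamma'}{(\ctxtsum{}{}{i\in I}\Delta_i)}{}$. By the preliminary observation, $\Gamma'$ is tight, so the induction hypothesis applied to $t' \in \HCBNNF$ and $\Psi$ yields that the type of $t'$ is in $\typetight$. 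But this type is $\functtype{\M}{\sigma}$, which is not tight, giving a contradiction. Hence this subcase cannot occur, and the only possibility is the persistent rule.

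I do not anticipate a major obstacle: the argument is a clean induction on the shape of neutral terms, driven by the fact that the grammar of $\typetight$ excludes arrow types, so any derivation of a neutral term under a tight context is forced to use only persistent application rules above the axiom. The slight subtlety is just to observe that tightness is preserved under splitting of contexts, which is needed in the application case.
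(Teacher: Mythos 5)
Your proof is correct and takes essentially the same route as the paper's: the paper inducts on the derivation $\Phi$ rather than on the structure of $t\in\HCBNNF$, but the case analysis is identical — the axiom case reads tightness of $\sigma$ off the tight context, the persistent application rule gives $\typeneutral$ immediately, and the consuming application rule is refuted by applying the induction hypothesis to the left premise (whose type $\functtype{\M}{\sigma}$ cannot be tight). Your explicit remark that tightness is preserved under context splitting is the same observation the paper makes via $\Gamma'\ctxleq\Gamma$.
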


\begin{proof}
By induction on $\Phi$.
\begin{itemize}
  \item $\ruleNTArrowE$. Then $\sigma \in \typetight$ by definition.

  \item $\ruleNTArrowI$. This case does not apply since $t \notin \HCBNNF$.

  \item $\ruleNDAxiom$. This case is immediate since $\Gamma = \multiset{\sigma}$
  tight implies $\sigma \in \typetight$.
  
  \item $\ruleNDArrowE$. Then $t = \termapp{r}{u}$. If $t \in \HCBNNF$, then $r
  \in \HCBNNF$. We have a derivation
  $\derivable{\Phi_{r}}{\sequT{\Gamma'}{\assign{r}{\multiset{\functtype{\M}{\sigma}}}}{\cmult'}{\cexp'}{\csize'}}{\SysTightCBN}$,
  with $\Gamma' \ctxleq \Gamma$, so that $\Gamma'$ is also tight. The \ih gives 
  $\multiset{\functtype{\M}{\sigma}} \in \typetight$, and this is a
  contradiction. Hence, this case does not apply.

  \item $\ruleNDArrowI$ and $\ruleNDESubs$. These cases do not apply since $t
  \notin \HCBNNF$. 
\end{itemize}  
\end{proof}


\begin{lemma}
Let
$\derivable{\Phi}{\sequT{\Gamma}{\assign{t}{\sigma}}{\cmult}{\cexp}{\csize}}{\SysTightCBN}$
tight. Then, $\cmult = \cexp = 0$ iff $t \in  \CBNNF$.
\label{l:name:czero-normal}
\end{lemma}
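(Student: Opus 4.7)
The plan is to prove both directions simultaneously by induction on the derivation $\Phi$, leveraging the splitting of rules into persistent and consuming ones together with the Tight Spreading lemma (Lemma~\ref{l:name:tight-spreading}).

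For the forward direction, I would first observe that the only rules contributing positively to $\cmult$ or $\cexp$ are $\ruleNDArrowE$ and $\ruleNDESubs$. Therefore, if $\cmult = \cexp = 0$, the last rule of $\Phi$ must be one of $\ruleNDAxiom$, $\ruleNDArrowI$, $\ruleNTArrowE$, or $\ruleNTArrowI$. The rule $\ruleNDArrowI$ is ruled out by the tightness hypothesis on $\sigma$, since it produces a function type $\functtype{\M}{\tau} \notin \typetight$. The axiom and abstraction ($\ruleNTArrowI$) cases are immediate, while for $\ruleNTArrowE$ with $t = \termapp{r}{u}$, the induction hypothesis on $r$ yields $r \in \CBNNF$, and the fact that an abstraction cannot be assigned the tight type $\typeneutral$ in system $\SysTightCBN$ forces $r \in \HCBNNF$, hence $t \in \HCBNNF \subseteq \CBNNF$.

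For the backward direction, assuming $t \in \CBNNF$, I would inspect the last rule of $\Phi$ and rule out the three problematic cases. Rule $\ruleNDESubs$ is impossible because $t$ would then be a closure $\termsubs{x}{u}{r}$, whereas no term in $\CBNNF$ contains an ES. Rule $\ruleNDArrowI$ is again excluded by tightness of $\sigma$. Rule $\ruleNDArrowE$ is the most delicate exclusion: from $t = \termapp{r}{u} \in \CBNNF$ we obtain $r \in \HCBNNF$; the tightness of $\Gamma$ propagates to the context $\Gamma'$ typing $r$ (since tightness of a union forces tightness of each component); then Tight Spreading yields that $r$'s type is tight, contradicting its shape $\functtype{\M}{\tau}$. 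The remaining cases $\ruleNDAxiom$, $\ruleNTArrowE$, $\ruleNTArrowI$ pass the normal-form structure to subderivations, where the induction hypothesis gives vanishing counters that propagate unchanged through these (persistent or non-incrementing) rules.

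The main obstacle, as usual with tight systems, is the backward direction: one must argue that consuming rules $\ruleNDArrowE$ and $\ruleNDESubs$ cannot appear at the root of a tight derivation of a normal form. Tight Spreading is exactly the tool that discards $\ruleNDArrowE$, while the absence of ES in $\CBNNF$ handles $\ruleNDESubs$. The forward direction carries a small subtlety in the $\ruleNTArrowE$ case, where one needs to strengthen ``$r \in \CBNNF$'' to ``$r \in \HCBNNF$'' using the impossibility of typing an abstraction with $\typeneutral$; this can either be handled inline or by proving the sharper statement ``if $\sigma = \typeneutral$ then $t \in \HCBNNF$, and if $\sigma = \typeabs$ then $\pabs{t}$'' jointly with the main claim.
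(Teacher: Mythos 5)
Your proof is correct and follows essentially the same route as the paper: case analysis on the last rule, with tightness of $\sigma$ excluding $\ruleNDArrowI$, Tight Spreading excluding $\ruleNDArrowE$ over a neutral head in the backward direction, and the absence of ES in $\CBNNF$ excluding $\ruleNDESubs$. The only cosmetic differences are that the paper strengthens the forward direction into a simultaneous statement over the shape predicates $\pvar{}$/$\papp{}$/$\pabs{}$ (where you instead argue inline that a term typed $\typeneutral$ cannot be an abstraction, which works equally well) and proves the backward direction by induction on $t$ rather than on $\Phi$.
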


\begin{proof}
$\left.\Rightarrow\right)$ We prove simultaneously the following statements by
induction on $\Phi$.
\begin{enumerate}
  \item\label{l:name:czero-normal:vr}
  $\derivable{\Phi}{\sequT{\Gamma}{\assign{t}{\sigma}}{0}{0}{\csize}}{\SysTightCBN}$
  tight and $\pvar{t}$ or $\papp{t}$ implies $t \in \HCBNNF$.
  
  \item\label{l:name:czero-normal:no}
  $\derivable{\Phi}{\sequT{\Gamma}{\assign{t}{\sigma}}{0}{0}{\csize}}{\SysTightCBN}$
  tight and $\pabs{t}$ implies $t \in \CBNNF$.
\end{enumerate}

\begin{itemize}
  \item $\ruleNDAxiom$. Then $t = x$ and $\pvar{t}$. By definition $x \in
  \HCBVNF$ trivially holds.
  
  \item $\ruleNTArrowE$. Then $t = \termapp{r}{u}$, and
  $\sequT{\Gamma}{\assign{r}{\typeneutral}}{0}{0}{\csize-1}$ is also tight.
  We are in the case $\papp{t}$ so we need to show $t \in \HCBNNF$. The \ih
  (\ref{l:name:czero-normal:vr}) gives $r \in \HCBNNF$ and we thus conclude
  $\termapp{r}{u} \in \HCBNNF$.
  
  \item $\ruleNTArrowI$. Then $t = \termabs{x}{r}$, and the derivation typing
  $r$ is also tight and has the two first counters equal to $0$. The \ih gives
  $r \in \CBNNF$, then we conclude $t \in \CBNNF$. 
  
  \item $\ruleNDArrowE$, $\ruleNDESubs$. The hypothesis about the counters is
  not verified so these cases do not apply. 
  
  \item $\ruleNDArrowI$. This case does not apply since the derivation cannot
  be tight. 
\end{itemize}

$\left.\Leftarrow\right)$ By induction on $t$.
\begin{itemize}
  \item $t = x$. Then, $\Phi$ necessarily ends with rule $\ruleNDAxiom$ and the
  statement trivially holds.
  
  \item $t = \termapp{r}{u}$. By definition $t \in \CBNNF$ gives $r \in
  \HCBNNF$. There are two cases to consider:
  \begin{enumerate}
    \item if $\Phi$ ends with rule $\ruleNTArrowE$, then $\sigma =
    \typeneutral$, and
    $\sequT{\Gamma}{\assign{r}{\typeneutral}}{\cmult}{\cexp}{\csize}$. The \ih
    gives $\cmult = \cexp = 0$ and we are done.
    
    \item if $\Phi$ ends with rule $\ruleNDArrowE$, then
    $\derivable{\Phi_{r}}{\sequT{\Gamma'}{\assign{r}{\multiset{\functtype{\M}{\tau}}}}{\cmult'}{\cexp'}{\csize'}}{\SysTightCBN}$
    where $\Gamma'$ is tight. Moreover, Lem.~\ref{l:name:tight-spreading} gives
    $\multiset{\functtype{\M}{\tau}} \in \typetight$ which is a contradiction.
    Hence, this case does not apply.
  \end{enumerate}
  
  \item $t = \termabs{x}{r}$. Then $r \in \CBNNF$ and there are two cases to
  consider for $\Phi$:
  \begin{enumerate}
    \item if $\Phi$ ends with rule $\ruleNTArrowI$, then the subderivation of
    $\Phi$ is tight, and thus property holds by the \ih 
    
    \item if $\Phi$ ends with rule $\ruleNDArrowI$, then $\Phi$ cannot be tight
    so this case does not apply.
  \end{enumerate}
  
  \item $t = \termsubs{x}{u}{r}$. By definition $t \notin \CBNNF$ so this cases
  does not apply. 
\end{itemize}
\end{proof}


\begin{lemma}
Let
$\derivable{\Phi}{\sequT{\Gamma}{\assign{t}{\sigma}}{0}{0}{\csize}}{\SysTightCBN}$.
If $\Phi$ is tight and $t \in \CBNNF$, then $\csize = \nsize{t}$.
\label{l:name:tight-size}
\end{lemma}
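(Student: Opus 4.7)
The plan is to proceed by induction on the derivation $\Phi$, performing a case analysis on the last typing rule. Since the derivation is tight (so $\sigma \in \typetight$ and $\Gamma$ tight) and its first two counters are both zero, most rules are excluded from the outset. Specifically, $\ruleNDArrowE$ contributes $1+\cmult+\sum_{i\in I}\cmult_i$ to the multiplicative counter and $\ruleNDESubs$ contributes $1+\cexp+\sum_{i\in I}\cexp_i$ to the exponential counter, so neither can occur as the last rule. Rule $\ruleNDArrowI$ produces a type of the form $\functtype{\M}{\tau}$, which is not in $\typetight$, contradicting tightness of $\sigma$. Hence the last rule must be one of $\ruleNDAxiom$, $\ruleNTArrowE$, or $\ruleNTArrowI$.

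For the axiom case $\ruleNDAxiom$, we have $t=x$, $\csize = 0$, and $\nsize{x}=0$, so the statement is immediate. For the persistent application rule $\ruleNTArrowE$, we have $t = \termapp{r}{u}$ and a premise derivation $\sequT{\Gamma}{\assign{r}{\typeneutral}}{0}{0}{\csize-1}$; this subderivation is tight (its context is $\Gamma$, and its type $\typeneutral$ is in $\typetight$). Since $t\in \CBNNF$ forces $r\in\HCBNNF \subseteq \CBNNF$, the induction hypothesis yields $\csize-1 = \nsize{r}$, and therefore $\nsize{t} = \nsize{r}+1 = \csize$, as required.

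For the persistent abstraction rule $\ruleNTArrowI$, we have $t = \termabs{x}{r}$ and a premise derivation $\sequT{\Gamma}{\assign{r}{\typetight}}{0}{0}{\csize-1}$, with $\ptight{\Gamma(x)}$. Together with the tightness of $\ctxtres{\Gamma}{x}{}$ this gives that $\Gamma$ itself is tight, so the premise derivation is tight. Since $t\in\CBNNF$ implies $r\in\CBNNF$, the induction hypothesis yields $\csize-1 = \nsize{r}$, and therefore $\nsize{t}=\nsize{r}+1=\csize$.

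The only delicate point is ensuring that the premise derivations in the persistent cases remain tight so that the induction hypothesis applies; this is immediate in both cases from the shape of the rules and the assumption that $\sigma\in\typetight$. No serious obstacle arises: the combination of tightness and the vanishing of the first two counters effectively forces a derivation made exclusively of the persistent rules (plus axioms), and under this restriction each such rule contributes exactly the size increment dictated by the definition of $\nsize{\cdot}$.
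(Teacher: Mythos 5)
Your proof is correct and follows essentially the same route as the paper's: induction on $\Phi$, ruling out the consuming rules $\ruleNDArrowE$ and $\ruleNDESubs$ via the zero counters, ruling out $\ruleNDArrowI$ via tightness of $\sigma$, and handling $\ruleNDAxiom$, $\ruleNTArrowE$, $\ruleNTArrowI$ by direct computation with the induction hypothesis. Your explicit check that the premise derivations remain tight and that the subterms stay in the relevant normal-form classes is slightly more careful than the paper's write-up, but the argument is the same.
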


\begin{proof}
By induction on $\Phi$.
\begin{itemize}
  \item $\ruleNDAxiom$. Then $t = x$ and $\csize = 0 = \nsize{t}$ as expected.
  
  \item $\ruleNTArrowE$. Then $t = \termapp{r}{u}$, $\sigma = \typeneutral$,
  and $\sequT{\Gamma}{\assign{r}{\typeneutral}}{0}{0}{\csize-1}$. The \ih gives
  $\csize - 1 = \nsize{r}$ and hence $\csize = \nsize{r} + 1 = \nsize{t}$.
  
  \item $\ruleNTArrowI$. Then $t = \termabs{x}{r}$, and the subderivation
  typing $r$ is tight and has counter $\csize - 1$. The \ih gives $\csize - 1 =
  \nsize{r}$ and hence $\csize = \nsize{r} + 1 = \nsize{t}$.
  
  \item $\ruleNDArrowE$. Then first two counters are not $0$ so this case does
  not apply. 

  \item $\ruleNDArrowI$. Then $\Phi$ is not tight so this case does not apply. 
  
  \item $\ruleNDESubs$. Then the second counter is not $0$ so this cases does
  not apply.
\end{itemize}
\end{proof}


\begin{lemma}[Substitution]
Let
$\derivable{\Phi_{t}}{\sequT{\Gamma; x:\multiset{\sigma_i}_{\iI}}{\assign{t}{\tau}}{\cmult}{\cexp}{\csize}}{\SysTightCBN}$
and 
$(\derivable{\Phi_{u}}{\sequT{\Delta_i}{\assign{u}{\sigma_i}}{\cmult_i}{\cexp_i}{\csize_i}}{\SysTightCBN})_{\iI}$.
Then,
$\derivable{\Phi_{\substitute{x}{u}{t}}}{\sequT{\ctxtsum{\Gamma}{\Delta_i}{\iI}}{\assign{\substitute{x}{u}{t}}{\tau}}{\cmult+_{\iI}\cmult_i}{\cexp+_{\iI}\cexp}{\csize+_{\iI}\csize_i}}{\SysTightCBN}$.
\label{l:name:substitution-tight}
\end{lemma}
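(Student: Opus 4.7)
The plan is to proceed by structural induction on the type derivation $\Phi_t$, analysing the last typing rule used. The only rule that actually consumes copies of $x$ from the context is $\ruleNDAxiom$; the remaining rules simply pass a multiset assignment for $x$ from the conclusion to one or more premises, which is exactly what the union of contexts does. So the real work is to show that in each non-axiom case the multiset $\intertype{\sigma_i}{\iI}$ for $x$ can be split into sub-multisets consistent with the splitting of $\Gamma$ across premises, and then the induction hypothesis can be applied with the corresponding subfamily of derivations $(\Phi_{u_i})$.

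First I would dispose of the base case $\ruleNDAxiom$. There are two subcases: either $t=x$, in which case inversion forces $\Gamma$ to be empty, $I=\set{j}$ a singleton, $\tau=\sigma_j$, and all counters of $\Phi_t$ to be zero; then $\substitute{x}{u}{x}=u$ and the required derivation is simply $\Phi_{u_j}$, whose counters trivially agree with $(0+\cmult_j,0+\cexp_j,0+\csize_j)$. Or $t=y\neq x$, in which case $\Gamma(x)=\emul$ forces $I=\emptyset$, $\substitute{x}{u}{y}=y$, and $\Phi_t$ itself is the required derivation.

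In the inductive step I would treat the abstraction and persistent rules (where only one premise is involved) uniformly: they preserve the assignment of $x$ in a single premise, so I simply apply the \ih to the premise with the full family $(\Phi_{u_i})_{\iI}$ and reapply the same rule. The binary and $|I|$-indexed consuming rules, namely $\ruleNDArrowE$ and $\ruleNDESubs$, are the crux: here $\Gamma$ is decomposed as $\ctxtsum{\Gamma'}{\Delta_j}{j\in J}$ across premises, so the multiset $\intertype{\sigma_i}{\iI}$ for $x$ is correspondingly partitioned into disjoint sub-multisets $I=I'\sqcup\bigsqcup_{j\in J}I_j$. I would apply the \ih to each premise with its matching subfamily of $(\Phi_{u_i})$'s, then reapply the rule; the counter additions rearrange into the global sum, and the $+1$ contribution of $\ruleNDArrowE$ (resp.\ $\ruleNDESubs$) to the multiplicative (resp.\ exponential) counter is preserved.

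The only real obstacle is the bookkeeping of the multiset partition and its induced partition of the family $(\Phi_{u_i})_{\iI}$ in the rules where $x$'s assignment is distributed over several premises, particularly $\ruleNDESubs$ whose rule schema itself is indexed over a multiset and so produces a doubly-indexed summation of counters. Once the partitioning convention is fixed, associativity and commutativity of the sum in $\mathbb{N}$ and of $\sqcup$ on multisets make the counter arithmetic routine.
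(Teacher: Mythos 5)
Your proposal is correct and takes essentially the same route as the paper: induction on $\Phi_{t}$, with the axiom case split on whether $t=x$ (where $|I|=1$ and the result is $\Phi^{1}_{u}$) or $t=y\neq x$ (where $I=\emptyset$), and the remaining rules handled by partitioning the multiset assigned to $x$ across premises and reapplying the rule. The paper dismisses the multi-premise cases as ``straightforward by induction'' while you spell out the bookkeeping of the induced partition of the family $(\Phi^{i}_{u})_{\iI}$; the only nitpick is that $\ruleNDArrowE$ contributes $+1$ to \emph{both} the multiplicative and exponential counters, not only the multiplicative one, but this does not affect the argument.
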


\begin{proof}
By induction on $\Phi_t$.
\begin{itemize}
  \item $\ruleNDAxiom$. There are two cases.
  \begin{itemize}
    \item  $t = x$. Then  $\substitute{x}{u}{t} = u$ and $|I| = 1$. More
    precisely,
    $\derivable{\Phi_{t}}{\sequT{\assign{x}{\multiset{\sigma_1}}}{\assign{x}{\sigma_1}}{0}{0}{0}}{\SysTightCBN}$.
    We conclude by taking $\Phi_{\substitute{x}{u}{t}} = \Phi^1_{u}$ since the
    counters are as expected.
  
    \item $t = y \neq x$. Then, $\substitute{x}{u}{t} = y$ and $|I| = 0$. More
    precisely, we have
    $\derivable{\Phi_{t}}{\sequT{\assign{y}{\multiset{\tau}}}{\assign{y}{\tau}}{0}{0}{0}}{\SysTightCBN}$.
    Hence we conclude by taking $\Phi_{\substitute{x}{u}{t}} = \Phi_{t}$ and
    the counters are as expected.
  \end{itemize}
  
  \item $\ruleNTArrowE$. Then $t = \termapp{r}{s}$, $\substitute{x}{s}{t} =
  \termapp{\substitute{x}{v}{r}}{\substitute{x}{v}{s}}$ and $\tau =
  \typeneutral$. We have a premise
  $\derivable{\Phi_{r}}{\sequT{\Gamma; x:\multiset{\sigma_i}_{\iI}}{\assign{r}{\typeneutral}}{\cmult}{\cexp}{\csize-1}}{\SysTightCBN}$.
  By the \ih we get a type derivation
  $\derivable{\Phi_{\substitute{x}{u}{r}}}{\sequT{\ctxtsum{\Gamma}{\Delta_i}{\iI}}{\assign{\substitute{x}{u}{r}}{\typeneutral}}{\cmult+_{\iI}\cmult_i}{\cexp+_{\iI}\cexp}{\csize-1+_{\iI}\csize_i}}{\SysTightCBN}$.
  We conclude by applying $\ruleNTArrowE$ to this premise, thus obtaining
  $\derivable{\Phi_{\substitute{x}{u}{t}}}{\sequT{\ctxtsum{\Gamma}{\Delta_i}{\iI}}{\assign{\substitute{x}{u}{t}}{\typeneutral}}{\cmult+_{\iI}\cmult_i}{\cexp+_{\iI}\cexp}{\csize+_{\iI}\csize_i}}{\SysTightCBN}$ as required.
  
  \item All the other cases are straightforward by induction.
\end{itemize}
\end{proof}


\begin{lemma}[Exact Subject Reduction]
Let
$\derivable{\Phi}{\sequT{\Gamma}{\assign{t}{\sigma}}{\cmult}{\cexp}{\csize}}{\SysTightCBN}$
be a tight derivation. If $t \loredname t'$, then there is
$\derivable{\Phi'}{\sequT{\Gamma}{\assign{t'}{\sigma}}{\cmult'}{\cexp'}{\csize}}{\SysTightCBN}$
such that
\begin{enumerate}
  \item\label{l:name:subject-reduction-tight:cmult} $\cmult' = \cmult - 1$ and
  $\cexp' = \cexp$ if $t \loredname t'$ is an $\mStep$-step.
  
  \item\label{l:name:subject-reduction-tight:cexo} $\cexp' = \cexp - 1$ and
  $\cmult' = \cmult$ if $t \loredname t'$ is an $\eStep$-step.
\end{enumerate}
\label{l:name:subject-reduction-tight}
\end{lemma}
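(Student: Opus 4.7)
\medskip

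\noindent\textbf{Proof plan.}
I will proceed by induction on the derivation of the reduction step $t \loredname t'$, which is essentially an induction on the structure of the $\callbyname$ context in which the root redex occurs, with two base cases (root $\dBeta$ and root $\sTerm$) and the inductive cases for reduction under application, abstraction, and substitution.

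\medskip

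\noindent\textbf{Base case $\dBeta$:} here $t = \termapp{\ctxtapp{\ctxt{L}}{\termabs{x}{r}}}{u}$ and $t' = \ctxtapp{\ctxt{L}}{\termsubs{x}{u}{r}}$. The first thing to establish is that the outer application is typed by the consuming rule $\ruleNDArrowE$, not by the persistent rule $\ruleNTArrowE$: inspecting the typing rules, neither $\ruleNDArrowI$ nor $\ruleNTArrowI$ can assign the base type $\typeneutral$ to an abstraction, and traversing the typing of $\ctxtapp{\ctxt{L}}{\termabs{x}{r}}$ through any sequence of $\ruleNDESubs$ rules preserves the type at the head, giving a contradiction (alternatively, Lemma~\ref{l:name:tight-spreading} formalises exactly this spreading argument). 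So $\Phi$ ends with $\ruleNDArrowE$ applied to the left premise typing $\ctxtapp{\ctxt{L}}{\termabs{x}{r}}$ and $|I|$ right premises typing $u$. Peeling off the $\ruleNDESubs$ rules that type the substitutions of $\ctxt{L}$ isolates a subderivation ending in $\ruleNDArrowI$ for $\termabs{x}{r}$ with type $\functtype{\intertype{\sigma_i}{i\in I}}{\tau}$. Repackaging: apply $\ruleNDESubs$ to the body $r$ using the $u$-premises, then re-wrap with the $\ruleNDESubs$ rules of $\ctxt{L}$, yielding a derivation $\Phi'$ of $t'$. Counting: one $\ruleNDArrowE$ is lost while exactly one new $\ruleNDESubs$ is created, so, by the arithmetic of the two rule schemes, $\cmult' = \cmult-1$ and $\cexp' = \cexp$; none of $\ruleNTArrowE$, $\ruleNTArrowI$ is created or destroyed, so $\csize' = \csize$.

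\medskip

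\noindent\textbf{Base case $\sTerm$:} here $t = \termsubs{x}{u}{r}$ and $t' = \substitute{x}{u}{r}$. The derivation $\Phi$ must end with $\ruleNDESubs$ (the rule $\ruleNDArrowI$ and the persistent rules do not apply to a substitution). Its premises give a typing of $r$ with context $\Gamma; x:\intertype{\sigma_i}{i\in I}$ and counters $(\cmult_r,\cexp_r,\csize_r)$, together with a family of typings of $u$ with counters $(\cmult_i,\cexp_i,\csize_i)$. Apply the Substitution Lemma (Lemma~\ref{l:name:substitution-tight}) to obtain $\Phi'$ typing $\substitute{x}{u}{r}$ with counters $(\cmult_r+\sum_i\cmult_i,\cexp_r+\sum_i\cexp_i,\csize_r+\sum_i\csize_i)$. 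Comparing with the counters of $\Phi$, which come from the scheme of $\ruleNDESubs$ and therefore contain an extra ``$+1$'' on the exponential component, we get $\cexp' = \cexp-1$ and $\cmult' = \cmult$ and $\csize' = \csize$, as desired.

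\medskip

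\noindent\textbf{Inductive cases.} For the closure rules of $\loredname$ under $\termapp{\cdot}{u}$, $\termabs{x}{\cdot}$ and $\termsubs{x}{u}{\cdot}$, the derivation $\Phi$ decomposes accordingly and the reducible subterm is typed by a subderivation to which the induction hypothesis applies. The change in the counters of that subderivation is precisely the claimed change, and it propagates additively through the surrounding rule (whether $\ruleNTArrowE$, $\ruleNTArrowI$, $\ruleNDArrowE$, $\ruleNDArrowI$ or $\ruleNDESubs$), because the counter schemes are additive and the surrounding rule is unchanged. Tightness of $\Phi$ also transmits tightness to the relevant subderivations, which is needed in the abstraction case to invoke the induction hypothesis on a tight subderivation.

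\medskip

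\noindent\textbf{Expected main obstacle.} The delicate point is the $\dBeta$ base case: one has to argue that the outer application cannot be typed persistently, then commute the sequence of $\ruleNDESubs$ that types $\ctxt{L}$ around the newly introduced $\ruleNDESubs$, and verify the exact bookkeeping of the counters through rules whose numerical schemes differ only by the ``$+1$'' in the correct coordinate. Once this repackaging is set up cleanly, the other cases are routine inductive propagation.
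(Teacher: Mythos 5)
Your overall strategy (induction on the reduction step, case analysis on the last typing rule, the Substitution Lemma for the $\sTerm$ case, and counting via the rule schemes) is the same as the paper's, and your treatment of the two base cases is essentially correct. However, there is a genuine gap in the inductive cases: your induction hypothesis, as stated, applies only to \emph{tight} derivations, and tightness does \emph{not} transmit to the relevant subderivation in the application case. Concretely, when $t = \termapp{r}{u} \loredname \termapp{r'}{u}$ and $\Phi$ ends with $\ruleNDArrowE$, the subderivation typing $r$ assigns it the functional type $\functtype{\intertype{\sigma_i}{i\in I}}{\tau}$, which is not in $\typetight$; that subderivation is therefore not tight, and the induction hypothesis cannot be invoked on it. Your remark that ``tightness of $\Phi$ also transmits tightness to the relevant subderivations'' is false precisely here, so the step you call routine propagation actually fails.

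The paper repairs this by proving a strengthened statement: it only requires the context $\Gamma$ to be tight, together with the disjunction ``$\sigma \in \typetight$ or $\neg\pabs{t}$''. In the application case the side condition $\neg\pabs{r}$ of the deterministic strategy guarantees the weakened hypothesis for the subderivation of $r$ even though its type is functional; in the abstraction case $\pabs{t}$ forces $\sigma \in \typetight$, hence the last rule is $\ruleNTArrowI$ and the body's subderivation is again covered. You need some such strengthening (or an equivalent device) for the induction to go through. Two minor further points: the deterministic relation $\loredname$ has no closure rule under $\termsubs{x}{u}{\cdot}$ (an explicit substitution is always fired immediately by $\sTerm$), so that inductive case you list does not arise; and in the $\dBeta$ base case your argument that the redex must be typed consumingly, and your bookkeeping of the counters, match the paper's.
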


\begin{proof}
We actually prove the following stronger statement that allows us to reason
inductively:

Let $t \loredname t'$ and
$\derivable{\Phi}{\sequT{\Gamma}{\assign{t}{\sigma}}{\cmult}{\cexp}{\csize}}{\SysTightCBN}$
such that $\Gamma$ is tight, and either $\sigma \in \typetight$ or
$\neg\pabs{t}$. Then, there exists
$\derivable{\Phi'}{\sequT{\Gamma}{\assign{t'}{\sigma}}{\cmult'}{\cexp'}{\csize}
}{\SysTightCBN}$
such that
\begin{enumerate}
  \item $\cmult' = \cmult - 1$ and $\cexp' = \cexp$ if $t \loredname t'$ is an
  $\mStep$-step.
  
  \item $\cexp' = \cexp - 1$ and $\cmult' = \cmult$ if $t \loredname t'$ is an
  $\eStep$-step.
\end{enumerate}

We proceed by induction on $t \loredname t'$.
\begin{itemize}
  \item $t = \termapp{\ctxtapp{\ctxt{L}}{\termabs{x}{r}}}{u}
  \loredname \ctxtapp{\ctxt{L}}{\termsubs{x}{u}{r}} = t'$ is an $\mStep$-step.
  We proceed by induction on $\ctxt{L}$. We only show here the case $\ctxt{L} =
  \Box$ as the inductive case is straightforward.

  We first remark that $\Phi$ cannot end with rule $\ruleBTArrowE$ since
  $\termabs{x}{u}$ cannot be typed with $\typeneutral$, then $\Phi$ ends with
  $\ruleNDArrowE$ and  has the following form: \[
\Rule{
  \Rule{
    \derivable{\Phi_{r}}{\sequT{\Gamma';\assign{x}{\multiset{\sigma_i}_{\iI}}}{\assign{r}{\sigma}}{\cmult_1}{\cexp_1}{\csize_1}}{\SysTightCBN}
  }{
    \sequT{\Gamma'}{\assign{\termabs{x}{r}}{\functtype{\multiset{\sigma_i}_{\iI}}{\sigma}}}{\cmult_1}{\cexp_1}{\csize_1}
  }{\ruleNDArrowI}
  \quad
  (\derivable{\Phi^i_{u}}{\sequT{\Delta_i}{\assign{u}{\sigma_i}}{\cmult^i_2}{\cexp^i_2}{\csize^i_2}}{\SysTightCBN})_{\iI}
}{
  \sequT{\ctxtsum{\Gamma'}{\Delta_i}{\iI}}{\assign{\termapp{(\termabs{x}{r})}{u}}{\sigma}}{1+\cmult_1+_{\iI}{\cmult^i_2}}{1+\cexp_1+_{\iI}{\cexp^i_2}}{\csize_1+_{\iI}{\csize^i_2}}
}{\ruleNDArrowE}
  \] with $\cmult = 1 + \cmult_1 +_{\iI}{\cmult^i_2}$, $\cexp = 1 + \cexp_1
  +_{\iI}{\cexp^i_2}$ and $\csize = \csize_1 +_{\iI}{\csize^i_2}$. We conclude
  by $\ruleNDESubs$ with the following type derivation \[
\Rule{
  \derivable{\Phi_{r}}{\sequT{\Gamma';\assign{x}{\multiset{\sigma_i}_{\iI}}}{\assign{r}{\sigma}}{\cmult_1}{\cexp_1}{\csize_1}}{\SysTightCBN}
  \quad
  (\derivable{\Phi^i_{u}}{\sequT{\Delta_i}{\assign{u}{\sigma_i}}{\cmult^i_2}{\cexp^i_2}{\csize^i_2}}{\SysTightCBN})_{\iI}
}{
  \derivable{\Phi'}{\sequT{\ctxtsum{\Gamma'}{\Delta_i}{\iI}}{\assign{\termsubs{x}{u}{r}}{\sigma}}{\cmult_1+_{\iI}{\cmult^i_2}}{1+\cexp_1+_{\iI}{\cexp^i_2}}{\csize_1+_{\iI}{\csize^i_2}}}{\SysTightCBN}
}{\ruleNDESubs}
  \] taking $\cmult' = \cmult_1 +_{\iI}{\cmult^i_2} = \cmult - 1$ and $\cexp' =
  1 + \cexp_1+_{\iI}{\cexp^i_2} = \cexp$.
  
  \item $t = \termsubs{x}{u}{r} \loredname \substitute{x}{u}{r} = t'$ is an
  $\eStep$-step. Then, rule $\ruleNDESubs$ is necessarily used \[
\Rule{
  \derivable{\Phi_{r}}{\sequT{\Gamma';\assign{x}{\multiset{\sigma_i}_{\iI}}}{\assign{r}{\sigma}}{\cmult_1}{\cexp_1}{\csize_1}}{\SysTightCBN}
  \quad
 (\derivable{\Phi^i_{u}}{\sequT{\Delta_i}{\assign{u}{\sigma_i}}{\cmult^i_2}{\cexp^i_2}{\csize^i_2}}{\SysTightCBN})_{\iI}
}{
  \derivable{\Phi}{\sequT{\ctxtsum{\Gamma'}{\Delta_i}{\iI}}{\assign{\termsubs{x}{u}{r}}{\sigma}}{\cmult_1+_{\iI}\cmult^i_2}{1+\cexp_1+_{\iI}\cexp^i_2}{\csize_1+_{\iI}\csize^i_2}}{\SysTightCBN}
}{\ruleNDESubs}
  \] with $\cmult = \cmult_1 +_{\iI}\cmult^i_2$, $\cexp = 1 + \cexp_1
  +_{\iI}\cexp^i_2$ and $\csize = \csize_1 +_{\iI}\csize^i_2$. We conclude by
  applying Lem.~\ref{l:name:substitution-tight} with $\Phi_{r}$ and $\Phi_{u}$,
  thus obtaining
  $\derivable{\Phi'}{\sequT{\ctxtsum{\Gamma'}{\Delta_i}{\iI}}{\assign{\substitute{x}{u}{t}}{\sigma}}{\cmult_1+_{\iI}\cmult^i_2}{\cexp_1+_{\iI}\cexp^i_2}{\csize_1+_{\iI}\csize^i_2}}{\SysTightCBN}$.
  Note that the counters are as expected.
    
  \item $t = \termapp{r}{u} \loredname \termapp{r'}{u} = t'$, where $r
  \loredname r'$ and $\neg\pabs{r}$. Then, $\Phi$ ends with rule
  $\ruleNTArrowE$ or $\ruleNDArrowE$, and in either case the subterm $r$ has an
  associated typing derivation $\Phi_{r}$ which verifies the hypothesis. Hence,
  by the \ih there exists a $\Phi_{r'}$ with proper counters. Thus, we can
  derive $\Phi'$ applying the same rule as in $\Phi$ to conclude.
  
  \item $t = \termabs{x}{u} \loredname \termabs{x}{u'} = t'$, where $r
  \loredname r'$. By the hypothesis we have necessarily $\sigma \in \typetight$
  since $t$ is an abstraction, then $t$ is type with rule $\ruleNTArrowI$. The
  statement then immediately follows from the \ih
\end{itemize}
\end{proof}


\begin{theorem}[Soundness]
If
$\derivable{\Phi}{\sequT{\Gamma}{\assign{t}{\sigma}}{\cmult}{\cexp}{\csize}}{\SysTightCBN}$
is tight, then there exists $p$ such that $p \in \CBNNF$ and
$t \rewriten{\callbyname}^{(\cmult,\cexp)} p$  with $\cmult$ $\mStep$-steps, $\cexp$
$\eStep$-steps, and $\nsize{p} = \csize$.
\label{t:name:correctness-tight}
\end{theorem}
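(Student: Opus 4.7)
The plan is to argue by induction on the sum of counters $\cmult + \cexp$, using the deterministic strategy $\loredname$ as a vehicle and lifting the resulting reduction sequence to $\rewriten{\callbyname}$ at the end (recall $\loredname\,\subseteq\,\rewrite{\callbyname}$). All the ingredients needed are available from the preceding lemmas: normal-form characterisation, tight-size equality at normal forms, and the exact subject reduction lemma.

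For the base case $\cmult = \cexp = 0$, Lem.~\ref{l:name:czero-normal} applied to the tight derivation $\Phi$ yields $t \in \CBNNF$, so we take $p \eqdef t$ and the empty reduction sequence. Since $\Phi$ is tight with both first counters equal to $0$ and $t \in \CBNNF$, Lem.~\ref{l:name:tight-size} gives $\csize = \nsize{t} = \nsize{p}$, as required.

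For the inductive step $\cmult + \cexp > 0$, Lem.~\ref{l:name:czero-normal} (contrapositive) gives $t \notin \CBNNF$. By Prop.~\ref{l:cbn-cbv-normal-forms} this means $t \not\loredname$ fails, \ie there exists $t'$ with $t \loredname t'$, and this step is either an $\mStep$-step or an $\eStep$-step. Exact Subject Reduction (Lem.~\ref{l:name:subject-reduction-tight}) then produces a tight derivation $\derivable{\Phi'}{\sequT{\Gamma}{\assign{t'}{\sigma}}{\cmult'}{\cexp'}{\csize}}{\SysTightCBN}$ where $(\cmult',\cexp')$ is either $(\cmult-1,\cexp)$ or $(\cmult,\cexp-1)$ depending on the nature of the step; in both cases $\cmult' + \cexp' < \cmult + \cexp$. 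The induction hypothesis applied to $\Phi'$ yields $p \in \CBNNF$ with $t' \rewriten{\callbyname}^{(\cmult',\cexp')} p$ and $\nsize{p} = \csize$. Prepending the single step $t \loredname t' \subseteq t \rewrite{\callbyname} t'$, classified as multiplicative or exponential according to the case, gives the required reduction $t \rewriten{\callbyname}^{(\cmult,\cexp)} p$ with the counters recombining exactly.

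The only delicate point is checking that the existence of a $\loredname$-step is actually guaranteed once we know $t \notin \CBNNF$; this is precisely the content of Prop.~\ref{l:cbn-cbv-normal-forms}, which identifies the $\callbyname$-normal forms with those of the deterministic strategy. Everything else is bookkeeping: the third counter $\csize$ is preserved by each subject reduction step, so it transports unchanged through the induction, and the partition of the reduction length into $\mStep$- and $\eStep$-steps follows the two clauses of Lem.~\ref{l:name:subject-reduction-tight}.
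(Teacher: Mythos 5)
Your proposal is correct and follows essentially the same route as the paper's own proof: induction on $\cmult + \cexp$, using Lem.~\ref{l:name:czero-normal} and Lem.~\ref{l:name:tight-size} for the base case, and Prop.~\ref{l:cbn-cbv-normal-forms} together with Exact Subject Reduction (Lem.~\ref{l:name:subject-reduction-tight}) for the inductive step, working with the deterministic strategy $\loredname$ and transferring to $\rewrite{\callbyname}$ via $\loredname\,\subseteq\,\rewrite{\callbyname}$. No gaps.
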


\begin{proof}
We prove the statement for $\loredname$ and then conclude for the general
notion of reduction $\rewrite{\callbyname}$ by the observation that all
reduction sequences to normal form have the same number of multiplicative and
exponential steps. Let
$\derivable{\Phi}{\sequT{\Gamma}{\assign{t}{\sigma}}{\cmult}{\cexp}{\csize}}{\SysTight}$.
We proceed by induction on $\cmult + \cexp$:
\begin{itemize}
  \item If $\cmult + \cexp = 0$, then $\cmult = \cexp = 0$ and
  Lem.~\ref{l:name:czero-normal} gives $t \in \CBNNF$. Moreover, by
  Lem.~\ref{l:name:tight-size} we get both $\nsize{t} = \csize$. Thus, we
  conclude with $p = t$.

  \item If $\cmult + \cexp > 0$, then $t \notin \CBNNF$ holds by
  Lem.~\ref{l:name:czero-normal} and thus there exists $t'$ such that $t
  \lorednamen^{(1,0)} t'$ or $t \lorednamen^{(0,1)} t'$ by
  Prop.~\ref{l:cbn-cbv-normal-forms}. By
  Lem.~\ref{l:name:subject-reduction-tight} there exists a type derivation
  $\derivable{\Phi'}{\sequT{\Gamma}{\assign{t'}{\sigma}}{\cmult'}{\cexp'}{\csize}}{\SysTightCBN}$
  such that $1 + \cmult' + \cexp' = \cmult + \cexp$. By the \ih there exists $p
  \in \CBNNF$ such that $t' \lorednamen^{(\cmult',\cexp')} p$ with $\csize =
  \nsize{p}$. Then $t \lorednamen^{(1,0)} t' \lorednamen^{(\cmult',\cexp')} p$
  (resp. $t \lorednamen^{(0,1)} t' \lorednamen^{(\cmult',\cexp')} p$) which
  means $t \lorednamen^{(\cmult,\cexp)} p$, as expected.
\end{itemize}
\end{proof}


\parrafo{Completeness}
Similarly, the completeness result relays on a series of intermediate lemmas:

\begin{lemma}
If $t \in \CBNNF$, then there is a tight derivation
$\derivable{\Phi}{\sequT{\Gamma}{\assign{t}{\sigma}}{0}{0}{\nsize{t}}}{\SysTightCBN}$.
\label{l:name:normal-forms-tight}
\end{lemma}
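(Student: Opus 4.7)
The plan is to prove the lemma by structural induction on the grammar of $\CBNNF$, but the statement as given is too weak to pass through the induction: when typing $\termapp{r}{u} \in \HCBNNF$ via rule $\ruleNTArrowE$, we need the subderivation for $r$ to have type exactly $\typeneutral$, not just some arbitrary tight type. I would therefore strengthen the claim and prove the following two statements simultaneously, by mutual induction on the grammars of $\HCBNNF$ and $\CBNNF$:
\begin{enumerate}
  \item If $t \in \HCBNNF$, then there exists a tight context $\Gamma$ with $\derivable{\Phi}{\sequT{\Gamma}{\assign{t}{\typeneutral}}{0}{0}{\nsize{t}}}{\SysTightCBN}$.
  \item If $t \in \CBNNF$, then there exists a tight context $\Gamma$ and a tight type $\sigma \in \typetight$ with $\derivable{\Phi}{\sequT{\Gamma}{\assign{t}{\sigma}}{0}{0}{\nsize{t}}}{\SysTightCBN}$.
\end{enumerate}

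For the base case $t = x \in \HCBNNF$, I apply rule $\ruleNDAxiom$ instantiated with $\sigma = \typeneutral$, obtaining $\sequT{\assign{x}{\multiset{\typeneutral}}}{\assign{x}{\typeneutral}}{0}{0}{0}$; the context $\assign{x}{\multiset{\typeneutral}}$ is tight and $\nsize{x} = 0$, so (1) holds. For the inductive case $t = \termapp{r}{u}$ with $r \in \HCBNNF$, the \ih (1) on $r$ gives a tight derivation of $\assign{r}{\typeneutral}$ with size counter $\nsize{r}$; applying the persistent rule $\ruleNTArrowE$ yields $\sequT{\Gamma}{\assign{\termapp{r}{u}}{\typeneutral}}{0}{0}{\nsize{r}+1}$, and since $\nsize{\termapp{r}{u}} = \nsize{r}+1$, this establishes (1).

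For statement (2), if $t \in \HCBNNF$ then (2) follows immediately from (1) taking $\sigma = \typeneutral \in \typetight$. Otherwise $t = \termabs{x}{r}$ with $r \in \CBNNF$; by the \ih (2) applied to $r$ there is a tight derivation $\sequT{\Delta}{\assign{r}{\tau}}{0}{0}{\nsize{r}}$ with $\tau \in \typetight$ and $\Delta$ tight. Since $\Delta$ is tight, in particular $\ptight{\Delta(x)}$, so the side condition of the persistent rule $\ruleNTArrowI$ is satisfied; applying it produces $\sequT{\ctxtres{\Delta}{x}{}}{\assign{\termabs{x}{r}}{\typeabs}}{0}{0}{\nsize{r}+1}$, with $\ctxtres{\Delta}{x}{}$ still tight, $\typeabs \in \typetight$, and $\nsize{\termabs{x}{r}} = \nsize{r}+1$, as required.

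There is no real obstacle here: the persistent rules $\ruleNTArrowE$ and $\ruleNTArrowI$ have been designed precisely so that typing normal forms contributes only to the third counter (the size) and never to the multiplicative or exponential counters. The only subtlety is to formulate the induction with the stronger invariant that neutral forms receive the specific base type $\typeneutral$, which is what makes the application case go through and simultaneously guarantees that the final derivation remains tight.
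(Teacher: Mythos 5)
Your proof is correct and follows essentially the same route as the paper: the paper also proves the two strengthened claims (neutral terms typable with $\typeneutral$, normal forms with a tight type) by simultaneous induction, using $\ruleNDAxiom$ for variables, $\ruleNTArrowE$ for applications (typing only the function part), and $\ruleNTArrowI$ for abstractions. The strengthening you identify is exactly the one the paper uses.
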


\begin{proof}
By simultaneous induction on the following claims:
\begin{enumerate}
  \item\label{l:name:normal-forms-tight:ne} If $t \in \HCBNNF$, then there
  exists a tight derivation
  $\derivable{\Phi}{\sequT{\Gamma}{\assign{t}{\typeneutral}}{0}{0}{\nsize{t}}}{\SysTightCBN}$.
  
  \item\label{l:name:normal-forms-tight:no} If $t \in \CBNNF$, then there
  exists a tight derivation
  $\derivable{\Phi}{\sequT{\Gamma}{\assign{t}{\typetight}}{0}{0}{\nsize{t}}}{\SysTightCBN}$.
\end{enumerate}
\begin{itemize}
  \item $t = x$. Then, $t \in \HCBVNF \subseteq \CBVNF$ and we conclude both
  (\ref{l:name:normal-forms-tight:ne}) and (\ref{l:name:normal-forms-tight:no})
  by $\ruleNDAxiom$ since $\nsize{x} = 0$.
  
  \item $t = \termapp{r}{u}$. Then, $t \in \HCBNNF$ or $t \in \CBNNF$ implies
  $r \in \HCBNNF$. By the \ih (\ref{l:name:normal-forms-tight:ne}) on $r$ there
  exists a tight derivation
  $\derivable{\Phi_{r}}{\sequT{\Gamma}{\assign{r}{\typeneutral}}{0}{0}{\nsize{r}}}{\SysTightCBN}$.
  We conclude (\ref{l:name:normal-forms-tight:ne}) by $\ruleNTArrowE$,
  obtaining $\nsize{t} = \nsize{r} + 1$ as expected. Then
  (\ref{l:name:normal-forms-tight:no}) also holds. 
  
  \item $t = \termabs{x}{r}$. Then, $t \in \CBNNF$ implies $r \in \CBNNF$. By
  the \ih (\ref{l:name:normal-forms-tight:no}) on $r$ there exists a tight
  derivation
  $\derivable{\Phi_{r}}{\sequT{\Gamma'}{\assign{r}{\typetight}}{0}{0}{\nsize{r}}}{\SysTightCBN}$.
  We conclude (\ref{l:name:normal-forms-tight:no}) by using $\ruleNTArrowI$,
  thus obtaining $\nsize{t} = \nsize{r} + 1$ as expected.
  
  \item $t = \termsubs{x}{u}{r}\notin \CBNNF$. So this case does not apply. 
\end{itemize}
\end{proof}


\begin{lemma}[Anti-Substitution]
Let
$\derivable{\Phi_{\substitute{x}{u}{t}}}{\sequT{\Gamma'}{\assign{\substitute{x}{u}{t}}{\tau}}{\cmult'}{\cexp'}{\csize'}}{\SysTightCBN}$.
Then, there exist
$\derivable{\Phi_{t}}{\sequT{\Gamma;\assign{x}{\multiset{\sigma_i}_{\iI}}}{\assign{t}{\tau}}{\cmult}{\cexp}{\csize}}{\SysTightCBN}$
and
$(\derivable{\Phi^i_{u}}{\sequT{\Delta_i}{\assign{u}{\sigma_i}}{\cmult_i}{\cexp_i}{\csize_i}}{\SysTightCBN})_{\iI}$
such that $\Gamma' = \ctxtsum{\Gamma}{\Delta_i}{\iI}$, $\cmult' =
\cmult +_{\iI}{\cmult_i}$, $\cexp' = \cexp +_{\iI}{\cexp_i}$ and $\csize' =
\csize +_{\iI}{\csize_i}$.
\label{l:name:anti-substitution-tight}
\end{lemma}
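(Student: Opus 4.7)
The plan is to proceed by induction on $t$, dispatching on its syntactic shape and using the fact that meta-substitution commutes with all term constructors (up to $\alpha$-equivalence), so that the last rule of $\Phi_{\substitute{x}{u}{t}}$ faithfully reflects the head constructor of $t$ (with the single exception of $t = x$, handled first).

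For the base cases: if $t = x$, then $\substitute{x}{u}{t} = u$, and I take $I = \{1\}$ with $\sigma_1 = \tau$, let $\Phi_t$ be the instance of $\ruleNDAxiom$ deriving $\sequT{\assign{x}{\multiset{\tau}}}{\assign{x}{\tau}}{0}{0}{0}$, and set $\Phi^1_u \coloneq \Phi_{\substitute{x}{u}{t}}$; all counter equations and the context equation $\Gamma' = \Delta_1$ are immediate. If $t = y \neq x$, then $\substitute{x}{u}{t} = y$, so I take $I = \emptyset$, $\Gamma = \Gamma'$, and $\Phi_t \coloneq \Phi_{\substitute{x}{u}{t}}$, with the convention that empty sums of contexts and counters are zero.

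For the inductive cases, the argument is structural and essentially bookkeeping. When $t = \termabs{y}{r}$ with $y \neq x, y \notin \fv{u}$ (always achievable by $\alpha$-conversion), $\Phi_{\substitute{x}{u}{t}}$ ends in $\ruleNTArrowI$ or $\ruleNDArrowI$ with a subderivation for $\substitute{x}{u}{r}$; the IH on that subderivation yields $\Phi_r$ typing $r$ under $\Gamma;\assign{x}{\multiset{\sigma_i}_{\iI}}$ and the family $(\Phi^i_u)_{\iI}$; reapplying the same abstraction rule produces $\Phi_t$ with all counters preserved. When $t = \termapp{r}{s}$, $\Phi_{\substitute{x}{u}{t}}$ ends in $\ruleNTArrowE$ or $\ruleNDArrowE$; the IH applied to each premise (one for $\substitute{x}{u}{r}$ and, in the $\ruleNDArrowE$ case, one for each derivation of $\substitute{x}{u}{s}$) yields multisets of types for $x$ coming from each subderivation, which I concatenate into a single multiset $\multiset{\sigma_i}_{\iI}$; the families of $u$-derivations are likewise concatenated, and the context splitting uses associativity and commutativity of $+$. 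The case $t = \termsubs{y}{s}{r}$ is analogous, using $\ruleNDESubs$ instead.

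The main obstacle is the combinatorial bookkeeping in the application and explicit-substitution cases: occurrences of $x$ may be scattered across the head subterm and each of the $I$-indexed derivations of the argument, so one must carefully re-index the resulting family so that the new index set is the disjoint union of the index set from the head and those from the arguments, and verify that the context sums, multiset concatenations, and three counter sums all match. Nothing is conceptually delicate here — the only subtle point is that the lemma is stated so as to allow $\cmult_i, \cexp_i, \csize_i$ to be distinct for each $i$, so the concatenation preserves full information and no typing-count is lost; the addition of an extra $1$ to $\cexp$ (as in $\ruleNDESubs$) or to $\cmult$ and $\cexp$ (as in $\ruleNDArrowE$) is produced solely by the outermost rule being replayed, never by the anti-substitution itself.
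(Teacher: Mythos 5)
Your proposal is correct and follows essentially the same route as the paper: induction on $t$, with the two variable cases handled explicitly exactly as you do (taking $I=\{1\}$ with $\Phi^1_u$ the given derivation when $t=x$, and $I=\emptyset$ when $t=y\neq x$), and the remaining cases dispatched by the induction hypothesis with the additive counter/context bookkeeping you describe. The paper simply compresses the inductive cases into "the property holds by the i.h.", so your extra detail on concatenating the index sets is a faithful elaboration rather than a different argument.
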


\begin{proof}
By induction on $t$.
\begin{itemize}
  \item $t = x$. Then, $\substitute{x}{u}{t} = u$ and we take
  $\derivable{\Phi_{x}}{\sequT{\assign{x}{\multiset{\sigma_1}}}{\assign{x}{\sigma_1}}{0}{0}{0}}{\SysTightCBN}$
  and
  $\derivable{\Phi^{1}_{u}}{\sequT{\Gamma'}{\assign{u}{\sigma_1}}{\cmult'}{\cexp'}{\csize'}}{\SysTightCBN}$
  and the property holds.
  
  \item $t = y \neq x$. Then, $\substitute{x}{v}{t} = y$ and we take
  $\derivable{\Phi_{y}}{\sequT{\assign{y}{\multiset{\tau}}}{\assign{y}{\tau}}{0}{0}{0}}{\SysTightCBN}$
  so that $|I| = 0$ and the property holds.
  
  \item In all the other cases the property holds by the \ih
\end{itemize}
\end{proof}


\begin{lemma}[Exact Subject Expansion]
Let
$\derivable{\Phi'}{\sequT{\Gamma}{\assign{t'}{\sigma}}{\cmult'}{\cexp'}{\csize}}{\SysTightCBN}$
be a tight derivation. If $t \loredname t'$, then there is
$\derivable{\Phi}{\sequT{\Gamma}{\assign{t}{\sigma}}{\cmult}{\cexp}{\csize}}{\SysTightCBN}$
such that
\begin{enumerate}
  \item $\cmult' = \cmult - 1$ and $\cexp' = \cexp$ if $t \loredname t'$ is an
  $\mStep$-step.
  
  \item $\cexp' = \cexp - 1$ and $\cmult' = \cmult$ if $t \loredname t'$ is an
  $\eStep$-step.
\end{enumerate}
\label{l:name:subject-expansion-tight}
\end{lemma}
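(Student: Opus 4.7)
The plan is to mirror the structure of the Exact Subject Reduction proof (Lemma~\ref{l:name:subject-reduction-tight}), reversing its direction by using Anti-Substitution (Lemma~\ref{l:name:anti-substitution-tight}) in place of Substitution. As for subject reduction, the statement is not directly amenable to induction: in the inductive case for applications $\termapp{r}{u} \loredname \termapp{r'}{u}$, the left subterm $r'$ may well be typed with a non-tight arrow type. I would therefore prove the strengthened form: if $t \loredname t'$ and $\derivable{\Phi'}{\sequT{\Gamma}{\assign{t'}{\sigma}}{\cmult'}{\cexp'}{\csize}}{\SysTightCBN}$ with $\Gamma$ tight and either $\sigma \in \typetight$ or $\neg\pabs{t}$, then the expanded derivation $\Phi$ exists with the claimed counter adjustment. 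The induction is on the derivation $t \loredname t'$.

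For the base $\mStep$-step $t = \termapp{\ctxtapp{\ctxt{L}}{\termabs{x}{r}}}{u} \loredname \ctxtapp{\ctxt{L}}{\termsubs{x}{u}{r}} = t'$ I would use a sub-induction on $\ctxt{L}$. In the essential case $\ctxt{L} = \Box$, the derivation $\Phi'$ types $\termsubs{x}{u}{r}$ and must end with rule $\ruleNDESubs$, exposing a premise $\derivable{\Phi_{r}}{\sequT{\Gamma';\assign{x}{\multiset{\sigma_i}_{\iI}}}{\assign{r}{\sigma}}{\cmult_1}{\cexp_1}{\csize_1}}{\SysTightCBN}$ together with a family of premises for $u$. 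I would reassemble these by applying $\ruleNDArrowI$ to $\Phi_r$ (obtaining a derivation of $\termabs{x}{r}$ with type $\functtype{\multiset{\sigma_i}_{\iI}}{\sigma}$) followed by $\ruleNDArrowE$ against the premises for $u$; the bookkeeping shows that exactly one multiplicative unit is recovered, while the exponential and size counters are preserved. The inductive step on $\ctxt{L}$ threads the same reasoning through an outer $\ruleNDESubs$. For the base $\eStep$-step $t = \termsubs{x}{u}{r} \loredname \substitute{x}{u}{r}$, Anti-Substitution directly splits $\Phi'$ into a typing of $r$ with $x$ bound to $\multiset{\sigma_i}_{\iI}$ and a family of typings for $u$, whose counters add as required by the lemma; an application of $\ruleNDESubs$ recombines them and pays exactly the extra exponential unit.

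The inductive cases for $\termapp{r}{u} \loredname \termapp{r'}{u}$ with $\neg\pabs{r}$, and for $\termabs{x}{r} \loredname \termabs{x}{r'}$, proceed by dispatching on the last rule of $\Phi'$. In the application case, the last rule is either $\ruleNTArrowE$ or $\ruleNDArrowE$; in both, the sub-derivation typing $r'$ meets the strengthened hypothesis (tight context, and either tight type via $\ruleNTArrowE$, or $\neg\pabs{r'}$ inherited from $\neg\pabs{r}$), so the inductive hypothesis yields $\Phi_r$ and the same rule is reapplied. In the abstraction case, tightness of $\Phi'$ forces the last rule to be $\ruleNTArrowI$ (since $\ruleNDArrowI$ cannot end a tight derivation), and the inductive hypothesis applies directly to the tight sub-derivation typing $r'$.

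The main obstacle will be the $\mStep$ base case, specifically the interaction between the list context $\ctxt{L}$ and the typing rules. The sub-induction on $\ctxt{L}$ must preserve the invariant that the innermost rule of $\Phi'$ (past all the outer $\ruleNDESubs$ rules corresponding to the explicit substitutions in $\ctxt{L}$) is the $\ruleNDESubs$ to be reorganised; one needs a careful analysis of how the multisets for $x$ propagate out of the abstraction body $r$ into the redex, so that the $\ruleNDArrowI{+}\ruleNDArrowE$ recomposition is exactly counterbalanced by the removal of the innermost $\ruleNDESubs$. A secondary subtle point is justifying the hypothesis transfer in the inductive case for applications: one must verify that $\neg\pabs{r}$ suffices to entail $\neg\pabs{r'}$ on the reduced side, which is immediate from the shape of $\loredname$ but must be stated explicitly to invoke the strengthened induction hypothesis.
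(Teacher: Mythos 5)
Your proposal matches the paper's proof essentially step for step: the same strengthened statement (tight context, and $\sigma\in\typetight$ or $\neg\pabs{t}$), induction on $t\loredname t'$, the $\ruleNDArrowI$ plus $\ruleNDArrowE$ recomposition of the $\ruleNDESubs$ premises in the $\mStep$ base case, Anti-Substitution for the $\eStep$ base case, and the same dispatch on the last rule in the congruence cases. The only nitpick is that in the application case the strengthened hypothesis you must discharge concerns the source $r$ (namely $\neg\pabs{r}$, given directly by the side condition of the reduction rule), not $\neg\pabs{r'}$, so the transfer you worry about is even more immediate than you suggest.
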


\begin{proof}
We actually prove the following stronger statement that allows us to reason
inductively:

Let $t \loredname  t'$ and
$\derivable{\Phi'}{\sequT{\Gamma}{\assign{t'}{\sigma}}{\cmult'}{\cexp'}{\csize}}{\SysTightCBN}$
such that $\Gamma$ is tight, and either $\sigma \in \typetight$ or
$\neg\pabs{t}$. Then, there exists
$\derivable{\Phi}{\sequT{\Gamma}{\assign{t}{\sigma}}{\cmult}{\cexp}{\csize}}{\SysTightCBN}$
such that
\begin{enumerate}
  \item $\cmult' = \cmult - 1$ and $\cexp' = \cexp$ if $t \loredname t'$ is an
  $\mStep$-step.
  
  \item $\cexp' = \cexp - 1$ and $\cmult' = \cmult$ if $t \loredname t'$ is an
  $\eStep$-step.
\end{enumerate}

We proceed by induction on $t \loredname t'$.
\begin{itemize}
  \item $t = \termapp{\ctxtapp{\ctxt{L}}{\termabs{x}{r}}}{u}
  \loredname \ctxtapp{\ctxt{L}}{\termsubs{x}{u}{r}} = t'$ is an $\mStep$-step.
  We proceed by induction on $\ctxt{L}$. We only show here the case $\ctxt{L} =
  \Box$ as the inductive case is straightforward.

  Then, $t' = \termsubs{x}{u}{r}$ and $\Phi'$ has the following form: \[
\Rule{
  \derivable{\Phi_{r}}{\sequT{\Gamma';\assign{x}{\multiset{\sigma_i}_{\iI}}}{\assign{r}{\sigma}}{\cmult_1}{\cexp_1}{\csize_1}}{\SysTightCBN}
  \quad
  (\derivable{\Phi^i_{u}}{\sequT{\Delta_i}{\assign{u}{\sigma_i}}{\cmult^i_2}{\cexp^i_2}{\csize^i_2}}{\SysTightCBN})_{\iI}
}{
  \sequT{\ctxtsum{\Gamma'}{\Delta_i}{\iI}}{\assign{\termsubs{x}{u}{r}}{\sigma}}{\cmult_1+_{\iI}{\cmult^i_2}}{1+\cexp_1+_{\iI}{\cexp^i_2}}{\csize_1+_{\iI}{\csize^i_2}}
}{\ruleNDESubs}
  \] with $\cmult' = \cmult_1 +_{\iI}{\cmult^i_2}$, $\cexp' = 1 + \cexp_1
  +_{\iI}{\cexp^i_2}$ and $\csize = \csize_1 +_{\iI}{\csize^i_2}$. We
  conclude with the following type derivation \[
\Rule{
  \Rule{
    \derivable{\Phi_{r}}{\sequT{\Gamma';\assign{x}{\multiset{\sigma_i}_{\iI}}}{\assign{r}{\sigma}}{\cmult_1}{\cexp_1}{\csize_1}}{\SysTightCBN}
  }{
    \sequT{\Gamma'}{\assign{\termabs{x}{r}}{\functtype{\multiset{\sigma_i}_{\iI}}{\sigma}}}{\cmult_1}{\cexp_1}{\csize_1}
  }{\ruleNDArrowI}
    \quad
    (\derivable{\Phi^i_{u}}{\sequT{\Delta_i}{\assign{u}{\sigma_i}}{\cmult^i_2}{\cexp^i_2}{\csize^i_2}}{\SysTightCBN})_{\iI}
  }{
    \derivable{\Phi}{\sequT{\ctxtsum{\ctxtres{\Gamma'}{x}{}}{\Delta}{}}{\assign{\termapp{(\termabs{x}{r})}{u}}{\sigma}}{1+\cmult_1+_{\iI}{\cmult^i_2}}{1+\cexp_1+_{\iI}{\cexp^i_2}}{\csize_1+_{\iI}{\csize^i_2}}}{\SysTightCBN}
}{\ruleNDArrowE}
  \] taking $\cmult = 1 + \cmult_1 +_{\iI}{\cmult^i_2} = \cmult' + 1$ and
  $\cexp = 1 + \cexp_1 +_{\iI}{\cexp^i_2} = \cexp'$.
  
  \item $t = \termsubs{x}{u}{r} \loredname \substitute{x}{u}{r} = t'$ is an
  $\eStep$-step. By Lem.~\ref{l:name:anti-substitution-tight} there exist type
  derivations
  $\derivable{\Phi_{r}}{\sequT{\Gamma';\assign{x}{\multiset{\sigma_i}_{\iI}}}{\assign{r}{\sigma}}{\cmult_1}{\cexp_1}{\csize_1}}{\SysTightCBN}$
  and
  $(\derivable{\Phi^i_{u}}{\sequT{\Delta_i}{\assign{u}{\sigma_i}}{\cmult^i_2}{\cexp^i_2}{\csize^i_2}}{\SysTightCBN})_{\iI}$
  such that $\Gamma = \ctxtsum{\Gamma'}{\Delta_i}{\iI}$, $\cmult' = \cmult_1
  +_{\iI}{\cmult^i_2}$, $\cexp' = \cexp_1 +_{\iI}{\cexp^i_2}$ and $\csize =
  \csize_1 +_{\iI}{\csize^i_2}$. We conclude by $\ruleNDESubs$ with the
  following type derivation \[
\Rule{
  \derivable{\Phi_{r}}{\sequT{\Gamma';\assign{x}{\multiset{\sigma_i}_{\iI}}}{\assign{r}{\sigma}}{\cmult_1}{\cexp_1}{\csize_1}}{\SysTightCBN}
  \quad
  (\derivable{\Phi^i_{u}}{\sequT{\Delta_i}{\assign{u}{\sigma_i}}{\cmult^i_2}{\cexp^i_2}{\csize^i_2}}{\SysTightCBN})_{\iI}
}{
  \derivable{\Phi}{\sequT{\ctxtsum{\Gamma'}{\Delta_i}{\iI}}{\assign{\termsubs{x}{u}{r}}{\sigma}}{\cmult_1+_{\iI}{\cmult^i_2}}{1+\cexp_1+_{\iI}{\cexp^i_2}}{\csize_1+_{\iI}{\csize^i_2}}}{\SysTightCBN}
}{\ruleNDESubs}
  \] taking $\cmult = \cmult_1 +_{\iI}{\cmult^i_2} = \cmult'$ and $\cexp = 1 +
  \cexp_1 +_{\iI}{\cexp^i_2} = \cexp' + 1$.
  
  \item $t = \termapp{r}{u} \loredname \termapp{r'}{u} = t'$, where $r
  \loredname r'$ and $\neg\pabs{r}$. Then, $\Phi'$ ends with rule
  $\ruleNTArrowE$ or $\ruleNDArrowE$, and in either case the subterm $r'$ has
  an associated typing derivation $\Phi_{r'}$ which verifies the hypothesis
  since $\neg\pabs{r}$. Hence, by \ih there exists a $\Phi_{r}$ with proper
  counters such that we can derive $\Phi$ applying the same rule as in $\Phi'$
  to conclude.
  
  \item $t = \termabs{x}{u} \loredname \termabs{x}{u'} = t'$, where $r
  \loredname r'$. By the hypothesis we have necessarily $\sigma \in \typetight$
  since $t'$ is an abstraction, then $t'$ is type with rule $\ruleNTArrowI$.
  The statement then immediately follows from the \ih
\end{itemize}
\end{proof}


\begin{theorem}[Completeness]
If $t \rewriten{\callbyname}^{(\cmult,\cexp)} p$  with $p \in \CBNNF$, then
there exists a tight type derivation
$\derivable{\Phi}{\sequT{\Gamma}{\assign{t}{\sigma}}{\cmult}{\cexp}{\valsize{p}}}{\SysTightCBN}$.
\label{t:name:completeness-tight}
\end{theorem}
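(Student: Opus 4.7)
The plan is to mirror the structure of the soundness proof (Thm.~\ref{t:name:correctness-tight}), replacing subject reduction with subject expansion. As in the soundness argument, I would first establish the result for the deterministic strategy $\loredname$, then transfer it to the non-deterministic $\rewrite{\callbyname}$ using the fact noted in Sec.~\ref{s:cbname-cbvalue} that any two reduction sequences to normal form have the same number of multiplicative and exponential steps (and lead to the same normal form up to the rewriting theory). So it suffices to show: if $t \lorednamen^{(\cmult,\cexp)} p$ with $p \in \CBNNF$, then there is a tight derivation $\derivable{\Phi}{\sequT{\Gamma}{\assign{t}{\sigma}}{\cmult}{\cexp}{\nsize{p}}}{\SysTightCBN}$.

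I would proceed by induction on $\cmult + \cexp$. For the base case, $\cmult + \cexp = 0$ forces $t = p \in \CBNNF$, and Lem.~\ref{l:name:normal-forms-tight} directly provides a tight derivation with counters $(0, 0, \nsize{p})$, as required. For the inductive step, $\cmult + \cexp > 0$ means $t \notin \CBNNF$ (by Prop.~\ref{l:cbn-cbv-normal-forms}, since otherwise $t$ could not reduce further to reach $p$), so the reduction sequence decomposes as $t \loredname t' \lorednamen^{(\cmult',\cexp')} p$ where either $(\cmult',\cexp') = (\cmult - 1, \cexp)$ in the $\mStep$ case, or $(\cmult',\cexp') = (\cmult, \cexp - 1)$ in the $\eStep$ case. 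Applying the induction hypothesis to $t'$ yields a tight derivation $\derivable{\Phi'}{\sequT{\Gamma}{\assign{t'}{\sigma}}{\cmult'}{\cexp'}{\nsize{p}}}{\SysTightCBN}$, and subject expansion (Lem.~\ref{l:name:subject-expansion-tight}) then produces a tight derivation $\derivable{\Phi}{\sequT{\Gamma}{\assign{t}{\sigma}}{\cmult}{\cexp}{\nsize{p}}}{\SysTightCBN}$ with counters incremented exactly in the multiplicative or exponential component, matching $(\cmult, \cexp)$.

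The main obstacle I anticipate is bookkeeping rather than conceptual: I need to make sure the tightness of the typing context $\Gamma$ and of the type $\sigma$ are preserved throughout the induction, which is guaranteed because subject expansion keeps $\Gamma$ and $\sigma$ unchanged, so the tightness assumption propagates from $\Phi'$ back to $\Phi$. A secondary subtlety is that deterministic reduction $\loredname$ from a non-normal term always exists and either contracts an $\mStep$-redex or an $\eStep$-redex, but never mixes them in a single step; this is ensured by the grammar of $\loredname$ together with Prop.~\ref{l:cbn-cbv-normal-forms}, so the decomposition of the reduction sequence used above is legitimate. Once the result is proved for $\loredname$, completeness for $\rewrite{\callbyname}$ follows immediately since the counts $(\cmult,\cexp)$ depend only on the initial term and the normal form, not on the particular sequence chosen.
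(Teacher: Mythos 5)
Your proposal is correct and follows essentially the same route as the paper's own proof: induction on $\cmult+\cexp$ for the deterministic strategy $\loredname$, with Lem.~\ref{l:name:normal-forms-tight} for the base case and Lem.~\ref{l:name:subject-expansion-tight} for the inductive step, then transfer to $\rewrite{\callbyname}$ via invariance of the step counts. (You also implicitly correct the statement's $\valsize{p}$ to $\nsize{p}$, which is the intended measure here.)
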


\begin{proof}
As for soundness (Thm.~\ref{t:name:correctness-tight}) we prove the
statement for $\loredname$ and then conclude for the general notion of
reduction $\rewrite{\callbyname}$. Let $t \lorednamen^{(\cmult,\cexp)} p$. We
proceed by induction on $\cmult+\cexp$:
\begin{itemize}
  \item If $\cmult+\cexp = 0$, then $\cmult = \cexp = 0$ and thus $t = p$,
  which implies $t \in \CBNNF$. Lem.~\ref{l:name:normal-forms-tight} allows to
  conclude.
  
  \item If $\cmult+\cexp > 0$, then there is $t'$ such that $t
  \lorednamen^{(1,0)} t' \lorednamen^{(\cmult-1,\cexp)} p$ or $t
  \lorednamen^{(0,1)} t' \lorednamen^{(\cmult,\cexp-1)} p$. By the \ih there is
  a tight derivation
  $\derivable{\Phi'}{\sequT{\Gamma}{\assign{t'}{\sigma}}{\cmult'}{\cexp'}{\valsize{p}}}{\SysTightCBN}$
  such $\cmult' + \cexp' = \cmult + \cexp -1$.
  Lem.~\ref{l:name:subject-expansion-tight} gives a tight derivation
  $\derivable{\Phi}{\sequT{\Gamma}{\assign{t}{\sigma}}{\cmult''}{\cexp''}{\valsize{p}}}{\SysTightCBN}$
  such $\cmult'' + \cexp'' = \cmult' + \cexp' + 1$. We then have $\cmult'' +
  \cexp'' = \cmult + \cexp$. The fact that $\cmult'' = \cmult$ and $\cexp'' =
  \cexp$ holds by a simple case analysis.
\end{itemize}
\end{proof}


Notice that the previous theorem is stated by using the general notion of
reduction $\rewrite{\callbyname}$, but the proofs are done using the
deterministic reduction $\loredname$. The equivalence holds because any two
different reduction paths to normal form have the same number of multiplicative
and exponential steps, as already remarked.

\begin{example}
\label{ex:t0cbn}
Consider $t_0 =
\termapp{\termapp{\Kterm}{(\termapp{z}{\id})}}{(\termapp{\id}{\id})}$ that
$\rewrite{\callbyname}$-reduces in 2 $\mStep$-steps and 2 $\eStep$-steps to the term
$\termapp{z}{\id} \in \CBNNF$, whose $\callbyname$-size is 1: \[
t_0 = \termapp{\underline{\termapp{\Kterm}{(\termapp{z}{\id})}}}{(\termapp{\id}{\id})}
\rewrite{\dBeta}
\underline{\termapp{\termsubs{x}{\termapp{z}{\id}}{(\termabs{y}{x})}}{(\termapp{\id}{\id})}}
\rewrite{\dBeta}
\underline{\termsubs{x}{\termapp{z}{\id}}{\termsubs{y}{\termapp{\id}{\id}}{x}}}
\rewrite{\sTerm}
\underline{\termsubs{y}{\termapp{\id}{\id}}{(\termapp{z}{\id})}}
\rewrite{\sTerm}
\termapp{z}{\id}
\] System $\SysTightCBN$ admits a tight type derivation for $t_0$ with the
expected final counter $(2,2,1)$:
{\small
\begin{center}
$
\Rule{
  \Rule{
    \Rule{
      \Rule{
        \Rule{}{
          \sequT{\assign{x}{\multiset{\typeneutral}}}{\assign{x}{\typeneutral}}{0}{0}{0}
        }{\ruleNDAxiom}
      }{
        \sequT{\assign{x}{\multiset{\typeneutral}}}{\assign{\termabs{y}{x}}{\functtype{\emul}{\typeneutral}}}{0}{0}{0}
      }{\ruleNDArrowI}
    }{
      \sequT{}{\assign{\Kterm}{\functtype{\multiset{\typeneutral}}{\functtype{\emul}{\typeneutral}}}}{0}{0}{0}
    }{\ruleNDArrowI}
    \Rule{
      \Rule{}{
        \sequT{\assign{z}{\multiset{\typeneutral}}}{\assign{z}{\typeneutral}}{0}{0}{0}
      }{\ruleNDAxiom}
    }{
      \sequT{\assign{z}{\multiset{\typeneutral}}}{\assign{\termapp{z}{\id}}{\typeneutral}}{0}{0}{1}
    }{\ruleNTArrowE}
  }{
    \sequT{\assign{z}{\multiset{\typeneutral}}}{\assign{\termapp{\Kterm}{(\termapp{z}{\id})}}{\functtype{\emul}{\typeneutral}}}{1}{1}{1}
  }{\ruleNDArrowE}
}{
  \sequT{\assign{z}{\multiset{\typeneutral}}}{\assign{\termapp{\termapp{\Kterm}{(\termapp{z}{\id})}}{(\termapp{\id}{\id})}}{\typeneutral}}{2}{2}{1}
}{\ruleNDArrowE}
$\end{center}
}
\end{example}


\section{Tight Call-by-Value}
\label{s:tight-value}
  
In this section we introduce a tight system $\SysTightCBV$ for CBV which
captures exact measures for $\rewrite{\callbyvalue}$-reduction sequences. Here
we do not only distinguish length of reduction sequences from size of normal
forms, but also discriminate multiplicative from exponential steps. Moreover,
we establish a precise relation between the counters in CBV and their
counterparts in the $\BangRev$-calculus (\cf Sec.~\ref{s:tight-translations}).
This constitutes one of the main contributions of the present work.

The grammar of types of system $\SysTightCBV$ is given by:
\begin{center}
\begin{tabular}{rrcll}
\emphdef{(Tight Types)} & $\typetight$   & $\Coloneq$ & $\typeneutral \mid \typevalue \mid \typevar$ \\
\emphdef{(Types)}       & $\sigma, \tau$ & $\Coloneq$ & $\typetight \mid \M \mid \functtype{\M}{\sigma}$ \\
\emphdef{(Multitypes)}  & $\M, \N$       & $\Coloneq$ & $\intertype{\sigma_i}{i \in I}$  where $I$ is a finite set
\end{tabular}
\end{center}
Indeed, apart from the constant $\typeneutral$ also used in CBN, we now
introduce constants $\typevalue$ and $\typevar$, typing respectively, terms
reducing to values; and terms reducing to persistent variables that are not
acting as values (\ie they are applied to some argument to produce neutral
terms). Notice that we remove the base type $\typeabs$ from CBN, since
abstractions are in particular values, so they will be typed with
$\typevalue$. The notions of \emphdef{tightness} for multitypes, contexts and
derivations are exactly the same we used for CBN.

A type system for CBV must type variables without knowing yet the future role
they are going to play (head of neutral term or value). This is one of the
main difficulties behind the definition of such a system. Besides that, the
system must also be able to count multiplicative and exponential steps
independently. The resulting type system $\SysTightCBV$ for CBV is given by
the typing rules in Fig.~\ref{fig:typingSchemesCBVPersistent}
and~\ref{fig:typingSchemesCBVConsuming}, distinguishing between
persistent and consuming rules resp. As a matter of notation, given a
tight type $\typetight_0$ we write $\overline{\typetight_0}$ to denote
a tight type different from $\typetight_0$. Thus for instance,
$\overline{\typevar} \in \{\typevalue, \typeneutral\}$.

\begin{figure}[ht]
\centering $
\kern-1em
\begin{array}{c}
\Rule{}
     {\sequT{\assign{x}{\multiset{\typevar}}}{\assign{x}{\typevar}}{0}{0}{0}}
     {\ruleVTAxiomVar}
\quad
\Rule{}
     {\sequT{}{\assign{x}{\typevalue}}{0}{0}{0}}
     {\ruleVTAxiom}
\quad
\Rule{}
     {\sequT{}{\assign{\termabs{x}{t}}{\typevalue}}{0}{0}{0}}
     {\ruleVTArrowI}
\\
\\
\Rule{\sequT{\Gamma}{\assign{t}{\overline{\typevalue}}}{\cmult}{\cexp}{\csize}
      \enspace
      \sequT{\Delta}{\assign{u}{\overline{\typevar}}}{\cmult'}{\cexp'}{\csize'}
     }
     {\sequT{\ctxtsum{\Gamma}{\Delta}{}}{\assign{\termapp{t}{u}}{\typeneutral}}{\cmult+\cmult'}{\cexp+\cexp'}{\csize+\csize'+1}}
     {\ruleVTArrowE}
\\
\\
\Rule{\sequT{\Gamma}{\assign{t}{\tau}}{\cmult}{\cexp}{\csize}
      \enspace
      \sequT{\Delta}{\assign{u}{\typeneutral}}{\cmult'}{\cexp'}{\csize'}
      \enspace
      \ptight{\Gamma(x)}
     }
     {\sequT{\ctxtsum{(\ctxtres{\Gamma}{x}{})}{\Delta}{}}{\assign{\termsubs{x}{u}{t}}{\tau}}{\cmult+\cmult'}{\cexp+\cexp'}{\csize+\csize'}}
     {\ruleVTESubs}
\end{array}$
\caption{System $\SysTightCBV$ for the Call-by-Value Calculus: Persistent Typing Rules.}
\label{fig:typingSchemesCBVPersistent}
\end{figure}

\begin{figure}[ht]
\centering $
\begin{array}{c}
\Rule{\vphantom{\Gamma}}
     {\sequT{\assign{x}{\M}}{\assign{x}{\M}}{0}{1}{0}}
     {\ruleVDAxiom}
\qquad
\Rule{\sequT{\Gamma}{\assign{t}{\multiset{\functtype{\M}{\tau}}}}{\cmult}{\cexp}{\csize}
      \quad
      \sequT{\Delta}{\assign{u}{\M}}{\cmult'}{\cexp'}{\csize'}
     }
     {\sequT{\ctxtsum{\Gamma}{\Delta}{}}{\assign{\termapp{t}{u}}{\tau}}{\cmult+\cmult'+1}{\cexp+\cexp'-1}{\csize+\csize'}}
     {\ruleVDArrowE}
\\
\\
\Rule{\sequT{\Gamma}{\assign{t}{\multiset{\functtype{\M}{\tau}}}}{\cmult}{\cexp}{\csize}
      \quad
      \sequT{\Delta}{\assign{u}{\typeneutral}}{\cmult'}{\cexp'}{\csize'}
      \quad
      \ptight{\M}
     }
     {\sequT{\ctxtsum{\Gamma}{\Delta}{}}{\assign{\termapp{t}{u}}{\tau}}{\cmult+\cmult'+1}{\cexp+\cexp'-1}{\csize+\csize'}}
     {\ruleVDApp}
\\
\\
\Rule{\many{\sequT{\Gamma_i}{\assign{t}{\tau_i}}{\cmult_i}{\cexp_i}{\csize_i}}{\iI}}
     {\sequT{\ctxtsum{}{\ctxtres{\Gamma_i}{x}{}}{i \in I}}{\assign{\termabs{x}{t}}{\intertype{\functtype{\Gamma_i(x)}{\tau_i}}{i \in I}}}{+_{\iI}{\cmult_i}}{1+_{\iI}{\cexp_i}}{+_{\iI}{\csize_i}}}
     {\ruleVDArrowI}
\\
\\
\Rule{\sequT{\Gamma}{\assign{t}{\sigma}}{\cmult}{\cexp}{\csize}
      \quad
      \sequT{\Delta}{\assign{u}{\Gamma(x)}}{\cmult'}{\cexp'}{\csize'}
     }
     {\sequT{\ctxtsum{(\ctxtres{\Gamma}{x}{})}{\Delta}{}}{\assign{\termsubs{x}{u}{t}}{\sigma}}{\cmult+\cmult'}{\cexp+\cexp'}{\csize+\csize'}}
     {\ruleVDESubs}
\end{array}
$
\caption{System $\SysTightCBV$ for the Call-by-Value Calculus: Consuming Typing Rules.}
\label{fig:typingSchemesCBVConsuming}
\end{figure}
Some rules deserve a comment. A difficult property to be statically captured by
the counters is that an exponential step can only be generated by the meeting
of a substitution constructor with an appropriate value argument. This remark
leads to the introduction of different rules for typing variables, depending on
the role they play (to be a value or not). Indeed, there are three axioms for
variables: $\ruleVTAxiomVar$ typing variables that will persist as the head of
a neutral term; $\ruleVTAxiom$ typing variable values that may be substituted
by other values, \ie they are \emph{placeholders} for future persistent values;
and $\ruleVDAxiom$ typing variable values that, in particular,  may be consumed
as arguments. Consuming values are always typed with multitypes. Rule
$\ruleVTArrowE$ types neutral applications, \ie the left premise has type
$\typevar$ or $\typeneutral$.
Rule $\ruleVDArrowI$ increases the second counter, just like $\ruleVDAxiom$,
typing a value that is consumed as an argument. Rules $\ruleVDArrowE$ and
$\ruleVDApp$ increment the first counter because the (consuming) application
will be used to perform a $\dBeta$-step, while they decrement the second
counter to compensate for the left-hand-side value that is not being consumed
after all. In other words, consuming values are systematically typed by
incrementing their exponential counter by one (rules $\ruleVDAxiom$ and
$\ruleVDArrowI$), but they can finally act as computations instead as values,
notably when they are placed in a head position, so that their exponential
counter needs to be adjusted correctly (\cf Ex.~\ref{ex:t0cbv}). The decrement
of the exponential counters in rules $\ruleVDArrowE$ and $\ruleVDApp$ can also
be understood by means of the subtle translation from CBV to the
$\BangRev$-calculus that we introduce in Sec.~\ref{s:cbn-cbv-embeddings}. Rule
$\ruleVDApp$ is particularly useful to type $\dBeta$-redexes whose reduction
does not create an exponential redex, because the argument of the substitution
created by the $\dBeta$-step does not reduce to a value.

In spite of the decrements in rules $\ruleVDArrowE$ and $\ruleVDApp$, the
counters are positive.

\begin{lemma}
If
$\derivable{\Phi}{\sequT{\Gamma}{\assign{t}{\sigma}}{\cmult}{\cexp}{\csize}}{\SysTightCBV}$
then $\cexp \geq 0$. Moreover, if $\sigma$ is a multitype then $\cexp > 0$.
\label{l:value:cexp-positive}
\end{lemma}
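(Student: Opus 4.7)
My approach would be structural induction on the derivation $\Phi$, proving both claims simultaneously, since the moreover-clause is needed as inductive hypothesis for the rules that decrement $\cexp$. The base cases are immediate: the three persistent axioms $\ruleVTAxiomVar$, $\ruleVTAxiom$ and $\ruleVTArrowI$ produce $\cexp = 0$ with a tight (non-multitype) conclusion, so both claims hold trivially, whereas the consuming axiom $\ruleVDAxiom$ produces $\cexp = 1 > 0$ with a multitype conclusion, satisfying both parts at once.

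For the non-decrementing inductive rules ($\ruleVTArrowE$, $\ruleVTESubs$, $\ruleVDArrowI$, $\ruleVDESubs$), the counter $\cexp$ is a sum of premise counters, possibly with an extra $+1$ in $\ruleVDArrowI$, so non-negativity is immediate from the \ih. For the moreover-clause in this group: $\ruleVDArrowI$ always derives a multitype, and its $+1$ already gives $\cexp \geq 1$; for $\ruleVTESubs$ and $\ruleVDESubs$, the conclusion type coincides with the left premise's type, so if the former is a multitype the \ih on the latter yields $\cexp_1 > 0$.

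The crucial cases are the decrement rules $\ruleVDArrowE$ and $\ruleVDApp$, where $\cexp = \cexp_1 + \cexp_2 - 1$. In both, the left premise types $t$ with a multitype (of the form $\multiset{\functtype{\M}{\tau}}$), so the moreover-clause of the \ih gives $\cexp_1 \geq 1$, exactly compensating the decrement and yielding $\cexp \geq 0$. When moreover $\sigma = \tau$ is itself a multitype: for $\ruleVDArrowE$, the right premise's type $\M$ is also a multitype, hence $\cexp_2 \geq 1$ by \ih, and so $\cexp \geq 1 > 0$.

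The step I expect to require the most care is the moreover clause of $\ruleVDApp$ when $\tau$ is a multitype: the right premise is typed $\typeneutral$ (non-multitype), so only $\cexp_2 \geq 0$ is available, and $\cexp_1 \geq 1$ alone does not force strict positivity. I anticipate needing to strengthen the inductive invariant to track nested multitype occurrences in the derived type---so that, for example, deriving $\multiset{\functtype{\M}{\tau}}$ with $\tau$ itself a multitype already guarantees $\cexp \geq 2$---and verifying this enhanced invariant at the axiom $\ruleVDAxiom$ will be the main subtlety.
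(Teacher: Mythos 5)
Your induction is structured exactly as the paper's: both claims are proved simultaneously by case analysis on the last rule, the axioms and the non-decrementing rules are handled as you describe, and for the two decrementing rules the moreover clause of the inductive hypothesis supplies $\cexp_1 \geq 1$ from the multityped left premise; for $\ruleVDArrowE$ the right premise is also multityped, which gives strict positivity and settles the moreover clause for that rule's conclusion. Up to that point your proof and the paper's coincide step for step.

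Where you diverge is the moreover clause for $\ruleVDApp$, and while you are right that the naive argument does not close it, the repair you sketch is not what the paper does and would not work. The paper simply asserts $\sigma \in \typetight$ in the $\ruleVDApp$ case, i.e., it treats the result type $\tau$ of that rule as never being a multitype, so only $\cexp \geq 0$ has to be established there; the rule as printed does not syntactically enforce this, so the paper is relying on an implicit restriction on how $\ruleVDApp$ is applied. Your proposed strengthening --- that a derivation of $\multiset{\functtype{\M}{\tau}}$ with $\tau$ itself a multitype should force $\cexp \geq 2$ --- fails exactly where you anticipate: $\ruleVDAxiom$ assigns an arbitrary multitype, however deeply nested, always with $\cexp = 1$, so the strengthened invariant is already false at the base case and cannot be pushed through the induction. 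Drop the strengthening and dispose of the $\ruleVDApp$ case as the paper does, by taking its conclusion type to be tight, so that the multitype subcase of the moreover clause simply does not arise for that rule.
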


\begin{proof}
By induction on $\Phi$ analysing the last rule applied.
\begin{itemize}
  \item $\ruleVTAxiomVar$, $\ruleVTAxiom$, $\ruleVTArrowI$, $\ruleVDAxiom$. The
  result is immediate.
  
  \item $\ruleVTArrowE$. Then $t = \termapp{r}{u}$, $\sigma = \typeneutral$,
  $\cexp = \cexp_1 + \cexp_2$ and we have type derivations $\Phi_{r}$ and
  $\Phi_{u}$ with counters $\cexp_1$ and $\cexp_2$ resp. both assigning
  non-multiset types. By \ih we get $\cexp_1 \geq 0$ and $\cexp_2 \geq 0$, and
  hence we conclude $\cexp \geq 0$.
  
  \item $\ruleVTESubs$. Then $t = \termsubs{x}{u}{r}$, $\cexp = \cexp_1 +
  \cexp_2$ and we have type derivations $\Phi_{r}$ and $\Phi_{u}$ with
  counters $\cexp_1$ and $\cexp_2$ resp. We conclude directly from the \ih
  
  \item $\ruleVDArrowE$. Then $t = \termapp{r}{u}$, $\cexp = \cexp_1 + \cexp_2
  - 1$ and we have type derivations $\Phi_{r}$ and $\Phi_{u}$ with counters
  $\cexp_1$ and $\cexp_2$ resp. both assigning multiset types. By \ih we get
  $\cexp_1 > 0$ and $\cexp_2 > 0$, and hence we conclude $\cexp > 0$ as
  expected.
  
  \item $\ruleVDApp$. Then $t = \termapp{r}{u}$, $\sigma \in \typetight$,
  $\cexp = \cexp_1 + \cexp_2 - 1$ and we have a derivation $\Phi_{r}$ with
  counter $\cexp_1$ and a multiset type, and another derivation $\Phi_{u}$
  with counter $\cexp_2$ and type $\typeneutral$. By \ih we get $\cexp_1 > 0$
  and $\cexp_2 \geq 0$, so we can safely conclude $\cexp \geq 0$ as expected.
  
  \item $\ruleVDArrowI$. Then $t = \termabs{x}{r}$, $\sigma =
  \intertype{\functtype{\M_i}{\tau_i}}{i \in I}$, $\cexp = 1
  +_{i \in I}{\cexp_i}$ for some set of indices $I$, and we have a type
  derivation $\Phi^{i}_{r}$ with counter $\cexp_i$ for each $i \in I$. By \ih
  we get $\cexp_i \geq 0$ for every $i \in I$, and hence we conclude $\cexp >
  0$ as expected.
  
  \item $\ruleVDESubs$. Then $t = \termsubs{x}{u}{r}$, $\cexp = \cexp_1 +
  \cexp_2$ and we have type derivations $\Phi_{r}$ and $\Phi_{u}$ with
  counters $\cexp_1$ and $\cexp_2$ resp. We conclude directly from the \ih
\end{itemize}
\end{proof}


\parrafo{Soundness}
The soundness property is based on a series of auxiliary results that enables
to reason about tight type derivations:

\begin{lemma}[Tight Spreading]
Let
$\derivable{\Phi}{\sequT{\Gamma}{\assign{t}{\sigma}}{\cmult}{\cexp}{\csize}}{\SysTightCBV}$
such that $\Gamma$ is tight.
\begin{enumerate}
  \item\label{l:value:tight-spreading:tight} If $t \in\HCBVNF$ or $\cmult
  = \cexp = 0$, then $\sigma \in \typetight$.
  
  \item\label{l:value:tight-spreading:var} If $\sigma =
  \multiset{\functtype{\M}{\tau}}$, then $t \notin \VarHCBVNF$. 
\end{enumerate}
\label{l:value:tight-spreading}
\end{lemma}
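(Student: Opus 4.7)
The plan is to prove statements (1) and (2) simultaneously by induction on the derivation $\Phi$, since the inductive step for each clause typically needs to invoke the other on a sub-derivation. The guiding principle is that non-tight types (multitypes or arrow types) can only be produced by consuming rules, whose shapes force either non-zero counters or a term that is not a normal form / variable-neutral.

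Axioms and persistent-constructor rules are handled by inspection. The only subtle axiom is $\ruleVDAxiom$: here $\sigma = \M = \Gamma(x)$, and tightness of $\Gamma$ forces $\M$ to consist of base tight types, so $\sigma$ cannot have the shape $\multiset{\functtype{\M'}{\tau'}}$, giving (2); for (1) both hypotheses fail since $x \notin \HCBVNF$ and $\cexp = 1 \neq 0$. Rule $\ruleVDArrowI$ is vacuous in both clauses because $t$ is an abstraction and, by Lemma~\ref{l:value:cexp-positive}, $\cexp \geq 1$. For the consuming application rules $\ruleVDArrowE$ and $\ruleVDApp$, statement (2) is immediate since $t$ is an application; for (1) we argue vacuously: if $t = \termapp{r}{u} \in \HCBVNF$, the grammar yields $r \in \HCBVNF \cup \VarHCBVNF$; the inductive hypothesis (2) on $\Phi_r$ (whose context is tight as a sub-part of $\Gamma$) excludes $\VarHCBVNF$, while (1) on $\Phi_r$ would then force $\multiset{\functtype{\M}{\tau}} \in \typetight$, a contradiction.

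The main obstacle is rule $\ruleVDESubs$, where the body-premise context $\Gamma_r$ need not be tight because $\Gamma_r(x)$ is unconstrained. The trick is to reroute the argument through the substitution-premise $\Phi_u : \Delta_u \vdash u : \Gamma_r(x)$, whose context $\Delta_u$ \emph{is} tight (as a sub-part of $\Gamma$) and whose conclusion type $\Gamma_r(x)$ is a multitype. Whenever we can deduce $u \in \HCBVNF$, applying the inductive hypothesis (1) to $\Phi_u$ forces $\Gamma_r(x) \in \typetight$, contradicting its being a multitype. This resolves (2): if $t \in \VarHCBVNF$, the grammar forces $u \in \HCBVNF$ and the contradiction applies. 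It also resolves the $t \in \HCBVNF$ case of (1), by the same grammar argument; and the counters-zero case of (1) is ruled out directly by Lemma~\ref{l:value:cexp-positive} applied to $\Phi_u$, since $\Gamma_r(x)$ is a multitype and hence $\cexp_u > 0$. Rule $\ruleVTESubs$ is analogous but easier: its side condition $\ptight{\Gamma_r(x)}$ makes $\Gamma_r$ tight, allowing direct recursion on $\Phi_r$ for both clauses.
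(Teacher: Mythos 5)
Your proposal is correct and follows essentially the same route as the paper's proof: a simultaneous induction on $\Phi$ in which clause (2) is used to exclude $\VarHCBVNF$ heads in the consuming application cases, the $\ruleVDESubs$ case is handled by applying clause (1) to the substitution premise (whose context is tight and whose type is a multiset), and Lemma~\ref{l:value:cexp-positive} disposes of $\ruleVDAxiom$ and $\ruleVDArrowI$. Your direct appeal to Lemma~\ref{l:value:cexp-positive} on $\Phi_u$ in the counters-zero subcase of $\ruleVDESubs$ is a harmless shortcut relative to the paper's use of the inductive hypothesis there.
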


\begin{proof}
We show simultaneously the two properties by induction on $\Phi$.
\begin{enumerate}
  \item We consider every possible rule.
  \begin{itemize}
    \item $\ruleVTAxiomVar$, $\ruleVTAxiom$, $\ruleVTArrowI$. Then $\sigma \in
    \typetight$ by definition.

    \item $\ruleVTESubs$. Then $t = \termsubs{x}{u}{r}$. The hypothesis implies
    that the type derivation $\Phi_{r}$ typing $r$ with type $\sigma$ has a
    tight context, and the two first counters of $\Phi_{r}$ are equal to $0$ or
    $r \in \HCBVNF$. Then the \ih (\ref{l:value:tight-spreading:tight})
    applied to $\Phi_{r}$ gives $\sigma \in \typetight$.

    \item $\ruleVDAxiom$. Then $t = x \notin \HCBVNF$ and $\cexp > 0$. Thus,
    this case does not apply.
    
    \item $\ruleVTArrowE$, $\ruleVDArrowE$. Then $t = \termapp{r}{u}$. The
    proof concludes with $\cmult > 0$ so let us assume $t \in \HCBVNF$, which
    means in particular that $r \in \VarHCBVNF$ or $r \in \HCBVNF$. Then, we
    have a derivation
    $\derivable{\Phi_{r}}{\sequT{\Gamma'}{\assign{r}{\multiset{\functtype{\M}{\sigma}}}}{\cmult'}{\cexp'}{\csize'}}{\SysTightCBV}$,
    with $\Gamma' \ctxleq \Gamma$, so that $\Gamma'$ is also tight. The \ih
    (\ref{l:value:tight-spreading:var}) gives $r \notin  \VarHCBVNF$, so that
    $r \in \HCBVNF$. Then we can apply the \ih
    (\ref{l:value:tight-spreading:tight}) to $\Phi_{r}$ which gives
    $\multiset{\functtype{\M}{\sigma}} \in \typetight$, and this is a
    contradiction. Hence, this case does not apply.
    
    \item $\ruleVDApp$. This case is similar to the previous one.
    
    \item $\ruleVDArrowI$. Then $t = \termabs{x}{u} \notin \HCBVNF$.
    Moreover, by Lem.~\ref{l:value:cexp-positive}, we have $\cexp > 0$. Thus,
    this case does not apply.
    
    \item $\ruleVDESubs$. Then $t = \termsubs{x}{u}{r}$. The hypothesis implies
    that the type derivation $\Phi_{u}$ typing $u$ has a tight context, and the
    two first counters of $\Phi_{u}$ are equal to $0$ or $u \in \HCBVNF$.
    Moreover, we know by construction that the type of $u$ is a multiset. But
    the \ih (\ref{l:value:tight-spreading:tight}) applied to $\Phi_{u}$ gives a
    tight type for $u$ which leads to a contradiction.
  \end{itemize}
  
  \item Let $\sigma = \multiset{\functtype{\M}{\tau}}$. Suppose $t \in
  \VarHCBVNF$. We reason by cases. 
  \begin{itemize}
    \item $t = x$. Then $\Phi$ is necessarily $\ruleVDAxiom$ because of the
    type $\sigma$, thus $\Gamma = \assign{x}{\multiset{\functtype{\M}{\tau}}}$
    contradicts the fact that $\Gamma$ is tight.
    
    \item $t = \termsubs{x}{u}{r}$, with $r \in \VarHCBVNF$ and $u \in
    \HCBVNF$. Then $\Phi$ necessarily ends with rule $\ruleVDESubs$ or
    $\ruleVTESubs$. In the first case, $\Phi$ contains a subderivation typing
    $u$ with a context $\Delta \ctxleq \Gamma$ and a multiset type $\N$. Thus
    $\Delta$ is also tight so that the \ih
    (\ref{l:value:tight-spreading:tight}) gives $\N \in \typetight$ which is a
    contradiction because $\N$ is a multiset. In the second case $\Phi$
    contains a subderivation typing $r$ with a tight context and the same type
    $\multiset{\functtype{\M}{\tau}}$, so that the \ih
    (\ref{l:value:tight-spreading:var}) gives $r \notin \VarHCBVNF$, which
    implies $t \notin \VarHCBVNF$, contradiction.
  \end{itemize}
\end{enumerate}
\end{proof}


\begin{lemma}
Let
$\derivable{\Phi}{\sequT{\Gamma}{\assign{t}{\sigma}}{\cmult}{\cexp}{\csize}}{\SysTightCBV}$
tight. Then, $\cmult = \cexp = 0$ iff $t \in  \CBVNF$.
\label{l:value:czero-normal}
\end{lemma}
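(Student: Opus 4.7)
The proof mirrors the CBN analogue (Lem~\ref{l:name:czero-normal}) but must be strengthened to accommodate the three tight base types $\typevar$, $\typeneutral$, $\typevalue$ of $\SysTightCBV$. I would establish both directions by a single induction (one on $\Phi$ for $\Rightarrow$, one on $t$ for $\Leftarrow$) using the structural description of $\CBVNF$, together with Lem~\ref{l:value:cexp-positive} and Lem~\ref{l:value:tight-spreading} as the two main pruning devices.

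For the ($\Rightarrow$) direction, I would proceed by induction on $\Phi$ and prove simultaneously: under tightness and $\cmult = \cexp = 0$, (i) if the type is $\typevar$ then $t \in \VarHCBVNF$; (ii) if the type is $\typeneutral$ then $t \in \HCBVNF$; (iii) if the type is $\typevalue$ then $t \in \CBVNF$. The counter conditions immediately prune all consuming rules: $\ruleVDArrowE$ and $\ruleVDApp$ force $\cmult \geq 1$, while $\ruleVDAxiom$, $\ruleVDArrowI$ and $\ruleVDESubs$ all force $\cexp > 0$ (the last because its right premise types $u$ with a multitype, whence Lem~\ref{l:value:cexp-positive} applies). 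So the last rule is persistent. The axioms and $\ruleVTArrowI$ are immediate; for $\ruleVTArrowE$ the IH applied to the two premises (whose types $\overline{\typevalue}$ and $\overline{\typevar}$ fall precisely under claims (i)--(ii)) yields $r \in \VarHCBVNF \cup \HCBVNF$ and $u \in \CBVNF$, so $t \in \HCBVNF$ by the grammar; for $\ruleVTESubs$ the IH on the left premise gives $r$ in the right category and the IH on the right premise gives $u \in \HCBVNF$, so $t \in \CBVNF$ again by the grammar.

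For the ($\Leftarrow$) direction, I would proceed by induction on $t \in \CBVNF$, using its shape together with Lem~\ref{l:value:tight-spreading} to identify the last rule of $\Phi$. If $t = x$, tightness of the output type excludes the multitype-valued $\ruleVDAxiom$ and forces $\ruleVTAxiomVar$ or $\ruleVTAxiom$, both with zero counters. If $t = \termabs{x}{r}$, tightness excludes $\ruleVDArrowI$ (whose output is a non-tight multitype) and forces $\ruleVTArrowI$. If $t = \termapp{r}{u} \in \HCBVNF$, the last rule is one of $\ruleVTArrowE$, $\ruleVDArrowE$, $\ruleVDApp$; the latter two type $r$ with a multitype $\multiset{\functtype{\M}{\tau}}$ in a tight subcontext, and since $r \in \VarHCBVNF \cup \HCBVNF$, Lem~\ref{l:value:tight-spreading} yields a contradiction (part 2 rules out $\VarHCBVNF$, part 1 rules out $\HCBVNF$ since the multitype is not tight). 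Only $\ruleVTArrowE$ survives, and the IH on $r$ and $u$ delivers zero counters. If $t = \termsubs{x}{u}{r} \in \CBVNF$ (so $u \in \HCBVNF$ and $r \in \CBVNF$), $\ruleVDESubs$ is excluded analogously, since its right premise would type $u \in \HCBVNF$ with a multitype in a tight context, contradicting Lem~\ref{l:value:tight-spreading}(1); only $\ruleVTESubs$ remains, and the IH closes the case.

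The main obstacle is coordinating the two spreading-style lemmas: Lem~\ref{l:value:cexp-positive} is the right tool on the $\Rightarrow$ side because it detects consuming multitype-valued subderivations via a counter inequality, while Lem~\ref{l:value:tight-spreading} is the right tool on the $\Leftarrow$ side because it forbids multitype typings of variable- or neutral-headed terms in tight contexts. The simultaneous formulation with three base types on the $\Rightarrow$ side is unavoidable, because $\ruleVTArrowE$ already distinguishes its premises via the complementary tight types $\overline{\typevalue}$ and $\overline{\typevar}$.
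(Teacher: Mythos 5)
Your proof is correct and follows essentially the same route as the paper's: the same simultaneous induction on $\Phi$ for the forward direction (the paper indexes its three auxiliary claims by the shape predicates $\ipvar$, $\ipapp$, $\ipabs$ of the subject rather than by the derived tight type, but the two formulations coincide on tight derivations), and the same induction on the normal form for the converse, using Lem.~\ref{l:value:tight-spreading} in exactly the same way to exclude $\ruleVDArrowE$, $\ruleVDApp$ and $\ruleVDESubs$. The only cosmetic divergence is that in the forward direction you dismiss $\ruleVDESubs$ via Lem.~\ref{l:value:cexp-positive} (the right premise carries a multitype, hence a positive exponential counter) where the paper invokes Lem.~\ref{l:value:tight-spreading}; both arguments are sound.
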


\begin{proof}
$\left.\Rightarrow\right)$ We prove simultaneously the following statements by
induction on $\Phi$.
\begin{enumerate}
  \item\label{l:value:czero-normal:vr}
  $\derivable{\Phi}{\sequT{\Gamma}{\assign{t}{\sigma}}{0}{0}{\csize}}{\SysTightCBV}$
  tight and $\pvar{t}$ implies $t \in \VarHCBVNF$.
  
  \item\label{l:value:czero-normal:ne}
  $\derivable{\Phi}{\sequT{\Gamma}{\assign{t}{\sigma}}{0}{0}{\csize}}{\SysTightCBV}$
  tight and $\papp{t}$ implies $t \in \HCBVNF$.
  
  \item\label{l:value:czero-normal:no}
  $\derivable{\Phi}{\sequT{\Gamma}{\assign{t}{\sigma}}{0}{0}{\csize}}{\SysTightCBV}$
  tight and $\pabs{t}$ implies $t \in \CBVNF$.
\end{enumerate}

\begin{itemize}
  \item $\ruleVTAxiom$, $\ruleVTAxiomVar$. Then $t = x$ and $\pvar{t}$. By
  definition $x \in \VarHCBVNF$ trivially holds.
  
  \item $\ruleVTArrowE$. Then $t = \termapp{r}{u}$, $\sigma = \typeneutral$,
  $\Gamma = \ctxtsum{\Gamma'}{\Delta}{}$, $\csize = \csize' + \csize'' + 1$,
  $\sequT{\Gamma'}{\assign{r}{\overline{\typevalue}}}{0}{0}{\csize'}$, and
  $\sequT{\Delta}{\assign{u}{\overline{\typevar}}}{0}{0}{\csize''}$.
  We are in the case $\papp{t}$ so we need to show $t \in \HCBVNF$. Moreover,
  $\neg\pabs{r}$ holds since $r$ is typed with $\overline{\typevalue}$
  (\ie $\typevar$ or $\typeneutral$), and $\pabs{r}$ implies $\sigma$ is
  $\typevalue$ or a multiset type. By hypothesis $\Gamma'$ is tight, hence \ih
  (\ref{l:value:czero-normal:vr}) or (\ref{l:value:czero-normal:ne}) gives $r
  \in \VarHCBVNF$ or $r \in \HCBVNF$ respectively. Similarly, $\Delta$ is tight
  and one of the \ih gives $u \in \CBVNF$, since $\VarHCBVNF \subseteq
  \CBVNF$ and $\HCBVNF \subseteq \CBVNF$. We thus conclude $\termapp{r}{u} \in
  \HCBVNF$.
  
  \item $\ruleVTArrowI$. Then $t = \termabs{x}{r} \in \CBVNF$ by definition.
  
  \item $\ruleVTESubs$. Then $t = \termsubs{x}{u}{r}$, $\Gamma =
  \ctxtsum{(\ctxtres{\Gamma'}{x}{})}{\Delta}{}$, $\csize = \csize' + \csize''$,
  $\sequT{\Gamma'}{\assign{r}{\sigma}}{0}{0}{\csize'}$,
  $\sequT{\Delta}{\assign{u}{\typeneutral}}{0}{0}{\csize''}$ and
  $\ptight{\Gamma'(x)}$. Moreover, note that $\pabs{u}$ implies its type is
  $\typevalue$ or a multiset, while $\pvar{u}$ implies its type is $\typevar$,
  $\typevalue$ or a multiset. Then, $\neg\pabs{u}$ and $\neg\pvar{u}$ hold
  since it is typed with $\typeneutral$. Since $\ctxtres{\Gamma'}{x}{}$ is
  tight and $\ptight{\Gamma'(x)}$, then $\Gamma'$ and $\Delta$ are both tight
  as well. The \ih (\ref{l:value:czero-normal:ne}) gives $u \in \HCBVNF$. On
  the other hand, some of the \ih applies and gives $r \in \VarHCBVNF$ or $r
  \in \HCBVNF$ or $r \in\CBVNF$. We thus conclude $t \in \VarHCBVNF$ or $t \in
  \HCBVNF$ or $t \in\CBVNF$, respectively.
  
  \item $\ruleVDAxiom$. This case does not apply since it concludes with $\cexp
  > 0$.
  
  \item $\ruleVDArrowE$, $\ruleVDApp$. These cases do not apply since they
  conclude with $\cmult > 0$.
  
  \item $\ruleVDArrowI$. This case does not apply since, by
  Lem.~\ref{l:value:cexp-positive}, it concludes with $\cexp > 0$.
    
  \item $\ruleVDESubs$. Then $t = \termsubs{x}{u}{r}$, $\csize = \csize' +
  \csize''$ and $\Gamma = \ctxtsum{(\ctxtres{\Gamma'}{x}{})}{\Delta}{}$ tight
  such that, in particular,
  $\derivable{\Phi_{u}}{\sequT{\Delta}{\assign{u}{\Gamma'(x)}}{0}{0}{\csize''}}{\SysTightCBV}$
  with $\Delta$ tight. By Lem.~\ref{l:value:tight-spreading}
  (\ref{l:value:tight-spreading:tight}) on $\Phi_{u}$, $\Gamma'(x) \in
  \typetight$ which leads to a contradiction since $\Gamma'(x)$ is a multiset
  by definition. Hence, this case does not apply.
\end{itemize}

$\left.\Leftarrow\right)$ By induction on $t$.
\begin{itemize}
  \item $t = x$. Then, $\Phi$ necessarily ends with rule $\ruleVTAxiom$, or
  $\ruleVTAxiomVar$ and the statement trivially holds.
  
  \item $t = \termapp{r}{u}$. By definition $t \in \CBVNF$ gives $r \in
  \VarHCBVNF$ or $r \in \HCBVNF$, and $u \in \CBVNF$. There are three cases to
  consider:
  \begin{enumerate}
    \item if $\Phi$ ends with rule $\ruleVTArrowE$, then $\sigma =
    \typeneutral$, $\Gamma = \ctxtsum{\Gamma'}{\Delta}{}$, $\cmult = \cmult' +
    \cmult''$, $\cexp = \cexp' + \cexp''$, $\csize = \csize' + \csize'' + 1$,
    $\sequT{\Gamma'}{\assign{r}{\overline{\typevalue}}}{\cmult'}{\cexp'}{\csize'}$,
    $\sequT{\Delta}{\assign{u}{\overline{\typevar}}}{\cmult''}{\cexp''}{\csize''}$.
    Moreover, $\Gamma'$ and $\Delta$ are both tight. Then, the \ih gives
    $\cmult' = \cexp' = 0$ and $\cmult'' = \cexp'' = 0$. Hence, $\cmult = \cexp
    = 0$.
    
    \item if $\Phi$ ends with rule $\ruleVDArrowE$, then $\Gamma =
    \ctxtsum{\Gamma'}{\Delta}{}$, $\cmult = \cmult' + \cmult'' + 1$, $\cexp =
    \cexp' + \cexp'' - 1$, $\csize = \csize' + \csize''$ and
    $\derivable{\Phi_{r}}{\sequT{\Gamma'}{\assign{r}{\multiset{\functtype{\M}{\tau}}}}{\cmult'}{\cexp'}{\csize'}}{\SysTightCBV}$
    where $\Gamma'$ is tight. We know by Lem.~\ref{l:value:tight-spreading}
    (\ref{l:value:tight-spreading:var}) that $r \notin \VarHCBVNF$. Then, it is
    necessarily the case $r \in \HCBVNF$ and Lem.~\ref{l:value:tight-spreading}
    (\ref{l:value:tight-spreading:tight}) gives
    $\multiset{\functtype{\M}{\tau}} \in \typetight$ which is a contradiction.
    Hence, this case does not apply.
    
    \item if $\Phi$ ends with rule $\ruleVDApp$, this case is similar to the
    previous one.
  \end{enumerate}
  
  \item $t = \termabs{x}{r}$. There are two cases to consider for $\Phi$:
  \begin{enumerate}
    \item if $\Phi$ ends with rule $\ruleVTArrowI$, then $\cmult = \cexp = 0$
    by definition.
    
    \item if $\Phi$ ends with rule $\ruleVDArrowI$, then $\sigma$ should be a
    multiset of functional types, which contradicts the hypothesis of $\Phi$
    being tight. Hence, this case does not apply.
  \end{enumerate}
  
  \item $t = \termsubs{x}{u}{r}$. By definition $t \in \CBVNF$ implies $r \in
  \CBVNF$ and $u \in \HCBVNF \subseteq \CBVNF$. Then, there are two cases to
  consider for $\Phi$:
  \begin{enumerate}
    \item if $\Phi$ ends with rule $\ruleVTESubs$, then $\Gamma =
    \ctxtsum{(\ctxtres{\Gamma'}{x}{})}{\Delta}{}$, $\cmult = \cmult' +
    \cmult''$, $\cexp = \cexp' + \cexp''$, $\csize = \csize' + \csize''$,
    $\sequT{\Gamma'}{\assign{r}{\sigma}}{\cmult'}{\cexp'}{\csize'}$,
    $\sequT{\Delta}{\assign{u}{\typeneutral}}{\cmult''}{\cexp''}{\csize''}$
    and $\ptight{\Gamma'(x)}$. Since $\Gamma$ is tight and
    $\ptight{\Gamma'(x)}$, then $\Gamma'$ and $\Delta$ are both tight as well.
    Then, the \ih with $r \in \CBVNF$ and $u \in \CBVNF$ gives $\cmult' =
    \cexp' = 0$ and $\cmult'' = \cexp'' = 0$. Hence, $\cmult = \cexp = 0$.
    
    \item if $\Phi$ ends with rule $\ruleVDESubs$, then $\cmult = \cmult' +
    \cmult''$, $\cexp = \cexp' + \cexp''$, $\csize = \csize' + \csize''$ and
    $\Gamma = \ctxtsum{(\ctxtres{\Gamma'}{x}{})}{\Delta}{}$ tight such that, in
    particular,
    $\derivable{\Phi_{u}}{\sequT{\Delta}{\assign{u}{\Gamma'(x)}}{\cmult}{\cexp}{\csize''}}{\SysTightCBV}$
    with $\Delta$ tight. Moreover, $t \in \CBVNF$ implies $u \in \HCBVNF$.
    By Lem.~\ref{l:value:tight-spreading} (\ref{l:value:tight-spreading:tight})
    $\Gamma'(x) \in \typetight$ which is a contradiction since $\Gamma'(x)$ is
    a multiset type by definition. Hence, this case does not apply.
  \end{enumerate}
\end{itemize}
\end{proof}


\begin{lemma}
Let
$\derivable{\Phi}{\sequT{\Gamma}{\assign{t}{\sigma}}{0}{0}{\csize}}{\SysTightCBV}$.
If $\Phi$ is tight, then $\csize = \valsize{t}$.
\label{l:value:tight-size}
\end{lemma}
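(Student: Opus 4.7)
The plan is to proceed by induction on $\Phi$, analysing the last typing rule used, and to show that $\csize$ exactly mirrors the $\callbyvalue$-size of $t$ for every persistent constructor encountered. A preliminary step is to argue that under the hypothesis (tight derivation with counters $(0,0,\csize)$), none of the consuming rules can occur as the last rule of $\Phi$. Indeed: $\ruleVDAxiom$ produces $\cexp = 1$; $\ruleVDArrowI$ yields $\cexp \geq 1$ directly from its definition; $\ruleVDArrowE$ and $\ruleVDApp$ force $\cmult \geq 1$ because the submults are non-negative; and $\ruleVDESubs$ types $u$ with a multitype $\Gamma(x)$, so by Lem.~\ref{l:value:cexp-positive} the $\cexp$ counter of the subderivation typing $u$ is strictly positive, and the same holds for $\Phi$. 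Hence $\Phi$ must end with a persistent rule.

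For the base cases $\ruleVTAxiomVar$, $\ruleVTAxiom$, and $\ruleVTArrowI$, the subject $t$ is either a variable or an abstraction, so both $\csize = 0$ and $\valsize{t} = 0$ by definition. For the rule $\ruleVTArrowE$, we have $t = \termapp{r}{u}$ with $\csize = \csize_r + \csize_u + 1$, and the two subderivations type $r$ with $\overline{\typevalue} \in \{\typevar,\typeneutral\} \subseteq \typetight$ and $u$ with $\overline{\typevar} \in \{\typevalue,\typeneutral\} \subseteq \typetight$. Since $\Gamma$ is tight and splits as $\ctxtsum{\Gamma'}{\Delta}{}$, both subcontexts are tight; and since $\cmult, \cexp$ are non-negative and sum to $0$, both subderivations have $(0,0,\_)$ counters. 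The IH then gives $\csize_r = \valsize{r}$ and $\csize_u = \valsize{u}$, whence $\csize = \valsize{r} + \valsize{u} + 1 = \valsize{\termapp{r}{u}}$.

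The case $\ruleVTESubs$ is analogous: with $t = \termsubs{x}{u}{s}$ and $\csize = \csize_s + \csize_u$, the subderivation typing $s$ carries the tight type $\sigma$ (inherited from the conclusion) and the subderivation typing $u$ carries $\typeneutral$; the tightness condition $\ptight{\Gamma'(x)}$ in the rule together with tightness of $\Gamma$ ensures both subcontexts are tight, and non-negativity again forces both sub-counter pairs to be $(0,0)$. The IH then gives $\csize = \valsize{s} + \valsize{u} = \valsize{\termsubs{x}{u}{s}}$.

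The main obstacle is the preliminary argument excluding the consuming rules as the last step, since this crucially relies on Lem.~\ref{l:value:cexp-positive} in the $\ruleVDArrowI$ and $\ruleVDESubs$ cases; the inductive step itself is routine, using the fact that tightness and the $(0,0,\_)$ constraint propagate unambiguously to the immediate subderivations in each persistent rule.
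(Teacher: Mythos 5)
Your proof is correct and follows essentially the same route as the paper's: induction on $\Phi$, ruling out the consuming rules as the last step and then matching $\csize$ against $\valsize{t}$ rule by rule for the persistent cases. The only (harmless) divergence is in excluding $\ruleVDESubs$: you invoke Lem.~\ref{l:value:cexp-positive} on the multityped premise for $u$ to force $\cexp>0$, whereas the paper derives a contradiction via the tight-spreading lemma; both arguments are available and equally valid here.
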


\begin{proof}
By induction on $\Phi$.
\begin{itemize}
  \item $\ruleVTAxiomVar$, $\ruleVTAxiom$. Then $t = x$ and $\csize = 0 =
  \valsize{t}$ as expected.
  
  \item $\ruleVTArrowE$. Then $t = \termapp{r}{u}$, $\sigma = \typeneutral$,
  $\Gamma = \ctxtsum{\Gamma'}{\Delta}{}$, $\csize = \csize' + \csize'' + 1$,
  $\sequT{\Gamma'}{\assign{r}{\overline{\typevalue}}}{0}{0}{\csize'}$,
  $\sequT{\Delta}{\assign{u}{\overline{\typevar}}}{0}{0}{\csize''}$. Then,
  $\Gamma'$ and $\Delta$ are both tight, hence the \ih gives $\csize' =
  \valsize{r}$ and $\csize'' = \valsize{u}$. Hence, $\csize = \valsize{r} +
  \valsize{u} + 1 = \valsize{t}$.
  
  \item $\ruleVTArrowI$. Then $t = \termabs{x}{r}$, and $\csize = 0 =
  \valsize{t}$.
  
  \item $\ruleVTESubs$. Then $t = \termsubs{x}{u}{r}$, $\Gamma =
  \ctxtsum{(\ctxtres{\Gamma'}{x}{})}{\Delta}{}$, $\csize = \csize' + \csize''$,
  $\sequT{\Gamma'}{\assign{r}{\sigma}}{0}{0}{\csize'}$,
  $\sequT{\Delta}{\assign{u}{\typeneutral}}{0}{0}{\csize''}$ and
  $\ptight{\Gamma'(x)}$. Since $\Gamma$ is tight and $\ptight{\Gamma'(x)}$,
  then $\Gamma'$ and $\Delta$ are both tight as well. Thus, \ih gives $\csize'
  = \valsize{r}$ and $\csize'' = \valsize{u}$. Then, $\csize = \valsize{r} +
  \valsize{u} = \valsize{t}$. 
  
  \item $\ruleVDAxiom$. This case does not apply since it concludes with $\cexp
  > 0$.
  
  \item $\ruleVDArrowE$, $\ruleVDApp$. These cases do not apply since they
  conclude with $\cmult > 0$.
  
  \item $\ruleVDArrowI$. This case does not apply since, by
  Lem.~\ref{l:value:cexp-positive}, it concludes with $\cexp > 0$.
  
  \item $\ruleVDESubs$. Then $t = \termsubs{x}{u}{r}$, $\csize = \csize' +
  \csize''$ and $\Gamma = \ctxtsum{(\ctxtres{\Gamma'}{x}{})}{\Delta}{}$ tight
  such that, in particular,
  $\derivable{\Phi_{u}}{\sequT{\Delta}{\assign{u}{\Gamma'(x)}}{0}{0}{\csize''}}{\SysTightCBV}$
  with $\Delta$ tight. By Lem.~\ref{l:value:tight-spreading}
  (\ref{l:value:tight-spreading:tight}) on $\Phi_{u}$, $\Gamma'(x) \in
  \typetight$ which leads to a contradiction, since $\Gamma'(x)$ is a multiset
  type by definition. Hence, this case does not apply.
\end{itemize}
\end{proof}


A type derivation for a value with a multiset types might be split in
multiple derivations. 

\begin{lemma}[Split Type for Exact Value]
Let
$\derivable{\Phi}{\sequT{\Gamma}{\assign{v}{\M}}{\cmult}{\cexp}{\csize}}{\SysTightCBV}$
such that $\M = \ctxtsum{\!}{\M_i}{i \in I}$. Then, there exist
$\many{\derivable{\Phi_{v}^{i}}{\sequT{\Gamma_i}{\assign{v}{\M_i}}{\cmult_i}{\cexp_i}{\csize_i}}{\SysTightCBV}}{i \in I}$
such that $\Gamma = \ctxtsum{\!}{\Gamma_i}{i \in I}$, $\cmult =
+_{i \in I}{\cmult_i}$, $\cexp = 1 +_{i \in I}{\cexp_i} - |I|$, and
$\csize = +_{i \in I}{\csize_i}$.
\label{l:value:split-value-tight}
\end{lemma}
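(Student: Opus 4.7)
My plan is to prove this by case analysis on the last rule of $\Phi$, rather than a full induction on derivations. The key observation is that $v$ is a value (so either a variable or an abstraction), and $\M$ is a multitype, so only two rules can possibly conclude $\Phi$: rule $\ruleVDAxiom$ (when $v = x$, since $\ruleVTAxiomVar$ and $\ruleVTAxiom$ yield tight types, not multitypes) and rule $\ruleVDArrowI$ (when $v = \termabs{x}{t}$, since $\ruleVTArrowI$ yields $\typevalue$, not a multitype). All other rules do not target values.

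In the variable case, $\Phi$ is a single instance of $\ruleVDAxiom$, so $\Gamma = \assign{x}{\M}$, $\cmult = 0$, $\cexp = 1$, $\csize = 0$. Given the split $\M = \ctxtsum{\!}{\M_i}{i \in I}$, I take $\Phi_v^i$ to be the axiom derivation $\sequT{\assign{x}{\M_i}}{\assign{x}{\M_i}}{0}{1}{0}$ for each $i \in I$. The contexts and $\cmult$/$\csize$ counters add up trivially, and for the exponential counter we get $1 +_{i \in I} 1 - |I| = 1 = \cexp$, as required. The edge case $I = \emptyset$ is consistent since then $\M = \emul$ and the arithmetic gives $1 - 0 = 1$.

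In the abstraction case, $\Phi$ ends with $\ruleVDArrowI$ applied to a family of premises indexed by some set $J$, yielding type $\M = \intertype{\functtype{\Gamma_j(x)}{\tau_j}}{j \in J}$. Since $\M = \ctxtsum{\!}{\M_i}{i \in I}$ as multisets, I partition $J = \sqcup_{i \in I} J_i$ so that $\M_i = \intertype{\functtype{\Gamma_j(x)}{\tau_j}}{j \in J_i}$, and then for each $i \in I$ I re-apply $\ruleVDArrowI$ to the premises indexed by $J_i$ alone, producing $\Phi_v^i$ with $\cmult_i = +_{j \in J_i} \cmult_j$, $\cexp_i = 1 +_{j \in J_i} \cexp_j$, and $\csize_i = +_{j \in J_i} \csize_j$. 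Summing gives $+_{i \in I} \cmult_i = +_{j \in J} \cmult_j = \cmult$ and analogously for $\csize$, while for the exponential we get $1 +_{i \in I} \cexp_i - |I| = 1 + |I| + (+_{j \in J} \cexp_j) - |I| = 1 +_{j \in J} \cexp_j = \cexp$, matching the conclusion of the original $\ruleVDArrowI$.

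The only real subtlety — and the main obstacle — is the bookkeeping behind the $-|I|$ correction on the exponential counter. This term is forced because both rules that type a value with a multitype ($\ruleVDAxiom$ and $\ruleVDArrowI$) contribute a \emph{single} constant $+1$ to $\cexp$ regardless of how many components the multitype has, so splitting one such derivation into $|I|$ parallel derivations introduces $|I|$ such constants in place of the original one. Once this is observed, the lemma reduces to routine matching of premises on either side.
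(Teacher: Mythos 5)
Your proposal is correct and follows essentially the same route as the paper's proof: a case analysis on whether $v$ is a variable or an abstraction, ruling out the persistent axioms/abstraction rule because they cannot derive a multitype, splitting the $\ruleVDAxiom$ instance into $|I|$ axioms in the first case and partitioning the premise index set of $\ruleVDArrowI$ in the second, with the same counter bookkeeping yielding the $-|I|$ correction on the exponential counter. Your added remarks on the $I=\emptyset$ edge case and the origin of the $-|I|$ term are accurate but do not change the argument.
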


\begin{proof}
By case analysis on the form of $v$.
\begin{itemize}
  \item $v = x$. By $\ruleVDAxiom$
  $\derivable{\Phi}{\sequT{\assign{x}{\M}}{\assign{x}{\M}}{0}{1}{0}}{\SysTightCBV}$.
  From $\M = \ctxtsum{\!}{\M_i}{i \in I}$ we can build 
  $\many{\derivable{\Phi^{i}_{v}}{\sequT{\assign{x}{\M_i}}{\assign{x}{\M_i}}{0}{1}{0}}{\SysTightCBV}}{i \in I}$
  by using $\ruleVDAxiom$ several times, and conclude since
  $+_{i \in I}{\cexp_i} = |I|$ and $\cexp = 1$. Note that the persistent rules
  $\ruleVTAxiomVar$ and $\ruleVTAxiom$ do not apply since they do not conclude
  with a multiset type. 
  
  \item $v = \termabs{x}{r}$. By $\ruleVDArrowI$, we have $\Gamma =
  \ctxtsum{\!}{\ctxtres{\Gamma_j}{x}{}}{j \in J}$, $\M =
  \intertype{\functtype{\Gamma_j(x)}{\tau_j}}{j \in J}$, $\cmult =
  +_{j \in J}{\cmult_j}$, $\cexp = 1 +_{j \in J}{\cexp_j}$, $\csize =
  +_{j \in J}{\csize_i}$ and
  $\many{\derivable{\Phi^{j}_{r}}{\sequT{\Gamma_j}{\assign{r}{\tau_j}}{\cmult_j}{\cexp_j}{\csize_j}}{\SysTightCBV}}{j \in J}$
  for some $J$. Let $\M_i =
  \intertype{\functtype{\Gamma_k(x)}{\tau_k}}{k \in K_i}$ for each $i \in I$,
  where $J =  \biguplus_{i \in I}{K_i}$. Then, by using $\ruleVDArrowI$ we
  build \[
\many{\derivable{\Phi^{i}_{v}}{\sequT{\ctxtsum{\!}{\ctxtres{\Gamma_k}{x}{}}{k \in K_i}}{\assign{\termabs{x}{r}}{\M_i}}{+_{k \in K_i}{\cmult_k}}{1+_{k \in K_i}{\cexp_k}}{+_{k \in K_i}{\csize_k}}}{\SysTightCBV}}{i \in I}
  \] Finally, we conclude since:
  \begin{itemize}
    \item $\Gamma =
    \ctxtsum{\!}{\ctxtres{\Gamma_j}{x}{}}{j \in J} =
    \ctxtsum{\!}{(\ctxtsum{\!}{\ctxtres{\Gamma_k}{x}{}}{k \in K_i})}{i \in I}$
    
    \item $\cmult = +_{j \in J}{\cmult_j} =
    +_{i \in I}{(+_{k \in K_i}{\cmult_k})}$
    
    \item $\cexp = 1 +_{j \in J}{\cexp_j} = 1
    +_{i \in I}{(1 +_{k \in K_i}{\cexp_k})} - |I|$
    
    \item $\csize = +_{j \in J}{\csize_j} =
    +_{i \in I}{(+_{k \in K_i}{\csize_k})}$
  \end{itemize}
  As before, the persistent rule $\ruleVTArrowI$ does not apply since it does
  not conclude with a multiset type. 
\end{itemize}
\end{proof}


These auxiliary lemmas are used in particular to show how to compute the
counters when substituting typed values. This is the key property to obtain
subject reduction.

\begin{lemma}[Substitution]
Let
$\derivable{\Phi_{t}}{\sequT{\Gamma}{\assign{t}{\tau}}{\cmult}{\cexp}{\csize}}{\SysTightCBV}$
and
$\derivable{\Phi_{v}}{\sequT{\Delta}{\assign{v}{\Gamma(x)}}{\cmult'}{\cexp'}{\csize'}}{\SysTightCBV}$.
Then,
$\derivable{\Phi_{\substitute{x}{v}{t}}}{\sequT{\ctxtsum{\ctxtres{\Gamma}{x}{}}{\Delta}{}}{\assign{\substitute{x}{v}{t}}{\tau}}{\cmult+\cmult'}{\cexp+\cexp'-1}{\csize+\csize'}}{\SysTightCBV}$.
\label{l:value:substitution-tight}
\end{lemma}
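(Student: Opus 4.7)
The plan is to proceed by induction on the derivation $\Phi_t$, performing a case analysis on the last rule used. The central technical tool is the Split Type for Exact Value lemma (Lem.~\ref{l:value:split-value-tight}), which allows us to decompose $\Phi_v$ into several smaller derivations whenever the type assignments to $x$ must be distributed among different subderivations.

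First I would treat the three axiom cases. For $\ruleVDAxiom$ (where $t = x$ is typed with a multiset $\Gamma(x) = \M$), the substitution $\substitute{x}{v}{t}$ is simply $v$, and $\Phi_v$ itself provides the required derivation: the local counters of $\Phi_t$ are $(0,1,0)$, so combining them with those of $\Phi_v$ gives $(\cmult', \cexp'-1, \csize')$, as required. For $\ruleVTAxiomVar$ (where $t = x$ is typed with $\typevar$ from $\{x{:}\multiset{\typevar}\}$), the value $v$ has type $\multiset{\typevar}$, which by inspection of the rules forces $v$ to be a variable typed via $\ruleVDAxiom$ with counters $(0,1,0)$; I rebuild the conclusion using $\ruleVTAxiomVar$, and the arithmetic closes as $(0,0,0)$. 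For $\ruleVTAxiom$ (where $t$ is a variable typed with $\typevalue$ from the empty context), $\Gamma(x) = \emul$, so $\Phi_v$ must arise from $\ruleVDArrowI$ with $I = \emptyset$ (forcing $v$ to be an abstraction) with counters $(0,1,0)$, and the conclusion is obtained by applying $\ruleVTArrowI$ (or $\ruleVTAxiom$ if $v$ happens to be a variable distinct from $x$).

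For the inductive cases, the main technical point is to distribute the type assignments of $x$ collected in $\Gamma(x)$ among the subderivations. If $\Phi_t$ ends with a binary rule such as $\ruleVTArrowE$, $\ruleVDArrowE$, $\ruleVDApp$, $\ruleVTESubs$ or $\ruleVDESubs$, with premises involving contexts $\Gamma_1$ and $\Gamma_2$ satisfying $\Gamma(x) = \Gamma_1(x) \sqcup \Gamma_2(x)$, I invoke Lem.~\ref{l:value:split-value-tight} to split $\Phi_v$ into derivations of $v{:}\Gamma_1(x)$ and $v{:}\Gamma_2(x)$, apply the induction hypothesis to each premise with the corresponding piece of $\Phi_v$, and then reapply the same rule. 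For the abstraction rule $\ruleVDArrowI$ (with $t = \termabs{y}{r}$ and $y \neq x$ by $\alpha$-conversion), we have $\Gamma(x) = \sqcup_{i \in I}\Gamma_i(x)$ and I invoke the Split Lemma with $|I|$ parts, apply the IH to each of the $|I|$ premises, and reconstruct with $\ruleVDArrowI$.

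The main obstacle is the counter bookkeeping: the $-1$ shift in the exponential counter of the statement must be matched precisely against the $+1 - |I|$ correction introduced by the Split Lemma (which gives back $|I|$ copies of exponential counters summing to $1 + \sum_{i \in I}\cexp_i - |I|$ worth of spare $\cexp$) and the extra $+1$ contributed by the $\ruleVDArrowI$ rule at the top when reconstructing an abstraction. In each inductive case one verifies that these numerical contributions combine to yield exactly $\cexp + \cexp' - 1$, while the multiplicative and size counters add up straightforwardly; the Split Lemma is designed precisely to make this arithmetic close, so the delicate part is not conceptual but purely a careful per-case check.
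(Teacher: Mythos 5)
Your proposal is correct and follows essentially the same route as the paper: induction on the typing derivation (the paper inducts on $t$, which amounts to the same), explicit treatment of the three axiom rules, and an appeal to the Split Type for Exact Value lemma to distribute $\Gamma(x)$ over the premises in all remaining cases, with exactly the counter arithmetic you describe. The only trivial points you gloss over are the axiom case $t = y \neq x$ (where $\Gamma(x) = \emul$ forces $\Phi_{v}$ to have counters $(0,1,0)$ and one simply keeps $\Phi_{t}$) and the fact that in your $\ruleVTAxiom$ case $v$ may also be a variable typed by $\ruleVDAxiom$ with $\emul$ rather than an abstraction, which you do acknowledge parenthetically.
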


\begin{proof}
By induction on $t$.
\begin{itemize}
  \item $t = x$. Then, $\substitute{x}{v}{t} = v$ and there are three
  possibilities for $\Phi_{t}$:
  \begin{itemize}
    \item $\ruleVTAxiomVar$. Then $\tau = \typevar$ and $\Gamma(x) =
    \multiset{\typevar}$. There is only one way of typing a value with
    $\multiset{\typevar}$, namely $\ruleVDAxiom$, hence $v$ is a variable and
    $\Delta = \assign{v}{\multiset{\typevar}}$. We conclude by taking
    $\derivable{\Phi_{\substitute{x}{v}{t}}}{\sequT{\Delta}{\assign{v}{\typevar}}{0}{0}{0}}{\SysTightCBV}$
    by rule $\ruleVTAxiomVar$. Note that the counters are as expected.
    
    \item $\ruleVTAxiom$. Then $\tau = \typevalue$ and $\Gamma = \emptyset$.
    There are only two ways of typing a value with type $\emul$, namely
    $\ruleVDAxiom$, if $v$ is a variable, and $\ruleVDArrowI$ with $I =
    \emptyset$, if $v$ is an abstraction. In both cases the counters of
    $\Phi_{v}$ are $(0,1,0)$. We conclude by taking
    $\derivable{\Phi_{\substitute{x}{v}{t}}}{\sequT{}{\assign{v}{\typevalue}}{0}{0}{0}}{\SysTightCBV}$
    by $\ruleVTAxiom$, if $v$ is a variable, or $\ruleVTArrowI$, if $v$ is an
    abstraction.
    
    \item $\ruleVDAxiom$. Then we have
    $\derivable{\Phi_{t}}{\sequT{\assign{x}{\tau}}{\assign{x}{\tau}}{0}{1}{0}}{\SysTightCBV}$,
    $\tau$ being a multiset type. We conclude by taking
    $\Phi_{\substitute{x}{v}{t}} = \Phi_{v}$ since the counters are as
    expected.
  \end{itemize}
  
  \item $t = y \neq x$. Then, $\substitute{x}{v}{t} = y$, and there are three
  possibilities for $\Phi_{t}$: $\ruleVTAxiomVar$, $\ruleVTAxiom$ and
  $\ruleVDAxiom$, with $\Gamma(x) = \emul$ in all cases. As observed above
  the counters of $\Phi_{v}$ are $(0,1,0)$ whenever $\Gamma(x) = \emul$, hence
  we conclude by taking $\Phi_{\substitute{x}{v}{t}} = \Phi_{t}$.
  
\item All the other cases are straightforward by the \ih and
Lem.~\ref{l:value:split-value-tight}.

\end{itemize}
\end{proof}


\begin{lemma}[Exact Subject Reduction]
Let
$\derivable{\Phi}{\sequT{\Gamma}{\assign{t}{\sigma}}{\cmult}{\cexp}{\csize}}{\SysTightCBV}$
be a tight derivation. If $t \loredv t'$, then there is
$\derivable{\Phi'}{\sequT{\Gamma}{\assign{t'}{\sigma}}{\cmult'}{\cexp'}{\csize}}{\SysTightCBV}$
such that
\begin{inparaenum}[(1)]
  \item\label{l:value:subject-reduction-tight:mult} $\cmult' = \cmult - 1$ and
  $\cexp' = \cexp$ if $t \loredv t'$ is an $\mStep$-step;
  \item\label{l:value:subject-reduction-tight:exp} $\cexp' = \cexp - 1$ and
  $\cmult' = \cmult$ if $t \loredv t'$ is an $\eStep$-step.
\end{inparaenum}
\label{l:value:subject-reduction-tight}
\end{lemma}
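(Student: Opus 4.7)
The plan is to adapt the CBN exact subject reduction proof (Lem.~\ref{l:name:subject-reduction-tight}) to the CBV setting. As in CBN, a direct induction on $\loredv$ cannot work, since context closure rules force the IH to be applied to subterms with non-tight types. I would therefore first strengthen the statement analogously, requiring only that $\Gamma$ be tight and that either $\sigma \in \typetight$ or $\neg\pabs{t}$. This side condition is preserved by every reduction rule of $\loredv$, since each context closure rule involving an application or a substitution explicitly forbids the case in which the reducing subterm is of the form $\ctxtapp{\ctxt{L}}{\termabs{x}{u}}$.

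I would then induct on $t \loredv t'$. For the $\mStep$ root redex $\termapp{\ctxtapp{\ctxt{L}}{\termabs{x}{r}}}{u}$, a sub-induction on $\ctxt{L}$ reduces the analysis to $\ctxt{L} = \Box$. Since an abstraction cannot be typed with $\typevar$ or $\typeneutral$, the outer application must be typed by either $\ruleVDArrowE$ or $\ruleVDApp$; in both cases the abstraction carries a singleton multitype $\multiset{\functtype{\M}{\tau}}$, derivable only by $\ruleVDArrowI$ with $|I| = 1$. Peeling off $\ruleVDArrowI$ recovers a derivation of the body $r$ whose $\cexp$ counter is one less than that of the abstraction; recombining with the argument via $\ruleVDESubs$ (when the source used $\ruleVDArrowE$) or $\ruleVTESubs$ (when it used $\ruleVDApp$, in which case the side condition $\ptight{\M}$ is exactly what $\ruleVTESubs$ needs) yields a derivation of $\termsubs{x}{u}{r}$ with counters $(\cmult - 1, \cexp, \csize)$, as required.

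For the $\eStep$ root redex $\termsubs{x}{\ctxtapp{\ctxt{L}}{v}}{r}$, I would again sub-induct on $\ctxt{L}$. At $\ctxt{L} = \Box$, inspection of the axioms and introduction rules shows that a value cannot be typed with $\typeneutral$, so the outer substitution is typed by $\ruleVDESubs$ rather than $\ruleVTESubs$, and $v$ receives a multiset type equal to $\Gamma(x)$. The Substitution Lemma (Lem.~\ref{l:value:substitution-tight}) then produces a derivation of $\substitute{x}{v}{r}$ whose exponential counter is exactly one less than the sum of the premises of $\ruleVDESubs$, matching the required $-1$ on $\cexp$ and $0$ on $\cmult$. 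The context closure cases follow routinely by the IH applied to the reducing subterm, using the strengthened side condition to handle cases in which that subterm has a non-tight type.

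The main obstacle is the exhaustive case analysis for the $\mStep$ base case: the three possible application rules combined with the two possible abstraction rules generate several configurations, and one must rule out impossible ones (e.g.\ $\ruleVTArrowE$ applied to a head that reduces to an abstraction, since the left premise would have type $\typevar$ or $\typeneutral$) and verify that the $+1/-1$ corrections in $\ruleVDArrowE$, $\ruleVDApp$, $\ruleVDArrowI$, $\ruleVDESubs$, and $\ruleVTESubs$ compose so that $\cexp$ is preserved across the step. A secondary subtlety is propagating the strengthened side condition through the sub-inductions on $\ctxt{L}$, which follows from the typing rules for explicit substitutions together with the fact that an ES never turns a non-abstraction into an abstraction at the surface.
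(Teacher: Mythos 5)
Your proposal matches the paper's proof essentially step for step: the same strengthened statement ($\Gamma$ tight and $\sigma \in \typetight$ or $\neg\pabs{t}$), the same induction on $t \loredv t'$ with a sub-induction on $\ctxt{L}$, the same exclusion of $\ruleVTArrowE$/$\ruleVTESubs$ at the root redexes, the same recombination via $\ruleVDESubs$/$\ruleVTESubs$, and the same appeal to the Substitution Lemma for the exponential case. The only detail left implicit is that the impossible context-closure configurations (e.g.\ reduction in the argument of a $\ruleVDArrowE$-typed application) are dismissed via the Tight Spreading lemma, but this is consistent with your plan.
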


\begin{proof}
We actually prove the following stronger statement that allows us to reason
inductively:

Let $t \loredv t'$ and
$\derivable{\Phi}{\sequT{\Gamma}{\assign{t}{\sigma}}{\cmult}{\cexp}{\csize}}{\SysTightCBV}$
such that $\Gamma$ is tight, and either $\sigma \in \typetight$ or
$\neg\pabs{t}$. Then, there exists
$\derivable{\Phi'}{\sequT{\Gamma}{\assign{t'}{\sigma}}{\cmult'}{\cexp'}{\csize}}{\SysTightCBV}$
such that
\begin{enumerate}
  \item $\cmult' = \cmult - 1$ and $\cexp' = \cexp$ if $t \loredv t'$ is an
  $\mStep$-step.
  
  \item $\cexp' = \cexp - 1$ and $\cmult' = \cmult$ if $t \loredv t'$ is an
  $\eStep$-step.
\end{enumerate}

We proceed by induction on $t \loredv t'$.
\begin{itemize}
  \item $t = \termapp{\ctxtapp{\ctxt{L}}{\termabs{x}{r}}}{u} \loredv
  \ctxtapp{\ctxt{L}}{\termsubs{x}{u}{r}} = t'$ is an $\mStep$-step. We proceed
  by induction on $\ctxt{L}$. We only show here the case $\ctxt{L} = \Box$ as
  the inductive case is straightforward.

  We first remark that $\Phi$ cannot end with rule $\ruleVTArrowE$ since
  $\termabs{x}{u}$ cannot be typed neither with $\typevar$ nor with
  $\typeneutral$, then there are two cases depending on the last rule of
  $\Phi$.
  \begin{enumerate}
    \item $\ruleVDArrowE$. Then $\Phi$ has the following form: \[
\Rule{
  \Rule{
    \derivable{\Phi_{r}}{\sequT{\Gamma'}{\assign{r}{\sigma}}{\cmult_1}{\cexp_1}{\csize_1}}{\SysTightCBV}
  }{
    \sequT{\ctxtres{\Gamma'}{x}{}}{\assign{\termabs{x}{r}}{\multiset{\functtype{\Gamma'(x)}{\sigma}}}}{\cmult_1}{1+\cexp_1}{\csize_1}
  }{\ruleVDArrowI}
  \derivable{\Phi_{u}}{\sequT{\Delta}{\assign{u}{\Gamma'(x)}}{\cmult_2}{\cexp_2}{\csize_2}}{\SysTightCBV}
}{
  \sequT{\ctxtsum{\ctxtres{\Gamma'}{x}{}}{\Delta}{}}{\assign{\termapp{(\termabs{x}{r})}{u}}{\sigma}}{\cmult_1+\cmult_2+1}{\cexp_1+\cexp_2}{\csize_1+\csize_2}
}{\ruleVDArrowE}
    \] with $\cmult = \cmult_1 + \cmult_2 + 1$, $\cexp = \cexp_1 + \cexp_2$
    and $\csize = \csize_1 + \csize_2$. We conclude by $\ruleVDESubs$ with
    the following type derivation \[
\Rule{
  \derivable{\Phi_{r}}{\sequT{\Gamma'}{\assign{r}{\sigma}}{\cmult_1}{\cexp_1}{\csize_1}}{\SysTightCBV}
  \quad
  \derivable{\Phi_{u}}{\sequT{\Delta}{\assign{u}{\Gamma'(x)}}{\cmult_2}{\cexp_2}{\csize_2}}{\SysTightCBV}
}{
  \derivable{\Phi'}{\sequT{\ctxtsum{\ctxtres{\Gamma'}{x}{}}{\Delta}{}}{\assign{\termsubs{x}{u}{r}}{\sigma}}{\cmult_1+\cmult_2}{\cexp_1+\cexp_2}{\csize_1+\csize_2}}{\SysTightCBV}
}{\ruleVDESubs}
    \] taking $\cmult' = \cmult_1 + \cmult_2 = \cmult - 1$ and $\cexp' =
    \cexp_1 + \cexp_2 = \cexp$.
    
    \item $\ruleVDApp$. Then $\Phi$ has the following form: \[
\Rule{
  \Rule{
    \derivable{\Phi_{r}}{\sequT{\Gamma'}{\assign{r}{\sigma}}{\cmult_1}{\cexp_1}{\csize_1}}{\SysTightCBV}
  }{
    \sequT{\ctxtres{\Gamma'}{x}{}}{\assign{\termabs{x}{r}}{\multiset{\functtype{\Gamma'(x)}{\sigma}}}}{\cmult_1}{1+\cexp_1}{\csize_1}
  }{\ruleVDArrowI}
  \derivable{\Phi_{u}}{\sequT{\Delta}{\assign{u}{\typeneutral}}{\cmult_2}{\cexp_2}{\csize_2}}{\SysTightCBV}
  \quad
  \ptight{\Gamma'(x)}
}{
  \sequT{\ctxtsum{\ctxtres{\Gamma'}{x}{}}{\Delta}{}}{\assign{\termapp{(\termabs{x}{r})}{u}}{\sigma}}{\cmult_1+\cmult_2+1}{\cexp_1+\cexp_2}{\csize_1+\csize_2}
}{\ruleVDApp}
    \] with $\cmult = \cmult_1 + \cmult_2 + 1$, $\cexp = \cexp_1 + \cexp_2$
    and $\csize = \csize_1 + \csize_2$. We conclude by $\ruleVTESubs$ with
    the following type derivation \[
\Rule{
  \derivable{\Phi_{r}}{\sequT{\Gamma'}{\assign{r}{\sigma}}{\cmult_1}{\cexp_1}{\csize_1}}{\SysTightCBV}
  \quad
  \derivable{\Phi_{u}}{\sequT{\Delta}{\assign{u}{\typeneutral}}{\cmult_2}{\cexp_2}{\csize_2}}{\SysTightCBV}
  \quad
  \ptight{\Gamma'(x)}
}{
  \derivable{\Phi'}{\sequT{\ctxtsum{\ctxtres{\Gamma'}{x}{}}{\Delta}{}}{\assign{\termsubs{x}{u}{r}}{\sigma}}{\cmult_1+\cmult_2}{\cexp_1+\cexp_2}{\csize_1+\csize_2}}{\SysTightCBV}
}{\ruleVTESubs}
    \] taking $\cmult' = \cmult_1 + \cmult_2 = \cmult - 1$ and $\cexp' =
    \cexp_1 + \cexp_2 = \cexp$.
  \end{enumerate}
  
  \item $t = \termsubs{x}{\ctxtapp{\ctxt{L}}{v}}{r} \loredv
  \ctxtapp{\ctxt{L}}{\substitute{x}{v}{r}} = t'$ is an $\eStep$-step. We
  proceed by induction on $\ctxt{L}$. We only show here the case $\ctxt{L} =
  \Box$ as the inductive case is straightforward.

  Then, $t = \termsubs{x}{v}{r}$. Note that $\ruleVTESubs$ does not apply since
  it requires the value $v$ to be typed with $\typeneutral$, which is not
  possible. Thus, rule $\ruleVDESubs$ is necessarily used and we have \[
\Rule{
  \derivable{\Phi_{r}}{\sequT{\Gamma'}{\assign{r}{\sigma}}{\cmult_1}{\cexp_1}{\csize_1}}{\SysTightCBV}
  \quad
  \derivable{\Phi_{v}}{\sequT{\Delta}{\assign{v}{\Gamma'(x)}}{\cmult_2}{\cexp_2}{\csize_2}}{\SysTightCBV}
}{
  \derivable{\Phi}{\sequT{\ctxtsum{\ctxtres{\Gamma'}{x}{}}{\Delta}{}}{\assign{\termsubs{x}{v}{r}}{\sigma}}{\cmult_1+\cmult_2}{\cexp_1+\cexp_2}{\csize_1+\csize_2}}{\SysTightCBV}
}{\ruleVDESubs}
  \] with $\cmult = \cmult_1 + \cmult_2$, $\cexp = \cexp_1 + \cexp_2$ and
  $\csize = \csize_1 + \csize_2$. We conclude by applying
  Lem.~\ref{l:value:substitution-tight} with $\Phi_{r}$ and $\Phi_{v}$, thus
  obtaining
  $\derivable{\Phi'}{\sequT{\ctxtsum{\ctxtres{\Gamma'}{x}{}}{\Delta}{}}{\assign{\substitute{x}{v}{t}}{\sigma}}{\cmult_1+\cmult_2}{\cexp_1+\cexp_2-1}{\csize_1+\csize_2}}{\SysTightCBV}$.
  Note that the counters are as expected.
  
  \item $t = \termapp{r}{u} \loredv \termapp{r'}{u} = t'$, where $r \loredv r'$
  and $\neg\pabs{r}$. Then, $\Phi$ ends with rule $\ruleVTArrowE$,
  $\ruleVDArrowE$ or $\ruleVDApp$, and in either case the subterm $r$ has an
  associated typing derivation $\Phi_{r}$ which verifies the hypothesis since
  $\neg\pabs{r}$. Hence, by \ih there exists a $\Phi_{r'}$ with proper counters
  such that we can derive $\Phi'$ applying the same rule as in $\Phi$ to
  conclude.

  \item $t = \termapp{r}{u} \loredv \termapp{r}{u'} = t'$, where $u \loredv u'$
  and $r \in \HCBVNF \cup \VarHCBVNF$. There are two further cases to consider:
  \begin{enumerate}
    \item If $\Phi$ ends with rule $\ruleVTArrowE$, then the subterm $u$ has an
    associated typing derivation which verifies the hypothesis (context and
    type are tight). Then the property holds by \ih
    
    \item If $\Phi$ ends with rule $\ruleVDArrowE$ or $\ruleVDApp$, then the
    subterm $r$ has an associated typing derivation with a tight context and a
    functional type $\multiset{\functtype{\M}{\sigma}}$.
    Lem.~\ref{l:value:tight-spreading} (\ref{l:value:tight-spreading:var})
    gives $r \notin \VarHCBVNF$ so that $r \in \HCBVNF$. Then,
    Lem.~\ref{l:value:tight-spreading} (\ref{l:value:tight-spreading:tight})
    gives $\multiset{\functtype{\M}{\sigma}} \in \typetight$ which leads to a
    contradiction. So this case is not possible.
  \end{enumerate}
  
  \item $t = \termsubs{x}{u}{r} \loredv \termsubs{x}{u}{r'} = t'$, where $r
  \loredv r'$ and $u \in \HCBVNF$. There are two further cases to consider:
  \begin{enumerate}
    \item If $\Phi$ ends with rule $\ruleVTESubs$ then the subterm $r$ has an
    associated typing derivation which verifies the hypothesis (context tight
    and $\neg\pabs{r}$). Then the property holds by \ih

    \item If $\Phi$ ends with rule $\ruleVDESubs$ then the subterm $u$ has an
    associated typing derivation with a tight context and a multiset type $\M$.
    Lem.~\ref{l:value:tight-spreading} (\ref{l:value:tight-spreading:tight})
    gives $\M \in \typetight$ which leads to a contradiction. So this case is
    not possible.
  \end{enumerate}
  
  \item $t = \termsubs{x}{u}{r} \loredv \termsubs{x}{u'}{r} = t'$, where $u
  \loredv u'$ and $\neg\pval{u}$. There are two further cases to consider:
  \begin{enumerate}
    \item If $\Phi$ ends with rule $\ruleVTESubs$ then the subterm $u$ has an
    associated typing derivation which verifies the hypothesis (context and
    type are tight). Then the property holds by the \ih
  
    \item If $\Phi$ ends with rule $\ruleVDESubs$ then the subterm $u$ has an
    associated typing derivation $\Phi_{u}$ which verifies the hypothesis since
    $\neg\pval{u}$ implies $\neg\pabs{u}$. Hence, by \ih there exists a
    $\Phi_{u'}$ with proper counters such that we can derive $\Phi'$ applying
    $\ruleVDESubs$ as in $\Phi$ to conclude.
  \end{enumerate}
\end{itemize}
\end{proof}


An interesting remark is that types of subterms in tight derivations may change
during CBV reduction, unlike other approaches~\cite{AccattoliGK18,LeberlePhD}.
To illustrate this phenomenon, consider the following reduction
$\termsubs{x}{\id}{x} \loredv \id$. The terms on the left and right hand side
can be resp. tightly typed by the following derivations (we omit the counters):
\begin{center}
   \begin{prooftree}
    \begin{prooftree}
    \justifies{\sequ{}{\assign{x}{\typevalue}}}
    \using{\ruleVTAxiom}
   \end{prooftree} \sep
  \begin{prooftree}
    \justifies{\sequ{}{\assign{\id}{\emul}}}
      \using{\ruleVDArrowI}
  \end{prooftree}
  \justifies{\sequ{}{\assign{\termsubs{x}{\id}{x}}{\typevalue}}}
  \using{\ruleVDESubs}
   \end{prooftree}
   \sep \sep \sep \begin{prooftree}
     \phantom{\begin{prooftree}
    \justifies{\sequ{}{\assign{\id}{\emul}}}
      \using{\ruleVDArrowI}
  \end{prooftree}}
     \justifies{\sequ{}{\assign{\id}{\typevalue}}}
     \using{\ruleVTArrowI}
     \end{prooftree} 
\end{center}
Notice that the identity function $\id$ is typed differently in each
derivation: the substitution introduced by rule $\ruleVDESubs$ is a consuming
constructor, which disappears when $\sVal$-reduction consumes its value
argument, a phenomenon that is captured by means of a multitype for the
argument of the substitution. Indeed, $\id$ on the left-hand side derivation is
typed with the multitype $\emul$. This value $\id$ substitutes a variable $x$
typed with $\typevalue$, which is just a placeholder for a persistent value.
Thus, once the substitution is performed, the identity $\id$ becomes persistent
on the right-hand side, a phenomenon which is naturally captured by the tight
type $\typevalue$. This observation also applies to the tight typing system of
CBPV in
Sec.~\ref{s:tight}.
  
\begin{theorem}[Soundness]
If
$\derivable{\Phi}{\sequT{\Gamma}{\assign{t}{\sigma}}{\cmult}{\cexp}{\csize}}{\SysTightCBV}$
is tight, then there exists $p$ such that $p \in \CBVNF$ and
$t \rewriten{\callbyvalue}^{(\cmult,\cexp)} p$ with $\cmult$ $\mStep$-steps,
$\cexp$ $\eStep$-steps, and $\valsize{p} = \csize$.
\label{t:value:correctness-tight}
\end{theorem}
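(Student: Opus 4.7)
The plan is to mirror the strategy used for the CBN case (Thm.~\ref{t:name:correctness-tight}): first establish the result for the deterministic strategy $\loredv$, and then lift it to the full non-deterministic reduction $\rewrite{\callbyvalue}$ using the standing observation that any two reduction paths to normal form have the same number of multiplicative and exponential steps. This reduction from the general to the deterministic case is essentially free given the confluence and step-invariance remarks made right after the definition of $\rewrite{\callbyvalue}$.

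For the deterministic case, I would proceed by induction on the sum $\cmult + \cexp$ of the first two counters of the tight derivation $\Phi$. In the base case $\cmult + \cexp = 0$, Lem.~\ref{l:value:czero-normal} yields $t \in \CBVNF$, so we may take $p = t$; then Lem.~\ref{l:value:tight-size} gives $\csize = \valsize{t} = \valsize{p}$, which closes the base case. In the inductive step $\cmult + \cexp > 0$, Lem.~\ref{l:value:czero-normal} (contrapositive) ensures $t \notin \CBVNF$, hence by Prop.~\ref{l:cbn-cbv-normal-forms} there exists $t'$ with $t \loredv t'$; this single step is either an $\mStep$-step or an $\eStep$-step, which will be identified by inspecting the reduction in the usual way.

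I then invoke Lem.~\ref{l:value:subject-reduction-tight} on $\Phi$ and $t \loredv t'$ to obtain a tight derivation $\Phi'$ of $t'$ with the same context, type and size counter $\csize$, and with counters $(\cmult', \cexp')$ satisfying either $(\cmult', \cexp') = (\cmult-1, \cexp)$ (multiplicative case) or $(\cmult', \cexp') = (\cmult, \cexp-1)$ (exponential case). In particular $\cmult' + \cexp' < \cmult + \cexp$, so the induction hypothesis applied to $\Phi'$ yields a CBV normal form $p$ with $t' \rewriten{\valuestg}^{(\cmult',\cexp')} p$ and $\valsize{p} = \csize$. Prepending the step $t \loredv t'$ then gives $t \rewriten{\valuestg}^{(\cmult,\cexp)} p$ in both cases, which is exactly what is required.

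The only non-routine step is the bookkeeping around the decrement of the exponential counter in rules $\ruleVDArrowE$ and $\ruleVDApp$: unlike CBN, a consuming value carries a $+1$ contribution to $\cexp$ that must be compensated when the value ends up in head position, which is precisely what Lem.~\ref{l:value:subject-reduction-tight} already takes care of via its $\mStep$-case. Thus, once subject reduction is in place, the induction goes through uniformly, and the extra positivity guarantee given by Lem.~\ref{l:value:cexp-positive} ensures that the counters of $\Phi'$ are well-defined natural numbers at every step.
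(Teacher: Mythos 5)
Your proposal is correct and follows essentially the same route as the paper's proof: induction on $\cmult+\cexp$, using Lem.~\ref{l:value:czero-normal} and Lem.~\ref{l:value:tight-size} for the base case, Prop.~\ref{l:cbn-cbv-normal-forms} to obtain a $\loredv$-step, and Lem.~\ref{l:value:subject-reduction-tight} for the inductive step, then lifting to $\rewrite{\callbyvalue}$ by the step-invariance of reduction paths. No gaps to report.
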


\begin{proof}
We prove the statement for $\loredv$ and then conclude for the general
notion of reduction $\rewrite{\callbyvalue}$ by the observation that all
reduction sequences to normal form have the same number of multiplicative and
exponential steps. Let
$\derivable{\Phi}{\sequT{\Gamma}{\assign{t}{\sigma}}{\cmult}{\cexp}{\csize}}{\SysTight}$.
We proceed by induction on $\cmult + \cexp$:
\begin{itemize}
  \item If $\cmult + \cexp = 0$, then $\cmult = \cexp = 0$ and
  Lem.~\ref{l:value:czero-normal} gives $t \in \CBVNF$. Moreover, by
  Lem.~\ref{l:value:tight-size} we get $\valsize{t} = \csize$. Thus, we
  conclude with $p = t$.

  \item If $\cmult + \cexp > 0$, then $t \notin \CBVNF$ holds by
  Lem.~\ref{l:value:czero-normal} and thus there exists $t'$ such that $t
  \loredvn^{(1,0)} t'$ or $t \loredvn^{(0,1)} t'$ by
  Prop.~\ref{l:cbn-cbv-normal-forms}. By
  Lem.~\ref{l:value:subject-reduction-tight} there exists a type derivation
  $\derivable{\Phi'}{\sequT{\Gamma}{\assign{t'}{\sigma}}{\cmult'}{\cexp'}{\csize}}{\SysTight}$
  such that $1 + \cmult' + \cexp' = \cmult + \cexp$. By the \ih there exists $p
  \in \CBVNF$ such that $t' \loredvn^{(\cmult',\cexp')} p$ with $\csize =
  \valsize{p}$. Then $t \loredvn^{(1,0)} t' \loredvn^{(\cmult',\cexp')} p$ or
  $t \loredvn^{(0,1)} t' \loredvn^{(\cmult',\cexp')} p$ which means $t
  \loredvn^{(\cmult,\cexp)} p$, as expected.
\end{itemize}
\end{proof}


As in CBN, the previous theorem is stated by using the general notion of
reduction $\rewrite{\callbyvalue}$, but the proofs (notably
Lem.~\ref{l:value:subject-reduction-tight}) are done using the deterministic
reduction $\loredv$. Similar comment applies to the forthcoming
Thm.~\ref{t:value:completeness-tight}.

\parrafo{Completeness}
The completeness result is also based on intermediate lemmas, most notably the
so-called \emph{tight typing of normal forms} and \emph{subject expansion}
properties:

\begin{lemma}[Tight Typing of Normal Forms]
If $t \in \CBVNF$, then there is a tight derivation
$\derivable{\Phi}{\sequT{\Gamma}{\assign{t}{\sigma}}{0}{0}{\valsize{t}}}{\SysTightCBV}$.
\label{l:value:normal-forms-tight}
\end{lemma}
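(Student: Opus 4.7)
My plan mirrors the CBN analogue (Lemma~\ref{l:name:normal-forms-tight}), but the three mutually recursive grammars $\VarHCBVNF$, $\HCBVNF$, $\CBVNF$ force a triple strengthening, which I would prove simultaneously by structural induction on $t$:
\begin{enumerate}
\item If $t \in \VarHCBVNF$, then $t$ admits two tight derivations with counters $(0,0,\valsize{t})$, one of type $\typevar$ and one of type $\typevalue$.
\item If $t \in \HCBVNF$, then $t$ admits a tight derivation of type $\typeneutral$ with counters $(0,0,\valsize{t})$.
\item If $t \in \CBVNF$, then $t$ admits a tight derivation whose type lies in $\{\typevalue, \typeneutral\}$, with counters $(0,0,\valsize{t})$.
\end{enumerate}
The statement of the lemma follows from (3) together with (1), covering $\VarHCBVNF \subseteq \CBVNF$.

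The base case $t = x$ discharges (1) using $\ruleVTAxiomVar$ (context $\assign{x}{\multiset{\typevar}}$, type $\typevar$) and $\ruleVTAxiom$ (empty context, type $\typevalue$); both yield counters $(0,0,0) = (0,0,\valsize{x})$. The abstraction clause of (3) is discharged by $\ruleVTArrowI$, producing type $\typevalue$, empty context and counters $(0,0,0) = (0,0,\valsize{\termabs{x}{r}})$.

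For the inductive step, each grammar production is handled by the matching persistent rule. Applications $\termapp{r}{u} \in \HCBVNF$ use $\ruleVTArrowE$: when $r \in \VarHCBVNF$ I invoke (1) to type the left premise with $\typevar$, and when $r \in \HCBVNF$ I invoke (2) to type it with $\typeneutral$ --- in both subcases a valid instance of $\overline{\typevalue}$; the right premise is obtained from (3) applied to $u$, yielding a type in $\{\typevalue, \typeneutral\}$, a valid instance of $\overline{\typevar}$. The rule's $+1$ on the size counter matches $\valsize{\termapp{r}{u}} = \valsize{r} + \valsize{u} + 1$. Substitutions $\termsubs{x}{u}{r}$, which appear in $\VarHCBVNF$ (with $r \in \VarHCBVNF$), $\HCBVNF$ (with $r \in \HCBVNF$) and $\CBVNF$ (with $r \in \CBVNF$), all use $\ruleVTESubs$: (2) supplies the required $\typeneutral$ typing of $u$, the derivation of $r$ comes from (1), (2) or (3) respectively, and the conclusion inherits the type of $r$, which lies in the required class.

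The delicate point I expect is verifying the side condition $\ptight{\Gamma(x)}$ of $\ruleVTESubs$. For the clause $\termsubs{x}{u}{r} \in \VarHCBVNF$ with $r = x$, choosing the derivation of $r$ with type $\typevar$ from (1) gives $\Gamma(x) = \multiset{\typevar}$, which is tight, producing an outer derivation of type $\typevar$; the alternative derivation with type $\typevalue$ gives $\Gamma(x) = \emul$, vacuously tight, producing an outer derivation of type $\typevalue$. Both alternatives required by (1) are thus produced. For all other inductive subcases, tightness of the combined context follows since the inductive hypotheses yield tight contexts and $\sqcup$ preserves tightness.
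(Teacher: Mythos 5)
Your proposal is correct and follows essentially the same route as the paper: a simultaneous structural induction over the mutually recursive grammars $\VarHCBVNF$, $\HCBVNF$, $\CBVNF$, discharging each production with the matching persistent rule ($\ruleVTAxiomVar$/$\ruleVTAxiom$, $\ruleVTArrowE$, $\ruleVTArrowI$, $\ruleVTESubs$); the paper merely states your item (1) as two separate claims. Your sharpening of the $\CBVNF$ claim to types in $\set{\typevalue,\typeneutral}$ is a slight improvement in precision, since it makes the $\overline{\typevar}$ side condition of $\ruleVTArrowE$ immediate rather than implicit.
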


\begin{proof}
By simultaneous induction on the following claims:
\begin{enumerate}
  \item\label{l:value:normal-forms-tight:ax} If $t \in \VarHCBVNF$, then there
  exists a tight derivation
  $\derivable{\Phi}{\sequT{\Gamma}{\assign{t}{\typevar}}{0}{0}{\valsize{t}}}{\SysTightCBV}$.
  
  \item\label{l:value:normal-forms-tight:vr} If $t \in \VarHCBVNF$, then there
  exists a tight derivation
  $\derivable{\Phi}{\sequT{\Gamma}{\assign{t}{\typevalue}}{0}{0}{\valsize{t}}}{\SysTightCBV}$.
  
  \item\label{l:value:normal-forms-tight:ne} If $t \in \HCBVNF$, then there
  exists a tight derivation
  $\derivable{\Phi}{\sequT{\Gamma}{\assign{t}{\typeneutral}}{0}{0}{\valsize{t}}}{\SysTightCBV}$.
  
  \item\label{l:value:normal-forms-tight:no} If $t \in \CBVNF$, then there
  exists a tight derivation
  $\derivable{\Phi}{\sequT{\Gamma}{\assign{t}{\typetight}}{0}{0}{\valsize{t}}}{\SysTightCBV}$.
\end{enumerate}
\begin{itemize}
  \item $t = x$. Then, $t \in \VarHCBVNF$ and we conclude
  (\ref{l:value:normal-forms-tight:ax}) by $\ruleVTAxiomVar$ or
  (\ref{l:value:normal-forms-tight:vr}) by $\ruleVTAxiom$, since $\valsize{x} =
  0$ by definition.
  
  \item $t = \termapp{r}{u}$. Then, $t \in \HCBVNF$ with $r \in \VarHCBVNF \cup
  \HCBVNF$ and $u \in \CBVNF$. By \ih (\ref{l:value:normal-forms-tight:no}) on
  $u$ we have
  $\derivable{\Phi_{u}}{\sequT{\Delta}{\assign{u}{\typetight}}{0}{0}{\valsize{u}}}{\SysTightCBV}$
  with $\Delta$ tight. There are two options for $r$, namely $r \in \VarHCBVNF$
  or $r \in \HCBVNF$. By \ih (\ref{l:value:normal-forms-tight:ax}) or
  (\ref{l:value:normal-forms-tight:ne}) resp. there exists a tight derivation
  $\derivable{\Phi_{r}}{\sequT{\Gamma'}{\assign{r}{\overline{\typevalue}}}{0}{0}{\valsize{t}}}{\SysTightCBV}$.
  We conclude (\ref{l:value:normal-forms-tight:ne}) by $\ruleVTArrowE$ with
  $\Phi_{r}$ and $\Phi_{u}$, obtaining $\Gamma = \ctxtsum{\Gamma'}{\Delta}{}$
  and $\valsize{t} = \valsize{r} + \valsize{u} + 1$ as expected.
  
  \item $t = \termabs{x}{r}$. Then, we conclude
  (\ref{l:value:normal-forms-tight:no}) by $\ruleVTArrowI$ since $\valsize{t} =
  0$ by definition.
  
  \item $t = \termsubs{x}{u}{r}$. Then, there are three cases to consider: $t
  \in \VarHCBVNF$, $t \in \HCBVNF$ or $t \in \CBVNF$. In either of them we have
  $u \in \HCBVNF$. Thus, by \ih (\ref{l:value:normal-forms-tight:ne}) there
  exists a tight derivation
  $\derivable{\Phi_{u}}{\sequT{\Delta}{\assign{u}{\typeneutral}}{0}{0}{\valsize{u}}}{\SysTightCBV}$.
  Moreover, in each case we get $r$ in the same set as $t$ by definition. By
  applying the respective \ih on $r$ we get a tight derivation
  $\derivable{\Phi_{r}}{\sequT{\Gamma'}{\assign{r}{\sigma}}{0}{0}{\valsize{r}}}{\SysTightCBV}$
  with the appropiate type for each case. Finally, since $\Gamma'$ tight
  implies $\Gamma'(x)$ tight, we conclude each respective claim by
  $\ruleVTESubs$ with $\Gamma = \ctxtsum{\ctxtres{\Gamma'}{x}{}}{\Delta}{}$
  tight and, hence,
  $\derivable{\Phi}{\sequT{\Gamma}{\assign{\termsubs{x}{u}{r}}{\sigma}}{0}{0}{\valsize{r} + \valsize{u}}}{\SysTightCBV}$
  tight as well. Note that $\valsize{t} = \valsize{r} + \valsize{u}$.
\end{itemize}
\end{proof}


Analogously to the soundness case, we may obtain a single type
derivation for a value with multiset type from several ones, properly combining
their counters, and hence leading to an anti-subtitution result.

\begin{lemma}[Merge type for value]
Let
$\many{\derivable{\Phi_{v}^{i}}{\sequT{\Gamma_i}{\assign{v}{\M_i}}{\cmult_i}{\cexp_i}{\csize_i}}{\SysTightCBV}}{i \in I}$.
Then, there exists
$\derivable{\Phi_{v}}{\sequT{\Gamma}{\assign{v}{\M}}{\cmult}{\cexp}{\csize}}{\SysTightCBV}$.
such that $\Gamma = \ctxtsum{\!}{\Gamma_i}{i \in I}$, $\M =
\ctxtsum{\!}{\M_i}{i \in I}$, $\cmult = +_{i \in I}{\cmult_i}$, $\cexp = 1
+_{i \in I}{\cexp_i} - |I|$, and $\csize = +_{i \in I}{\csize_i}$.
\label{l:value:merge-tight}
\end{lemma}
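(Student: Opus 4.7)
The plan is to prove the lemma as the dual of the Split Lemma (Lem.~\ref{l:value:split-value-tight}), proceeding by case analysis on the value $v$, which must be either a variable or an abstraction. The first observation to make is that none of the persistent rules ($\ruleVTAxiomVar$, $\ruleVTAxiom$, $\ruleVTArrowI$) produce multiset types, so each $\Phi_v^i$ ends either with $\ruleVDAxiom$ (if $v$ is a variable) or with $\ruleVDArrowI$ (if $v$ is an abstraction).

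If $v = x$, each derivation $\Phi_v^i$ is an instance of $\ruleVDAxiom$, so $\Gamma_i = \assign{x}{\M_i}$ with counters $(0, 1, 0)$. I would simply apply $\ruleVDAxiom$ once with the merged multitype $\M = \ctxtsum{\!}{\M_i}{i \in I}$, yielding $\sequT{\assign{x}{\M}}{\assign{x}{\M}}{0}{1}{0}$. The counter conditions then reduce to a trivial arithmetic check: $+_{i \in I} 0 = 0$, $+_{i \in I} 0 = 0$, and $1 + |I| - |I| = 1$ matches the single $+1$ contributed by the final $\ruleVDAxiom$.

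If $v = \termabs{x}{r}$, each $\Phi_v^i$ ends with $\ruleVDArrowI$ over some index set $J_i$, so it packages premises $(\sequT{\Gamma_i^j}{\assign{r}{\tau_i^j}}{\cmult_i^j}{\cexp_i^j}{\csize_i^j})_{j \in J_i}$ with $\M_i = \intertype{\functtype{\Gamma_i^j(x)}{\tau_i^j}}{j \in J_i}$, $\cmult_i = +_{j \in J_i}\cmult_i^j$, $\cexp_i = 1 +_{j \in J_i}\cexp_i^j$, and $\csize_i = +_{j \in J_i}\csize_i^j$. I would then apply a single $\ruleVDArrowI$ whose premise set is the disjoint union $K = \biguplus_{i \in I} J_i$, taking exactly all of these $r$-derivations as premises. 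The resulting $\M$ equals $\ctxtsum{\!}{\M_i}{i \in I}$ by construction. The counter checks all reduce to re-associating sums over $K$: the $\cmult$ and $\csize$ counters combine straightforwardly, and for $\cexp$ the global $+1$ contributed by the final $\ruleVDArrowI$ combined with $+_{i \in I}(\cexp_i - 1) = +_{i \in I}\cexp_i - |I|$ gives exactly $1 + +_{i \in I}\cexp_i - |I|$, as required.

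The only subtlety, rather than a real obstacle, is handling the boundary case $I = \emptyset$: the merged derivation still must exist with empty multitype $\emul$. For $v = x$ this is directly obtained by $\ruleVDAxiom$ with $\M = \emul$, and for $v = \termabs{x}{r}$ by $\ruleVDArrowI$ with empty premise set; in both cases the counters become $(0,1,0)$, matching $\cexp = 1 + 0 - 0 = 1$. With these cases dispatched, the proof is essentially an inversion followed by a single rule application.
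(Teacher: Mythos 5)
Your proof is correct and follows essentially the same route as the paper's: case analysis on $v$, inversion showing each $\Phi_v^i$ must end with $\ruleVDAxiom$ or $\ruleVDArrowI$ (the persistent axioms/abstraction rule being excluded because they do not derive multiset types), then a single application of the corresponding consuming rule over the merged multitype, with the disjoint union of premise index sets and the same counter arithmetic for $\cexp$. The explicit treatment of $I = \emptyset$ is a harmless addition that the paper leaves implicit.
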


\begin{proof}
By case analysis on the form of $v$.
\begin{itemize}
  \item $v = x$. By $\ruleVDAxiom$, we have
  $\many{\derivable{\Phi^i_{v}}{\sequT{\assign{x}{\M_i}}{\assign{x}{\M_i}}{0}{1}{0}}{\SysTightCBV}}{i \in I}$.
  Then, we conclude by $\ruleVDAxiom$ with
  $\derivable{\Phi_v}{\sequT{\assign{x}{\M}}{\assign{x}{\M}}{0}{1}{0}}{\SysTightCBV}$,
  where $\M = \ctxtsum{\!}{\M_i}{i \in I}$ and hence $\Gamma =
  \ctxtsum{\!}{\Gamma_i}{i \in I}$ as expected. Moreover,
  $+_{i \in I}{\cmult_i} = +_{i \in I}{\csize_i} = 0$ and
  $+_{i \in I}{\cexp_i} = |I|$, hence $\cexp = 1 +_{i \in I}{\cexp_i} - |I|$.
  Note that the persistent rules $\ruleVTAxiomVar$ and $\ruleVTAxiom$ do not
  apply since they do not conclude with a multiset type.
  
  \item $v = \termabs{x}{r}$. By $\ruleVDArrowI$, for each $i \in I$ we have
  $\Gamma_i = \ctxtsum{\!}{\ctxtres{\Gamma_k}{x}{}}{k \in K_i}$, $\M_i =
  \intertype{\functtype{\Gamma_k(x)}{\tau_k}}{k \in K_i}$ with $\cmult_i =
  +_{k \in K_i}{\cmult_k}$, $\cexp_i = 1 +_{k \in K_i}{\cexp_k}$, $\csize_i
  = +_{k \in K_i}{\csize_k}$ and type derivations
  $\many{\derivable{\Phi^k_{r}}{\sequT{\Gamma_k}{\assign{r}{\tau_k}}{\cmult_k}{\cexp_k}{\csize_k}}{\SysTightCBV}}{k \in K_i}$
  for each $i \in I$. Let $J = \biguplus_{i \in I}{K_i}$ and $\M =
  \intertype{\functtype{\Gamma_j(x)}{\tau_j}}{j \in J} =
  \intertype{\functtype{\Gamma_k(x)}{\tau_k}}{k \in +_{i \in I}{K_i}}$ We
  conclude since, by $\ruleVDArrowI$, we can build the derivation
  $\derivable{\Phi_{v}}{\sequT{\Gamma}{\assign{v}{\M}}{+_{j \in J}{\cmult_j}}{1+_{j \in J}{\cexp_j}}{+_{j \in J}{\csize_j}}}{\SysTightCBV}$
  where:
  \begin{itemize}
    \item $\Gamma = \ctxtsum{\!}{\ctxtres{\Gamma_j}{x}{}}{j \in J} =
    \ctxtsum{\!}{(\ctxtsum{\!}{\ctxtres{\Gamma_k}{x}{}}{k \in K_i})}{i \in I} =
    \ctxtsum{\!}{\Gamma_i}{i \in I}$
    
    \item $\cmult = +_{j \in J}{\cmult_j} =
    +_{i \in I}{(+_{k \in K_i}{\cmult_k})} = +_{i \in I}{\cmult_i}$
    
    \item $\cexp = 1 +_{j \in J}{\cexp_j} = 1
    +_{i \in I}{(1 +_{k \in K_i}{\cexp_k})} - |I| = 1 +_{i \in I}{\cexp_i} -
    |I|$
    
    \item $\csize = +_{j \in J}{\csize_j} =
    +_{i \in I}{(+_{k \in K_i}{\csize_k})} = +_{i \in I}{\csize_i}$
  \end{itemize}
  As before, the persistent rule $\ruleVTArrowI$ does not apply since it does
  not conclude with a multiset type.
\end{itemize}
\end{proof}


Dually to the substitution lemma, the following property states how to extract
type derivations (and thus counters) from substituted term.

\begin{lemma}[Anti-Substitution]
Let
$\derivable{\Phi_{\substitute{x}{v}{t}}}{\sequT{\Gamma'}{\assign{\substitute{x}{v}{t}}{\tau}}{\cmult''}{\cexp''}{\csize''}}{\SysTightCBV}$.
Then, there exist
$\derivable{\Phi_{t}}{\sequT{\Gamma}{\assign{t}{\tau}}{\cmult}{\cexp}{\csize}}{\SysTightCBV}$
and
$\derivable{\Phi_{v}}{\sequT{\Delta}{\assign{v}{\Gamma(x)}}{\cmult'}{\cexp'}{\csize'}}{\SysTightCBV}$
such that $\Gamma' = \ctxtsum{\ctxtres{\Gamma}{x}{}}{\Delta}{}$, $\cmult'' =
\cmult + \cmult'$, $\cexp'' = \cexp + \cexp' - 1$ and $\csize'' = \csize +
\csize'$.
\label{l:value:anti-substitution-tight}
\end{lemma}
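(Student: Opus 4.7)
The plan is to proceed by induction on the structure of $t$, mirroring the structure of Lem.~\ref{l:value:substitution-tight} in reverse. The key fact making the statement tight is that a value $v$ always admits a trivial derivation $\sequT{\emptyset}{\assign{v}{\emul}}{0}{1}{0}$ (via $\ruleVDAxiom$ with empty multiset if $v$ is a variable, or via $\ruleVDArrowI$ with $I=\emptyset$ if $v$ is an abstraction), and Lem.~\ref{l:value:merge-tight} permits combining several derivations of $v$ into one. This will let us handle both the ``$x$ does not occur'' case and the ``$x$ occurs several times'' case uniformly.

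For the base case $t = x$, we have $\substitute{x}{v}{t} = v$, and $\Phi_{\substitute{x}{v}{t}}$ is the given derivation of $v\!:\!\tau$ with counters $(\cmult'',\cexp'',\csize'')$. I split on $\tau$: if $\tau = \M$ is a multitype, I take $\Phi_t$ to be $\ruleVDAxiom$ on $\assign{x}{\M}$ (so $\Gamma(x) = \M$, with counters $(0,1,0)$) and $\Phi_v := \Phi_{\substitute{x}{v}{t}}$, giving $\cexp + \cexp' - 1 = 1 + \cexp'' - 1 = \cexp''$ as required. If $\tau \in \{\typevar, \typevalue\}$ then the given derivation has counters $(0,0,0)$; I type $x$ via $\ruleVTAxiomVar$ or $\ruleVTAxiom$ (with $\Gamma(x)$ equal to $\multiset{\typevar}$ or $\emul$ respectively) and produce a trivial derivation for $v$ with counters $(0,1,0)$, so the $-1$ and the $+1$ cancel. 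The case $\tau = \typeneutral$ is impossible since $v$ is a value. For $t = y \neq x$, I take $\Phi_t := \Phi_{\substitute{x}{v}{t}}$ and supply the trivial derivation $\sequT{\emptyset}{\assign{v}{\emul}}{0}{1}{0}$ for $v$, again using the cancellation.

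For the inductive step, I dispatch on the last rule of $\Phi_{\substitute{x}{v}{t}}$. Applications ($\ruleVTArrowE$, $\ruleVDArrowE$, $\ruleVDApp$) and substitutions ($\ruleVTESubs$, $\ruleVDESubs$) are handled by applying the \ih to each subderivation whose subject is a substituted subterm. The subderivations give derivations $\Phi_{t_j}$ for the corresponding subterms of $t$, together with one derivation $\Phi^j_v$ of $v$ per subterm (typed with some multiset $\M_j$). Lem.~\ref{l:value:merge-tight} then merges the family $\{\Phi^j_v\}_{j \in J}$ into a single $\Phi_v$ of $v\!:\!\ctxtsum{\!}{\M_j}{j\in J}$, and the resulting counters for $\Phi_v$ match the arithmetic $\cexp = 1 + \sum_j \cexp_j - |J|$; reapplying the same last rule as $\Phi_{\substitute{x}{v}{t}}$ to the $\Phi_{t_j}$'s yields $\Phi_t$. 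For abstraction ($t = \termabs{y}{r}$), the rule $\ruleVDArrowI$ already splits $t$ into several premises indexed by some $I$, each subject to the \ih; I merge the $|I|$ resulting derivations of $v$ with Lem.~\ref{l:value:merge-tight} in the same way. The persistent rule $\ruleVTArrowI$ is trivial since its premise is empty. For abstraction case I will also have to handle the slight bookkeeping discrepancy in $\cexp$ between $\ruleVDArrowI$ (which contributes $+1$ itself) and the merge lemma.

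The main obstacle is the counter arithmetic in the substitution and abstraction cases: the $-1$ in $\cexp'' = \cexp + \cexp' - 1$ must be produced exactly by the merge of the several (or zero, or one) copies of $\Phi_v$ obtained from the \ih, and this must be balanced against the $+1$'s contributed by $\ruleVDAxiom$, $\ruleVDArrowI$, and the trivial $\emul$-derivation of $v$. Once the bookkeeping in Lem.~\ref{l:value:merge-tight} is set up so that $I = \emptyset$ yields the trivial $\emul$-typing of $v$ with counters $(0,1,0)$ and $|I|=1$ yields the identity, the induction carries through uniformly.
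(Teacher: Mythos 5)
Your proposal is correct and follows essentially the same route as the paper's proof: induction on $t$, the same case split on $\tau$ at the base case $t=x$ (with $\typeneutral$ and arrow types excluded for values, the multitype case handled by taking $\Phi_v$ to be the given derivation, and the tight cases handled by the trivial $(0,1,0)$-derivation of $v$), the trivial $\emul$-typing of $v$ for $t=y\neq x$, and Lem.~\ref{l:value:merge-tight} to recombine the copies of $\Phi_v$ in the inductive cases. The counter bookkeeping you flag is exactly the point the paper relies on, and your arithmetic for it is right.
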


\begin{proof}
By induction on $t$.
\begin{itemize}
  \item $t = x$. Then, $\substitute{x}{v}{t} = v$ and we analyze the shape of
  $\tau$:
  \begin{itemize}
    \item $\tau \in \typetight$. Then, there are three further possibilities:
    \begin{itemize}
      \item $\tau = \typeneutral$. This case does not apply since there is no
      way to derive type $\typeneutral$ for a value.
      
      \item $\tau = \typevalue$. Then, $\Phi_{\substitute{x}{v}{t}}$ can only
      be $\ruleVTAxiom$ or $\ruleVTArrowI$. Thus, $\Gamma'$ is empty and
      $\cmult'' = \cexp'' = \csize'' = 0$. We conclude with
      $\derivable{\Phi_{t}}{\sequT{}{\assign{x}{\typevalue}}{0}{0}{0}}{\SysTightCBV}$
      by $\ruleVTAxiom$, and
      $\derivable{\Phi_{v}}{\sequT{}{\assign{v}{\emul}}{0}{1}{0}}{\SysTightCBV}$
      by either $\ruleVDAxiom$ or $\ruleVDArrowI$ with $I = \emptyset$.
      
      \item $\tau = \typevar$. Then, $v$ is necessarily a variable $z$,
      since abstractions cannot be assigned $\typevar$, and
      $\Phi_{\substitute{x}{v}{t}}$ can only be $\ruleVTAxiomVar$. Thus, $\Gamma' =
      \assign{z}{\multiset{\typevar}}$ and $\cmult'' = \cexp'' = \csize'' =
      0$. We conclude with
      $\derivable{\Phi_{t}}{\sequT{\assign{x}{\multiset{\typevar}}}{\assign{x}{\typevar}}{0}{0}{0}}{\SysTightCBV}$
      by $\ruleVTAxiomVar$ and 
      $\derivable{\Phi_{v}}{\sequT{\assign{z}{\multiset{\typevar}}}{\assign{z}{\multiset{\typevar}}}{0}{1}{0}}{\SysTightCBV}$
      by $\ruleVDAxiom$.
    \end{itemize}
    
    \item $\tau = \M$. Then, we take $\Phi_{v} = \Phi_{\substitute{x}{v}{t}}$
    and
    $\derivable{\Phi_{t}}{\sequT{\assign{x}{\tau}}{\assign{x}{\tau}}{0}{1}{0}}{\SysTightCBV}$
    by $\ruleVDAxiom$ to conclude. Note that the counters are as expected.
    
    \item $\tau = \functtype{\M}{\sigma}$. This case does not apply since there
    is no way to derive an arrow type for a value.
  \end{itemize}
  
  \item $t = y \neq x$. Then, $\substitute{x}{v}{t} = y$ and
  $\Phi_{\substitute{x}{v}{t}}$ is either
  $\sequT{\assign{y}{\multiset{\typevar}}}{\assign{y}{\typevar}}{0}{0}{0}$
  by $\ruleVTAxiomVar$, or
  $\sequT{}{\assign{y}{\typevalue}}{0}{0}{0}$ by $\ruleVTAxiom$, or
  $\sequT{}{\assign{y}{\emul}}{0}{1}{0}$ by $\ruleVDAxiom$. In either case we
  have $\Gamma'(x) = \emul$, thus we take $\Phi_{t} =
  \Phi_{\substitute{x}{v}{t}}$ and conclude with
  $\derivable{\Phi_{v}}{\sequT{}{\assign{v}{\emul}}{0}{1}{0}}{\SysTightCBV}$ by
  either $\ruleVDAxiom$ or $\ruleVDArrowI$ with $I = \emptyset$. Note that the
  counters are as expected.
  
\item All the other cases are straightforward by the \ih and
Lem.~\ref{l:value:merge-tight}.

\end{itemize}
\end{proof}


\begin{lemma}[Exact Subject Expansion]
Let
$\derivable{\Phi'}{\sequT{\Gamma}{\assign{t'}{\sigma}}{\cmult'}{\cexp'}{\csize}}{\SysTightCBV}$
be a tight derivation. If $t \loredv t'$, then there is
$\derivable{\Phi}{\sequT{\Gamma}{\assign{t}{\sigma}}{\cmult}{\cexp}{\csize}}{\SysTightCBV}$
such that
\begin{inparaenum}[(1)]
  \item\label{l:value:subject-expansion-tight:mult} $\cmult' = \cmult - 1$ and
  $\cexp' = \cexp$ if $t \loredv t'$ is an $\mStep$-step;
  \item\label{l:value:subject-expansion-tight:exp} $\cexp' = \cexp - 1$ and
  $\cmult' = \cmult$ if $t \loredv t'$ is an $\eStep$-step.
\end{inparaenum}
\label{l:value:subject-expansion-tight}
\end{lemma}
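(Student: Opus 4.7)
The plan is to adapt the structure of the CBN subject expansion proof (\cf Lem.~\ref{l:name:subject-expansion-tight}) to the CBV setting, using the corresponding anti-substitution lemma Lem.~\ref{l:value:anti-substitution-tight} in the exponential case. As in CBN, a direct induction on $t \loredv t'$ will not go through because the typing derivation $\Phi'$ might type an abstraction with a non-tight multitype via $\ruleVDArrowI$, so I would first strengthen the statement: assume $\Gamma$ is tight and either $\sigma \in \typetight$ or $\neg\pabs{t}$. The original statement (whose hypothesis is tightness of the whole derivation) is an immediate corollary, since a tight type is in particular in $\typetight$. The proof then proceeds by induction on $t \loredv t'$.

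For the $\mStep$-step $t = \termapp{\ctxtapp{\ctxt{L}}{\termabs{x}{r}}}{u} \loredv \ctxtapp{\ctxt{L}}{\termsubs{x}{u}{r}} = t'$, I would reduce to $\ctxt{L} = \Box$ by a straightforward subinduction on $\ctxt{L}$. Then $t' = \termsubs{x}{u}{r}$ and there are exactly two possible shapes for $\Phi'$, namely $\ruleVDESubs$ (where $u$ is typed with a multitype $\Gamma'(x)$) or $\ruleVTESubs$ (where $u$ is typed with $\typeneutral$ and $\Gamma'(x)$ is tight). In the first case, apply $\ruleVDArrowI$ to the subderivation $\Phi_r$ to obtain a derivation of $\termabs{x}{r}$ with type $\multiset{\functtype{\Gamma'(x)}{\sigma}}$ and exponential counter incremented by one, and then combine with $\Phi_u$ via $\ruleVDArrowE$, whose $+1$ multiplicative and $-1$ exponential adjustments give exactly the expected deltas $\cmult = \cmult'+1$ and $\cexp = \cexp'$. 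In the second case, proceed identically but close with $\ruleVDApp$ instead of $\ruleVDArrowE$; the hypothesis $\ptight{\Gamma'(x)}$ of $\ruleVTESubs$ is precisely the side condition $\ptight{\M}$ needed by $\ruleVDApp$, and the counter arithmetic works out identically.

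For the $\eStep$-step $t = \termsubs{x}{\ctxtapp{\ctxt{L}}{v}}{r} \loredv \ctxtapp{\ctxt{L}}{\substitute{x}{v}{r}} = t'$, I would again reduce to the case $\ctxt{L} = \Box$, so that $t' = \substitute{x}{v}{r}$. The anti-substitution lemma Lem.~\ref{l:value:anti-substitution-tight} applied to $\Phi'$ yields derivations $\Phi_r$ for $r$ with some type $\sigma$ and $\Phi_v$ for $v$ at type $\Gamma(x)$, with $\cmult' = \cmult_r + \cmult_v$ and $\cexp' = \cexp_r + \cexp_v - 1$. Combining these by $\ruleVDESubs$ produces a derivation of $\termsubs{x}{v}{r}$ whose exponential counter is $\cexp_r + \cexp_v = \cexp' + 1$ and whose multiplicative counter is $\cmult_r + \cmult_v = \cmult'$, as required.

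The contextual cases are handled by the IH, the only care being to verify that the premises of the IH (tightness of the context and either tight type or non-abstraction) are preserved. The interesting subcase is when $\Phi'$ ends in a consuming rule like $\ruleVDArrowE$, $\ruleVDApp$, or $\ruleVDESubs$: the subterm being reduced is then typed with a multitype, and the IH hypothesis $\neg\pabs{t}$ is supplied by the side conditions of $\loredv$ (namely $\neg\pabs{r}$ for application on the left, $\neg\pval{u}$ which implies $\neg\pabs{u}$ for substitution on the left). When instead $\Phi'$ ends with $\ruleVTESubs$ or $\ruleVTArrowE$, the displayed subterm carries a tight type and the IH applies directly. The step I expect to be the most delicate is the dichotomy in the $\mStep$-step base case: one must verify that the shape of $\Phi'$ ($\ruleVDESubs$ vs $\ruleVTESubs$) dictates exactly which application rule ($\ruleVDArrowE$ vs $\ruleVDApp$) to use when rebuilding $\Phi$, and that in both situations the $+1/-1$ interplay between $\ruleVDArrowI$ on the one hand and $\ruleVDArrowE$/$\ruleVDApp$ on the other hand cancels out to produce the precise counter shift claimed by the lemma.
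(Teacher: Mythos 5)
Your plan reproduces the paper's proof in all essentials: the same strengthening of the statement (tight context plus ``$\sigma \in \typetight$ or $\neg\pabs{t}$''), the same two-way case split on $\ruleVTESubs$ vs.\ $\ruleVDESubs$ in the $\mStep$ base case resolved by $\ruleVDArrowI$ followed by $\ruleVDApp$ resp.\ $\ruleVDArrowE$, and the same use of the anti-substitution lemma followed by $\ruleVDESubs$ in the $\eStep$ base case. The counter bookkeeping you describe is exactly what the paper checks.

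The one place where your account is too optimistic is the claim that ``the contextual cases are handled by the IH.'' You correctly treat reduction in the left of an application and in the argument of an ES, where $\neg\pabs{r}$ resp.\ $\neg\pval{u}$ supplies the hypothesis of the strengthened statement. But for the two remaining contextual rules --- $\termapp{r}{u} \loredv \termapp{r}{u'}$ with $r \in \HCBVNF \cup \VarHCBVNF$, and $\termsubs{x}{u}{r} \loredv \termsubs{x}{u}{r'}$ with $u \in \HCBVNF$ --- the IH is \emph{not} available when $\Phi'$ ends with a consuming rule: under $\ruleVDArrowE$ the subterm $u'$ carries a multitype and nothing forbids $\pabs{u}$, and under $\ruleVDESubs$ the untouched argument $u$ carries a multitype. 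The paper resolves these subcases not by the IH but by showing they are impossible: the side conditions place $r$ (resp.\ $u$) in $\HCBVNF$ or $\VarHCBVNF$, and Lem.~\ref{l:value:tight-spreading} then forces its multitype to be tight, a contradiction. Only the persistent rules $\ruleVTArrowE$ and $\ruleVTESubs$ survive in these positions, and there the IH applies as you say. This is a small but genuine missing step: without the spreading argument the induction does not go through in those subcases.
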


\begin{proof}
We actually prove the following stronger statement that allows us to reason
inductively:

Let $t \loredv t'$ and
$\derivable{\Phi'}{\sequT{\Gamma}{\assign{t'}{\sigma}}{\cmult'}{\cexp'}{\csize}}{\SysTightCBV}$
such that $\Gamma$ is tight, and either $\sigma \in \typetight$ or
$\neg\pabs{t}$. Then, there exists
$\derivable{\Phi}{\sequT{\Gamma}{\assign{t}{\sigma}}{\cmult}{\cexp}{\csize}}{\SysTightCBV}$
such that
\begin{enumerate}
  \item $\cmult' = \cmult - 1$ and $\cexp' = \cexp$ if $t \loredv t'$ is an
  $\mStep$-step.
  
  \item $\cexp' = \cexp - 1$ and $\cmult' = \cmult$ if $t \loredv t'$ is an
  $\eStep$-step.
\end{enumerate}

We proceed by induction on $t \loredv t'$.
\begin{itemize}
  \item $t = \termapp{\ctxtapp{\ctxt{L}}{\termabs{x}{r}}}{u}
  \loredv \ctxtapp{\ctxt{L}}{\termsubs{x}{u}{r}} = t'$ is an $\mStep$-step. We
  proceed by induction on $\ctxt{L}$.  We only show here the case $\ctxt{L} =
  \Box$ as the inductive case is straightforward.

  Then, $t' = \termsubs{x}{u}{r}$ and there are two possibilities for $\Phi'$.
  \begin{enumerate}
    \item $\ruleVTESubs$. Then $\Phi'$ has the following form: \[
\Rule{
  \derivable{\Phi_{r}}{\sequT{\Gamma'}{\assign{r}{\sigma}}{\cmult_1}{\cexp_1}{\csize_1}}{\SysTightCBV}
  \quad
  \derivable{\Phi_{u}}{\sequT{\Delta}{\assign{u}{\typeneutral}}{\cmult_2}{\cexp_2}{\csize_2}}{\SysTightCBV}
  \quad
  \ptight{\Gamma'(x)}
}{
  \sequT{\ctxtsum{\ctxtres{\Gamma'}{x}{}}{\Delta}{}}{\assign{\termsubs{x}{u}{r}}{\sigma}}{\cmult_1+\cmult_2}{\cexp_1+\cexp_2}{\csize_1+\csize_2}
}{\ruleVTESubs}
    \] with $\cmult' = \cmult_1 + \cmult_2$, $\cexp' = \cexp_1 + \cexp_2$
    and $\csize = \csize_1 + \csize_2$. We conclude by $\ruleVDApp$ and
    $\ruleVDArrowI$ with the following type derivation \[
\Rule{
  \Rule{
    \derivable{\Phi_{r}}{\sequT{\Gamma'}{\assign{r}{\sigma}}{\cmult_1}{\cexp_1}{\csize_1}}{\SysTightCBV}
  }{
    \sequT{\ctxtres{\Gamma'}{x}{}}{\assign{\termabs{x}{r}}{\multiset{\functtype{\Gamma'(x)}{\sigma}}}}{\cmult_1}{1+\cexp_1}{\csize_1}
  }{\ruleVDArrowI}
  \derivable{\Phi_{u}}{\sequT{\Delta}{\assign{u}{\typeneutral}}{\cmult_2}{\cexp_2}{\csize_2}}{\SysTightCBV}
  \quad
  \ptight{\Gamma'(x)}
}{
  \derivable{\Phi}{\sequT{\ctxtsum{\ctxtres{\Gamma'}{x}{}}{\Delta}{}}{\assign{\termapp{(\termabs{x}{r})}{u}}{\sigma}}{\cmult_1+\cmult_2+1}{\cexp_1+\cexp_2}{\csize_1+\csize_2}}{\SysTightCBV}
}{\ruleVDApp}
    \] taking $\cmult = \cmult_1 + \cmult_2 + 1 = \cmult' + 1$ and $\cexp =
    \cexp_1 + \cexp_2 = \cexp'$.
    
    \item $\ruleVDESubs$. Then $\Phi'$ has the following form: \[
\Rule{
  \derivable{\Phi_{r}}{\sequT{\Gamma'}{\assign{r}{\sigma}}{\cmult_1}{\cexp_1}{\csize_1}}{\SysTightCBV}
  \quad
  \derivable{\Phi_{u}}{\sequT{\Delta}{\assign{u}{\Gamma'(x)}}{\cmult_2}{\cexp_2}{\csize_2}}{\SysTightCBV}
}{
  \sequT{\ctxtsum{\ctxtres{\Gamma'}{x}{}}{\Delta}{}}{\assign{\termsubs{x}{u}{r}}{\sigma}}{\cmult_1+\cmult_2}{\cexp_1+\cexp_2}{\csize_1+\csize_2}
}{\ruleVDESubs}
    \] with $\cmult' = \cmult_1 + \cmult_2$, $\cexp' = \cexp_1 + \cexp_2$
    and $\csize = \csize_1 + \csize_2$. We conclude by $\ruleVDArrowE$ and
    $\ruleVDArrowI$ with the following type derivation \[
\Rule{
  \Rule{
    \derivable{\Phi_{r}}{\sequT{\Gamma'}{\assign{r}{\sigma}}{\cmult_1}{\cexp_1}{\csize_1}}{\SysTightCBV}
  }{
    \sequT{\ctxtres{\Gamma'}{x}{}}{\assign{\termabs{x}{r}}{\multiset{\functtype{\Gamma'(x)}{\sigma}}}}{\cmult_1}{1+\cexp_1}{\csize_1}
  }{\ruleVDArrowI}
  \derivable{\Phi_{u}}{\sequT{\Delta}{\assign{u}{\Gamma'(x)}}{\cmult_2}{\cexp_2}{\csize_2}}{\SysTightCBV}
}{
  \derivable{\Phi}{\sequT{\ctxtsum{\ctxtres{\Gamma'}{x}{}}{\Delta}{}}{\assign{\termapp{(\termabs{x}{r})}{u}}{\sigma}}{\cmult_1+\cmult_2+1}{\cexp_1+\cexp_2}{\csize_1+\csize_2}}{\SysTightCBV}
}{\ruleVDArrowE}
    \] taking $\cmult = \cmult_1 + \cmult_2 + 1 = \cmult' + 1$ and $\cexp =
    \cexp_1 + \cexp_2 = \cexp'$.
  \end{enumerate}
    
  \item $t = \termsubs{x}{\ctxtapp{\ctxt{L}}{v}}{r} \loredv
  \ctxtapp{\ctxt{L}}{\substitute{x}{v}{r}} = t'$ is an $\eStep$-step. We
  proceed by induction on $\ctxt{L}$. We only show here the case $\ctxt{L} =
  \Box$ as the inductive case is straightforward.

  Then, $t' = \substitute{x}{v}{r}$. Lem.~\ref{l:value:anti-substitution-tight}
  with $\Phi'$, there exist type derivations
  $\derivable{\Phi_{r}}{\sequT{\Gamma'}{\assign{r}{\sigma}}{\cmult_1}{\cexp_1}{\csize_1}}{\SysTightCBV}$
  and
  $\derivable{\Phi_{v}}{\sequT{\Delta}{\assign{v}{\Gamma'(x)}}{\cmult_2}{\cexp_2}{\csize_2}}{\SysTightCBV}$
  such that $\Gamma = \ctxtsum{\ctxtres{\Gamma'}{x}{}}{\Delta}{}$, $\cmult' =
  \cmult_1 + \cmult_2$, $\cexp' = \cexp_1 + \cexp_2 - 1$ and $\csize =
  \csize_1 + \csize_2$. We conclude by $\ruleVDESubs$ with the following type
  derivation \[
\Rule{
  \derivable{\Phi_{r}}{\sequT{\Gamma'}{\assign{r}{\sigma}}{\cmult_1}{\cexp_1}{\csize_1}}{\SysTightCBV}
  \quad
  \derivable{\Phi_{v}}{\sequT{\Delta}{\assign{v}{\Gamma'(x)}}{\cmult_2}{\cexp_2}{\csize_2}}{\SysTightCBV}
}{
  \derivable{\Phi}{\sequT{\ctxtsum{\ctxtres{\Gamma'}{x}{}}{\Delta}{}}{\assign{\termsubs{x}{v}{r}}{\sigma}}{\cmult_1+\cmult_2}{\cexp_1+\cexp_2}{\csize_1+\csize_2}}{\SysTightCBV}
}{\ruleVDESubs}
  \] taking $\cmult = \cmult_1 + \cmult_2 = \cmult'$ and $\cexp =
  \cexp_1 + \cexp_2 = \cexp' + 1$.
  
  \item $t = \termapp{r}{u} \loredv \termapp{r'}{u} = t'$, where $r \loredv r'$
  and $\neg\pabs{r}$. Then, $\Phi'$ ends with rule $\ruleVTArrowE$,
  $\ruleVDArrowE$ or $\ruleVDApp$, and in either case the subterm $r'$ has an
  associated typing derivation $\Phi_{r'}$ which verifies the hypothesis since
  $\neg\pabs{r}$. Hence, by \ih there exists a $\Phi_{r}$ with proper counters
  such that we can derive $\Phi$ applying the same rule as in $\Phi'$ to
  conclude.
  
  \item $t = \termapp{r}{u} \loredv \termapp{r}{u'} = t'$, where $u \loredv u'$
  and $r \in \HCBVNF \cup \VarHCBVNF$. There are two further cases to consider:
  \begin{enumerate}
    \item If $\Phi'$ ends with rule $\ruleVTArrowE$, then the subterm $u'$ has
    an associated typing derivation which verifies the hypothesis (context and
    type are tight). Then the property holds by \ih
    
    \item If $\Phi'$ ends with rule $\ruleVDArrowE$ or $\ruleVDApp$, then the
    subterm $r$ has an associated typing derivation with a tight context and a
    functional type $\multiset{\functtype{\M}{\sigma}}$.
    Lem.~\ref{l:value:tight-spreading} (\ref{l:value:tight-spreading:var})
    gives $r \notin \VarHCBVNF$ so that $r \in \HCBVNF$. Then,
    Lem.~\ref{l:value:tight-spreading} (\ref{l:value:tight-spreading:tight})
    gives $\multiset{\functtype{\M}{\sigma}} \in \typetight$ which leads to a
    contradiction. So this case is not possible.
  \end{enumerate}
  
  \item $t = \termsubs{x}{u}{r} \loredv \termsubs{x}{u}{r'} = t'$, where $r
  \loredv r'$ and $u \in \HCBVNF$. There are two further cases to consider:
  \begin{enumerate}
    \item If $\Phi'$ ends with rule $\ruleVTESubs$ then the subterm $r'$ has an
    associated typing derivation which verifies the hypothesis (context tight
    and $\neg\pabs{r}$). Then the property holds by \ih

    \item If $\Phi'$ ends with rule $\ruleVDESubs$ then the subterm $u$ has an
    associated typing derivation with a tight context and a multiset type $\M$.
    Lem.~\ref{l:value:tight-spreading} (\ref{l:value:tight-spreading:tight})
    gives $\M \in \typetight$ which leads to a contradiction. So this case is
    not possible.
  \end{enumerate}
  
  \item $t = \termsubs{x}{u}{r} \loredv \termsubs{x}{u'}{r} = t'$, where $u
  \loredv u'$ and $\neg\pval{u}$. There are two further cases to consider:
  \begin{enumerate}
    \item  If $\Phi'$ ends with rule $\ruleVTESubs$ then the subterm $u'$ has
    an associated typing derivation which verifies the hypothesis (context and
    type are tight). Then the property holds by the \ih.
    
    \item If $\Phi'$ ends with rule $\ruleVDESubs$ then the subterm
    $u'$ has an associated typing derivation $\Phi_{u'}$ which verifies the
    hypothesis since $\neg\pval{u}$ implies $\neg\pabs{u}$. Hence, by \ih there
    exists a $\Phi_{u}$ with proper counters such that we can derive $\Phi$
    applying $\ruleVDESubs$ as in $\Phi'$ to conclude.
  \end{enumerate}
\end{itemize}
\end{proof}


Notice that tight derivations properly counts, separately, $\mStep$-steps and
$\eStep$-steps. As a consequence, completeness follows.

\begin{theorem}[Completeness]
If $t \rewriten{\callbyvalue}^{(\cmult,\cexp)} p$ with $p \in \CBVNF$, then
there exists a tight type derivation
$\derivable{\Phi}{\sequT{\Gamma}{\assign{t}{\sigma}}{\cmult}{\cexp}{\valsize{p}}}{\SysTightCBV}$.
\label{t:value:completeness-tight}
\end{theorem}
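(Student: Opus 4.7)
The plan mirrors the soundness proof for CBV (Thm.~\ref{t:value:correctness-tight}) and the completeness proof for CBN (Thm.~\ref{t:name:completeness-tight}). First, I would reduce the problem to the deterministic strategy: since any two $\rewrite{\callbyvalue}$-reduction paths from $t$ to a normal form have the same number of multiplicative and exponential steps, it suffices to establish the statement for $\loredvn$, and then transfer it to $\rewriten{\callbyvalue}$.

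The core argument is an induction on $\cmult + \cexp$ for a reduction $t \loredvn^{(\cmult,\cexp)} p$ with $p \in \CBVNF$. In the base case $\cmult + \cexp = 0$, we have $t = p \in \CBVNF$, and Lem.~\ref{l:value:normal-forms-tight} directly provides a tight derivation $\derivable{\Phi}{\sequT{\Gamma}{\assign{t}{\sigma}}{0}{0}{\valsize{p}}}{\SysTightCBV}$, as required. In the inductive step $\cmult + \cexp > 0$, there is some $t'$ with either $t \loredvn^{(1,0)} t' \loredvn^{(\cmult-1,\cexp)} p$ or $t \loredvn^{(0,1)} t' \loredvn^{(\cmult,\cexp-1)} p$. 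The induction hypothesis yields a tight derivation $\derivable{\Phi'}{\sequT{\Gamma}{\assign{t'}{\sigma}}{\cmult'}{\cexp'}{\valsize{p}}}{\SysTightCBV}$ with $\cmult' + \cexp' = \cmult + \cexp - 1$. Exact Subject Expansion (Lem.~\ref{l:value:subject-expansion-tight}) applied to $t \loredv t'$ and $\Phi'$ then produces a tight derivation $\derivable{\Phi}{\sequT{\Gamma}{\assign{t}{\sigma}}{\cmult''}{\cexp''}{\valsize{p}}}{\SysTightCBV}$ with $\cmult'' + \cexp'' = \cmult' + \cexp' + 1 = \cmult + \cexp$. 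A simple case split on whether the reclaimed step is multiplicative or exponential, using the two clauses of Lem.~\ref{l:value:subject-expansion-tight}, confirms that $\cmult'' = \cmult$ and $\cexp'' = \cexp$, closing the induction.

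All the real work has already been discharged by Lem.~\ref{l:value:normal-forms-tight} (tight typability of normal forms with $\valsize{\cdot}$ recorded in the size counter) and Lem.~\ref{l:value:subject-expansion-tight} (exact subject expansion splitting the two counters correctly). The main conceptual subtlety sits in those two lemmas, not in the theorem itself: in particular, the use of rules $\ruleVDArrowE$ versus $\ruleVDApp$ during expansion of a $\dBeta$-redex, which decide whether the reclaimed $\mStep$-step creates an exponential redex or not. Once those lemmas are in place, the theorem is essentially an iteration of subject expansion back along the reduction, with the base case handled by Lem.~\ref{l:value:normal-forms-tight}.
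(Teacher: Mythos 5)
Your proposal is correct and follows essentially the same route as the paper: reduce to the deterministic strategy, induct on $\cmult+\cexp$, use the tight typing of normal forms lemma for the base case, and iterate exact subject expansion with a final case split to recover the individual counters. No gaps.
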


\begin{proof}
As for soundness (Thm.~\ref{t:value:correctness-tight}) prove the
statement for $\loredv$ and then conclude for the general notion of reduction
$\rewrite{\callbyvalue}$. Let $t \loredvn^{(\cmult,\cexp)} p$. We proceed by
induction on $\cmult+\cexp$:
\begin{itemize}
  \item If $\cmult+\cexp = 0$, then $\cmult = \cexp = 0$ and thus $t = p$,
  which implies $t \in \CBVNF$. Lem.~\ref{l:value:normal-forms-tight} allows
  to conclude.

  \item If $\cmult+\cexp > 0$, then there is $t'$ such that $t \loredvn^{(1,0)}
  t' \loredvn^{(\cmult-1,\cexp)} p$ or $t \loredvn^{(0,1)} t'
  \loredvn^{(\cmult,\cexp-1)} p$. By the \ih there is a tight derivation
  $\derivable{\Phi'}{\sequT{\Gamma}{\assign{t'}{\sigma}}{\cmult'}{\cexp'}{\valsize{p}}}{\SysTightCBV}$
  such $\cmult' + \cexp' = \cmult + \cexp -1$.
  Lem.~\ref{l:value:subject-expansion-tight} gives a tight derivation
  $\derivable{\Phi}{\sequT{\Gamma}{\assign{t}{\sigma}}{\cmult''}{\cexp''}{\valsize{p}}}{\SysTightCBV}$
  such $\cmult'' + \cexp'' = \cmult' + \cexp' + 1$. We then have $\cmult'' +
  \cexp'' = \cmult + \cexp$. The fact that $\cmult'' = \cmult$ and $\cexp'' =
  \cexp$ holds by a simple case analysis.
\end{itemize}
\end{proof}


\begin{example}
\label{ex:t0cbv}
Consider $t_0 =
\termapp{\termapp{\Kterm}{(\termapp{z}{\id})}}{(\termapp{\id}{\id})}$ from
Ex.~\ref{ex:t0cbn} but now in the CBV setting. It
$\rewrite{\callbyvalue}$-reduces in 3 $\mStep$-steps and 2 $\eStep$-steps to
$\termsubs{x}{\termapp{z}{\id}}{x} \in \CBVNF$, whose $\callbyvalue$-size is 1,
as follows: \[
\begin{array}{lllllllll}
t_0 & =                 & \termapp{\underline{\termapp{\Kterm}{(\termapp{z}{\id})}}}{(\termapp{\id}{\id})}
    & \rewrite{\dBeta}  & \underline{\termapp{\termsubs{x}{\termapp{z}{\id}}{(\termabs{y}{x})}}{(\termapp{\id}{\id})}}
    & \rewrite{\dBeta}  & \termsubs{x}{\termapp{z}{\id}}{\termsubs{y}{\underline{\termapp{\id}{\id}}}{x}} \\
    & \rewrite{\dBeta}  & \termsubs{x}{\termapp{z}{\id}}{\underline{\termsubs{y}{\termsubs{w}{\id}{w}}{x}}}
    & \rewrite{\sVal}   & \termsubs{x}{\termapp{z}{\id}}{\underline{\termsubs{w}{\id}{x}}}
    & \rewrite{\sVal}   & \termsubs{x}{\termapp{z}{\id}}{x}
\end{array}
\] System $\SysTightCBV$ admits a proper tight type derivation for $t_0 =
\termapp{\termapp{\Kterm}{(\termapp{z}{\id})}}{(\termapp{\id}{\id})}$ from
Ex.~\ref{ex:t0cbv}, with the expected final counter $(3,2,1)$, as shown in
Fig.~\ref{f:t0cbv}.
\end{example}

\begin{sidewaysfigure}
\input{example-tight-derivation-cbv.tex}
\caption{Tight type derivation for $t_0$ in System $\SysTightCBV$.}
\label{f:t0cbv}
\end{sidewaysfigure}


\section{The \BangRev-Calculus}
\label{s:bang}

This section briefly presents the \emphdef{bang calculus at a
distance}~\cite{BucciarelliKRV20}, called $\BangRev$-calculus. It is a
(conservative) extension of the original bang
calculus~\cite{Ehrhard16,EhrhardG16}, it uses ES operators and
\emph{reduction at a distance}~\cite{AccattoliK10}, thus integrating
commutative conversions without jeopardising confluence
(see~\cite{BucciarelliKRV20} for a discussion). Indeed,
T.~Ehrhard~\cite{Ehrhard16} studies the CBPV from a Linear Logic (LL) point of
view by extending the $\lambda$-calculus with two new unary constructors
\emph{bang} ($\termbang{\!}$) and \emph{dereliction} ($\termder{\!}$), playing
the role of the CBPV primitives $\mathtt{thunk}$/$\mathtt{force}$ respectively.
His calculus suffers from the absence of \emph{commutative
conversions}~\cite{Regnier94,CarraroG14}, making some redexes to be
syntactically blocked when open terms are considered. As a consequence, some
normal forms are semantically equivalent to non-terminating programs, a
situation which is clearly unsound. The bang calculus~\cite{EhrhardG16} adds
commutative conversions specified by means of $\sigma$-reduction rules, which
are crucial to unveil hidden (blocked) redexes. This approach, however,
presents a major drawback since the resulting combined reduction relation is
not confluent. The $\BangRev$-calculus~\cite{BucciarelliKRV20} fixes these two
problems at the same time. Indeed, the syntax of the bang calculus is enriched
with explicit substitutions (ES), and $\sigma$-equivalence is integrated in
the primary reduction system by using the \emph{distance}
paradigm~\cite{AccattoliK10}, without any need to unveil hidden redexes by
means of an independent relation.

We consider the following grammar for terms (denoted by $\TermExplicit$) and
contexts:
\begin{center} 
\begin{tabular}{rrcll}
\emphdef{(Terms)}             & $t,u$       & $\Coloneq$  & $x \in \TermVariable \mid \termapp{t}{u} \mid \termabs{x}{t} \mid \termbang{t} \mid \termder{t} \mid \termsubs{x}{u}{t}$ \\
\emphdef{(List contexts)}     & $\ctxt{L}$  & $\Coloneq$  & $\Box \mid \termsubs{x}{t}{\ctxt{L}}$ \\ 
\emphdef{(Surface contexts)}  & $\ctxt{S}$  & $\Coloneq$  & $\Box \mid \termapp{\ctxt{S}}{t} \mid \termapp{t}{\ctxt{S}} \mid \termabs{x}{\ctxt{S}} \mid \termder{\ctxt{S}} \mid \termsubs{x}{u}{\ctxt{S}} \mid \termsubs{x}{\ctxt{S}}{t}$
\end{tabular}
\end{center}
Special terms are $\Delta_{\termbang{}} =
\termabs{x}{\termapp{x}{\termbang{x}}}$, and $\Omega_{\termbang{}} =
\termapp{\Delta_{\termbang{}}}{\termbang{\Delta_{\termbang{}}}}$. Surface
contexts do not allow the symbol $\Box$ to occur inside the bang constructor
$\termbang{}$. This is similar to \emph{weak} contexts in $\lambda$-calculus,
where $\Box$ cannot occur inside $\lambda$-abstractions. As we will see in
Sec.~\ref{s:cbn-cbv-embeddings}, surface reduction in the $\BangRev$-calculus
is perfectly sufficient to capture head reduction in CBN, disallowing reduction
inside arguments, as well as open CBV, disallowing reduction inside
abstractions. Finally, we define the \emphdef{$\bangweak$-size} of terms as
follows:
\begin{center}
$
\begin{array}{lll@{\hspace{1cm}}lll@{\hspace{1cm}}lll}
\wsize{x}                   & \coloneq & 0 &
\wsize{\termder{t}}         & \coloneq & \wsize{t} &
\wsize{\termsubs{x}{u}{t}}  & \coloneq & \wsize{t} + \wsize{u} \\
\wsize{\termbang{t}}        & \coloneq & 0 &
\wsize{\termabs{x}{t}}      & \coloneq & 1 + \wsize{t} & 
\wsize{\termapp{t}{u}}      & \coloneq & 1 + \wsize{t} + \wsize{u}
\end{array}
$
\end{center} 
The \emphdef{$\BangRev$-calculus} is given by the set of terms $\TermExplicit$
and the \emphdef{(surface) reduction relation} $\rewrite{\bangweak}$, which is
defined as the \emph{union} of $\rewrite{\dBeta}$, $\rewrite{\sBang}$
(\ttbf{s}ubstitute bang) and $\rewrite{\dBang}$ (\ttbf{d}istant bang),
defined respectively as the closure by contexts $\ctxt{S}$ of the following
three rewriting rules:
\begin{center} $\begin{array}{lll}
\termapp{\ctxtapp{\ctxt{L}}{\termabs{x}{t}}}{u}    & \rrule{\dBeta} & \ctxtapp{\ctxt{L}}{\termsubs{x}{u}{t}} \\
\termsubs{x}{\ctxtapp{\ctxt{L}}{\termbang{u}}}{t}  & \rrule{\sBang} & \ctxtapp{\ctxt{L}}{\substitute{x}{u}{t}} \\
\termder{(\ctxtapp{\ctxt{L}}{\termbang{t}})}       & \rrule{\dBang} & \ctxtapp{\ctxt{L}}{t}
\end{array}$
\end{center}
The rules are defined \emph{at a distance}, as in CBN/CBV, in the sense that
the list context $\ctxt{L}$ allows the main constructors involved in the
rules to be separated by an arbitrary finite list of substitutions. This new
formulation integrates commutative conversions inside the main (logical)
reduction rules of the calculus, thus inheriting  the benefits enumerated in
Sec.~\ref{s:intro}. Indeed, rule $\sBang$ can be decomposed in two different
rules $\termsubs{x}{\termbang{u}}{t} \rrule{} \substitute{x}{u}{t}$ and
$\termsubs{x}{\ctxtapp{\ctxt{L}}{\termbang{u}}}{t} \rrule{}
\ctxtapp{\ctxt{L}}{\termsubs{x}{\termbang{u}}{t}}$, while $\dBang$ can be
decomposed in $\termder{(\termbang{t})} \rrule{} t$ and
$\termder{(\ctxtapp{\ctxt{L}}{\termbang{t}})} \rrule{}
\ctxtapp{\ctxt{L}}{\termder{\termbang{t}}}$. We write $\rewriten{\bangweak}$
for the reflexive-transitive closure of $\rewrite{\bangweak}$. Given the
translation of the bang calculus into LL proof-nets~\cite{Ehrhard16}, we
refer to $\dBeta$-steps as \emphdef{$\mStep$ultiplicative} and
$\sBang$/$\dBang$-steps as \emphdef{$\eStep$xponential} steps. We write $t
\rewriten{\bangweak}^{(\cmult,\cexp)} u$ if $t \rewriten{\bangweak} u$ using
$\cmult$ $\mStep$-steps and $\cexp$ $\eStep$-steps.

\begin{example}
\label{ex:t0pcbn}
Consider the following reduction sequence from $t'_0 =
\termapp{\termapp{\Kterm}{(\termbang{(\termapp{z}{\termbang{\id}})})}}{(\termbang{(\termapp{\id}{\termbang{\id}})})}$: \[
\begin{array}{lllll}
t'_0 = \termapp{\underline{\termapp{\Kterm}{(\termbang{(\termapp{z}{\termbang{\id}})})}}}{(\termbang{(\termapp{\id}{\termbang{\id}})})}
  & \rewrite{\dBeta} & \underline{\termapp{\termsubs{x}{\termbang{(\termapp{z}{\termbang{\id}})}}{(\termabs{y}{x})}}{(\termbang{(\termapp{\id}{\termbang{\id}})})}}
  & \rewrite{\dBeta} & \underline{\termsubs{x}{\termbang{(\termapp{z}{\termbang{\id}})}}{\termsubs{y}{\termbang{(\termapp{\id}{\termbang{\id}})}}{x}}} \\
  & \rewrite{\sBang} & \underline{\termsubs{y}{\termbang{(\termapp{\id}{\termbang{\id}})}}{(\termapp{z}{\termbang{\id}})}}
  & \rewrite{\sBang} & \termapp{z}{\termbang{\id}}
\end{array}
\] Notice that the second $\dBeta$-step uses action at a distance, where
$\ctxt{L}$ is $\termsubs{x}{\termbang{(\termapp{z}{\termbang{\id}})}}{\Box}$.
\end{example}

The relation $\rewrite{\bangweak}$ enjoys a weak diamond property,
\ie one-step divergence can be closed in one step if the diverging terms are
different. This property has two important consequences.

\begin{theorem}[Confluence~\cite{BucciarelliKRV20}]
\label{t:confluence}
The reduction relation $\rewrite{\bangweak}$ is confluent.
Moreover, any two different $\rewrite{\bangweak}$-reduction paths to normal form have the same length.
\end{theorem}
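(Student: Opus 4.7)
The plan is to establish both claims by first proving a weak diamond property for $\rewrite{\bangweak}$, then deriving confluence by tiling, and finally obtaining the uniform-length property as a bookkeeping consequence of the same diamond.

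\textbf{Weak diamond.} I would prove that whenever $t \rewrite{\bangweak} u$ and $t \rewrite{\bangweak} s$, either $u = s$ or there exists $r$ with $u \rewrite{\bangweak} r$ and $s \rewrite{\bangweak} r$. The argument is by induction on $t$, analysing the two coinitial surface redexes. When they lie in disjoint surface positions, each step persists as a one-step residual of the other, so $r$ is obtained by firing them in either order. The delicate cases are when the two redexes overlap through a common list context $\ctxt{L}$: for instance, a $\dBeta$ whose abstraction is hidden under an ES that is simultaneously a $\sBang$-redex, or a $\sBang$ whose substituted bang sits inside the list context of another $\sBang$ or $\dBang$. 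Because the three rules are all defined at a distance and surface contexts forbid $\Box$ under $\termbang{\cdot}$, the duplications and erasures triggered by $\sBang$ never create or destroy surface redexes inside the body of a bang; the residual of each rule is therefore a single step of the same shape applied modulo the residual list context, yielding a one-step closure. This case-by-case verification of the critical pairs is the main obstacle of the whole argument.

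\textbf{Confluence from weak diamond.} The weak diamond is equivalent to the diamond property for the reflexive closure of $\rewrite{\bangweak}$, so confluence follows by the standard tiling argument: given two coinitial reductions $t \rewriten{\bangweak} u$ and $t \rewriten{\bangweak} s$, I would tile the rectangle whose axes are these reductions with unit diamonds, each tile being either degenerate (when the two one-step residuals coincide) or supplied by the weak diamond. Iteration produces a term $r$ with $u \rewriten{\bangweak} r$ and $s \rewriten{\bangweak} r$.

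\textbf{Uniform length.} For the length assertion, I would induct on the length $n$ of one of the two normalising reductions $t \rewriten{\bangweak} p$ and $t \rewriten{\bangweak} q$, with $p, q$ normal. If $n = 0$ then $t = p$ is normal, forcing $q = p$ and both reductions to have length $0$. Otherwise write the reductions as $t \rewrite{\bangweak} t_1 \rewriten{\bangweak} p$ of length $n$ and $t \rewrite{\bangweak} t'_1 \rewriten{\bangweak} q$ of length $m \geq 1$. If $t_1 = t'_1$, apply the induction hypothesis. Otherwise the weak diamond yields $r$ with $t_1 \rewrite{\bangweak} r$ and $t'_1 \rewrite{\bangweak} r$; by confluence and normality, $r$ reduces to both $p$ and $q$, and a further confluence step forces $p = q$. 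Letting $k$ be the length of $r \rewriten{\bangweak} p$, the induction hypothesis applied at $t_1$ compares the two normalising reductions $t_1 \rewriten{\bangweak} p$ of length $n-1$ and $t_1 \rewrite{\bangweak} r \rewriten{\bangweak} p$ of length $k+1$, yielding $n - 1 = k + 1$; symmetrically at $t'_1$ it gives $m - 1 = k + 1$, hence $n = m$. The remaining steps are bookkeeping; the real work is the critical-pair verification behind the weak diamond.
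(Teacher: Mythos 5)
Your plan follows exactly the route the paper takes: the paper explicitly attributes both claims to the weak diamond property of $\rewrite{\bangweak}$ (one-step divergence closes in one step when the diverging terms differ) and otherwise defers the proof to the cited reference, and your tiling argument for confluence and length-preserving induction for uniform length are the standard consequences of that property. The only part you leave open --- the critical-pair verification establishing the weak diamond itself, which hinges on surface contexts forbidding redexes under $\termbang{\cdot}$ --- is precisely the content the paper also does not reproduce, so your proposal is correct and essentially coincides with the paper's approach.
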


The second point relies essentially on the fact that reductions are
disallowed under bangs. An important consequence is that we can focus on any
particular \emph{deterministic} strategy for the $\BangRev$-calculus, without
changing the number of steps to $\bangweak$-normal form.

\parrafo{Neutral, Normal, and Clash-Free Terms}
A term is said to be \emphdef{$\bangweak$-normal} if there is no $t'$ such that
$t \rewrite{\bangweak} t'$, in which case we write $t \not\rewrite{\bangweak}$.
However, some ill-formed $\bangweak$-normal terms are not still the ones that
represent a desired result for a computation, they are called \emphdef{clashes}
(meta-variable $c$), and take one of the following forms:
$\termapp{\ctxtapp{\ctxt{L}}{\termbang{t}}}{u}$,
$\termsubs{y}{\ctxtapp{\ctxt{L}}{\termabs{x}{u}}}{t}$, 
$\termder{(\ctxtapp{\ctxt{L}}{\termabs{x}{u}})}$, or
$\termapp{t}{({\ctxtapp{\ctxt{L}}{\termabs{x}{u}}})}$. Remark that in the three
first kind of clashes, replacing $\termabs{x}{\!}$ by $\termbang{\!}$, and
inversely, creates a (root) redex, namely
$\termapp{(\ctxtapp{\ctxt{L}}{\termabs{x}{t}})}{u}$,
$\termsubs{x}{\ctxtapp{\ctxt{L}}{\termbang{t}}}{t}$ and 
$\termder{(\ctxtapp{\ctxt{L}}{\termbang{t}})}$, respectively.

A term is \emphdef{clash free} if it does not reduce to a term containing a
clash, it is \emphdef{surface clash free}, written $\cfz$, if it does not
reduce to a term containing a clash outside the scope of any constructor $!$.
Thus,  $t$ is not $\cfz$ if and only if there exist a surface context
$\ctxt{S}$ and a clash $c$ such that $t \rewriten{\bangweak}
\ctxtapp{\ctxt{S}}{c}$. Surface clash free normal terms can be characterised
as follows:
\begin{center}
\begin{tabular}{rrcll}
\emphdef{(Neutral $\cfz$)}      & $\icfntrl$  & $\Coloneq$ & $x \in \TermVariable \mid \termapp{\icfntrl}{\icfnabs} \mid \termder{(\icfntrl)} \mid \termsubs{x}{\icfntrl}{\icfntrl}$ \\
\emphdef{(Neutral-Abs $\cfz$)}  & $\icfnabs$  & $\Coloneq$ & $\termbang{t} \mid \icfntrl \mid \termsubs{x}{\icfntrl}{\icfnabs}$ \\ 
\emphdef{(Neutral-Bang $\cfz$)} & $\icfnbang$ & $\Coloneq$ & $\icfntrl \mid \termabs{x}{\icfnrml} \mid \termsubs{x}{\icfntrl}{\icfnbang}$ \\
\emphdef{(Normal $\cfz$)}       & $\icfnrml$  & $\Coloneq$ & $\icfnabs \mid \icfnbang$
\end{tabular}
\end{center}

\begin{proposition}[Clash-Free Normal Terms~\cite{BucciarelliKRV20}]
\label{p:clashfree}
Let $t \in \TermExplicit$. Then $t$ is a surface clash free $\bangweak$-normal
term iff $\cfnrml{t}$.
\end{proposition}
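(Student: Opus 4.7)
My proof plan is a standard mutual induction argument, done in two directions. The key is that the grammars are designed so that each nonterminal rules out precisely the shapes that would create a surface redex or a surface clash in an enclosing context; the plan essentially formalizes this correspondence.

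For the backward direction ($\cfnrml{t}$ implies $t$ is a surface clash-free $\bangweak$-normal term), I first prove two shape lemmas by straightforward induction on the defining grammars:
\begin{itemize}
\item If $t \in \icfntrl$, then $t$ is neither of the form $\ctxtapp{\ctxt{L}}{\termabs{x}{r}}$ nor of the form $\ctxtapp{\ctxt{L}}{\termbang{r}}$.
\item If $t \in \icfnabs$, then $t$ is not of the form $\ctxtapp{\ctxt{L}}{\termabs{x}{r}}$.
\item If $t \in \icfnbang$, then $t$ is not of the form $\ctxtapp{\ctxt{L}}{\termbang{r}}$.
\end{itemize}
Then I show simultaneously, by induction on the mutual definitions of $\icfntrl$, $\icfnabs$, $\icfnbang$, and $\icfnrml$, that every such $t$ is surface $\bangweak$-normal and admits no surface clash. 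At each production one checks: (i) no root redex (an application whose LHS is $\icfntrl$ is not a $\dBeta$-redex by the first shape lemma; a $\termder{}$ applied to $\icfntrl$ is not a $\dBang$-redex by the same; a substitution with $\icfntrl$ content is not an $\sBang$-redex), (ii) no root clash (the shape lemmas exclude the four clash shapes at the position concerned), and (iii) no redex/clash in an immediate surface subposition (given by the inductive hypothesis, noting that surface contexts do not enter under $\termbang{}$, so the $\termbang{t}$ case in $\icfnabs$ is immediate).

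For the forward direction, I strengthen the statement to permit a mutual induction on $t$. Concretely, I prove simultaneously that if $t$ is surface clash-free and $\bangweak$-normal, then
\begin{enumerate}
\item $t \in \icfntrl$, provided $t \neq \ctxtapp{\ctxt{L}}{\termabs{x}{r}}$ and $t \neq \ctxtapp{\ctxt{L}}{\termbang{r}}$;
\item $t \in \icfnabs$, provided $t \neq \ctxtapp{\ctxt{L}}{\termabs{x}{r}}$;
\item $t \in \icfnbang$, provided $t \neq \ctxtapp{\ctxt{L}}{\termbang{r}}$;
\item $t \in \icfnrml$ unconditionally.
\end{enumerate}
Case analysis on $t$ is routine: variables, bangs, abstractions, and $\termder{}$ land in their respective clauses; an application $\termapp{u}{r}$ forces $u$ to be neutral (else $\dBeta$ or the $\termbang{}$-clash) and $r$ to avoid being an abstraction under list context (else the fourth clash shape), so IH gives $u \in \icfntrl$ and $r \in \icfnabs$; a substitution $\termsubs{x}{u}{r}$ forces $u$ to be neutral (else $\sBang$ or the abstraction clash) and $r$ to range over $\icfnrml$ with the right subcategory depending on its head shape.

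The main obstacle is bookkeeping: getting the four mutually recursive statements to line up so that each case can invoke exactly the strengthened hypothesis it needs, and verifying that the shape lemmas used in the backward direction cover every clash shape and every redex shape. Once that is set up, every individual case is a one-line check against the grammar and the list-context form of the clashes/redexes.
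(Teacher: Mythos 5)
Your argument is correct and is the standard one: the paper itself imports this proposition from~\cite{BucciarelliKRV20} without reproving it, and your two-directional mutual induction (shape lemmas for the backward direction, plus a forward statement strengthened with the provisos $t \neq \ctxtapp{\ctxt{L}}{\termabs{x}{r}}$ and $t \neq \ctxtapp{\ctxt{L}}{\termbang{r}}$) is exactly what that proof amounts to. The only step worth making explicit is that, for a $\bangweak$-normal term, ``surface clash free'' (defined via $\rewriten{\bangweak}$) collapses to ``contains no clash in surface position'', which is what your case analysis actually verifies.
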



\section{A Tight Type System for the \BangRev-Calculus}
\label{s:tight}

The methodology used to define the type system $\SysTight$ for the
$\BangRev$-calculus is based on~\cite{BucciarelliKRV20}, inspired in turn
from~\cite{Carvalho:thesis,BernadetL13,AccattoliGK18}, which defines
non-idempotent intersection type systems to count reduction lengths for
different evaluation strategies in the $\lambda$-calculus. In the case of the
$\BangRev$-calculus, however, Thm.~\ref{t:confluence}
guarantees that all reduction  paths to normal form have the same length, so
that it is not necessary to reason w.r.t. any particular evaluation strategy.

The grammar of types of system $\SysTight$ is given by:
\begin{center}
\begin{tabular}{rrcll}
\emphdef{(Tight Types)} & $\typetight$   & $\Coloneq$ & $\typeneutral \mid \typeabs \mid \typevalue$ \\
\emphdef{(Types)}       & $\sigma, \tau$ & $\Coloneq$ & $\typetight \mid \M \mid \functtype{\M}{\sigma}$ \\
\emphdef{(Multitype)}   & $\M, \N$       & $\Coloneq$ & $\intertype{\sigma_i}{i \in I}$  where $I$ is a finite set
\end{tabular}
\end{center}
The constant $\typeabs$ (resp. $\typevalue$) types terms whose normal form has
the shape $\ctxtapp{\ctxt{L}}{\termabs{x}{t}}$ (resp.
$\ctxtapp{\ctxt{L}}{\termbang{t}}$, \ie \emph{values} in the
$\BangRev$-calculus sense), and the constant $\typeneutral$ types terms whose
normal form is a neutral $\cfz$. As before, the notions of \emphdef{tightness}
for multitypes, contexts and derivations are exactly the same used for CBN.
Typing rules are split in two groups: the \emph{persistent} rules
(Fig.~\ref{fig:typingSchemesTightPersistent}) and the \emph{consuming} ones
(Fig.~\ref{fig:typingSchemesTightConsuming}).

\begin{figure}[ht]
\centering $
\kern-2em
\begin{array}{c}
\Rule{\sequT{\Gamma}{\assign{t}{\typeneutral}}{\cmult}{\cexp}{\csize}
      \quad
      \sequT{\Delta}{\assign{u}{\overline{\typeabs}}}{\cmult'}{\cexp'}{\csize'}
     }
     {\sequT{\ctxtsum{\Gamma}{\Delta}{}}{\assign{\termapp{t}{u}}{\typeneutral}}{\cmult+\cmult'}{\cexp+\cexp'}{\csize+\csize'+1}}
     {\ruleBTArrowE}
\qquad
\Rule{\sequT{\Gamma}{\assign{t}{\typetight}}{\cmult}{\cexp}{\csize}
      \quad
      \ptight{\Gamma(x)}
     }
     {\sequT{\ctxtres{\Gamma}{x}{}}{\assign{\termabs{x}{t}}{\typeabs}}{\cmult}{\cexp}{\csize+1}}
     {\ruleBTArrowI}
\\
\\
\Rule{\vphantom{\sequT{}{\assign{\termbang{t}}{\typevalue}}{0}{0}{0}}}
     {\sequT{}{\assign{\termbang{t}}{\typevalue}}{0}{0}{0}}
     {\ruleBTBang}
\quad
\Rule{\sequT{\Gamma}{\assign{t}{\typeneutral}}{\cmult}{\cexp}{\csize}}
     {\sequT{\Gamma}{\assign{\termder{t}}{\typeneutral}}{\cmult}{\cexp}{\csize}}
     {\ruleBTDer}
\quad
\Rule{\sequT{\Gamma}{\assign{t}{\tau}}{\cmult}{\cexp}{\csize}
      \enspace
      \sequT{\Delta}{\assign{u}{\typeneutral}}{\cmult'}{\cexp'}{\csize'}
      \enspace
      \ptight{\Gamma(x)}
     }
     {\sequT{\ctxtsum{(\ctxtres{\Gamma}{x}{})}{\Delta}{}}{\assign{\termsubs{x}{u}{t}}{\tau}}{\cmult+\cmult'}{\cexp+\cexp'}{\csize+\csize'}}
     {\ruleBTESubs}
\end{array}$
\caption{System $\SysTight$ for the $\BangRev$-Calculus: Persistent Typing Rules.}
\label{fig:typingSchemesTightPersistent}
\end{figure}

\begin{figure}[ht]
\centering $
\begin{array}{c}
\Rule{\vphantom{\sequT{}{\assign{\termbang{t}}{\typevalue}}{0}{0}{0}}}
     {\sequT{\assign{x}{\intertype{\sigma}{}}}{\assign{x}{\sigma}}{0}{0}{0}}
     {\ruleBDAxiom}
\qquad
\Rule{\sequT{\Gamma}{\assign{t}{\functtype{\M}{\tau}}}{\cmult}{\cexp}{\csize}
      \quad
      \sequT{\Delta}{\assign{u}{\M}}{\cmult'}{\cexp'}{\csize'}
     }
     {\sequT{\ctxtsum{\Gamma}{\Delta}{}}{\assign{\termapp{t}{u}}{\tau}}{\cmult+\cmult'+1}{\cexp+\cexp'}{\csize+\csize'}}
     {\ruleBDArrowE}
\\
\\
\Rule{\sequT{\Gamma}{\assign{t}{\functtype{\M}{\tau}}}{\cmult}{\cexp}{\csize}
      \quad
      \sequT{\Delta}{\assign{u}{\typeneutral}}{\cmult'}{\cexp'}{\csize'}
      \quad
      \ptight{\M}
     }
     {\sequT{\ctxtsum{\Gamma}{\Delta}{}}{\assign{\termapp{t}{u}}{\tau}}{\cmult+\cmult'+1}{\cexp+\cexp'}{\csize+\csize'}}
     {\ruleBDApp}
\\
\\
\Rule{\sequT{\Gamma}{\assign{t}{\tau}}{\cmult}{\cexp}{\csize}}
     {\sequT{\ctxtres{\Gamma}{x}{}}{\assign{\termabs{x}{t}}{\functtype{\Gamma(x)}{\tau}}}{\cmult}{\cexp}{\csize}}
     {\ruleBDArrowI}
\qquad
\Rule{(\sequT{\Gamma_i}{\assign{t}{\sigma_i}}{\cmult_i}{\cexp_i}{\csize_i})_{i \in I}}
     {\sequT{\ctxtsum{}{\Gamma_i}{i \in I}}{\assign{\termbang{t}}{\intertype{\sigma_i}{\iI}}}{+_{\iI}{\cmult_i}}{1+_{\iI}{\cexp_i}}{+_{\iI}{\csize_i}}}
     {\ruleBDBang}
\\
\\
\Rule{\sequT{\Gamma}{\assign{t}{\intertype{\sigma}{}}}{\cmult}{\cexp}{\csize}}
     {\sequT{\Gamma}{\assign{\termder{t}}{\sigma}}{\cmult}{\cexp}{\csize}}
     {\ruleBDDer}
\qquad
\Rule{\sequT{\Gamma}{\assign{t}{\sigma}}{\cmult}{\cexp}{\csize}
      \quad
      \sequT{\Delta}{\assign{u}{\Gamma(x)}}{\cmult'}{\cexp'}{\csize'}
     }
     {\sequT{\ctxtsum{(\ctxtres{\Gamma}{x}{})}{\Delta}{}}{\assign{\termsubs{x}{u}{t}}{\sigma}}{\cmult+\cmult'}{\cexp+\cexp'}{\csize+\csize'}}
     {\ruleBDESubs}
\end{array} $
\caption{System $\SysTight$ for the $\BangRev$-Calculus: Consuming Typing Rules.}
\label{fig:typingSchemesTightConsuming}
\end{figure}

As in CBV, the $\sBang$ exponential steps do not only depend on a (consuming)
substitution constructor, but on the (bang) form of its argument. This makes
the exponential counting more subtle. Notice also that rule $\ruleBTDer$ does
not count $\termder{\!}$ constructors, according to the definition of
$\wsize{\_}$ given in Sec.~\ref{s:bang} and in contrast
to~\cite{BucciarelliKRV20}. This is to keep a more intuitive relation with the
CBN/CBV translations (Sec.~\ref{s:cbname-cbvalue}), where $\termder{\!}$ plays
a silent role.

As in~\cite{BucciarelliKRV20}, system $\SysTight$ is quantitatively sound and
complete. More precisely,

\begin{theorem}[Soundness and Completeness]\mbox{}
\begin{enumerate}
  \item \label{t:correctness} If
  $\derivable{\Phi}{\sequT{\Gamma}{\assign{t}{\sigma}}{\cmult}{\cexp}{\csize}}{\SysTight}$
  is tight, then there exists $p$ such that $\cfnrml{p}$ and
  $t \rewriten{\bangweak}^{(\cmult,\cexp)} p$ with $\cmult$ $\mStep$-steps,
  $\cexp$ $\eStep$-steps, and $\wsize{p} = \csize$.

  \item \label{t:completeness-tight} If
  $t \rewriten{\bangweak}^{(\cmult,\cexp)} p$ with $\cfnrml{p}$, then there
  exists a tight type derivation
  $\derivable{\Phi}{\sequT{\Gamma}{\assign{t}{\sigma}}{\cmult}{\cexp}{\wsize{p}}}{\SysTight}$.
\end{enumerate}
\label{t:soundness-completeness-bang}
\end{theorem}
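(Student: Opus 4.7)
The plan is to follow the same two-pillar strategy already used for systems $\SysTightCBN$ and $\SysTightCBV$, establishing exact subject reduction together with a characterization of tightly typed normal forms for (1), and exact subject expansion together with tight typability of clash-free normal forms for (2). Since $\rewrite{\bangweak}$ is confluent and all reduction paths to normal form have the same length (Thm.~\ref{t:confluence}), I do not need to introduce a deterministic strategy as was done in Sec.~\ref{s:cbname-cbvalue}: any reduction step from a non-normal typed term will do.

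For (1), I would proceed by induction on $\cmult + \cexp$. The base case relies on two ingredients. First, a \emph{tight spreading} lemma: if $\derivable{\Phi}{\sequT{\Gamma}{\assign{t}{\sigma}}{\cmult}{\cexp}{\csize}}{\SysTight}$ has $\Gamma$ tight and $\cfntrl{t}$, then $\sigma = \typeneutral$. Second, a joint induction showing that tight $\Phi$ with $\cmult = \cexp = 0$ forces $\cfnrml{t}$ and $\csize = \wsize{t}$, with subclaims for $\cfntrl{}, \cfnabs{}, \cfnbang{}, \cfnrml{}$ selecting the appropriate tight constant among $\typeneutral, \typeabs, \typevalue$. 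The inductive step requires an exact \emph{substitution} lemma: if $t$ is typed with $x$ assigned $\intertype{\sigma_i}{\iI}$ and each copy of $u$ is typed with $\sigma_i$, one builds a derivation of $\substitute{x}{u}{t}$ whose multiplicative counter is the sum and whose exponential counter is the sum minus one, the missing unit being the $\ruleBDBang$ rule that disappears when a $\sBang$-redex fires. Exact subject reduction then splits by cases on $t \rewrite{\bangweak} t'$: a $\dBeta$-step strips one $\ruleBDArrowE$ (or $\ruleBDApp$) and decreases $\cmult$ by one, an $\sBang$-step consumes one $\ruleBDBang$ via the substitution lemma and decreases $\cexp$ by one, and a $\dBang$-step removes one $\ruleBDDer{}/\ruleBDBang$ pair and decreases $\cexp$ by one; in all cases $\csize$ is preserved and $\Gamma, \sigma$ are untouched. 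Iterating this and applying the zero-counter characterization delivers the desired $p$.

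For (2), the dual pattern applies. I would first prove tight typability of clash-free normal forms: any $t$ with $\cfnrml{t}$ admits a tight derivation with counters $(0,0,\wsize{t})$, by simultaneous induction on the grammar of $\icfnrml, \icfntrl, \icfnabs, \icfnbang$, using $\ruleBTArrowE$ for neutral applications, $\ruleBTArrowI$ for persistent abstractions, $\ruleBTBang$ for persistent bangs, $\ruleBTDer$ for dereliction of a neutral, and $\ruleBTESubs$ for the closure under explicit substitutions. Next, an \emph{anti-substitution} lemma reversing the substitution lemma above: any derivation of $\substitute{x}{u}{t}$ factors as a derivation of $t$ with $x$ at some multitype and a family of derivations of $u$ for each component, with counters summing up appropriately (and the extra unit of $\cexp$ correctly reintroduced). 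Exact subject expansion then proceeds by case analysis on $t \rewrite{\bangweak} t'$, mirroring the subject reduction argument, and a simple induction on $\cmult + \cexp$ closes the proof, starting from the tight derivation of $p$ provided by the first lemma.

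The main technical obstacle will be the exponential bookkeeping. Unlike the $\mStep$-counter, whose increments are statically attached to the consuming application rules $\ruleBDArrowE/\ruleBDApp$, an $\eStep$-step may be triggered either by a substitution meeting a bang ($\sBang$) or by a dereliction meeting a bang ($\dBang$), and only one of the two arguments in such a confrontation is truly consumed. The system compensates by assigning an extra unit of $\cexp$ to every use of $\ruleBDBang$ (and symmetrically to consuming axioms via the general consuming machinery), then adjusting at the point of use. Verifying that this accounting is invariant under substitution, compatible with the split between persistent and consuming rules, and consistent with the behaviour of $\ruleBTDer$, which contributes neither to size nor to counters, is the delicate part on which both lemmas ultimately hinge. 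A secondary subtlety is that the normal form produced by soundness must be not merely $\bangweak$-normal but \emph{surface} clash-free; this is ensured because the shape of a tight type forbids the syntactic clashes listed before Prop.~\ref{p:clashfree}, but this constraint must be propagated carefully through the zero-counter characterization.
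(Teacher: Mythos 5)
Your proposal is correct and follows essentially the same route as the paper, whose proof of this theorem simply defers to the methodology of~\cite{BucciarelliKRV20}: tight spreading, the zero-counter characterization of (surface clash-free) normal forms, a substitution lemma with the exponential counter summing to one less than the parts (absorbing the consumed $\ruleBDBang$), exact subject reduction/expansion, and induction on $\cmult+\cexp$, with confluence and uniform path length dispensing with a fixed strategy. You also correctly isolate the two genuine subtleties, namely the $\sBang$/$\dBang$ exponential bookkeeping and the propagation of surface clash-freeness.
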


\ifreport{
\begin{proof}
Similar to \cite{BucciarelliKRV20}.
\end{proof}    
}

\begin{example}
\label{ex:traducciones}
Consider $t'_0
= \termapp{\termapp{\Kterm}{(\termbang{(\termapp{z}{\termbang{\id}})})}}{(\termbang{(\termapp{\id}{\termbang{\id}})})}$
from Ex.~\ref{ex:t0pcbn}, which normalises in $2$ $\mStep$-steps and $2$
$\eStep$-steps to $\termapp{z}{\termbang{\id}} \in \icfnrml$ of
$\bangweak$-size $1$. A tight derivation for $t'_0$ with appropriate
final counters $(2,2,1)$ is given below.
{\small
\[
\Rule{
  \Rule{
    \Rule{
      \Rule{
        \Rule{}{
          \sequT{\assign{x}{\multiset{\typeneutral}}}{\assign{x}{\typeneutral}}{0}{0}{0}
        }{\ruleBDAxiom}
      }{
        \sequT{\assign{x}{\multiset{\typeneutral}}}{\assign{\termabs{y}{x}}{\functtype{\emul}{\typeneutral}}}{0}{0}{0}
      }{\ruleBDArrowI}
    }{
      \sequT{}{\assign{\Kterm}{\functtype{\multiset{\typeneutral}}{\functtype{\emul}{\typeneutral}}}}{0}{0}{0}
    }{\ruleBDArrowI}
    \Rule{
      \Rule{
        \Rule{}{
          \sequT{\assign{z}{\multiset{\typeneutral}}}{\assign{z}{\typeneutral}}{0}{0}{0}
        }{\ruleBDAxiom}
        \Rule{}{
          \sequT{}{\assign{\termbang{\id}}{\typevalue}}{0}{0}{0}
        }{\ruleBTBang}
      }{
        \sequT{\assign{z}{\multiset{\typeneutral}}}{\assign{\termapp{z}{\termbang{\id}}}{\typeneutral}}{0}{0}{1}
      }{\ruleBTArrowE}
    }{
      \sequT{\assign{z}{\multiset{\typeneutral}}}{\assign{\termbang{(\termapp{z}{\termbang{\id}})}}{\multiset{\typeneutral}}}{0}{1}{1}
    }{\ruleBDBang}
  }{
    \sequT{\assign{z}{\multiset{\typeneutral}}}{\assign{\termapp{\Kterm}{(\termbang{(\termapp{z}{\termbang{\id}})})}}{\functtype{\emul}{\typeneutral}}}{1}{1}{1}
  }{\ruleBDArrowE}
  \kern-4em
  \Rule{}{
    \sequT{}{\assign{\termbang{(\termapp{\id}{\termbang{\id}})}}{\emul}}{0}{1}{0}
  }{\ruleBDBang}
}{
  \sequT{\assign{z}{\multiset{\typeneutral}}}{\assign{\termapp{\termapp{\Kterm}{(\termbang{(\termapp{z}{\termbang{\id}})})}}{(\termbang{(\termapp{\id}{\termbang{\id}})})}}{\typeneutral}}{2}{2}{1}
}{\ruleBDArrowE}
\]
}

\noindent Notice that the only persistent rules are $\ruleBTBang$ and
$\ruleBTArrowE$, used to type $\termapp{z}{\termbang{\id}}$. Indeed,
$\termapp{z}{\termbang{\id}}$ is the $\bangweak$-normal form of $t'_0$.
\end{example}


\section{Untyped Translations}
\label{s:cbn-cbv-embeddings}

CBN/CBV (untyped) encodings into the  bang
calculus~\cite{GuerrieriM18}, inspired from Girard's encodings, establish two
translations $cbn$ and $cbv$, such that when $t$ reduces to $u$ in CBN (resp.
CBV), $cbn(t)$ reduces to $cbn(u)$ (resp. $cbv(t)$ reduces to $cbv(u)$) in the
bang calculus.
These two encodings are dual: CBN forbids reduction inside arguments, which are
translated to bang terms, while CBV forbids reduction under
$\lambda$-abstractions, also translated to bang terms.

In this paper we use alternative encodings. For CBN, we slightly adapt to
explicit substitutions Girard's translation into LL~\cite{Girard87}. The
resulting encoding preserves normal forms and is sound and complete with
respect to the standard (quantitative) type system in~\cite{Gardner94}. For
CBV, we discard the original encoding in~\cite{GuerrieriM18} for two reasons:
CBV normal forms are not necessarily translated to normal forms in the bang
calculus (see~\cite{GuerrieriM18}), and levels of terms (the level of $t$ is
the number of $\termbang{\!}$ surrounding $t$) are not preserved either
(see~\cite{FaggianG21}). We thus adopt the CBV encoding
in~\cite{BucciarelliKRV20} which preserves normal forms as well as levels.

The CBN and CBV embedding into the $\BangRev$-calculus, written $\cbnterm{\_}$ and
$\cbvterm{\_}$ resp., are inductively defined as: \[
\begin{array}{c@{\qquad}c}
\begin{array}{rcl}
\cbnterm{x}                    & \eqdef & x \\
\cbnterm{(\termabs{x}{t})}     & \eqdef & \termabs{x}{\cbnterm{t}} \\
\cbnterm{(\termapp{t}{u})}     & \eqdef & \termapp{\cbnterm{t}}{\termbang{\cbnterm{u}}} \\
\cbnterm{(\termsubs{x}{u}{t})} & \eqdef & \termsubs{x}{\termbang{\cbnterm{u}}}{\cbnterm{t}}
\end{array}
&
\begin{array}{rcl}
\cbvterm{x} & \eqdef & \termbang{x} \\
\cbvterm{(\termabs{x}{t})} & \eqdef & \termbang{\termabs{x}{\cbvterm{t}}} \\
\cbvterm{(\termapp{t}{u})} & \eqdef & \left \{
\begin{array}{ll}
  \termapp{\ctxtapp{\ctxt{L}}{s}}{\cbvterm{u}} & \text{if $\cbvterm{t} = \ctxtapp{\ctxt{L}}{\termbang{s}}$} \\
\termapp{\termder{(\cbvterm{t})}}{\cbvterm{u}}   & \text{otherwise}
\end{array}
\right. \\
\cbvterm{(\termsubs{x}{u}{t})} & \eqdef & \termsubs{x}{\cbvterm{u}}{\cbvterm{t}}
\end{array}
\end{array} \]

Both translations extend to list contexts $\ctxt{L}$ as expected. Remark that
there are no two consecutive $\termbang{\!}$ constructors in the image of the
translation. The CBN embedding extends Girard's translation to ES, while the
CBV one is  different. Indeed, the translation of an application
$\termapp{t}{u}$ is usually defined as $\termapp{\termder{(\cbvterm{t})}}{\cbvterm{u}}$
(see \eg\cite{EhrhardG16}). This definition does not preserve normal forms, \ie
$\termapp{x}{y}$ is a $\callbyvalue$-normal form but its translated version
$\termapp{\termder{(\termbang{x})}}{\termbang{y}}$ is not a $\bangweak$-normal
form. We restore this fundamental property by using the well-known notion of
superdevelopment~\cite{BezemKO03}, so that $\dBang$-reductions are applied by
the translation on the fly. Moreover, simulation of CBN/CBV in the
$\BangRev$-calculus also holds.

\begin{lemma}[Simulation~\cite{BucciarelliKRV20}]
\label{l:preservation-and-simulation}
Let $t \in \TermLambda$.
\begin{enumerate}
  \item\label{l:preservation-cbn} $t \not\rewrite{\callbyname}$ implies
  $\cbnterm{t}\not\rewrite{\bangweak}$, and $t \rewrite{\callbyname} s$ implies
  $\cbnterm{t} \rewrite{\bangweak} \cbnterm{s}$.
  \item\label{l:preservation-cbv} $t \not\rewrite{\callbyvalue}$ implies
  $\cbvterm{t}\not\rewrite{\bangweak}$, and $t \rewrite{\callbyvalue} s$ implies
  $\cbvterm{t} \rewriten{\bangweak} \cbvterm{s}$.
\end{enumerate}
\end{lemma}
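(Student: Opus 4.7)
The plan is to prove, for each of part~(1) and~(2), both normal-form preservation and simulation of reductions by induction on terms (or on the reduction relation), supported by two auxiliary lemmas: a substitution lemma for the translation, and a context-translation lemma showing that CBN/CBV contexts map to $\BangRev$-surface contexts.

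For part~(1), I would first prove the CBN substitution equality $\cbnterm{\substitute{x}{u}{t}} = \substitute{x}{\cbnterm{u}}{\cbnterm{t}}$ by straightforward induction on $t$, and verify that any CBN context $\ctxt{N}$ translates to a surface context with $\cbnterm{\ctxtapp{\ctxt{N}}{r}} = \ctxtapp{\cbnterm{\ctxt{N}}}{\cbnterm{r}}$. Normal-form preservation follows by simultaneous induction on the grammars of $\HCBNNF$ and $\CBNNF$: if $t \in \HCBNNF$ then $\cbnterm{t} \in \icfntrl$, and if $t \in \CBNNF$ then $\cbnterm{t} \in \icfnrml$, so by Prop.~\ref{p:clashfree} no $\bangweak$-redex is exposed. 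For simulation, the two rules lift cleanly: $\termapp{\ctxtapp{\ctxt{L}}{\termabs{x}{r}}}{u} \rrule{\dBeta} \ctxtapp{\ctxt{L}}{\termsubs{x}{u}{r}}$ translates to one $\BangRev$-$\dBeta$-step on $\termapp{\ctxtapp{\cbnterm{\ctxt{L}}}{\termabs{x}{\cbnterm{r}}}}{\termbang{\cbnterm{u}}}$, and $\termsubs{x}{u}{r} \rrule{\sTerm} \substitute{x}{u}{r}$ translates to one $\BangRev$-$\sBang$-step on $\termsubs{x}{\termbang{\cbnterm{u}}}{\cbnterm{r}}$, with the substitution lemma identifying the reduct with $\cbnterm{\substitute{x}{u}{r}}$. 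Contextual closure is immediate from the surface-context translation.

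For part~(2) the scheme is the same, but with two twists. First, the substitution lemma must be phrased for values: if $v$ is a CBV value then $\cbvterm{v} = \termbang{s_v}$ for some $s_v$, and one proves $\cbvterm{\substitute{x}{v}{t}} = \substitute{x}{s_v}{\cbvterm{t}}$ by induction on $t$. The delicate case is $t = \termapp{t_1}{t_2}$, where the translation of an application branches on whether $\cbvterm{t_1}$ has the shape $\ctxtapp{\ctxt{L}}{\termbang{s}}$; one checks that this shape is preserved under value substitution, so that both sides of the equality select the same branch. Second, a single CBV step may require several $\bangweak$-steps: when we reduce a subterm $t'$ in application position, the shape of $\cbvterm{t'}$ can shift between ``not bang-headed'' (which inserts $\termder{\cdot}$ in the translation of $\termapp{t'}{u}$) and ``bang-headed'' (which omits $\termder{\cdot}$), so a CBV $\dBeta$ or $\sVal$ step is mirrored by the corresponding $\BangRev$-step followed by a chain of $\dBang$-steps that erase the now-superfluous $\termder{}$ at a distance. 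These extra $\dBang$-steps are the ``on-the-fly'' superdevelopment already hard-wired into the CBV translation, and they explain the reflexive-transitive closure in the statement. Normal-form preservation uses the grammars of $\VarHCBVNF$, $\HCBVNF$ and $\CBVNF$, checking that each one populates the corresponding clash-free normal-form class of $\BangRev$.

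The main obstacle is the bookkeeping for the CBV case: one must verify that every shape mismatch created by the optimizing application clause is resolvable by a sequence of $\dBang$-reductions, and never creates or blocks any $\dBeta$ or $\sBang$ redex. This is also what guarantees that CBV normal forms map directly to $\BangRev$-normal forms, without any additional reduction being possible.
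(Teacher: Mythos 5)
The paper does not actually prove this lemma: it is imported from~\cite{BucciarelliKRV20} (note the citation in the lemma's name), so there is no in-text argument to compare against. Your reconstruction is sound and is the standard one: a substitution lemma for each translation, a check that $\ctxt{N}$/$\ctxt{V}$ contexts map to surface contexts, a one-step simulation of the root rules, and normal-form preservation via the grammars of Prop.~\ref{p:clashfree}. The two points that genuinely need care are exactly the ones you flag. First, the CBV substitution lemma must be stated for values ($\cbvterm{v} = \termbang{s_v}$), and one must check that substituting a value preserves the predicate $\ipval$ in both directions, so that the application clause of $\cbvterm{\_}$ selects the same branch before and after substitution. Second, a CBV step can flip the left subterm of an application from non-$\ipval$ to $\ipval$ (never the converse, since $\ipval$ is stable under forward CBV reduction), and the resulting superfluous $\termder{}$ is discharged by a $\rewrite{\dBang}$-step at a distance; since an application node itself is never $\ipval$, at most one such step is triggered per CBV step, which is the sole source of the reflexive--transitive closure in item~2 and is visible in Ex.~\ref{ex:embedding}. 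One small bookkeeping addition for CBV normal forms: the simultaneous induction needs an auxiliary clause recording that for $t \in \VarHCBVNF$ with $\cbvterm{t} = \ctxtapp{\ctxt{L}}{\termbang{x}}$, the unbanged term $\ctxtapp{\ctxt{L}}{x}$ lands in $\icfntrl$, since that is what the case $\termapp{\VarHCBVNF}{\CBVNF}$ of $\HCBVNF$ actually consumes after the translation strips the bang.
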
  

\begin{example}
\label{ex:embedding}
Consider again $t_0 =
\termapp{\termapp{\Kterm}{(\termapp{z}{\id})}}{(\termapp{\id}{\id})}$.
To illustrate the CBN case
(Lem.~\ref{l:preservation-and-simulation}:\ref{l:preservation-cbn}), notice
that the reduction sequence $t_0 \rewriten{\cbnname} \termapp{z}{\id} = s_0$
given in Ex.~\ref{ex:t0cbn} is translated to the sequence $\cbnterm{t_0} =
\termapp{\termapp{\Kterm}{(\termbang{(\termapp{z}{\termbang{\id}})})}}{(\termbang{(\termapp{\id}{\termbang{\id}})})}
= t'_0 \rewriten{\bangweak} \termapp{z}{\termbang{\id}} = \cbnterm{s_0}$ given in
Ex.~\ref{ex:t0pcbn}.

To illustrate the CBV case
(Lem.~\ref{l:preservation-and-simulation}:\ref{l:preservation-cbv}),
consider the sequence $t_0 \rewriten{\cbvname}
\termsubs{x}{\termapp{z}{\id}}{x} = s_1$ in Ex.~\ref{ex:t0cbv}. Then for
$\id' = \termabs{w}{\termbang{w}}$ we have: \[
\begin{array}{l@{\enskip}l@{\enskip}l@{\enskip}l@{\enskip}l}
\cbvterm{t_0}
  & =                 & \termapp{\termder{(\underline{\termapp{(\termabs{x}{\termbang{\termabs{y}{\termbang{x}}}})}{(\termapp{z}{\termbang{\id'}})}})}}{(\termapp{\id'}{\termbang{\id'}})}
  & \rewrite{\dBeta}  & \termapp{\underline{\termder{(\termsubs{x}{\termapp{z}{\termbang{\id'}}}{(\termbang{\termabs{y}{\termbang{x}}})})}}}{(\termapp{\id'}{\termbang{\id'}})} \\
  & \rewrite{\dBang}  & \underline{\termapp{\termsubs{x}{\termapp{z}{\termbang{\id'}}}{(\termabs{y}{\termbang{x}})}}{(\termapp{\id'}{\termbang{\id'}})}} 
  & \rewrite{\dBeta}  & \termsubs{x}{\termapp{z}{\termbang{\id'}}}{\termsubs{y}{\underline{\termapp{\id'}{\termbang{\id'}}}}{(\termbang{x})}} \\
  & \rewrite{\dBeta}  & \termsubs{x}{\termapp{z}{\termbang{\id'}}}{\underline{\termsubs{y}{\termsubs{w}{\termbang{\id'}}{(\termbang{w})}}{(\termbang{x})}}} 
  & \rewrite{\sBang}  & \termsubs{x}{\termapp{z}{\termbang{\id'}}}{\underline{\termsubs{w}{\termbang{\id'}}{(\termbang{x})}}} \\
  & \rewrite{\sBang}  & \termsubs{x}{\termapp{z}{\termbang{\id'}}}{(\termbang{x})} 
  & =                 & \cbvterm{s_1}
\end{array}
\]
Notice how this sequence requires extra reduction steps with respect to the one
given in Ex.~\ref{ex:t0cbv}. Indeed, the $\eStep$-step $\rewrite{\dBang}$ in
the $\BangRev$-calculus has no counterpart in CBV.
\end{example}


\section{Typed Translations}
\label{s:tight-translations}

\parrafo{Call-by-Name}
We study the correspondence between derivations in CBN and their encodings in
the $\BangRev$-calculus. First we inject the set of types for $\SysTightCBN$
(generated by the base types $\typeneutral$ and $\typeabs$) into the set of
types of $\SysTight$ (generated also by the base type $\typevalue$) by means of
the function: $\cbntype{\typeneutral} \eqdef \typeneutral$, $\cbntype{\typeabs}
\eqdef\typeabs$, $\cbntype{(\functtype{\M}{\sigma})} \eqdef
\functtype{\cbntype\M}{\cbntype\sigma}$ and
$\cbntype{\intertype{\sigma_i}{i \in I}} \eqdef
\intertype{\cbntype{\sigma_i}}{i \in I}$. Then we translate terms, using the
function $\cbnterm{\_}$ from Sec.~\ref{s:cbn-cbv-embeddings}. Translation of
contexts is defined as expected: $\cbntype{\Gamma} =
\set{\assign{x_i}{\cbntype{\M_i}}}_{i \in I}$. Another notion is needed to
restrict $\SysTight$ derivations to those that come from the translation of
some $\SysTightCBN$ derivation. Indeed, a $\SysTight$ derivation $\Phi$ is
\emphdef{\cbnrelevant} if all the contexts and types involved in $\Phi$ are in
the image of the translation $\cbntype{\_}$. We then obtain:

\begin{theorem}
$\derivable{\Phi}{\sequT{\Gamma}{\assign{t}{\sigma}}{\cmult}{\cexp}{\csize}}{\SysTightCBN}$
if and only if
$\derivable{\Phi'}{\sequT{\cbntype\Gamma}{\assign{\cbnterm{t}}{\cbntype\sigma}}{\cmult}{\cexp}{\csize}}{\SysTight}$
is \cbnrelevant.
\label{t:name:translation}
\end{theorem}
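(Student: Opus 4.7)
The plan is to prove both implications by induction, matching each rule of $\SysTightCBN$ with a specific pattern of rules in $\SysTight$ applied to the translated term. The correspondence will be rigid because (i) the clauses $\cbnterm{\termapp{t}{u}} = \termapp{\cbnterm{t}}{\termbang{\cbnterm{u}}}$ and $\cbnterm{\termsubs{x}{u}{t}} = \termsubs{x}{\termbang{\cbnterm{u}}}{\cbnterm{t}}$ insert a bang on every argument and substituted term, while $\cbnterm{\_}$ never introduces a dereliction; (ii) a bang constructor cannot be typed with $\typeneutral$ in $\SysTight$, which rules out $\ruleBTESubs$ and $\ruleBDApp$ on translated terms; and (iii) the absence of $\termder{\_}$ in $\cbnterm{t}$ rules out $\ruleBTDer$ and $\ruleBDDer$. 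Thus, for each head constructor of $\cbnterm{t}$, only a small number of rule-patterns in $\SysTight$ can possibly derive the required judgement.

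For the forward direction ($\Rightarrow$) I proceed by induction on $\Phi$. The axiom $\ruleNDAxiom$ maps to $\ruleBDAxiom$, and the introductions $\ruleNTArrowI$, $\ruleNDArrowI$ map to $\ruleBTArrowI$, $\ruleBDArrowI$ respectively, all preserving counters verbatim. The consuming rules $\ruleNDArrowE$ and $\ruleNDESubs$ are the critical cases: from the IH-derivations $(\Phi^i_u)_{\iI}$ of $u$ I build a $\ruleBDBang$-derivation of $\termbang{\cbnterm{u}} : \intertype{\cbntype{\sigma_i}}{\iI}$ with counters $(+_{\iI}{\cmult_i},\, 1 +_{\iI}{\cexp_i},\, +_{\iI}{\csize_i})$, and combine it with the IH-derivation of $t$ via $\ruleBDArrowE$ or $\ruleBDESubs$; the $+1$ contributed by $\ruleBDBang$ to the exponential counter will exactly absorb the $+1$ featured in the CBN rules. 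For the persistent $\ruleNTArrowE$, whose premise does not constrain $u$, I type $\termbang{\cbnterm{u}} : \typevalue$ via $\ruleBTBang$ with counter $(0,0,0)$ and then apply $\ruleBTArrowE$, which adds $+1$ to the size counter exactly as in CBN.

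For the reverse direction ($\Leftarrow$) I proceed by induction on $t$. The rigid correspondence above forces the last rule of $\Phi'$ into a unique pattern dictated by the head constructor of $\cbnterm{t}$: abstractions invert via $\ruleBTArrowI$ or $\ruleBDArrowI$, applications via $\ruleBTArrowE$ combined with $\ruleBTBang$ or via $\ruleBDArrowE$ combined with $\ruleBDBang$, and substitutions only via $\ruleBDESubs$ combined with $\ruleBDBang$. The premises are then $\SysTight$ derivations of translated subterms (or, in consuming cases, of $\cbnterm{u}$ just below $\ruleBDBang$), to which the IH applies and yields the required CBN derivation with identical counters. The main obstacle is the bookkeeping of counters addend by addend, in particular checking that the $+1$ exponential from $\ruleBDBang$ precisely absorbs the $+1$ exponential in the corresponding CBN rule; a subsidiary subtlety is that $\typevalue$ is not in the image of $\cbntype{\_}$ but appears internally in $\Phi'$ as the output of $\ruleBTBang$ premises typing $\termbang{\cbnterm{u}}$ in persistent application positions, so the definition of $\cbnrelevant$ must be read as tolerating exactly such controlled occurrences of $\typevalue$.
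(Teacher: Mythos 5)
Your proposal is correct and follows essentially the same route as the paper: induction on the derivation in each direction, with the same rigid rule-by-rule correspondence ($\ruleNTArrowE \leftrightarrow \ruleBTArrowE + \ruleBTBang$, $\ruleNDArrowE/\ruleNDESubs \leftrightarrow \ruleBDArrowE/\ruleBDESubs + \ruleBDBang$, introductions and axiom mapped verbatim) and the same observation that the $+1$ exponential contributed by $\ruleBDBang$ accounts for the $+1$ in the consuming CBN rules. Your closing remark about $\typevalue$ appearing as the output of $\ruleBTBang$ despite not being in the image of $\cbntype{\_}$ is a fair catch of a subtlety the paper glosses over (it writes $\cbntype{\typevalue}$ without defining it), and your proposed reading of \cbnrelevant{} is the intended repair.
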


\begin{proof}
The left-to-right implication is by induction on $\Phi$ by analysing the last
rule applied. 
\begin{itemize}
  \item $\ruleNTArrowE$. Then we have $t = \termapp{r}{u}$, $\sigma =
  \typeneutral$ and $\csize = \csize' + 1$, and $\Phi$ is \[
  \Rule{
    \sequT{\Gamma}{\assign{r}{\typeneutral}}{\cmult}{\cexp}{\csize'}
  }{
    \sequT{\Gamma}{\assign{\termapp{r}{u}}{\typeneutral}}{\cmult}{\cexp}{\csize'+1}
  }{\ruleNTArrowE}
  \] By applying the \ih on the premise we may construct $\Phi'$ \[
  \Rule{
    \sequT{\cbntype{\Gamma}}{\assign{\cbnterm{r}}{\cbntype{\typeneutral}}}{\cmult}{\cexp}{\csize'}
    \Rule{}{
      \sequT{}{\assign{\termbang{\cbnterm{u}}}{\cbntype{\typebang}}}{0}{0}{0}
    }{\ruleBTBang}
  }{
    \sequT{\cbntype{\Gamma}}{\assign{ \termapp{\cbnterm{r}}{\termbang{\cbnterm{u}}}}{\cbntype{\typeneutral}}}{\cmult}{\cexp}{\csize'+1}
  }{\ruleBTArrowE} \]

  \item $\ruleNTArrowI$. Then we have $t = \termabs{x}{u}$, $\Gamma =
  \ctxtres{\Gamma_1}{x}{}$, $\sigma = \typeabs$, $\csize = \csize' + 1$, and
  $\Phi$ is \[
  \Rule{
    \sequT{\Gamma_1}{\assign{u}{\typetight}}{\cmult}{\cexp}{\csize'}
    \quad
    \ptight{\Gamma_1(x)}
  }{
    \sequT{\ctxtres{\Gamma_1}{x}{}}{\assign{\termabs{x}{u}}{\typeabs}}{\cmult}{\cexp}{\csize'+1}
  }{\ruleNTArrowI}
  \] By applying the \ih on the premise we may construct $\Phi'$ \[
  \Rule{
    \sequT{\cbntype{\Gamma_1}}{\assign{\cbnterm{u}}{\cbntype{\typetight}}}{\cmult}{\cexp}{\csize'}
    \quad
    \ptight{\cbntype{\Gamma_1}(x)}
  }{
    \sequT{\ctxtres{\cbntype{\Gamma_1}}{x}{}}{\assign{\termabs{x}{\cbnterm{u}}}{\cbntype{\typeabs}}}{\cmult}{\cexp}{\csize'+1}
  }{\ruleBTArrowI} \]
  \item $\ruleNDAxiom$. Then $\Gamma = \assign{x}{\multiset{\sigma}}$, $\cmult
  = \cexp = \csize = 0$, and $\Phi$ is \[
  \Rule{\vphantom{\Gamma}}{
    \sequT{\assign{x}{\multiset{\sigma}}}{\assign{x}{\sigma}}{0}{0}{0}
  }{\ruleNDAxiom}
  \] Then we construct $\Phi'$ \[
  \Rule{\vphantom{\Gamma}}{
    \sequT{\assign{x}{\multiset{\cbntype{\sigma}}}}{\assign{\cbnterm{x}}{\cbntype{\sigma}}}{0}{0}{0}
  }{\ruleBDAxiom} \]

  \item $\ruleNDArrowE$. Then $t = \termapp{r}{u}$, $\Gamma =
  \ctxtsum{\Gamma_1}{\Delta_i}{\iI}$, $\cmult = 1 + \cmult' +_{\iI}{\cmult_i}$,
  $\cexp = 1 + \cexp' +_{\iI}{\cexp_i}$, $\csize = \csize' +_{\iI}{\csize_i}$,
  and $\Phi$ is \[
  \Rule{
    \sequT{\Gamma_1}{\assign{r}{\functtype{\intertype{\tau_i}{i \in I}}{\sigma}}}{\cmult'}{\cexp'}{\csize'}
    \quad
    \many{\sequT{\Delta_i}{\assign{u}{\tau_i}}{\cmult_i}{\cexp_i}{\csize_i}}{i \in I}
  }{
    \sequT{\ctxtsum{\Gamma_1}{\Delta_i}{\iI}}{\assign{\termapp{r}{u}}{\sigma}}{1+\cmult'+_{\iI}{\cmult_i}}{1+\cexp'+_{\iI}{\cexp_i}}{\csize'+_{\iI}{\csize_i}}
  }{\ruleNDArrowE}
  \] By applying the \ih on all the premises we may construct $\Phi'$ \[
  \Rule{
    \sequT{\cbntype{\Gamma_1}}{\assign{\cbnterm{r}}{\functtype{\intertype{\cbntype{\tau_i}}{i \in I}}{\cbntype{\sigma}}}}{\cmult'}{\cexp'}{\csize'}
    \Rule{
      \many{\sequT{\cbntype{\Delta_i}}{\assign{\cbnterm{u}}{\cbntype{\tau_i}}}{\cmult_i}{\cexp_i}{\csize_i}}{i \in I}
    }{
      \sequT{+_{\iI}\cbntype{\Delta_i}}{\assign{\termbang{\cbnterm{u}}}{\multiset{\cbntype{\tau_i}}_{\iI}}}{+_{\iI}\cmult_i}{1+_{\iI}\cexp_i}{+_{\iI}\csize_i}
    }{\ruleBDBang}
  }{
    \sequT{\ctxtsum{\cbntype{\Gamma_1}}{\cbntype{\Delta_i}}{\iI}}{\assign{\termapp{\cbnterm{r}}{\termbang{\cbnterm{u}}}}{\cbntype{\sigma}}}{1+\cmult'+_{\iI}{\cmult_i}}{1+\cexp'+_{\iI}{\cexp_i}}{\csize'+_{\iI}{\csize_i}}
  }{\ruleBDArrowE} \]

  \item $\ruleNDArrowI$. Then $t = \termabs{x}{u}$, $\Gamma = \ctxtres{\Gamma_1}{x}{}$, $\sigma =
  \functtype{\Gamma_1(x)}{\tau}$, and $\Phi$ is \[
  \Rule{
    \sequT{\Gamma_1}{\assign{u}{\tau}}{\cmult}{\cexp}{\csize}
  }{
    \sequT{\ctxtres{\Gamma_1}{x}{}}{\assign{\termabs{x}{u}}{\functtype{\Gamma_1(x)}{\tau}}}{\cmult}{\cexp}{\csize}
  }{\ruleNDArrowI}
  \] By applying the \ih on the premise we may construct $\Phi'$ \[
  \Rule{
    \sequT{\cbntype{\Gamma_1}}{\assign{\cbnterm{u}}{\cbntype{\tau}}}{\cmult}{\cexp}{\csize}
  }{
    \sequT{\ctxtres{\cbntype{\Gamma_1}}{x}{}}{\assign{\termabs{x}{\cbnterm{u}}}{\functtype{\cbntype{\Gamma_1}(x)}{\cbntype{\tau}}}}{\cmult}{\cexp}{\csize}
  }{\ruleBDArrowI} \]
  
  \item $\ruleNDESubs$. Then $t = \termsubs{x}{u}{r}$, $\Gamma =
  \ctxtsum{\ctxtres{\Gamma_1}{x}{}}{\Delta_i}{\iI}$, $\cmult = \cmult'
  +_{\iI}{\cmult_i}$, $\cexp = 1 + \cexp' +_{\iI}{\cexp_i}$, $\csize = \csize'
  +_{\iI}{\csize_i}$, and $\Phi$ is \[
  \Rule{
    \sequT{\Gamma_1;\assign{x}{\intertype{\tau_i}{i \in I}}}{\assign{r}{\sigma}}{\cmult'}{\cexp'}{\csize'}
      \quad
      \many{\sequT{\Delta_i}{\assign{u}{\tau_i}}{\cmult_i}{\cexp_i}{\csize_i}}{i \in I}
  }{
    \sequT{\ctxtsum{(\ctxtres{\Gamma_1}{x}{})}{\Delta_i}{i \in I}}{\assign{\termsubs{x}{u}{r}}{\sigma}}{\cmult'+_{\iI}{\cmult_i}}{1+\cexp'+_{\iI}{\cexp_i}}{\csize'+_{\iI}{\csize_i}}
  }{\ruleNDESubs}
  \] By applying the \ih on all the premises we may construct $\Phi'$ \[
  \Rule{
    \sequT{\cbntype{\Gamma_1};\assign{x}{\intertype{\cbntype{\tau_i}}{i \in I}}}{\assign{\cbnterm{r}}{\cbntype{\sigma}}}{\cmult'}{\cexp'}{\csize'}
    \Rule{
      \many{\sequT{\cbntype{\Delta_i}}{\assign{\cbnterm{u}}{\cbntype{\tau_i}}}{\cmult_i}{\cexp_i}{\csize_i}}{i \in I}
    }{
      \sequT{+_{\iI}\cbntype{\Delta_i}}{\assign{\termbang{\cbnterm{u}}}{\multiset{\cbntype{\tau_i}}_{\iI}}}{+_{\iI}\cmult_i}{1+_{\iI}\cexp_i}{+_{\iI}\csize_i}
    }{\ruleBDBang}
  }{
    \sequT{\ctxtsum{(\ctxtres{\cbntype{\Gamma_1}}{x}{})}{\cbntype{\Delta_i}}{i \in I}}{\assign{\termsubs{x}{\termbang{\cbnterm{u}}}{\cbnterm{r}}}{\cbntype{\sigma}}}{\cmult'+_{\iI}{\cmult_i}}{1+\cexp'+_{\iI}{\cexp_i}}{\csize'+_{\iI}{\csize_i}}
  }{\ruleBDESubs} \]
\end{itemize}

In all cases, the resulting derivation is \cbnrelevant\ by the \ih\ and the
fact that the $\ruleBDESubs$ rule (in the last case) is used as required.

\medskip
The right-to-left implication is by induction on $\Phi'$ by analysing the last
rule applied.
\begin{itemize}
  \item $\ruleBTArrowE$. Then $t = \termapp{t_1}{u}$ with $\cbnterm{t} =
  \termapp{\cbnterm{t_1}}{\termbang{\cbnterm{u}}}$, $\Gamma =
  \ctxtsum{\Gamma_1}{\Gamma_2}{}$, $\cbntype{\sigma} = \typeneutral =
  \sigma$ and $\Phi'$ is of the form \[
\Rule{
  \derivable{\Phi'_{t_1}}{\sequT{\cbntype{\Gamma_1}}{\assign{\cbnterm{t_1}}{\typeneutral}}{\cmult_1}{\cexp_1}{\csize_1}}{\SysTight}
  \quad
  \derivable{\Phi'_{u}}{\sequT{\cbntype{\Gamma_2}}{\assign{\termbang{\cbnterm{u}}}{\overline{\typeabs}}}{\cmult_2}{\cexp_2}{\csize_2}}{\SysTight}
}{
  \sequT{\ctxtsum{\cbntype{\Gamma_1}}{\cbntype{\Gamma_2}}{}}{\assign{\cbnterm{t}}{\typeneutral}}{\cmult_1+\cmult_2}{\cexp_1+\cexp_2}{\csize_1+\csize_2+1}
}{\ruleBTArrowE}
  \]
  where $\cmult = \cmult_1 + \cmult_2$, $\cexp = \cexp_1 + \cexp_2$ and $\csize
  = \csize_1 + \csize_2 + 1$. Moreover, since $\cbnterm{u}$ is typed with
  $\overline{\typeabs}$, then the only possible case is $\typevalue$, \ie
  $\cbntype{\Gamma_2} = \emptyset$ (and thus $\Gamma_2 = \emptyset$) and
  $\cmult_2 = \cexp_2 = \csize_2 = 0$. Then, by the \ih on $\Phi'_{t_1}$ we
  have
  $\derivable{\Phi_{t_1}}{\sequT{\Gamma_1}{\assign{t_1}{\typeneutral}}{\cmult_1}{\cexp_1}{\csize_1}}{\SysTightCBN}$.
  Finally, we derive $\Phi$ as follows \[
\Rule{
  \derivable{\Phi_{t_1}}{\sequT{\Gamma_1}{\assign{t_1}{\typeneutral}}{\cmult_1}{\cexp_1}{\csize_1}}{\SysTightCBN}
  }{
  \sequT{\Gamma_1}{\assign{t}{\typeneutral}}{\cmult_1}{\cexp_1}{\csize_1 +1}
}{\ruleNTArrowE}
  \] We conclude since $\Gamma = \Gamma_1$, $\cmult = \cmult_1$, $\cexp =
  \cexp_1$ and $\csize = \csize_1 + 1$.
  
  \item $\ruleBTArrowI$. Then $\cbnterm{t} = \termabs{x}{\cbnterm{u}}$ and $t
  = \termabs{x}{u}$ and since $\Phi'$ is \cbnrelevant, then it has a premise of
  the form
  $\derivable{}{\sequT{\cbntype{\Gamma};\assign{x}{\cbntype{\typetight_0}}}{\assign{\cbnterm{t}}{\cbntype{\typetight_1}}}{\cmult}{\cexp}{\csize-1}}{\SysTight}$.
  The \ih\ gives
  $\derivable{}{\sequT{\Gamma;\assign{x}{\typetight_0}}{\assign{t}{\typetight_1}}{\cmult}{\cexp}{\csize-1}}{\SysTightCBN}$.
  We thus conclude
  $\derivable{}{\sequT{\Gamma}{\assign{t}{\typeabs}}{\cmult}{\cexp}{\csize}}{\SysTightCBN}$
  by rule $\ruleNTArrowI$.
  
  \item $\ruleBTBang$. Then $\cbnterm{t} = \termbang{u}$, which is not
  possible by definition. Hence this case does not apply.

  \item $\ruleBTDer$. Then $\cbnterm{t} = \termder{(u)}$ which is not
  possible by definition. Hence, this case does not apply.
  
  \item $\ruleBTESubs$. Then $t = \termsubs{x}{u}{r}$ with $\cbntype{t} =
  \termsubs{x}{\cbnterm{u}}{\termbang{\cbnterm{r}}}$, $\Gamma =
  \ctxtsum{\Gamma_1}{\Gamma_2}{}$ and $\Phi'$ is of the form \[
\Rule{
  \derivable{\Phi'_{r}}{\sequT{\cbntype{\Gamma_1};\assign{x}{\M'}}{\assign{\cbnterm{r}}{\cbntype{\sigma}}}{\cmult_1}{\cexp_1}{\csize_1}}{\SysTight}
  \quad
  \derivable{\Phi'_{u}}{\sequT{\cbntype{\Gamma_2}}{\assign{\termbang{\cbnterm{u}}}{\cbntype{\typeneutral}}}{\cmult_2}{\cexp_2}{\csize_2}}{\SysTight}
}{
  \sequT{\ctxtsum{\cbntype{\Gamma_1}}{\cbntype{\Gamma_2}}{}}{\assign{\cbntype{t}}{\cbntype{\sigma}}}{\cmult_1+\cmult_2}{\cexp_1+\cexp_2}{\csize_1+\csize_2}
}{\ruleBTESubs}
  \] with $\ptight{\M'}$. This means in particular that
  $\termbang{\cbnterm{u}}$ is typed with $\typeneutral$ which is not possible
  by construction. So this case does not apply.
  
  \item $\ruleBDAxiom$. Then $\cbnterm{t} = x$ and $t = x$. We trivially
  conclude with rule $\ruleNDAxiom$. 
  
  \item $\ruleBDArrowE$. Then $t = \termapp{t_1}{u}$ with $\cbnterm{t} =
  \termapp{\cbnterm{t_1}}{\termbang{\cbnterm{u}}}$, and $\Gamma =
  \ctxtsum{\Gamma_1}{\Gamma_2}{}$. Since $\Phi'$ is \cbnrelevant, then it is of
  the form \[
\Rule{
  \derivable{\Phi'_{r}}{\sequT{\cbntype{\Gamma_1}}{\assign{\cbnterm{t_1}}{\functtype{\cbntype{\M}}{\cbntype{\sigma}}}}{\cmult_1}{\cexp_1}{\csize_1}}{\SysTight}
  \quad
  \derivable{\Phi'_{u}}{\sequT{\cbntype{\Gamma_2}}{\assign{\termbang{\cbnterm{u}}}{\cbntype{\M}}}{\cmult_2}{\cexp_2}{\csize_2}}{\SysTight}
}{
  \sequT{\ctxtsum{\Gamma_1}{\Gamma_2}{}}{\assign{\termapp{t_1}{\termbang{\cbnterm{u}}}}{\cbntype{\sigma}}}{\cmult_1+\cmult_2+1}{\cexp_1+\cexp_2}{\csize_1+\csize_2}
}{\ruleBDArrowE}
  \] Then, by the \ih on $\Phi'_{t_1}$ there exists a type derivation
  $\derivable{\Phi_{t_1}}{\sequT{\Gamma_1}{\assign{t_1}{\functtype{\M}{\sigma}}}{\cmult_1}{\cexp_1}{\csize_1}}{\SysTightCBN}$.
  On the other hand, $\Phi'_u$ necessarily commes from a rule $\ruleBDDer$, and
  thus there exist type derivations
  $\derivable{\Phi^i_{u}}{\sequT{\cbntype{\Gamma^i_2}}{\assign{\cbnterm{u}}{\cbntype{\sigma_i}}}{\cmult^i_2}{\cexp^i_2}{\csize^i_2}}{\SysTight}$
  with $\cbntype{\M} = \intertype{\cbntype{\sigma_i}}{i \in I}$,
  $\cbntype{\Gamma_2} = +_{\iI}{\Gamma^i_2}$, $\cmult_2 = +_{\iI}{\cmult^i_2}$,
  $\cexp_2 = +_{\iI}{\cexp^i_2}$ and $\csize_2 = +_{\iI}{\csize^i_2}$. We
  conclude by applying $\ruleNDArrowE$ to derive $\Phi$ as follows \[
\Rule{
  \derivable{\Phi_{t_1}}{\sequT{\Gamma_1}{\assign{t_1}{\multiset{\functtype{\M}{\sigma}}}}{\cmult_1}{\cexp_1+1}{\csize_1}}{\SysTight}
  \quad
  \many{\derivable{\Phi^i_{u}}{\sequT{\Gamma^i_2}{\assign{u}{\sigma_i}}{\cmult^i_2}{\cexp^i_2}{\csize^i_2}}{\SysTightCBN}}{\iI}
}{
  \sequT{\Gamma}{\assign{t}{\sigma}}{\cmult}{\cexp}{\csize}
}{\ruleNTArrowE} \]
  
  \item $\ruleBDApp$. This case is similar to the previous one, resorting to
  the fact that $\ptight{\cbntype{\M}}$ implies $\ptight{\M}$.
  
  \item $\ruleBDArrowI$. Then $\cbnterm{t} = \termabs{x}{\cbnterm{u}}$ and $t
  = \termabs{x}{u}$ and since $\Phi'$ is \cbnrelevant, then it has a premise of
  the form
  $\derivable{}{\sequT{\cbntype{\Gamma};\assign{x}{\cbntype{\M}}}{\assign{\cbnterm{t}}{\cbntype{\tau}}}{\cmult}{\cexp}{\csize}}{\SysTight}$.
  The \ih\ gives
  $\derivable{}{\sequT{\Gamma;\assign{x}{\M}}{\assign{t}{\tau}}{\cmult}{\cexp}{\csize}}{\SysTightCBN}$.
  We thus conclude
  $\derivable{}{\sequT{\Gamma}{\assign{t}{\functtype{\M}{\tau}}}{\cmult}{\cexp}{\csize}}{\SysTightCBN}$
  by rule $\ruleNDArrowI$.
  
  \item $\ruleBDBang$. Then $\cbnterm{t} = \termbang{u}$, which is not
  possible by definition. Hence this case does not apply.

  \item $\ruleBDDer$. Then $\cbnterm{t} = \termder{(t')}$ which is not
  possible by definition. Hence, this case does not apply.
  
  \item $\ruleBDESubs$. This case is similar to $\ruleBTESubs$.
\end{itemize}
\end{proof}


This result is illustrated by the tight type derivations in Ex.~\ref{ex:t0cbn}
and~\ref{ex:traducciones} for the terms $t_0$ and $t'_0$ resp. Moreover,
tightness is preserved by the translation of contexts and types, hence:

\begin{corollary}
If
$\derivable{\Phi}{\sequT{\Gamma}{\assign{t}{\sigma}}{\cmult}{\cexp}{\csize}}{\SysTightCBN}$
is tight, then there exists $\cfnrml{p}$ such that $\cbnterm{t}
\rewriten{\bangweak}^{(\cmult,\cexp)} p$ with $\cmult$ $\mStep$-steps, $\cexp$
$\eStep$-steps, and $\wsize{p} = \csize$. Conversely, if
$\derivable{{\Phi'}}{\sequT{\cbntype\Gamma}{\assign{\cbnterm{t}}{\cbntype\sigma}}{\cmult}{\cexp}{\csize}}{\SysTight}$
is tight and \cbnrelevant, then there exists  $p \in \CBNNF$ such that $t
\rewriten{\callbyname}^{(\cmult,\cexp)} p$ with $\cmult$ $\mStep$-steps, $\cexp$
$\eStep$-steps, and $\nsize{p} = \csize$.
\end{corollary}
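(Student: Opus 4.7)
The plan is to combine the translation theorem (Thm.~\ref{t:name:translation}) with the soundness directions of the two tight typing systems (Thm.~\ref{t:name:correctness-tight} for $\SysTightCBN$, and Thm.~\ref{t:soundness-completeness-bang}(\ref{t:correctness}) for $\SysTight$), using the fact that the type translation $\cbntype{\_}$ preserves tightness.

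For the forward direction, I would start from a tight derivation $\derivable{\Phi}{\sequT{\Gamma}{\assign{t}{\sigma}}{\cmult}{\cexp}{\csize}}{\SysTightCBN}$. Thm.~\ref{t:name:translation} yields a $\cbnrelevant$ derivation $\derivable{\Phi'}{\sequT{\cbntype{\Gamma}}{\assign{\cbnterm{t}}{\cbntype{\sigma}}}{\cmult}{\cexp}{\csize}}{\SysTight}$, with exactly the same triple of counters. A quick inspection of $\cbntype{\_}$ shows that $\cbntype{\typeneutral} = \typeneutral$ and $\cbntype{\typeabs} = \typeabs$, both of which belong to the tight types of $\SysTight$; hence $\cbntype{\Gamma}$ is tight and $\cbntype{\sigma}$ is tight. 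Applying Thm.~\ref{t:soundness-completeness-bang}(\ref{t:correctness}) to $\Phi'$ yields some $p$ with $\cfnrml{p}$, $\cbnterm{t} \rewriten{\bangweak}^{(\cmult,\cexp)} p$, and $\wsize{p} = \csize$, which is exactly the required conclusion.

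For the converse, I would start from a tight $\cbnrelevant$ derivation $\derivable{\Phi'}{\sequT{\cbntype{\Gamma}}{\assign{\cbnterm{t}}{\cbntype{\sigma}}}{\cmult}{\cexp}{\csize}}{\SysTight}$. The right-to-left half of Thm.~\ref{t:name:translation} produces a derivation $\derivable{\Phi}{\sequT{\Gamma}{\assign{t}{\sigma}}{\cmult}{\cexp}{\csize}}{\SysTightCBN}$ with the same counters. I must check that $\Phi$ is tight: since $\cbntype{\_}$ is injective on tight types (with the tight target types $\typeneutral,\typeabs,\typevalue$ having at most one preimage in the CBN type grammar, and $\typevalue$ never occurring in the image of $\cbntype{\_}$), the tightness of $\cbntype{\Gamma}$ and $\cbntype{\sigma}$ entails the tightness of $\Gamma$ and $\sigma$. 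Finally, Thm.~\ref{t:name:correctness-tight} applied to $\Phi$ yields $p \in \CBNNF$ with $t \rewriten{\callbyname}^{(\cmult,\cexp)} p$ and $\nsize{p} = \csize$, as required.

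The only slightly delicate point is the preservation of tightness by $\cbntype{\_}$ in both directions; this is immediate from the definitions, so there is no real obstacle. The result is essentially a diagram chase between the translation isomorphism of derivations and the two soundness theorems already established for the source and target systems.
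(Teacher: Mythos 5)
Your proposal is correct and follows exactly the route the paper intends (the corollary is stated as an immediate consequence of Thm.~\ref{t:name:translation} together with the soundness theorems for $\SysTightCBN$ and $\SysTight$, once one observes that $\cbntype{\_}$ preserves and reflects tightness because $\typevalue$ is not in its image). Nothing is missing.
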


This result shows that not only from the tight type system $\SysTightCBN$
it is possible to extract exact measures for the image of the
CBN in the $\BangRev$-calculus, but more interestingly, also that from the
tight type system $\SysTight$ for the $\BangRev$-calculus it is possible to
extract exact measures for CBN. In this sense, the goal of encoding tight
typing in a unified framework is achieved.

\parrafo{Call-by-Value}
In contrast with the CBN case, the set of type for system
$\SysTightCBV$ is not a subset of that for $\SysTight$, and we need to
properly translate types. To that end, we introduce two
mutually dependent translations $\cbvtypeneg{\_}$ and $\cbvtypepos{\_}$: \[
\begin{array}{c@{\qquad}c}
\begin{array}{rcll}
\cbvtypeneg{\typetight}                     & \eqdef & \typetight & \text{if $\typetight \neq \typevar$} \\
\cbvtypeneg{\typevar}                       & \eqdef & \typeneutral \\
\cbvtypeneg{(\functtype{\M}{\sigma})}       & \eqdef & \functtype{\cbvtypeneg{\M}}{\cbvtypepos{\sigma}} \\
\cbvtypeneg{(\intertype{\sigma}{i \in I})}  & \eqdef & \intertype{\cbvtypeneg{\sigma_i}}{i \in I}
\end{array}
&
\begin{array}{rcll}
\cbvtypepos{\typetight}                     & \eqdef & \typetight & \text{if $\typetight \neq \typevar$} \\
\cbvtypepos{\typevar}                       & \eqdef & \multiset{\typeneutral} \\
\cbvtypepos{(\functtype{\M}{\sigma})}       & \eqdef & \functtype{\cbvtypeneg{\M}}{\cbvtypepos{\sigma}} \\
\cbvtypepos{(\intertype{\sigma}{i \in I})}  & \eqdef & \intertype{\cbvtypeneg{\sigma_i}}{i \in I}
\end{array}
\end{array} \]
Remark that $\cbvtypeneg{\M}= \cbvtypepos{\M}$ for every multitype $\M$.
Translation $\cbvtypeneg{\_}$ for a context $\Gamma$ is defined as expected:
$\cbvtypeneg{\Gamma} = \set{\assign{x_i}{\cbvtypeneg{\M_i}}}_{i \in I}$. To
translate terms, we resort to the function $\cbvterm{\_}$ defined in
Sec.~\ref{s:cbn-cbv-embeddings}. We also restrict $\SysTight$ derivations to
those that come from the translation of some $\SysTightCBV$ derivation. Indeed,
a $\SysTight$ derivation $\Phi$ is \emphdef{\cbvrelevant} if:
\begin{inparaenum}[(1)]
  \item all the contexts and types involved in $\Phi$
  are in the image of the translations $\cbvtypeneg{\_}$
  and $\cbvtypepos{\_}$ respectively; and 
  \item rule $\ruleBDDer$ is only applied to terms having a type of the form
  $\intertype{\functtype{\M}{\tau}}{}$. 
\end{inparaenum}
To state the preservation of typing derivations between systems $\SysTightCBV$
and $\SysTight$ we define measures over type derivations in both systems to
conveniently capture the relationship between the exponential steps in the
source and the target derivation. The intuition is that we need to compensate
for those $\dBang$-redexes of the $\BangRev$-calculus that might be introduced
when translating. For all the typing rules with two premises, we write
$\Phi_{t}$ and $\Phi_{u}$ for the first and second premise respectively. The
first measure for system $\SysTightCBV$ is given by induction on $\Phi$ as
follows: 
\begin{inparaenum}[(1)]
  \item for $\ruleVTAxiomVar$, $\countvr{\Phi} \eqdef 1$;
  \item for $\ruleVTAxiom$, $\ruleVTArrowI$ and $\ruleVDAxiom$,
  $\countvr{\Phi} \eqdef 0$;
  \item for $\ruleVTArrowE$, $\countvr{\Phi} \eqdef \countvr{\Phi_{t}} +
  \countvr{\Phi_{u}} - 1$ if $\pval{t}$;
  \item for $\ruleVDArrowE$ and $\ruleVDApp$, $\countvr{\Phi} \eqdef
  \countvr{\Phi_{t}} + \countvr{\Phi_{u}} + 1$ if $\neg\pval{t}$; and
  \item in any other case $\countvr{\Phi}$ is defined as the sum of the
  recursive calls over all premises.
\end{inparaenum}
The second measure for system $\SysTight$ is also defined by induction on
$\Phi$ as follows:
\begin{inparaenum}[(1)]
  \item for $\ruleBTBang$ and $\ruleBDAxiom$, $\inversevr{\Phi} \eqdef 0$;
  \item for $\ruleBTArrowE$, $\inversevr{\Phi} \eqdef \inversevr{\Phi_{t}} +
  \inversevr{\Phi_{u}} - 1$ if $\pval{t}$;
  \item for $\ruleBDArrowE$ and $\ruleBDApp$, $\inversevr{\Phi} \eqdef
  \inversevr{\Phi_{t}} + \inversevr{\Phi_{u}} + 1$ if $\neg\pval{t}$; and
  \item in any other case $\inversevr{\Phi}$ is defined as the sum of the
  recursive calls over all premises.
\end{inparaenum}
Then, we obtain:

\begin{theorem}\mbox{}
\begin{enumerate}
  \item\label{t:translation:value:to-bang} If
  $\derivable{\Phi}{\sequT{\Gamma}{\assign{t}{\sigma}}{\cmult}{\cexp}{\csize}}{\SysTightCBV}$,
  then
  $\derivable{\Phi'}{\sequT{\cbvtypeneg{\Gamma}}{\assign{\cbvterm{t}}{\cbvtypepos{\sigma}}}{\cmult}{\cexp'}{\csize}}{\SysTight}$
  is \cbvrelevant\ with $\cexp' = \cexp + \countvr{\Phi}$.

  \item\label{t:translation:value:from-bang} If
  $\derivable{\Phi'}{\sequT{\cbvtypeneg{\Gamma}}{\assign{\cbvterm{t}}{\cbvtypepos{\sigma}}}{\cmult}{\cexp'}{\csize}}{\SysTight}$
  is \cbvrelevant, then
  $\derivable{\Phi}{\sequT{\Gamma}{\assign{t}{\sigma}}{\cmult}{\cexp}{\csize}}{\SysTightCBV}$
  with $\cexp = \cexp' - \inversevr{\Phi'}$.
\end{enumerate}
\label{t:translation:value}
\end{theorem}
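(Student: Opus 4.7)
The plan is to prove both directions by induction on the given derivation: direction (1) by induction on $\Phi$ in $\SysTightCBV$, and direction (2) by induction on $\Phi'$ in $\SysTight$. Both parallel the CBN translation theorem (Thm.~\ref{t:name:translation}), but with substantially more bookkeeping for the $\termbang{}$/$\termder{}$ manipulations intrinsic to the CBV embedding.

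For direction (1), I would traverse the rules of $\SysTightCBV$ and construct the corresponding $\SysTight$ derivation. The three variable axioms translate as follows: $\ruleVTAxiomVar$ maps to $\ruleBDBang$ over $\ruleBDAxiom$, yielding type $\multiset{\typeneutral}=\cbvtypepos{\typevar}$ with counter $(0,1,0)$, accounting for the $+1$ in $\countvr$; $\ruleVTAxiom$ maps to $\ruleBTBang$ (type $\typevalue$, no counter adjustment); $\ruleVDAxiom$ maps to $\ruleBDBang$ over $|\M|$ copies of $\ruleBDAxiom$, giving counter $(0,1,0)$ matching the CBV side. Abstractions follow the same pattern: $\ruleVTArrowI$ becomes $\ruleBTBang$ over $\ruleBDArrowI$, while $\ruleVDArrowI$ becomes $\ruleBDBang$ over several $\ruleBDArrowI$. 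The substitution rules $\ruleVTESubs$ and $\ruleVDESubs$ map directly to $\ruleBTESubs$ and $\ruleBDESubs$, since $\cbvterm{\termsubs{x}{u}{t}} = \termsubs{x}{\cbvterm{u}}{\cbvterm{t}}$.

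The application rules are the crux, requiring a case split on $\pval{t}$. When $\pval{t}$, we have $\cbvterm{t} = \ctxtapp{\ctxt{L}'}{\termbang{s}}$, and the IH produces $\Phi'_t$ whose innermost part beneath the outer list context is a $\ruleBDBang$ typing $\termbang{s}$ with the appropriate multitype. I peel off this $\ruleBDBang$ to extract a subderivation of $s$ carrying one fewer exponential step, then apply the matching bang rule ($\ruleBTArrowE$, $\ruleBDArrowE$ or $\ruleBDApp$) directly on $\ctxtapp{\ctxt{L}'}{s}$ rather than $\termder{(\cbvterm{t})}$; this peeling accounts for the $-1$ in $\countvr$ for the value-headed cases. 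When $\neg\pval{t}$, I insert $\ruleBDDer$ (when the CBV type is a multitype, as in $\ruleVDArrowE$/$\ruleVDApp$) or $\ruleBTDer$ (when the CBV type is $\typeneutral$, as in $\ruleVTArrowE$), observing that the $\ruleBDDer$ uses satisfy cbvrelevance condition (2) since $\cbvterm{t}$ carries type $\intertype{\functtype{\M'}{\tau'}}{}$; the $+1$ in $\countvr$ for the $\ruleVDArrowE$/$\ruleVDApp$ non-value cases compensates exactly, as a direct calculation shows. Note also that $\ruleVTArrowE$ with $\neg\pval{t}$ forces the left subterm to be typed with $\typeneutral$ (not $\typevar$), since any $\typevar$-typed term is necessarily $\pvar$ hence $\pval$, so that $\ruleBTDer$ (not $\ruleBDDer$) is the appropriate choice there and no counter is added, matching $\countvr = \countvr(\Phi_t)+\countvr(\Phi_u)$ in that subcase.

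Direction (2) proceeds by induction on $\Phi'$, inverting the analysis. The cbvrelevance conditions are essential here: condition (1) guarantees types are in the image of $\cbvtypeneg{\_}/\cbvtypepos{\_}$, ruling out, for instance, $\ruleBDBang$ derivations that could not arise from the translation of any CBV subterm; condition (2) restricts $\ruleBDDer$ occurrences to precisely those corresponding to the $\termder{\_}$ inserted by $\cbvterm{\_}$ at non-value application heads, which is what allows the inverse translation to recognise $\ruleBDDer$ as a marker of the non-value case. Each bang rule is then matched back to the CBV rule that produced it, with $\inversevr$ mirroring $\countvr$ case-by-case. The main obstacle throughout is the bookkeeping: carefully tracking when a $\ruleBDBang$ in the bang derivation corresponds to the outer bang introduced by the translation of a value (and should be peeled on encountering an application) versus being the bang of a subterm in argument position, and dually in the inverse direction, identifying whether an application rule in $\SysTight$ came from a value-headed application of CBV (no $\ruleBDDer$ in $\Phi'$ before the app) or a non-value-headed one (a $\ruleBDDer$ immediately preceding). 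Commuting these identifications through the list context $\ctxt{L}'$ produced by the translation of ES-wrapped values requires an auxiliary observation that the substitution layer of $\cbvterm{t}$ matches that of $t$, so the peeling target $\ruleBDBang$ is unambiguous.
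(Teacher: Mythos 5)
Your proposal follows essentially the same route as the paper: induction on the source derivation in each direction, with the application cases split on $\pval{t}$ so that either the outer $\ruleBDBang$ of the translated function part is peeled off (accounting for the $-1$ in $\countvr$) or a $\ruleBTDer$/$\ruleBDDer$ is inserted (accounting for the $+1$ in the consuming non-value case), the peeling being pushed through the list context by an auxiliary induction on $\ctxt{L}$, and with $\inversevr$ mirroring this bookkeeping in the converse direction via the two \cbvrelevant{} conditions. The only cosmetic slip is describing $\ruleVTArrowI$ as translating to ``$\ruleBTBang$ over $\ruleBDArrowI$'': rule $\ruleBTBang$ is premise-free, so $\termbang{\termabs{x}{\cbvterm{r}}}$ is typed by $\ruleBTBang$ alone, exactly as in the paper.
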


\begin{proof}
\begin{enumerate}
  \item By induction on $\Phi$ by analysing the last rule applied.
\begin{itemize}
  \item $\ruleVTAxiomVar$. Then
  $\derivable{\Phi}{\sequT{\assign{x}{\multiset{\typevar}}}{\assign{x}{\typevar}}{0}{0}{0}}{\SysTightCBV}$
  and $\cbvterm{t} = \termbang{x}$. We conclude by applying $\ruleBDAxiom$ and
  $\ruleBDBang$, obtaining
  $\derivable{\Phi'}{\sequT{\assign{x}{\multiset{\typeneutral}}}{\assign{\termbang{x}}{\multiset{\typeneutral}}}{0}{1}{0}}{\SysTight}$,
  which is trivially \cbvrelevant. The counters are as expected since
  $\countvr{\Phi} = 1$.
  
  \item $\ruleVTAxiom$. Then
  $\derivable{\Phi}{\sequT{}{\assign{x}{\typevalue}}{0}{0}{0}}{\SysTightCBV}$
  and $\cbvterm{t} = \termbang{x}$. We conclude by $\ruleBTBang$ with
  $\derivable{\Phi'}{\sequT{}{\assign{\termbang{x}}{\typevalue}}{0}{0}{0}}{\SysTight}$,
  which is trivially \cbvrelevant. Note that the counters are as expected since
  $\countvr{\Phi} = 0$.
  
  \item $\ruleVTArrowI$. Then
  $\derivable{\Phi}{\sequT{}{\assign{\termabs{x}{r}}{\typevalue}}{0}{0}{0}}{\SysTightCBV}$
  and $\cbvterm{t} = \termbang{\termabs{x}{\cbvterm{r}}}$. We conclude by
  $\ruleBTBang$ with
  $\derivable{\Phi'}{\sequT{}{\assign{\termbang{\termabs{x}{\cbvterm{r}}}}{\typevalue}}{0}{0}{0}}{\SysTight}$,
  which is trivially \cbvrelevant. Note that the counters are as expected since
  $\countvr{\Phi} = 0$.
  
  \item $\ruleVTArrowE$. Then $t = \termapp{r}{u}$, $\sigma = \typeneutral$
  and $\Phi$ is of the form \[
\Rule{
  \derivable{\Phi_{r}}{\sequT{\Gamma_1}{\assign{r}{\overline{\typevalue}}}{\cmult_1}{\cexp_1}{\csize_1}}{\SysTightCBV}
  \quad
  \derivable{\Phi_{u}}{\sequT{\Gamma_2}{\assign{u}{\overline{\typevar}}}{\cmult_2}{\cexp_2}{\csize_2}}{\SysTightCBV}
}{
  \sequT{\ctxtsum{\Gamma_1}{\Gamma_2}{}}{\assign{\termapp{r}{u}}{\typeneutral}}{\cmult_1+\cmult_2}{\cexp_1+\cexp_2}{\csize_1+\csize_2+1}
}{\ruleVTArrowE}
  \] where $\cexp= \cexp_1+\cexp_2$. By the \ih we have \cbvrelevant\
  derivations
  $\derivable{\Phi'_{r}}{\sequT{\cbvtypeneg{\Gamma_1}}{\assign{\cbvterm{r}}{\cbvtypepos{\overline{\typevalue}}}}{\cmult_1}{\cexp'_1}{\csize_1}}{\SysTight}$
  and
  $\derivable{\Phi'_{u}}{\sequT{\cbvtypeneg{\Gamma_2}}{\assign{\cbvterm{u}}{\cbvtypepos{\overline{\typevar}}}}{\cmult_2}{\cexp'_2}{\csize_2}}{\SysTight}$
  with $\cexp'_1 = \cexp_1 + \countvr{\Phi_{r}}$ and $\cexp'_2 = \cexp_2 +
  \countvr{\Phi_{u}}$. Moreover, since $u$ is typed with $\overline{\typevar}$,
  then it is typed with $\typeneutral$ or $\typevalue$, and thus
  $\cbvtypepos{\overline{\typevar}} = \overline{\typeabs}$. There are three
  possible cases for $r$:
  \begin{enumerate}
  \item $r = \ctxtapp{\ctxt{L}}{x}$. Then $\overline{\typevalue} = \typevar$
    with $\cbvtypepos{\overline{\typevalue}} = \multiset{\typeneutral}$.
    Moreover, by definition $\cbvterm{r} =
    \ctxtapp{\cbvterm{\ctxt{L}}}{\termbang{x}}$. Thus, we have
    $\derivable{\Phi'_{r}}{\sequT{\cbvtypeneg{\Gamma_1}}{\assign{\ctxtapp{\cbvterm{\ctxt{L}}}{\termbang{x}}}{\multiset{\typeneutral}}}{\cmult_1}{\cexp'_1}{\csize_1}}{\SysTight}$.
    With a simple induction on $\ctxt{L}$ we can show that there exists a
    \cbvrelevant\ type derivation
    $\derivable{\Phi''}{\sequT{\cbvtypeneg{\Gamma_1}}{\assign{\ctxtapp{\cbvterm{\ctxt{L}}}{x}}{\typeneutral}}{\cmult_1}{\cexp'_1-1}{\csize_1}}{\SysTight}$,
    that results from removing the occurrence of rule $\ruleBDBang$ in the base
    case. Finally, we conclude by applying $\ruleBTArrowE$ to derive a
    \cbvrelevant\ derivation $\Phi'$ as follows \[
\Rule{
  \derivable{\Phi''}{\sequT{\cbvtypeneg{\Gamma_1}}{\assign{\ctxtapp{\cbvterm{\ctxt{L}}}{x}}{\typeneutral}}{\cmult_1}{\cexp'_1-1}{\csize_1}}{\SysTight}
  \quad
  \derivable{\Phi'_{u}}{\sequT{\cbvtypeneg{\Gamma_2}}{\assign{\cbvterm{u}}{\overline{\typeabs}}}{\cmult_2}{\cexp'_2}{\csize_2}}{\SysTight}
}{
  \sequT{\cbvtypeneg{\Gamma}}{\assign{\cbvterm{t}}{\typeneutral}}{\cmult}{\cexp'}{\csize}
}{\ruleBTArrowE}
    \] where $\cexp' = \cexp'_1 - 1 + \cexp'_2 = \cexp_1 + \countvr{\Phi_{r}} -
    1 + \cexp_2 + \countvr{\Phi_{u}} = \cexp + \countvr{\Phi}$ as expected.
    \item $r = \ctxtapp{\ctxt{L}}{\termapp{r'}{r''}}$. Then, we have
    $\derivable{\Phi'_{r}}{\sequT{\cbvtypeneg{\Gamma_1}}{\assign{\cbvterm{r}}{\typeneutral}}{\cmult_1}{\cexp'_1}{\csize_1}}{\SysTight}$.
    We conclude by applying $\ruleBTDer$ and
    $\ruleBTArrowE$ to derive $\Phi'$ as follows \[
\Rule{
  \Rule{
    \derivable{\Phi'_{r}}{\sequT{\cbvtypeneg{\Gamma_1}}{\assign{\cbvterm{r}}{\typeneutral}}{\cmult_1}{\cexp'_1}{\csize_1}}{\SysTight}
  }{
    \sequT{\cbvtypeneg{\Gamma_1}}{\assign{\termder{(\cbvterm{r})}}{\typeneutral}}{\cmult_1}{\cexp'_1}{\csize_1}
  }{\ruleBTDer}
  \derivable{\Phi'_{u}}{\sequT{\cbvtypeneg{\Gamma_2}}{\assign{\cbvterm{u}}{\overline{\typeabs}}}{\cmult_2}{\cexp'_2}{\csize_2}}{\SysTight}
}{
  \sequT{\cbvtypeneg{\Gamma}}{\assign{\cbvterm{t}}{\typeneutral}}{\cmult}{\cexp'}{\csize}
}{\ruleBTArrowE}
    \] where $\cexp' = \cexp'_1 + \cexp'_2$. Hence, $\cexp' = \cexp'_1 +
    \cexp'_2 = \cexp_1 + \countvr{\Phi_{r}} + \cexp_2 + \countvr{\Phi_{u}} =
    \cexp + \countvr{\Phi}$ as expected. Notice that $\Phi'$ is \cbvrelevant\
    by the \ih and the fact that no $\ruleBDDer$ rule was added in the final
    step.

    \item $r = \ctxtapp{\ctxt{L}}{\termabs{x}{r'}}$. Then, it is necessarily
    the case that $\Phi$ has a \cbvrelevant\ subderivation for a judgement of
    the form
    $\sequT{\Gamma''}{\assign{\termabs{x}{r'}}{\overline{\typevalue}}}{\cmult''}{\cexp''}{\csize''}$
    for some appropriate $\Gamma''$, $\cmult''$, $\cexp''$ and $\csize''$.
    Since $\overline{\typevalue}$ is $\typevar$ or $\typeneutral$, this
    leads to a contradiction because abstractions cannot be assigned type
    $\typevar$ or $\typeneutral$ in system $\SysTightCBV$. Hence, this case
    does not apply.
  \end{enumerate}
  
  \item $\ruleVTESubs$. Then $t = \termsubs{x}{u}{r}$ and $\Phi$ is of the
  form \[
\Rule{
  \derivable{\Phi_{r}}{\sequT{\Gamma_1}{\assign{r}{\sigma}}{\cmult_1}{\cexp_1}{\csize_1}}{\SysTightCBV}
  \quad
  \derivable{\Phi_{u}}{\sequT{\Gamma_2}{\assign{u}{\typeneutral}}{\cmult_2}{\cexp_2}{\csize_2}}{\SysTightCBV}
}{
  \sequT{\ctxtsum{\Gamma_1}{\Gamma_2}{}}{\assign{\termsubs{x}{u}{r}}{\sigma}}{\cmult_1+\cmult_2}{\cexp_1+\cexp_2}{\csize_1+\csize_2}
}{\ruleVTESubs}
  \] with $\ptight{\Gamma_1(x)}$ and $\cexp= \cexp_1+\cexp_2$. By the \ih on
  $\Phi_{u}$ there exists a \cbvrelevant\ derivation
  $\derivable{\Phi'_{u}}{\sequT{\cbvtypeneg{\Gamma_2}}{\assign{\cbvterm{u}}{\typeneutral}}{\cmult_2}{\cexp'_2}{\csize_2}}{\SysTight}$
  with $\cexp'_2 = \cexp_2 + \countvr{\Phi_{u}}$. By the \ih again there exists
  a \cbvrelevant\ derivation
  $\derivable{\Phi'_{r}}{\sequT{\cbvtypeneg{\Gamma_1}}{\assign{\cbvterm{r}}{\cbvtypepos{\sigma}}}{\cmult_1}{\cexp'_1}{\csize_1}}{\SysTight}$
  with $\cexp'_1 = \cexp_1 + \countvr{\Phi_{r}}$. Thus, we conclude by applying
  $\ruleBTESubs$ with $\Phi'_{r}$, $\Phi'_{u}$ and $\ptight{\cbvtypeneg{\Gamma_1}(x)}$,
  hence obtaining the \cbvrelevant\ derivation
  $\derivable{\Phi'}{\sequT{\cbvtypeneg{\Gamma}}{\assign{\cbvterm{t}}{\cbvtypepos{\sigma}}}{\cmult}{\cexp'}{\csize}}{\SysTight}$
  where $\cexp' = \cexp'_1 + \cexp'_2 = \cexp_1 + \countvr{\Phi_{r}} + \cexp_2
  + \countvr{\Phi_{u}} = \cexp + \countvr{\Phi}$ as expected.
  
  \item $\ruleVDAxiom$. Then
  $\derivable{\Phi}{\sequT{\assign{x}{\M}}{\assign{x}{\M}}{0}{1}{0}}{\SysTightCBV}$
  with $\M = \intertype{\sigma_i}{i \in I}$. We conclude by $\ruleBDAxiom$ and
  $\ruleBDBang$, deriving a \cbvrelevant\ derivation $\Phi'$ as follows \[
\Rule{
  \left(\hspace{-1em}\raisebox{-1em}{
    \Rule{}{
      \sequT{\assign{x}{\multiset{\cbvtypeneg{\sigma_i}}}}{\assign{x}{\cbvtypeneg{\sigma_i}}}{0}{0}{0}
    }{\ruleBDAxiom}
  }\hspace{-0.5em}\right)_{i \in I}
}{
  \sequT{\assign{x}{\cbvtypeneg{\M}}}{\assign{\termbang{x}}{\cbvtypeneg{\M}}}{0}{1}{0}
}{\ruleBDBang}
  \] Note that the counters are as expected since $\countvr{\Phi} = 0$. We
  conclude since $\cbvtypeneg{\M}= \cbvtypepos{\M}$. 
  
  \item $\ruleVDArrowE$. Then $t = \termapp{r}{u}$ and $\Phi$ is of the form \[
\Rule{
  \derivable{\Phi_{r}}{\sequT{\Gamma_1}{\assign{r}{\multiset{\functtype{\M}{\sigma}}}}{\cmult_1}{\cexp_1}{\csize_1}}{\SysTightCBV}
  \quad
  \derivable{\Phi_{u}}{\sequT{\Gamma_2}{\assign{u}{\M}}{\cmult_2}{\cexp_2}{\csize_2}}{\SysTightCBV}
}{
  \sequT{\ctxtsum{\Gamma_1}{\Gamma_2}{}}{\assign{\termapp{r}{u}}{\sigma}}{\cmult_1+\cmult_2+1}{\cexp_1+\cexp_2-1}{\csize_1+\csize_2}
}{\ruleVDArrowE}
  \] where $\cmult = \cmult_1 + \cmult_2 + 1$, $\cexp = \cexp_1 + \cexp_2 - 1$,
  and $\csize = \csize_1 + \csize_2$. By the \ih on $\Phi_{u}$ there exists a
  \cbvrelevant\ derivation
  $\derivable{\Phi'_{u}}{\sequT{\cbvtypeneg{\Gamma_2}}{\assign{\cbvterm{u}}{\cbvtypepos{\M}}}{\cmult_2}{\cexp'_2}{\csize_2}}{\SysTight}$
  with $\cexp'_2 = \cexp_2 + \countvr{\Phi_{u}}$. And by the \ih on $\Phi_{r}$
  there exists a \cbvrelevant\ derivation
  $\derivable{\Phi'_{r}}{\sequT{\cbvtypeneg{\Gamma_1}}{\assign{\cbvterm{r}}{\multiset{\functtype{\cbvtypeneg{\M}}{\cbvtypepos{\sigma}}}}}{\cmult_1}{\cexp'_1}{\csize_1}}{\SysTight}$
  with $\cexp'_1 = \cexp_1 + \countvr{\Phi_{r}}$. We should consider two cases
  based on the shape of $r$:
  \begin{enumerate}
    \item $r = \ctxtapp{\ctxt{L}}{v}$. Then $\cbvterm{r} =
    \ctxtapp{\cbvterm{\ctxt{L}}}{\termbang{r'}}$ and $\cbvterm{t} =
    \termapp{\ctxtapp{\cbvterm{\ctxt{L}}}{r'}}{\cbvterm{u}}$. With a simple
    induction on $\ctxt{L}$ we can show from $\Phi'_{r}$ that there exists a
    \cbvrelevant\ type derivation
    $\derivable{\Phi''}{\sequT{\cbvtypeneg{\Gamma_1}}{\assign{\ctxtapp{\cbvterm{\ctxt{L}}}{r'}}{\functtype{\cbvtypeneg{\M}}{\cbvtypepos{\sigma}}}}{\cmult_1}{\cexp'_1-1}{\csize_1}}{\SysTight}$,
    that results from removing the occurrence of rule $\ruleBDBang$ in the base
    case. Finally, since $\cbvtypeneg{\M} = \cbvtypepos{\M}$, we conclude by
    applying $\ruleBDArrowE$ with $\Phi''$ and $\Phi_{u}$ to derive a
    \cbvrelevant\ derivation
    $\derivable{\Phi'}{\sequT{\cbvtypeneg{\Gamma}}{\assign{\cbvterm{t}}{\cbvtypepos{\sigma}}}{\cmult}{\cexp'}{\csize}}{\SysTight}$
    where $\cexp' = \cexp'_1 - 1 + \cexp'_2 = \cexp_1 + \countvr{\Phi_{r}} - 1
    + \cexp_2 + \countvr{\Phi_{u}} = \cexp + \countvr{\Phi}$ as expected.
    
    \item $r \neq \ctxtapp{\ctxt{L}}{v}$. Then $\cbvterm{t} =
    \termapp{\termder{(\cbvterm{r})}}{\cbvterm{u}}$. Since $\cbvtypeneg{\M} = \cbvtypepos{\M}$,
    we conclude by deriving $\Phi'$ as follows \[
\Rule{
  \Rule{
    \derivable{\Phi'_{r}}{\sequT{\cbvtypeneg{\Gamma_1}}{\assign{\cbvterm{r}}{\multiset{\functtype{\cbvtypeneg{\M}}{\cbvtypepos{\sigma}}}}}{\cmult_1}{\cexp'_1}{\csize_1}}{\SysTight}
  }{
    \sequT{\cbvtypeneg{\Gamma_1}}{\assign{\termder{(\cbvterm{r})}}{\functtype{\cbvtypeneg{\M}}{\cbvtypepos{\sigma}}}}{\cmult_1}{\cexp'_1}{\csize_1}
  }{\ruleBDDer}
  \derivable{\Phi'_{u}}{\sequT{\cbvtypeneg{\Gamma_2}}{\assign{\cbvterm{u}}{\cbvtypepos{\M}}}{\cmult_2}{\cexp'_2}{\csize_2}}{\SysTight}
}{
  \sequT{\cbvtypeneg{\Gamma}}{\assign{\cbvterm{t}}{\cbvtypepos{\sigma}}}{\cmult}{\cexp'}{\csize}
}{\ruleBDArrowE}
    \] where $\cexp' = \cexp'_1 + \cexp'_2 = \cexp_1 + \countvr{\Phi_{r}} +
    \cexp_2 + \countvr{\Phi_{u}} = \cexp + 1 + \countvr{\Phi_{r}} +
    \countvr{\Phi_{u}} = \cexp + \countvr{\Phi}$ as expected. Notice that the
    resulting derivation $\Phi'$ is \cbvrelevant since the added $\ruleBDDer$
    rule acts on a multiset of the required form.
  \end{enumerate}
  
  \item $\ruleVDApp$. This case is similar to the previous one.
  
  \item $\ruleVDArrowI$. Then $t = \termabs{x}{r}$ and $\Phi$ is of the form \[
\Rule{
  \many{
    \derivable{\Phi_i}{\sequT{\Gamma_i}{\assign{r}{\tau_i}}{\cmult_i}{\cexp_i}{\csize_i}}{\SysTightCBV}
  }{i \in I}
}{
  \sequT{\ctxtsum{}{\ctxtres{\Gamma_i}{x}{}}{i \in I}}{\assign{\termabs{x}{r}}{\intertype{\functtype{\Gamma_i(x)}{\tau_i}}{i \in I}}}{+_{i \in I}{\cmult_i}}{1+_{i \in I}{\cexp_i}}{+_{i \in I}{\csize_i}}
}{\ruleVDArrowI}
  \] where $\cexp = 1 +_{i \in I}{\cexp_i}$. By the \ih there exists a
  derivation
  $\derivable{\Phi'_i}{\sequT{\cbvtypeneg{\Gamma_i}}{\assign{\cbvterm{r}}{\cbvtypepos{\tau_i}}}{\cmult_i}{\cexp'_i}{\csize_i}}{\SysTight}$,
  which is \cbvrelevant, where $\cexp'_i = \cexp_i + \countvr{\Phi_{i}}$ for
  each $i \in I$. We then conclude by applying $\ruleBDArrowI$ and
  $\ruleBDBang$, hence deriving a \cbvrelevant\ derivation $\Phi'$ as
  follows \[
\Rule{
  \left(\hspace{-1em}\raisebox{-1em}{
    \Rule{
      \derivable{\Phi'_i}{\sequT{\cbvtypeneg{\Gamma_i}}{\assign{\cbvterm{r}}{\cbvtypepos{\tau_i}}}{\cmult_i}{\cexp'_i}{\csize_i}}{\SysTight}
    }{
      \sequT{\ctxtres{\cbvtypeneg{\Gamma_i}}{x}{}}{\assign{\termabs{x}{\cbvterm{r}}}{\functtype{\cbvtypeneg{\Gamma_i(x)}}{\cbvtypepos{\tau_i}}}}{\cmult_i}{\cexp'_i}{\csize_i}
    }{\ruleBDArrowI}
  }\hspace{-0.5em}\right)_{i \in I}
    }{
  \sequT{\ctxtsum{}{\ctxtres{\cbvtypeneg{\Gamma_i}}{x}{}}{i \in I}}{\assign{\cbvterm{t}}{\cbvtypepos{\intertype{\functtype{\Gamma_i(x)}{\tau_i}}{i \in I}}}}{+_{i \in I}{\cmult_i}}{1+_{i \in I}{\cexp'_i}}{+_{i \in I}{\csize_i}}
}{\ruleBDBang}
  \] taking $\cexp' = 1 +_{i \in I}{\cexp'_i} = 1
  +_{i \in I}{(\cexp_i + \countvr{\Phi_{i}})} = \cexp + \countvr{\Phi}$ as
  expected.
  
  \item $\ruleVDESubs$. This case is similar to $\ruleVTESubs$.
\end{itemize}
\end{enumerate}

\begin{enumerate}
  \item By induction on $\Phi'$ analysing the last rule applied.
\begin{itemize}
  \item $\ruleBTArrowE$. Then $t = \termapp{t_1}{u}$ with $\cbvterm{t} =
  \termapp{r}{\cbvterm{u}}$ for some term $r$, $\Gamma =
  \ctxtsum{\Gamma_1}{\Gamma_2}{}$, $\cbvtypepos{\sigma} = \typeneutral =
  \sigma$ and $\Phi'$ is of the form \[
\Rule{
  \derivable{\Phi'_{r}}{\sequT{\cbvtypeneg{\Gamma_1}}{\assign{r}{\typeneutral}}{\cmult_1}{\cexp'_1}{\csize_1}}{\SysTight}
  \quad
  \derivable{\Phi'_{u}}{\sequT{\cbvtypeneg{\Gamma_2}}{\assign{\cbvterm{u}}{\overline{\typeabs}}}{\cmult_2}{\cexp'_2}{\csize_2}}{\SysTight}
}{
  \sequT{\ctxtsum{\cbvtypeneg{\Gamma_1}}{\cbvtypeneg{\Gamma_2}}{}}{\assign{\cbvterm{t}}{\typeneutral}}{\cmult_1+\cmult_2}{\cexp'_1+\cexp'_2}{\csize_1+\csize_2+1}
}{\ruleBTArrowE}
  \] where $\cexp' = \cexp'_1 + \cexp'_2$. Moreover, since $\cbvterm{u}$ is
  typed with $\overline{\typeabs}$, then it is typed with $\typeneutral$ or
  $\typevalue$, \ie $\cbvtypepos{\overline{\typevar}}$. Then, by the \ih on
  $\Phi'_{u}$ we have
  $\derivable{\Phi_{u}}{\sequT{\Gamma_2}{\assign{u}{\overline{\typevar}}}{\cmult_2}{\cexp_2}{\csize_2}}{\SysTightCBV}$
  with $\cexp_2 = \cexp'_2 - \inversevr{\Phi'_{u}}$. There are two possible
  cases for $\cbvterm{t_1}$ and, hence, for $r$:
  \begin{enumerate}
    \item $\cbvterm{t_1} = \ctxtapp{\cbvterm{\ctxt{L}}}{\termbang{r'}}$ (\ie
    $\pval{t_1}$). Then, $r = \ctxtapp{\cbvterm{\ctxt{L}}}{r'}$. Moreover, by a
    simple induction on $\ctxt{L}$, from $\Phi'_{r}$ we get the \cbvrelevant\
    type derivation
    $\derivable{\Phi'_{t_1}}{\sequT{\cbvtypeneg{\Gamma_1}}{\assign{\ctxtapp{\cbvterm{\ctxt{L}}}{\termbang{r'}}}{\multiset{\typeneutral}}}{\cmult_1}{\cexp'_1+1}{\csize_1}}{\SysTight}$
    such that $\inversevr{\Phi'_{t_1}} = \inversevr{\Phi'_{r}}$. Note that
    $\multiset{\typeneutral} = \cbvtypepos{\typevar}$. Then, by the \ih we get
    $\derivable{\Phi_{t_1}}{\sequT{\Gamma_1}{\assign{t_1}{\typevar}}{\cmult_1}{\cexp_1+1}{\csize_1}}{\SysTightCBV}$
    with $\cexp_1 = \cexp'_1 - \inversevr{\Phi'_{t_1}} = \cexp'_1 -
    \inversevr{\Phi'_{r}}$. Finally, we conclude by applying $\ruleVTArrowE$ to
    derive $\Phi$ as follows \[
\Rule{
  \derivable{\Phi_{t_1}}{\sequT{\Gamma_1}{\assign{t_1}{\typevar}}{\cmult_1}{\cexp_1+1}{\csize_1}}{\SysTightCBV}
  \quad
  \derivable{\Phi_{u}}{\sequT{\Gamma_2}{\assign{u}{\overline{\typevar}}}{\cmult_2}{\cexp_2}{\csize_2}}{\SysTightCBV}
}{
  \sequT{\Gamma}{\assign{t}{\typeneutral}}{\cmult}{\cexp}{\csize}
}{\ruleVTArrowE}
    \] where $\cexp = \cexp_1 + 1 + \cexp_2 = \cexp'_1 -
    \inversevr{\Phi'_{r}} + 1 + \cexp'_2 - \inversevr{\Phi'_{u}} = \cexp'_1 +
    \cexp'_2 - (\inversevr{\Phi'_{r}} + \inversevr{\Phi'_{u}} - 1) =
    \cexp' - \inversevr{\Phi'}$ as expected.

    \item $\cbvterm{t_1} \neq \ctxtapp{\ctxt{L}}{\termbang{r'}}$ (\ie
    $\neg\pval{t_1}$). Then, $r = \termder{(\cbvterm{t_1})}$. Since $\Phi'$ is
    \cbvrelevant, it is necessarily the case where $\Phi'_{r}$ comes from an
    application of rule $\ruleBTDer$. Moreover, $\typeneutral =
    \cbvtypepos{\typeneutral}$, hence we have a \cbvrelevant\ type derivation
    $\derivable{\Phi'_{t_1}}{\sequT{\cbvtypeneg{\Gamma_1}}{\assign{\cbvterm{t_1}}{\cbvtypepos{\typeneutral}}}{\cmult_1}{\cexp'_1}{\csize_1}}{\SysTight}$
    satisfying the hypothesis of the theorem. By the \ih we get a type
    derivation
    $\derivable{\Phi_{t_1}}{\sequT{\Gamma_1}{\assign{t_1}{\typeneutral}}{\cmult_1}{\cexp_1}{\csize_1}}{\SysTightCBV}$
    with $\cexp_1 = \cexp'_1 - \inversevr{\Phi'_{t_1}} = \cexp'_1 -
    \inversevr{\Phi'_{r}}$. We conclude by applying $\ruleVTArrowE$ to derive
    $\Phi$ \[
\Rule{
  \derivable{\Phi_{t_1}}{\sequT{\Gamma_1}{\assign{t_1}{\typeneutral}}{\cmult_1}{\cexp_1}{\csize_1}}{\SysTightCBV}
  \quad
  \derivable{\Phi_{u}}{\sequT{\Gamma_2}{\assign{u}{\overline{\typevar}}}{\cmult_2}{\cexp_2}{\csize_2}}{\SysTightCBV}
}{
  \sequT{\Gamma}{\assign{t}{\typeneutral}}{\cmult}{\cexp}{\csize}
}{\ruleVTArrowE}
    \] where $\cexp = \cexp_1 + \cexp_2 = \cexp'_1 - \inversevr{\Phi'_{r}} +
    \cexp'_2 - \inversevr{\Phi'_{u}} = \cexp'_1 + \cexp'_2 -
    (\inversevr{\Phi'_{r}} + \inversevr{\Phi'_{u}}) = \cexp' -
    \inversevr{\Phi'}$ as expected.
  \end{enumerate}
  
  \item $\ruleBTArrowI$. Then $\cbvterm{t} = \termabs{x}{t'}$ which is not
  possible by definition. Hence, this case does not apply.
  
  \item $\ruleBTBang$. Then $\cbvterm{t} = \termbang{t'}$, $\cbvtypepos{\sigma}
  = \typevalue = \sigma$ and
  $\derivable{\Phi'}{\sequT{}{\assign{\termbang{t'}}{\typevalue}}{0}{0}{0}}{\SysTight}$.
  There are two possible cases: either 
  \begin{inparaenum}[(1)]
    \item $t = x = t'$; or
    \item $t = \termabs{x}{u}$ with $t' = \termabs{x}{\cbvterm{u}}$.
  \end{inparaenum}
  We resort to $\ruleVTAxiom$ or $\ruleVTArrowI$ resp. to conclude with
  $\derivable{\Phi}{\sequT{}{\assign{t}{\typevalue}}{0}{0}{0}}{\SysTightCBV}$.
  Note that the counters are as expected since $\inversevr{\Phi'} = 0$.

  \item $\ruleBTDer$. Then $\cbvterm{t} = \termder{(t')}$ which is not
  possible by definition. Hence, this case does not apply.
  
  \item $\ruleBTESubs$. Then $t = \termsubs{x}{u}{r}$ with $\cbvtypepos{t} =
  \termsubs{x}{\cbvterm{u}}{\cbvterm{r}}$, $\Gamma =
  \ctxtsum{\Gamma_1}{\Gamma_2}{}$ and $\Phi'$ is of the form \[
\Rule{
  \derivable{\Phi'_{r}}{\sequT{\cbvtypeneg{\Gamma_1};\assign{x}{\M'}}{\assign{\cbvterm{r}}{\cbvtypepos{\sigma}}}{\cmult_1}{\cexp'_1}{\csize_1}}{\SysTight}
  \quad
  \derivable{\Phi'_{u}}{\sequT{\cbvtypeneg{\Gamma_2}}{\assign{\cbvterm{u}}{\cbvtypepos{\typeneutral}}}{\cmult_2}{\cexp'_2}{\csize_2}}{\SysTight}
}{
  \sequT{\ctxtsum{\cbvtypeneg{\Gamma_1}}{\cbvtypeneg{\Gamma_2}}{}}{\assign{\cbvtypepos{t}}{\cbvtypepos{\sigma}}}{\cmult_1+\cmult_2}{\cexp'_1+\cexp'_2}{\csize_1+\csize_2}
}{\ruleBTESubs}
  \] with $\ptight{\M'}$ and $\cexp' = \cexp'_1 + \cexp'_2$. Moreover, since
  $\Phi'$ is \cbvrelevant, $\M'$ is of the form $\cbvtypeneg{\M}$. Then, by the
  \ih there exists type derivations
  $\derivable{\Phi_{r}}{\sequT{\Gamma_1;\assign{x}{\M}}{\assign{r}{\sigma}}{\cmult_1}{\cexp_1}{\csize_1}}{\SysTightCBV}$
  and
  $\derivable{\Phi_{u}}{\sequT{\Gamma_2}{\assign{u}{\typeneutral}}{\cmult_2}{\cexp_2}{\csize_2}}{\SysTightCBV}$
  with $\cexp_1 = \cexp'_1 - \inversevr{\Phi'_{r}}$ and $\cexp_2 = \cexp'_2 -
  \inversevr{\Phi'_{u}}$ resp. Finally, since $\ptight{\cbvtypeneg{\M}}$ implies
  $\ptight{\M}$, we conclude by applying rule $\ruleVTESubs$ with $\Phi_{r}$,
  $\Phi_{u}$ obtaining the type derivation
  $\derivable{\Phi}{\sequT{\Gamma}{\assign{t}{\sigma}}{\cmult}{\cexp}{\csize}}{\SysTightCBV}$
  where $\cexp = \cexp_1 + \cexp_2 = \cexp'_1 - \inversevr{\Phi'_{r}} +
  \cexp'_2 - \inversevr{\Phi'_{u}} = \cexp' - \inversevr{\Phi'}$ as expected.
  
  \item $\ruleBDAxiom$. Then $\cbvterm{t} = x$ which is not possible by
  definition. Hence, this case does not apply.
  
  \item $\ruleBDArrowE$. Then $t = \termapp{t_1}{u}$
  with $\cbvterm{t} = \termapp{r}{\cbvterm{u}}$ for some term $r$, $\Gamma =
  \ctxtsum{\Gamma_1}{\Gamma_2}{}$. Moreover, since $\Phi'$ is \cbvrelevant,
  then it is of the form \[
\Rule{
  \derivable{\Phi'_{t_1}}{\sequT{\cbvtypeneg{\Gamma_1}}{\assign{r}{\functtype{\cbvtypeneg{\M}}{\cbvtypepos{\sigma}}}}{\cmult_1}{\cexp'_1}{\csize_1}}{\SysTight}
  \quad
  \derivable{\Phi'_{u}}{\sequT{\cbvtypeneg{\Gamma_2}}{\assign{\cbvterm{u}}{\cbvtypeneg{\M}}}{\cmult_2}{\cexp'_2}{\csize_2}}{\SysTight}
}{
  \sequT{\ctxtsum{\Gamma_1}{\Gamma_2}{}}{\assign{\termapp{r}{\cbvterm{u}}}{\cbvtypepos{\sigma}}}{\cmult_1+\cmult_2+1}{\cexp'_1+\cexp'_2}{\csize_1+\csize_2}
}{\ruleBDArrowE}
  \] where $\cexp' = \cexp'_1 + \cexp'_2$. Moreover, by definition
  $\cbvtypeneg{\M} = \cbvtypepos{\M}$. Then, by the \ih on $\Phi'_{u}$ there
  exists a type derivation
  $\derivable{\Phi_{u}}{\sequT{\Gamma_2}{\assign{u}{\M}}{\cmult_2}{\cexp_2}{\csize_2}}{\SysTightCBV}$
  with $\cexp_2 = \cexp'_2 - \inversevr{\Phi_{u}}$. We should consider two
  cases based on the shape of $\cbvterm{t_1}$ and, hence, of $r$:
  \begin{enumerate}
    \item $\cbvterm{t_1} = \ctxtapp{\cbvterm{\ctxt{L}}}{\termbang{r'}}$ (\ie
    $\pval{t_1}$). Then, $r = \ctxtapp{\cbvterm{\ctxt{L}}}{r'}$. Moreover, by a
    simple induction on $\ctxt{L}$, from $\Phi'_{r}$ we get
    $\derivable{\Phi'_{t_1}}{\sequT{\cbvtypeneg{\Gamma_1}}{\assign{\ctxtapp{\cbvterm{\ctxt{L}}}{\termbang{r'}}}{\multiset{\functtype{\cbvtypeneg{\M}}{\cbvtypepos{\sigma}}}}}{\cmult_1}{\cexp'_1+1}{\csize_1}}{\SysTight}$
    \cbvrelevant\ such that $\inversevr{\Phi'_{t_1}} = \inversevr{\Phi'_{r}}$.
    Note that
    $\multiset{\functtype{\cbvtypeneg{\M}}{\cbvtypepos{\sigma}}} =
    \cbvtypepos{\multiset{\functtype{\M}{\sigma}}}$. Then, by the \ih we get
    $\derivable{\Phi_{t_1}}{\sequT{\Gamma_1}{\assign{t_1}{\multiset{\functtype{\M}{\sigma}}}}{\cmult_1}{\cexp_1+1}{\csize_1}}{\SysTight}$
    with $\cexp_1 = \cexp'_1 - \inversevr{\Phi_{t_1}} = \cexp'_1 -
    \inversevr{\Phi_{r}}$. We conclude by applying $\ruleVDArrowE$ to derive
    $\Phi$ as follows \[
\Rule{
  \derivable{\Phi_{t_1}}{\sequT{\Gamma_1}{\assign{t_1}{\multiset{\functtype{\M}{\sigma}}}}{\cmult_1}{\cexp_1+1}{\csize_1}}{\SysTight}
  \quad
  \derivable{\Phi_{u}}{\sequT{\Gamma_2}{\assign{u}{\M}}{\cmult_2}{\cexp_2}{\csize_2}}{\SysTightCBV}
}{
  \sequT{\Gamma}{\assign{t}{\sigma}}{\cmult}{\cexp}{\csize}
}{\ruleVTArrowE}
    \] where $\cexp = \cexp_1 + 1 + \cexp_2 - 1 = \cexp'_1 -
    \inversevr{\Phi'_{r}} + \cexp'_2 - \inversevr{\Phi'_{u}} = \cexp'_1 +
    \cexp'_2 - (\inversevr{\Phi'_{r}} + \inversevr{\Phi'_{u}}) =
    \cexp' - \inversevr{\Phi'}$ as expected.
    
    \item $\cbvterm{t_1} \neq \ctxtapp{\ctxt{L}}{\termbang{r'}}$ (\ie
    $\neg\pval{t_1}$). Then, $r = \termder{(\cbvterm{t_1})}$. Moreover,
    $\Phi'_{r}$ necessarily comes from an application of rule $\ruleBDDer$, and
    $\multiset{\functtype{\cbvtypeneg{\M}}{\cbvtypepos{\sigma}}} =
    \cbvtypepos{\multiset{\functtype{\M}{\sigma}}}$. Thus, we have a
    \cbvrelevant\ type derivation
    $\derivable{\Phi'_{t_1}}{\sequT{\cbvtypeneg{\Gamma_1}}{\assign{\cbvterm{t_1}}{\cbvtypepos{\multiset{\functtype{\M}{\sigma}}}}}{\cmult_1}{\cexp'_1}{\csize_1}}{\SysTight}$
    satisfying the hypothesis of the theorem. By the \ih we get a type
    derivation
    $\derivable{\Phi_{t_1}}{\sequT{\Gamma_1}{\assign{t_1}{\multiset{\functtype{\M}{\sigma}}}}{\cmult_1}{\cexp_1}{\csize_1}}{\SysTightCBV}$
    with $\cexp_1 = \cexp'_1 - \inversevr{\Phi'_{t_1}} = \cexp'_1 -
    \inversevr{\Phi'_{r}}$. We conclude by applying $\ruleVDArrowE$ to derive
    $\Phi$ \[
\Rule{
  \derivable{\Phi_{t_1}}{\sequT{\Gamma_1}{\assign{t_1}{\multiset{\functtype{\M}{\sigma}}}}{\cmult_1}{\cexp_1}{\csize_1}}{\SysTightCBV}
  \quad
  \derivable{\Phi_{u}}{\sequT{\Gamma_2}{\assign{u}{\M}}{\cmult_2}{\cexp_2}{\csize_2}}{\SysTightCBV}
}{
  \sequT{\Gamma}{\assign{t}{\typeneutral}}{\cmult}{\cexp}{\csize}
}{\ruleVTArrowE}
    \] where $\cexp = \cexp_1 + \cexp_2 - 1 = \cexp'_1 - \inversevr{\Phi'_{r}}
    + \cexp'_2 - \inversevr{\Phi'_{u}} - 1 = \cexp'_1 + \cexp'_2 -
    (\inversevr{\Phi'_{r}} + \inversevr{\Phi'_{u}} + 1) = \cexp' -
    \inversevr{\Phi'}$ as expected.
  \end{enumerate}
  
  \item $\ruleBDApp$. This case is similar to the previous one, resorting to
  the fact that $\ptight{\cbvtypeneg{\M}}$ implies $\ptight{\M}$.
  
  \item $\ruleBDArrowI$. Then $\cbvterm{t} = \termabs{x}{t'}$ which is not
  possible by definition. Hence, this case does not apply.
  
  \item $\ruleBDBang$. Then $\cbvterm{t} = \termbang{t'}$, $\Gamma =
  \ctxtsum{}{\Gamma_i}{i \in I}$, $\sigma = \intertype{\sigma_i}{i \in I}$ and
  $\Phi'$ is of the form \[
\Rule{
  \many{
    \derivable{\Phi'_i}{\sequT{\cbvtypeneg{\Gamma_i}}{\assign{t'}{\cbvtypeneg{\sigma_i}}}{\cmult_i}{\cexp'_i}{\csize_i}}{\SysTight}
  }{i \in I}
}{
  \sequT{\cbvtypeneg{\Gamma}}{\assign{\termbang{t'}}{\intertype{\cbvtypeneg{\sigma_i}}{i \in I}}}{+_{i \in I}{\cmult_i}}{1+_{i \in I}{\cexp'_i}}{+_{i \in I}{\csize_i}}
}{\ruleBDBang}
  \] where $\cexp' = 1 +_{i \in I}{\cexp'_i}$. There are two possible cases for
  $t'$:
  \begin{enumerate}
    \item $t' = x$ (\ie $t = x$). Then, $\Phi'_i$ necessarily comes from
    $\ruleBDAxiom$ for each $i \in I$, \ie
    $\many{\derivable{\Phi'_i}{\sequT{\assign{x}{\multiset{\cbvtypeneg{\sigma_i}}}}{\assign{x}{\cbvtypeneg{\sigma_i}}}{0}{0}{0}}{\SysTight}}{i \in I}$.
    Hence, $\Gamma = \assign{x}{\intertype{\sigma_i}{i \in I}}$, $\cmult =
    \csize = 0$ and $\cexp' = 1$. We conclude by $\ruleVDAxiom$ with
    $\derivable{\Phi}{\sequT{\assign{x}{\intertype{\sigma_i}{i \in I}}}{\assign{x}{\intertype{\sigma_i}{i \in I}}}{0}{1}{0}}{\SysTightCBV}$,
    since $\cbvtypeneg{\Gamma(x)} = \intertype{\cbvtypeneg{\sigma_i}}{i \in I}
    = \cbvtypepos{\sigma}$. Note that the counters are as expected since
    $\inversevr{\Phi'} = 0$.

    \item $t' = \termabs{x}{\cbvterm{r}}$ (\ie $t = \termabs{x}{r}$). Note
    that, for every $i \in I$, $\cbvtypeneg{\sigma_i} = \typeabs$ leads to a
    contradiction, since $\typeabs$ is not in the image of the translation.
    Hence, $\Phi'_i$ necessarily comes from $\ruleBDArrowI$, \ie $\sigma_i =
    \functtype{\M_i}{\tau_i}$ and $\Phi'_i$ is \[
\Rule{
  \sequT{\cbvtypeneg{\Gamma_i};\assign{x}{\cbvtypeneg{\M}}}{\assign{\cbvterm{r}}{\cbvtypepos{\tau_i}}}{\cmult_i}{\cexp'_i}{\csize_i}
}{
  \sequT{\cbvtypeneg{\Gamma_i}}{\assign{\termabs{x}{\cbvterm{r}}}{\cbvtypeneg{\sigma_i}}}{\cmult_i}{\cexp'_i}{\csize_i}
}{\ruleBDArrowI}
    \] for each $i \in I$. Then, by the \ih on we get
    $\many{\derivable{\Phi_i}{\sequT{\Gamma_i;\assign{x}{\M}}{\assign{r}{\tau_i}}{\cmult_i}{\cexp_i}{\csize_i}}{\SysTightCBV}}{i \in I}$
    with $\many{\cexp_i = \cexp'_i - \inversevr{\Phi'_i}}{i \in I}$. We
    conclude by applying rule $\ruleVDArrowI$ to derive $\Phi$ as follows \[
\Rule{
  \many{
    \derivable{\Phi_i}{\sequT{\Gamma_i;\assign{x}{\M}}{\assign{r}{\tau_i}}{\cmult_i}{\cexp_i}{\csize_i}}{\SysTightCBV}
  }{i \in I}
}{
  \sequT{\Gamma}{\assign{t}{\intertype{\functtype{\M_i}{\tau_i}}{i \in I}}}{+_{i \in I}{\cmult_i}}{1+_{i \in I}{\cexp_i}}{+_{i \in I}{\csize_i}}
}{\ruleVDArrowI}
  \] where $\cexp = 1 +_{i \in I}{\cexp_i} = 1
  +_{i \in I}{(\cexp'_i - \inversevr{\Phi'_i})} = 1 +_{i \in I}{\cexp'_i} -
  (+_{i \in I}{\inversevr{\Phi'_i}}) = \cexp' - \inversevr{\Phi'}$ as expected.
  \end{enumerate}
  
  \item $\ruleBDDer$. Then $\cbvterm{t} = \termder{(t')}$ which is not
  possible by definition. Hence, this case does not apply.
  
  \item $\ruleBDESubs$. This case is similar to $\ruleBTESubs$.
\end{itemize}
\end{enumerate}
\end{proof}


\begin{sidewaysfigure}
\input{example-tight-derivation-cbv-translation.tex}
\caption{Type derivation in System $\SysTight$ for $\cbvterm{t_0} =
\cbvterm{(\termapp{\termapp{\Kterm}{(\termapp{z}{\id})}}{(\termapp{\id}{\id})})} =
\termapp{\termder{(\termapp{(\termabs{x}{\termbang{\termabs{y}{\termbang{x}}}})}{(\termapp{z}{\termbang{\id'}})})}}{(\termapp{\id'}{\termbang{\id'}})}$,
with $\id' = \termabs{w}{\termbang{w}}$.}
\label{f:t0cbpv}
\end{sidewaysfigure}

Fig.~\ref{f:t0cbpv} illustrates this result over the term $t_0$ in
Ex.~\ref{ex:t0cbv}. As a further (simpler) example, consider the CBV term
$\termapp{(\termabs{x}{x})}{y}$ (whose derivation $\Phi$ is on the left) and
its translation $\termapp{(\termabs{x}{\termbang{x}})}{\termbang{y}}$ into the
$\BangRev$-calculus (whose derivation $\Phi'$ is on the right):
\begin{center}
{\small
$
\Rule{
  \Rule{
    \Rule{
      \vphantom{
        \Rule{}{
          \sequT{\assign{x}{\multiset{\typeneutral}}}{\assign{x}{\typeneutral}}{0}{0}{0}
        }{}    
      }
    }{
      \sequT{\assign{x}{\multiset{\typevar}}}{\assign{x}{\typevar}}{0}{0}{0}
    }{}
  }{
    \sequT{}{\assign{\termabs{x}{x}}{\multiset{\functtype{\multiset{\typevar}}{\typevar}}}}{0}{1}{0}
  }{}
  \!\!
  \Rule{}{
    \sequT{\assign{y}{\multiset{\typevar}}}{\assign{y}{\multiset{\typevar}}}{0}{1}{0}
  }{}
}{
  \sequT{\assign{y}{\multiset{\typevar}}}{\assign{\termapp{(\termabs{x}{x})}{y}}{\typevar}}{1}{1}{0}
}{}
\enspace
\Rule{
  \Rule{
    \Rule{
      \Rule{}{
        \sequT{\assign{x}{\multiset{\typeneutral}}}{\assign{x}{\typeneutral}}{0}{0}{0}
      }{}
    }{
        \sequT{\assign{x}{\multiset{\typeneutral}}}{\assign{\termbang{x}}{\multiset{\typeneutral}}}{0}{1}{0}
    }{}
  }{
    \sequT{}{\assign{\termabs{x}{\termbang{x}}}{\functtype{\multiset{\typeneutral}}{\multiset{\typeneutral}}}}{0}{1}{0}
  }{}
  \!\!
  \Rule{
    \Rule{}{
      \sequT{\assign{y}{\multiset{\typeneutral}}}{\assign{y}{\typeneutral}}{0}{0}{0}
    }{}
  }{
      \sequT{\assign{y}{\multiset{\typeneutral}}}{\assign{\termbang{y}}{\multiset{\typeneutral}}}{0}{1}{0}
  }{}
}{
  \sequT{\assign{y}{\multiset{\typeneutral}}}{\assign{\termapp{(\termabs{x}{\termbang{x}})}{\termbang{y}}}{\multiset{\typeneutral}}}{1}{2}{0}
}{}
$}
\end{center}
Notice that $\ptight{\Gamma}$ if and only if $\ptight{\cbvtypeneg{\Gamma}}$.
Unfortunately, this is not sufficient to translate a tight
derivation in $\SysTightCBV$ into a tight derivation in $\SysTight$. Indeed,
the variable $x$ of type $\typevar$ translates to $\termbang{x}$ of type
$\multiset{\typeneutral}$. However, this only happens
if the derived type is $\typevar$, which is an auxiliary type of system
$\SysTightCBV$ used to identify variables that are not used as values because
they will be applied to some argument. In the case of
\emphdef{\bangrelevant\
$\SysTightCBV$ derivations}, defined as not deriving type $\typevar$,
tightness is indeed preserved. As a consequence, as for CBN, is it possible to
study CBV in the unified framework of the $\BangRev$-calculus and extract exact
measures for it by resorting to \relevant\ tight derivations:

\begin{corollary}
If
$\derivable{\Phi}{\sequT{\Gamma}{\assign{t}{\sigma}}{\cmult}{\cexp}{\csize}}{\SysTightCBV}$
is tight and \bangrelevant, then there
exists $\cfnrml{p}$ such that $\cbvterm{t}
\rewriten{\bangweak}^{(\cmult,\cexp)} p$ with $\cmult$ $\mStep$-steps, $\cexp +
\countvr{\Phi}$ $\eStep$-steps, and $\wsize{p} = \csize$. Conversely, if
$\derivable{\Phi'}{\sequT{\cbvtypeneg{\Gamma}}{\assign{\cbvterm{t}}{\cbvtypepos{\sigma}}}{\cmult}{\cexp'}{\csize}}{\SysTight}$
is tight and \cbvrelevant, then there exists $p \in \CBVNF$ such that
$t \rewriten{\callbyvalue}^{(\cmult,\cexp)} p$ with $\cmult$ $\mStep$-steps,
$\cexp' - \inversevr{\Phi'}$ $\eStep$-steps, and $\valsize{p} = \csize$.
\end{corollary}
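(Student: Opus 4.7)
The plan is to derive this corollary by composing Theorem~\ref{t:translation:value} (the typed translation correspondence between $\SysTightCBV$ and $\SysTight$) with the appropriate quantitative soundness results of the two systems, namely Theorem~\ref{t:soundness-completeness-bang}(\ref{t:correctness}) for the $\BangRev$-calculus in one direction and Theorem~\ref{t:value:correctness-tight} for CBV in the other. The main work lies not in any reduction argument (those are fully abstracted away by the cited theorems), but in checking that the tightness hypothesis on $\Phi$ (resp.\ $\Phi'$) carries over through the type translation $\cbvtypeneg{\_}/\cbvtypepos{\_}$ to yield a tight derivation on the other side.

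For the first implication, starting from a tight \bangrelevant\ derivation $\derivable{\Phi}{\sequT{\Gamma}{\assign{t}{\sigma}}{\cmult}{\cexp}{\csize}}{\SysTightCBV}$, I would first apply Theorem~\ref{t:translation:value}(\ref{t:translation:value:to-bang}) to obtain a \cbvrelevant\ derivation $\derivable{\Phi'}{\sequT{\cbvtypeneg{\Gamma}}{\assign{\cbvterm{t}}{\cbvtypepos{\sigma}}}{\cmult}{\cexp+\countvr{\Phi}}{\csize}}{\SysTight}$. The key lemma to verify is that $\Phi'$ is tight: tightness of $\Phi$ means $\Gamma(x) \in \iptight$ for every $x$ and $\sigma \in \typetight_{\SysTightCBV} = \{\typeneutral, \typevalue, \typevar\}$; the \bangrelevant\ assumption moreover forbids $\sigma = \typevar$, so $\cbvtypepos{\sigma} = \sigma \in \{\typeneutral, \typevalue\}$ which is a tight type of $\SysTight$. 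For the context, a direct inspection of $\cbvtypeneg{\_}$ on tight base types shows $\cbvtypeneg{\typevalue} = \typevalue$, $\cbvtypeneg{\typeneutral} = \typeneutral$, and $\cbvtypeneg{\typevar} = \typeneutral$, so tightness of $\Gamma$ transfers to $\cbvtypeneg{\Gamma}$. With $\Phi'$ tight, Theorem~\ref{t:soundness-completeness-bang}(\ref{t:correctness}) yields the sought $\cfnrml{p}$ with $\cbvterm{t} \rewriten{\bangweak}^{(\cmult,\cexp+\countvr{\Phi})} p$ and $\wsize{p} = \csize$.

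For the converse implication, I would apply Theorem~\ref{t:translation:value}(\ref{t:translation:value:from-bang}) to the tight \cbvrelevant\ derivation $\Phi'$ to obtain $\derivable{\Phi}{\sequT{\Gamma}{\assign{t}{\sigma}}{\cmult}{\cexp'-\inversevr{\Phi'}}{\csize}}{\SysTightCBV}$, and then invoke Theorem~\ref{t:value:correctness-tight} on $\Phi$ to extract a $\callbyvalue$-normal form $p$ with $\valsize{p} = \csize$. The point to check here is the reverse direction of the previous observation: that tightness of $\Phi'$ entails tightness of $\Phi$. This follows because the translations $\cbvtypeneg{\_}$ and $\cbvtypepos{\_}$ preserve the structural shape of types (multitypes go to multitypes, arrows to arrows), so a tight output type can only arise from a tight CBV input type, and similarly for contexts.

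The main obstacle, and the reason the statement requires the \bangrelevant\ condition in the forward direction, is precisely the asymmetry in the handling of $\typevar$: a tight derivation in $\SysTightCBV$ ending with type $\typevar$ would translate to a derivation with output type $\cbvtypepos{\typevar} = \multiset{\typeneutral}$, which is not tight in $\SysTight$, and so the soundness theorem for the $\BangRev$-calculus would not apply. Ruling out this case via \bangrelevance\ is exactly what makes the composition of theorems go through cleanly. No such asymmetry arises in the backwards direction, since tight types of $\SysTight$ have no $\typevar$-preimages to worry about.
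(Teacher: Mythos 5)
Your proposal is correct and follows essentially the same route the paper intends: compose Theorem~\ref{t:translation:value} with the soundness theorems of the respective target systems, with the only real content being the verification that tightness transfers through $\cbvtypeneg{\_}/\cbvtypepos{\_}$, where \bangrelevance\ is needed precisely to exclude the final type $\typevar$ (whose image $\multiset{\typeneutral}$ is not tight in $\SysTight$). Your identification of this asymmetry, and of why no such obstruction arises in the converse direction, matches the paper's own discussion preceding the corollary.
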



\section{Conclusion}
\label{s:conclusion}

Following recent works exploring the power of CBPV, we develop a technique for
deriving tight type systems for CBN/CBV as special cases of a single tight type
system for the $\BangRev$-calculus, a subcalculus of CBPV inspired by Linear
Logic.

The idea to study semantical and operational properties in a CBPV framework in
order to transfer them to CBN/CBV has so far be
exploited in different
works~\cite{Levy06,Ehrhard16,EhrhardG16,GuerrieriM18,ChouquetT20,BucciarelliKRV20,SantoPU19}.
Moreover, relational models for CBN/CBV can be derived from the relational
model for CBPV, resulting in non-idempotent intersection type systems for them,
that provide upper bounds for the length of normalization
sequences~\cite{BucciarelliKRV20}. However, the challenging quest of a (tight)
quantitative type system for CBV, giving \emph{exact measures} for the 
length of normalization sequences instead of \emph{upper bounds}, and being at
the same time encodable in CBPV, has been open. None of the existing
proposals~\cite{AccattoliG18,LeberlePhD} could be defined/explained within such an
approach. In particular, the tight type systems that we propose for CBN/CBV
give independent exact measures for the length of multiplicative and
exponential reduction to normal form, as well as the size of these normal
forms.

Different topics deserve future attention. One of them is the study of
\emph{strong} reduction for the $\BangRev$-calculus, which allows to reduce
terms under \emph{all} the constructors, including bang. Appropriate
encodings of strong CBN and strong CBV should follow. Linear (head) reduction,
as well as other more sophisticated semantics like GOI also deserve some
attention.The tight systems presented in this work could also be used to
understand bounded computation.


\newpage
\bibliography{main.bbl}

\begin{thebibliography}{10}

\bibitem{AccattoliGK18}
Beniamino Accattoli, St{\'{e}}phane Graham{-}Lengrand, and Delia Kesner.
\newblock Tight typings and split bounds.
\newblock {\em {PACMPL}}, 2({ICFP}):94:1--94:30, 2018.
\newblock \href {https://doi.org/10.1145/3236789} {\path{doi:10.1145/3236789}}.

\bibitem{AccattoliG16}
Beniamino Accattoli and Giulio Guerrieri.
\newblock Open call-by-value.
\newblock In Atsushi Igarashi, editor, {\em Programming Languages and Systems -
  14th Asian Symposium, {APLAS} 2016, Hanoi, Vietnam, November 21-23, 2016,
  Proceedings}, volume 10017 of {\em Lecture Notes in Computer Science}, pages
  206--226, 2016.
\newblock \href {https://doi.org/10.1007/978-3-319-47958-3\_12}
  {\path{doi:10.1007/978-3-319-47958-3\_12}}.

\bibitem{AccattoliG18}
Beniamino Accattoli and Giulio Guerrieri.
\newblock Types of fireballs.
\newblock In Sukyoung Ryu, editor, {\em Programming Languages and Systems -
  16th Asian Symposium, {APLAS} 2018, Wellington, New Zealand, December 2-6,
  2018, Proceedings}, volume 11275 of {\em Lecture Notes in Computer Science},
  pages 45--66. Springer, 2018.
\newblock \href {https://doi.org/10.1007/978-3-030-02768-1\_3}
  {\path{doi:10.1007/978-3-030-02768-1\_3}}.

\bibitem{AccattoliGL19}
Beniamino Accattoli, Giulio Guerrieri, and Maico Leberle.
\newblock Types by need.
\newblock In Lu{\'{\i}}s Caires, editor, {\em Programming Languages and Systems
  - 28th European Symposium on Programming, {ESOP} 2019, Held as Part of the
  European Joint Conferences on Theory and Practice of Software, {ETAPS} 2019,
  Prague, Czech Republic, April 6-11, 2019, Proceedings}, volume 11423 of {\em
  Lecture Notes in Computer Science}, pages 410--439. Springer, 2019.
\newblock \href {https://doi.org/10.1007/978-3-030-17184-1\_15}
  {\path{doi:10.1007/978-3-030-17184-1\_15}}.

\bibitem{AccattoliK10}
Beniamino Accattoli and Delia Kesner.
\newblock The structural \emph{lambda}-calculus.
\newblock In Anuj Dawar and Helmut Veith, editors, {\em Computer Science Logic,
  24th International Workshop, {CSL} 2010, 19th Annual Conference of the EACSL,
  Brno, Czech Republic, August 23-27, 2010. Proceedings}, volume 6247 of {\em
  Lecture Notes in Computer Science}, pages 381--395. Springer, 2010.
\newblock \href {https://doi.org/10.1007/978-3-642-15205-4\_30}
  {\path{doi:10.1007/978-3-642-15205-4\_30}}.

\bibitem{AccattoliP12}
Beniamino Accattoli and Luca Paolini.
\newblock Call-by-value solvability, revisited.
\newblock In Tom Schrijvers and Peter Thiemann, editors, {\em Functional and
  Logic Programming - 11th International Symposium, {FLOPS} 2012, Kobe, Japan,
  May 23-25, 2012. Proceedings}, volume 7294 of {\em Lecture Notes in Computer
  Science}, pages 4--16. Springer, 2012.
\newblock \href {https://doi.org/10.1007/978-3-642-29822-6\_4}
  {\path{doi:10.1007/978-3-642-29822-6\_4}}.

\bibitem{AlvesKV19}
Sandra Alves, Delia Kesner, and Daniel Ventura.
\newblock A quantitative understanding of pattern matching.
\newblock In Marc Bezem and Assia Mahboubi, editors, {\em 25th International
  Conference on Types for Proofs and Programs, {TYPES} 2019, June 11-14, 2019,
  Oslo, Norway}, volume 175 of {\em LIPIcs}, pages 3:1--3:36. Schloss Dagstuhl
  - Leibniz-Zentrum f{\"{u}}r Informatik, 2019.
\newblock \href {https://doi.org/10.4230/LIPIcs.TYPES.2019.3}
  {\path{doi:10.4230/LIPIcs.TYPES.2019.3}}.

\bibitem{Barendregt84}
Hendrik~P. Barendregt.
\newblock {\em The Lambda Calculus Its Syntax and Semantics}, volume 103.
\newblock North Holland, revised edition, 1984.

\bibitem{BernadetL13}
Alexis Bernadet and St{\'{e}}phane Lengrand.
\newblock Non-idempotent intersection types and strong normalisation.
\newblock {\em Logical Methods in Computer Science}, 9(4), 2013.
\newblock \href {https://doi.org/10.2168/LMCS-9(4:3)2013}
  {\path{doi:10.2168/LMCS-9(4:3)2013}}.

\bibitem{BezemKO03}
Marc Bezem, Jan~Willem Klop, and Vincent van Oostrom.
\newblock {\em Term Rewriting Systems (TeReSe)}.
\newblock Cambridge University Press, 2003.

\bibitem{BucciarelliE01}
Antonio Bucciarelli and Thomas Ehrhard.
\newblock On phase semantics and denotational semantics: the exponentials.
\newblock {\em Ann. Pure Appl. Logic}, 109(3):205--241, 2001.
\newblock \href {https://doi.org/10.1016/S0168-0072(00)00056-7}
  {\path{doi:10.1016/S0168-0072(00)00056-7}}.

\bibitem{BucciarelliKRV20}
Antonio Bucciarelli, Delia Kesner, Alejandro R{\'{\i}}os, and Andr{\'{e}}s
  Viso.
\newblock The bang calculus revisited.
\newblock In Keisuke Nakano and Konstantinos Sagonas, editors, {\em Functional
  and Logic Programming - 15th International Symposium, {FLOPS} 2020, Akita,
  Japan, September 14-16, 2020, Proceedings}, volume 12073 of {\em Lecture
  Notes in Computer Science}, pages 13--32. Springer, 2020.
\newblock \href {https://doi.org/10.1007/978-3-030-59025-3\_2}
  {\path{doi:10.1007/978-3-030-59025-3\_2}}.

\bibitem{BucciarelliKR21}
Antonio Bucciarelli, Delia Kesner, and Simona {Ronchi Della Rocca}.
\newblock Solvability = typability + inhabitation.
\newblock {\em Log. Methods Comput. Sci.}, 17(1), 2021.
\newblock URL: \url{https://lmcs.episciences.org/7141}.

\bibitem{BucciarelliKV17}
Antonio Bucciarelli, Delia Kesner, and Daniel Ventura.
\newblock Non-idempotent intersection types for the lambda-calculus.
\newblock {\em Logic Journal of the {IGPL}}, 25(4):431--464, 2017.
\newblock \href {https://doi.org/10.1093/jigpal/jzx018}
  {\path{doi:10.1093/jigpal/jzx018}}.

\bibitem{CarraroG14}
Alberto Carraro and Giulio Guerrieri.
\newblock A semantical and operational account of call-by-value solvability.
\newblock In Anca Muscholl, editor, {\em Foundations of Software Science and
  Computation Structures - 17th International Conference, {FOSSACS} 2014, Held
  as Part of the European Joint Conferences on Theory and Practice of Software,
  {ETAPS} 2014, Grenoble, France, April 5-13, 2014, Proceedings}, volume 8412
  of {\em Lecture Notes in Computer Science}, pages 103--118. Springer, 2014.
\newblock \href {https://doi.org/10.1007/978-3-642-54830-7\_7}
  {\path{doi:10.1007/978-3-642-54830-7\_7}}.

\bibitem{ChouquetT20}
Jules Chouquet and Christine Tasson.
\newblock Taylor expansion for call-by-push-value.
\newblock In Maribel Fern{\'{a}}ndez and Anca Muscholl, editors, {\em 28th
  {EACSL} Annual Conference on Computer Science Logic, {CSL} 2020, January
  13-16, 2020, Barcelona, Spain}, volume 152 of {\em LIPIcs}, pages
  16:1--16:16. Schloss Dagstuhl - Leibniz-Zentrum f{\"{u}}r Informatik, 2020.
\newblock \href {https://doi.org/10.4230/LIPIcs.CSL.2020.16}
  {\path{doi:10.4230/LIPIcs.CSL.2020.16}}.

\bibitem{CurienH00}
Pierre{-}Louis Curien and Hugo Herbelin.
\newblock The duality of computation.
\newblock In Martin Odersky and Philip Wadler, editors, {\em Proceedings of the
  Fifth {ACM} {SIGPLAN} International Conference on Functional Programming
  {(ICFP} '00), Montreal, Canada, September 18-21, 2000}, pages 233--243.
  {ACM}, 2000.
\newblock \href {https://doi.org/10.1145/351240.351262}
  {\path{doi:10.1145/351240.351262}}.

\bibitem{DanosJS95}
Vincent Danos, Jean-Baptiste Joinet, and Harold Schellinx.
\newblock Sequent calculi for second order logic.
\newblock In Jean-Yves Girard, Yves Lafont, and Laurent Regnier, editors, {\em
  Advances in Linear Logic}. Cambridge University Press, 1995.

\bibitem{Carvalho:thesis}
Daniel de~Carvalho.
\newblock {\em S\'{e}mantiques de la logique lin\'{e}aire et temps de calcul}.
\newblock PhD thesis, Universit\'{e} Aix-Marseille II, 2007.

\bibitem{Carvalho16}
Daniel de~Carvalho.
\newblock The relational model is injective for multiplicative exponential
  linear logic.
\newblock In Jean{-}Marc Talbot and Laurent Regnier, editors, {\em 25th {EACSL}
  Annual Conference on Computer Science Logic, {CSL} 2016, August 29 -
  September 1, 2016, Marseille, France}, volume~62 of {\em LIPIcs}, pages
  41:1--41:19. Schloss Dagstuhl - Leibniz-Zentrum fuer Informatik, 2016.
\newblock \href {https://doi.org/10.4230/LIPIcs.CSL.2016.41}
  {\path{doi:10.4230/LIPIcs.CSL.2016.41}}.

\bibitem{Carvalho18}
Daniel de~Carvalho.
\newblock Execution time of {\(\lambda\)}-terms via denotational semantics and
  intersection types.
\newblock {\em Mathematical Structures in Computer Science}, 28(7):1169--1203,
  2018.
\newblock \href {https://doi.org/10.1017/S0960129516000396}
  {\path{doi:10.1017/S0960129516000396}}.

\bibitem{CarvalhoF16}
Daniel de~Carvalho and Lorenzo~Tortora de~Falco.
\newblock A semantic account of strong normalization in linear logic.
\newblock {\em Inf. Comput.}, 248:104--129, 2016.
\newblock \href {https://doi.org/10.1016/j.ic.2015.12.010}
  {\path{doi:10.1016/j.ic.2015.12.010}}.

\bibitem{CarvalhoPF11}
Daniel de~Carvalho, Michele Pagani, and Lorenzo~Tortora de~Falco.
\newblock A semantic measure of the execution time in linear logic.
\newblock {\em Theor. Comput. Sci.}, 412(20):1884--1902, 2011.
\newblock \href {https://doi.org/10.1016/j.tcs.2010.12.017}
  {\path{doi:10.1016/j.tcs.2010.12.017}}.

\bibitem{Ehrhard12}
Thomas Ehrhard.
\newblock Collapsing non-idempotent intersection types.
\newblock In Patrick C{\'{e}}gielski and Arnaud Durand, editors, {\em Computer
  Science Logic (CSL'12) - 26th International Workshop/21st Annual Conference
  of the EACSL, {CSL} 2012, September 3-6, 2012, Fontainebleau, France},
  volume~16 of {\em LIPIcs}, pages 259--273. Schloss Dagstuhl - Leibniz-Zentrum
  fuer Informatik, 2012.
\newblock \href {https://doi.org/10.4230/LIPIcs.CSL.2012.259}
  {\path{doi:10.4230/LIPIcs.CSL.2012.259}}.

\bibitem{Ehrhard16}
Thomas Ehrhard.
\newblock Call-by-push-value from a linear logic point of view.
\newblock In Peter Thiemann, editor, {\em Programming Languages and Systems -
  25th European Symposium on Programming, {ESOP} 2016, Held as Part of the
  European Joint Conferences on Theory and Practice of Software, {ETAPS} 2016,
  Eindhoven, The Netherlands, April 2-8, 2016, Proceedings}, volume 9632 of
  {\em Lecture Notes in Computer Science}, pages 202--228. Springer, 2016.
\newblock \href {https://doi.org/10.1007/978-3-662-49498-1\_9}
  {\path{doi:10.1007/978-3-662-49498-1\_9}}.

\bibitem{EhrhardG16}
Thomas Ehrhard and Giulio Guerrieri.
\newblock The bang calculus: an untyped lambda-calculus generalizing
  call-by-name and call-by-value.
\newblock In James Cheney and Germ{\'{a}}n Vidal, editors, {\em Proceedings of
  the 18th International Symposium on Principles and Practice of Declarative
  Programming, Edinburgh, United Kingdom, September 5-7, 2016}, pages 174--187.
  {ACM}, 2016.
\newblock URL: \url{http://doi.acm.org/10.1145/2967973.2968608}, \href
  {https://doi.org/10.1145/2967973.2968608}
  {\path{doi:10.1145/2967973.2968608}}.

\bibitem{FaggianG21}
Claudia Faggian and Giulio Guerrieri.
\newblock Factorization in call-by-name and call-by-value calculi via linear
  logic.
\newblock In Stefan Kiefer and Christine Tasson, editors, {\em Foundations of
  Software Science and Computation Structures - 24th International Conference,
  {FOSSACS} 2021, Held as Part of the European Joint Conferences on Theory and
  Practice of Software, {ETAPS} 2021, Luxembourg City, Luxembourg, March 27 -
  April 1, 2021, Proceedings}, volume 12650 of {\em Lecture Notes in Computer
  Science}, pages 205--225. Springer, 2021.
\newblock \href {https://doi.org/10.1007/978-3-030-71995-1\_11}
  {\path{doi:10.1007/978-3-030-71995-1\_11}}.

\bibitem{Gardner94}
Philippa Gardner.
\newblock Discovering needed reductions using type theory.
\newblock In Masami Hagiya and John~C. Mitchell, editors, {\em Theoretical
  Aspects of Computer Software, International Conference {TACS} '94, Sendai,
  Japan, April 19-22, 1994, Proceedings}, volume 789 of {\em Lecture Notes in
  Computer Science}, pages 555--574. Springer, 1994.
\newblock URL: \url{http://dx.doi.org/10.1007/3-540-57887-0_115}, \href
  {https://doi.org/10.1007/3-540-57887-0_115}
  {\path{doi:10.1007/3-540-57887-0_115}}.

\bibitem{Girard87}
Jean{-}Yves Girard.
\newblock Linear logic.
\newblock {\em Theor. Comput. Sci.}, 50:1--102, 1987.
\newblock \href {https://doi.org/10.1016/0304-3975(87)90045-4}
  {\path{doi:10.1016/0304-3975(87)90045-4}}.

\bibitem{Girard88}
Jean{-}Yves Girard.
\newblock Normal functors, power series and {\(\lambda\)}-calculus.
\newblock {\em Ann. Pure Appl. Logic}, 37(2):129--177, 1988.
\newblock \href {https://doi.org/10.1016/0168-0072(88)90025-5}
  {\path{doi:10.1016/0168-0072(88)90025-5}}.

\bibitem{Guerrieri18}
Giulio Guerrieri.
\newblock Towards a semantic measure of the execution time in call-by-value
  lambda-calculus.
\newblock In Michele Pagani and Sandra Alves, editors, {\em Proceedings Twelfth
  Workshop on Developments in Computational Models and Ninth Workshop on
  Intersection Types and Related Systems, {DCM/ITRS} 2018, Oxford, UK, 8th July
  2018}, volume 293 of {\em {EPTCS}}, pages 57--72, 2018.
\newblock \href {https://doi.org/10.4204/EPTCS.293.5}
  {\path{doi:10.4204/EPTCS.293.5}}.

\bibitem{GuerrieriM18}
Giulio Guerrieri and Giulio Manzonetto.
\newblock The bang calculus and the two girard's translations.
\newblock In {\em Proceedings Joint International Workshop on Linearity {\&}
  Trends in Linear Logic and Applications (Linearity-TLLA), Oxford, UK, 7-8
  July 2018.}, EPTCS, pages 15--30, 2019.

\bibitem{GuerrieriPF16}
Giulio Guerrieri, Luc Pellissier, and Lorenzo~Tortora de~Falco.
\newblock Computing connected proof(-structure)s from their taylor expansion.
\newblock In Delia Kesner and Brigitte Pientka, editors, {\em 1st International
  Conference on Formal Structures for Computation and Deduction, {FSCD} 2016,
  June 22-26, 2016, Porto, Portugal}, volume~52 of {\em LIPIcs}, pages
  20:1--20:18. Schloss Dagstuhl - Leibniz-Zentrum fuer Informatik, 2016.
\newblock \href {https://doi.org/10.4230/LIPIcs.FSCD.2016.20}
  {\path{doi:10.4230/LIPIcs.FSCD.2016.20}}.

\bibitem{KerinecMR21}
Axel Kerinec, Giulio Manzonetto, and Simona Ronchi~Della Rocca.
\newblock Call-by-value, again!
\newblock In Naoki Kobayashi, editor, {\em 6th International Conference on
  Formal Structures for Computation and Deduction, {FSCD} 2021, July 17-24,
  2021, Buenos Aires, Argentina (Virtual Conference)}, volume 195 of {\em
  LIPIcs}, pages 7:1--7:18. Schloss Dagstuhl - Leibniz-Zentrum f{\"{u}}r
  Informatik, 2021.
\newblock \href {https://doi.org/10.4230/LIPIcs.FSCD.2021.7}
  {\path{doi:10.4230/LIPIcs.FSCD.2021.7}}.

\bibitem{KesnerV17}
Delia Kesner and Pierre Vial.
\newblock Types as resources for classical natural deduction.
\newblock In Dale Miller, editor, {\em 2nd International Conference on Formal
  Structures for Computation and Deduction, {FSCD} 2017, September 3-9, 2017,
  Oxford, {UK}}, volume~84 of {\em LIPIcs}, pages 24:1--24:17. Schloss Dagstuhl
  - Leibniz-Zentrum fuer Informatik, 2017.
\newblock \href {https://doi.org/10.4230/LIPIcs.FSCD.2017.24}
  {\path{doi:10.4230/LIPIcs.FSCD.2017.24}}.

\bibitem{KesnerV20}
Delia Kesner and Pierre Vial.
\newblock Consuming and persistent types for classical logic.
\newblock In Holger Hermanns, Lijun Zhang, Naoki Kobayashi, and Dale Miller,
  editors, {\em {LICS} '20: 35th Annual {ACM/IEEE} Symposium on Logic in
  Computer Science, Saarbr{\"{u}}cken, Germany, July 8-11, 2020}, pages
  619--632. {ACM}, 2020.
\newblock \href {https://doi.org/10.1145/3373718.3394774}
  {\path{doi:10.1145/3373718.3394774}}.

\bibitem{Kha92}
Zurab Khasidashvili.
\newblock The {C}hurch-{R}osser theorem in orthogonal combinatory reduction
  systems.
\newblock Technical Report 1825, INRIA Rocquencourt, France, 1992.

\bibitem{Klop80}
Jan~Willem Klop.
\newblock {\em Combinatory reduction systems}.
\newblock PhD thesis, Univ. Utrecht, 1980.

\bibitem{LeberlePhD}
Maico Leberle.
\newblock {\em Dissecting call-by-need by customizing multi type systems}.
\newblock PhD thesis, Institut Polytechnique de Paris, 2021.

\bibitem{Levy04}
Paul~Blain Levy.
\newblock {\em Call-By-Push-Value: {A} Functional/Imperative Synthesis},
  volume~2 of {\em Semantics Structures in Computation}.
\newblock Springer, 2004.

\bibitem{Levy06}
Paul~Blain Levy.
\newblock Call-by-push-value: Decomposing call-by-value and call-by-name.
\newblock {\em Higher-Order and Symbolic Computation}, 19(4):377--414, 2006.
\newblock \href {https://doi.org/10.1007/s10990-006-0480-6}
  {\path{doi:10.1007/s10990-006-0480-6}}.

\bibitem{ManzonettoPR19}
Giulio Manzonetto, Michele Pagani, and Simona {Ronchi Della Rocca}.
\newblock New semantical insights into call-by-value {\(\lambda\)}-calculus.
\newblock {\em Fundam. Informaticae}, 170(1-3):241--265, 2019.
\newblock \href {https://doi.org/10.3233/FI-2019-1862}
  {\path{doi:10.3233/FI-2019-1862}}.

\bibitem{MazzaPV18}
Damiano Mazza, Luc Pellissier, and Pierre Vial.
\newblock Polyadic approximations, fibrations and intersection types.
\newblock {\em Proc. {ACM} Program. Lang.}, 2({POPL}):6:1--6:28, 2018.
\newblock \href {https://doi.org/10.1145/3158094} {\path{doi:10.1145/3158094}}.

\bibitem{Regnier94}
Laurent Regnier.
\newblock Une \'equivalence sur les lambda-termes.
\newblock {\em TCS}, 2(126):281--292, 1994.

\bibitem{SantoPU19}
Jos{\'{e}}~Esp{\'{\i}}rito Santo, Lu{\'{\i}}s Pinto, and Tarmo Uustalu.
\newblock Modal embeddings and calling paradigms.
\newblock In Herman Geuvers, editor, {\em 4th International Conference on
  Formal Structures for Computation and Deduction, {FSCD} 2019, June 24-30,
  2019, Dortmund, Germany.}, volume 131 of {\em LIPIcs}, pages 18:1--18:20.
  Schloss Dagstuhl - Leibniz-Zentrum fuer Informatik, 2019.
\newblock \href {https://doi.org/10.4230/LIPIcs.FSCD.2019.18}
  {\path{doi:10.4230/LIPIcs.FSCD.2019.18}}.

\end{thebibliography}

\end{document}
